\setlist[itemize]{leftmargin=*}
\setlist[enumerate]{leftmargin=*}
\newcommand{\pr}[1]{\left({#1}\right)}
\newcommand{\ket}[1]{|#1\rangle}
\newcommand{\bra}[1]{\langle #1 |}
\newcommand{\braket}[2]{\langle #1 | #2 \rangle}
\newtheorem{theorem}{Theorem}
\newtheorem{lemma}{Lemma}
\newtheorem{definition}{Definition}
\DeclareMathAlphabet{\eurm}{U}{eur}{m}{n}
\DeclareMathAlphabet{\mathbsf}{OT1}{cmss}{bx}{n}
\DeclareMathAlphabet{\mathssf}{OT1}{cmss}{m}{sl}
\DeclareMathAlphabet{\mathcsf}{OT1}{cmss}{sbc}{n}
\DeclareSymbolFont{bsfletters}{OT1}{cmss}{bx}{n}  
\DeclareSymbolFont{ssfletters}{OT1}{cmss}{m}{n}
\DeclareMathSymbol{\bsfGamma}{0}{bsfletters}{'000}
\DeclareMathSymbol{\ssfGamma}{0}{ssfletters}{'000}
\DeclareMathSymbol{\bsfDelta}{0}{bsfletters}{'001}
\DeclareMathSymbol{\ssfDelta}{0}{ssfletters}{'001}
\DeclareMathSymbol{\bsfTheta}{0}{bsfletters}{'002}
\DeclareMathSymbol{\ssfTheta}{0}{ssfletters}{'002}
\DeclareMathSymbol{\bsfLambda}{0}{bsfletters}{'003}
\DeclareMathSymbol{\ssfLambda}{0}{ssfletters}{'003}
\DeclareMathSymbol{\bsfXi}{0}{bsfletters}{'004}
\DeclareMathSymbol{\ssfXi}{0}{ssfletters}{'004}
\DeclareMathSymbol{\bsfPi}{0}{bsfletters}{'005}
\DeclareMathSymbol{\ssfPi}{0}{ssfletters}{'005}
\DeclareMathSymbol{\bsfSigma}{0}{bsfletters}{'006}
\DeclareMathSymbol{\ssfSigma}{0}{ssfletters}{'006}
\DeclareMathSymbol{\bsfUpsilon}{0}{bsfletters}{'007}
\DeclareMathSymbol{\ssfUpsilon}{0}{ssfletters}{'007}
\DeclareMathSymbol{\bsfPhi}{0}{bsfletters}{'010}
\DeclareMathSymbol{\ssfPhi}{0}{ssfletters}{'010}
\DeclareMathSymbol{\bsfPsi}{0}{bsfletters}{'011}
\DeclareMathSymbol{\ssfPsi}{0}{ssfletters}{'011}
\DeclareMathSymbol{\bsfOmega}{0}{bsfletters}{'012}
\DeclareMathSymbol{\ssfOmega}{0}{ssfletters}{'012}
\newcommand{\calA}{{\mathcal{A}}}
\newcommand{\calC}{{\mathcal{C}}}
\newcommand{\calD}{{\mathcal{D}}}
\newcommand{\calE}{{\mathcal{E}}}
\newcommand{\calH}{{\mathcal{H}}}
\newcommand{\calI}{{\mathcal{I}}}
\newcommand{\calJ}{{\mathcal{J}}}
\newcommand{\calL}{{\mathcal{L}}}
\newcommand{\calN}{{\mathcal{N}}}
\newcommand{\calP}{{\mathcal{P}}}
\newcommand{\calR}{{\mathcal{R}}}
\newcommand{\calS}{{\mathcal{S}}}
\newcommand{\calU}{{\mathcal{U}}}
\newcommand{\calV}{{\mathcal{V}}}
\newcommand{\calX}{{\mathcal{X}}}
\newcommand{\calY}{{\mathcal{Y}}}
\newcommand{\calW}{{\mathcal{W}}}
\newcommand{\calZ}{{\mathcal{Z}}}
\newcommand{\E}[2][]{{\mathbb{E}_{#1}}{\left(#2\right)}}       
\renewcommand{\P}[2][]{{\mathbb{P}_{#1}}{\left(#2\right)}}
\newcommand{\Var}[1]{{\text{\textnormal{Var}}{\left(#1\right)}}}       
\newcommand{\D}[2]{{{\mathbb{D}}\!\left({#1\Vert#2}\right)}}
\newcommand{\avgD}[2]{{{\mathbb{D}}\!\left({#1\Vert#2}\right)}}
\newcommand{\V}[1]{{{\mathbb{V}}\!\left(#1\right)}}
\newcommand{\avgI}[1]{{{\mathbb{I}}\!\left(#1\right)}}
\newcommand{\avgH}[1]{{\mathbb{H}}\!\left(#1\right)}
\newcommand{\Hb}[1]{{\mathbb{H}_b}\left(#1\right)}
\newcommand{\wt}[1]{\ensuremath{\mbox{wt}(#1)}}
\newcommand{\card}[1]{\ensuremath{\left|{#1}\right|}}           
\newcommand{\norm}[2][]{\ensuremath{{\left\Vert{#2}\right\Vert}_{#1}}}   
\newcommand{\eqdef}{\ensuremath{\triangleq}}                    
\newcommand{\intseq}[2]{\ensuremath{\llbracket{#1},{#2}\rrbracket}}  
\newcommand{\indic}[1]{\ensuremath{\mathds{1}\!\left\{#1\right\}}}
\renewcommand{\leq}{\leqslant}
\renewcommand{\geq}{\geqslant}
\newcommand{\tr}[1]{\ensuremath{\text{\textnormal{tr}}\left(#1\right)}}  
\renewcommand{\det}[1]{{\left|{#1}\right|}}                              
\newcommand{\proddist}{%
  \mathchoice{\raisebox{1pt}{$\displaystyle\otimes$}}
             {\raisebox{1pt}{$\otimes$}}
             {\raisebox{0.5pt}{\scalebox{0.7}{$\scriptstyle\otimes$}}}
             {\raisebox{0.4pt}{\scalebox{0.6}{$\scriptscriptstyle\otimes$}}}}
\newcommand{\cn}{\textcolor{red}{[\raisebox{-0.2ex}{\tiny\shortstack{citation\\[-1ex]needed}}]}\xspace}
\begin{document}

\title{A framework for covert and secret key expansion over quantum channels}
\author{Mehrdad Tahmasbi}
\thanks{Author to whom correspondence should be addressed. Email: mtahmasbi3@gatech.edu} 
\affiliation{Georgia Institute of Technology}

\author{Matthieu R. Bloch}
\thanks{Email: matthieu.bloch@ece.gatech.edu} 
\affiliation{Georgia Institute of Technology}


\begin{abstract}
Covert and secret quantum key distribution aims at generating information-theoretically secret bits between distant legitimate parties in a manner that remains provably undetectable by an adversary. We propose a framework in which to precisely define and analyze such an operation, and we show that covert and secret key expansion is possible. For fixed and known classical-quantum channels, we develop and analyze protocols based on forward and reverse reconciliation. When the adversary applies the same quantum channel independently on each transmitted quantum state, akin to a collective attack in the quantum key distribution literature, we propose a protocol that achieves covert and secret key expansion under mild restrictions. The crux of our approach is the use of information reconciliation and privacy amplification techniques that are able to process the sparse signals required for covert operation and  whose Shannon entropy scales as the square root of their length. 
In particular, our results show that the coordination  required between legitimate parties to achieve covert communication can be achieved with a negligible number of secret key bits.
\end{abstract}

\pacs{Valid PACS appear here}

\maketitle

\section{Introduction}
\label{sec:introduction}


Securing communications has become an essential requirement in modern communication systems. Secrecy, i.e, the ability to prevent unauthorized parties from extracting the information content of a signal, is typically enforced using conventional computationally-secure encryption although Quantum Key Distribution (QKD) remains to date the only approach to unconditional secrecy~\cite{Gisin2002,Scarani2009}. Another desirable feature of secure communications is covertness, i.e., the ability to hide the presence of communication signals from an unauthorized party and provably avoid detection~\cite{Bash2015}. While secrecy has been largely explored for quantum communications both theoretically and experimentally, the mechanisms required to achieve covertness are still much less understood.

Covertness, also referred to as low probability of detection, is conceptually related to classical and quantum steganography~\cite{Cachin2004,Ker2007,Shaw2011,Sanguinetti2016}, by which legitimate parties embed a message into a covertext then disclosed to an adversary \footnote{In~\cite{Shaw2011}, secrecy and covertness are referred to as security and secrecy, respectively. We adopt here a different terminology more in line with common usage in information-theoretic security.}. In many quantum steganography protocols, an innocent quantum state, in the form a codeword from a quantum error-control code, is used as the cover to embed another quantum state. The embedding is performed to simulate the transmission of an innocent state through a noisy channel and relies on shared secret keys with well characterized rates. A crucial assumption in these quantum steganography protocols is that the true physical channel is better than what the adversary expects. In covert communications, however, the role of the covertext is played by the communication channel, which introduces noise and imperfections that are outside the control of and only statistically known to the transmitter. 
There has been a recent surge of interest for covert communications, which has led to the discovery of a ``square-root law'' similar to that in steganography~\cite{Ker2007} in both classical~\cite{Bash2013,Wang2016b,Bloch2016a} and quantum settings~\cite{Bash2015a,Wang2016c,Sheikholeslami2016, bradler2016absolutely}. The square-root law, according to which the number of covert bits can only scale with the square-root of the number of channel uses, has also been experimentally validated in an optical test-bed~\cite{Bash2015a}. The authors of \cite{Bash2015a} also showed that, for a bosonic channel, covert communication is impossible without sources of imperfection in the adversary's observations as the detection of a single photon would indicate with certainty the existence of the communication. The possibility of quantum covert and secret key generation was recently  explored~\cite{Arrazola2016,Arrazola2017,Liu2017} but has led to the rather pessimistic conclusion that ``\emph{covert QKD consumes more secret bits than it can generate}''~\cite{Arrazola2016}.   

Our main contribution is to offer a more nuanced and optimistic perspective and show that covert and secret key expansion is actually possible over quantum channels. The intuition behind our approach is the following. In layman's terms, the covertness constraint requires the number of qubit transmissions to scale as $O(\sqrt{T})$ for $T$ channel uses~\cite{Bash2015a}. A crucial characteristic of earlier works~\cite{Bash2015a,Arrazola2016} is that the scaling is ensured by having the legitimate parties \emph{coordinate} the sparse transmission of $\sqrt{T}$ qubits in channel uses chosen secretly and uniformly at random out of $T$. Unfortunately, the secret key size required to select these secret channel uses scales as $\Omega(\sqrt{T}\log T)$ and necessarily exceeds the number of covert bits that one can hope to obtain, which scales as $\Omega(\sqrt{T})$. In contrast, we introduce more sophisticated coding schemes for information reconciliation and privacy amplification that do not require such coordination and are able to directly process the sparse and diffuse statistical information content of covert signals. The protocols that we present do not yet offer the secrecy levels of state-of-the art~QKD against coherent attacks but already achieve covert and secret key expansion and might pave the way to more broadly applicable protocols. 
 

Our results are developed in three steps as follows. We first lay out a precise model for quantum covert and secret key generation that captures a wide range of attacks by the adversary and protocols for legitimate parties, along with quantifiable metrics to assess the performance of a covert and secret key generation protocol over quantum channels. The main distinction with previous models~\cite{Arrazola2016,Arrazola2017,Liu2017} is the inclusion of the public communication required for information reconciliation in the analysis; specifically, since an adversary may devise a hypothesis test for detection based on all its observations, the probability distribution of the public communication has to be considered jointly with the quantum measurements in evaluating covertness. We then proceed to analyze an instance of quantum covert and secret key generation in which the classical-quantum channels are fixed and known,  for which we can define and analyze the covert and secret key capacity. We lower-bound the covert and secret key capacity by developing coding schemes using both forward and reverse reconciliation. The forward reconciliation scheme can be constructed by a suitable modification of established protocols for quantum covert communication~\cite{Sheikholeslami2016} to guarantee secrecy. In contrast, the reverse reconciliation scheme requires a new approach because of technical challenges precluding the direct use of well-known results on information reconciliation and privacy amplification for the sparse distribution needed for covert communication. Finally, we consider an instance of quantum covert and secret key generation in which the classical-quantum channel is fixed but under the control of the adversary and unknown to the legitimate users. Under some conditions to limit the power of the adversary, which we precisely characterize, we prove the existence of covert and secret key generation protocols consisting of a channel estimation phase followed by a key-generation phase. The estimation phase is based on a covert quantum tomography protocol that estimates the required parameters of the channel and the key-generation phase is based on universal results for covert quantum communication. While covertness cannot be unconditionally guaranteed, our protocol offers the legitimate parties with the ability to successfully abort before engaging in key generation. We do not instantiate explicit codes but recent progress in designing codes for covert communications~\cite{Kadampot2018} suggests that the protocols described here can be implemented with low-complexity.

\section{Notation}
\label{sec:notation}

We briefly introduce the notation used throughout the paper. For a finite-dimensional Hilbert space  $\calH$, $\dim \calH$ denotes the dimension of $\calH$, and $\calL(\calH)$ denotes the space of all linear operators from $\calH$ to $\calH$. We denote the adjoint of an operator $X \in \calL(\calH)$ by $X^\dagger$, and call $X$ Hermitian if $X=X^\dagger$. $X\in \calL(\calH)$ is positive (non-negative) semi-definite, if it is Hermitian and all of its eigenvalues are positive (non-negative). $\calD(\calH)$ denotes the set of all density operators on $\calH$, i.e., all non-negative operators with unit trace. For $X, Y\in \calL(\calH)$, we write $X\succ Y$ ($X\succeq Y$), if $X-Y$ is positive (non-negative) semi-definite.  For $X\in\calH$, let $\sigma_{\min}(X)$ and $\sigma_{\max}(X)$ denote the  minimum and the maximum singular value of $X$, respectively, and if $X$ is Hermitian, let $\lambda_{\min}(X)$ and $\lambda_{\max}(X)$ denote the  minimum and maximum eigenvalue of $X$. Furthermore, we define norms of $X\in\calL(\calH)$ as $\|X\|_1 \eqdef \tr{\sqrt{X^\dagger X}}$ and $\|X\|_2 \eqdef \sqrt{\tr{X^\dagger X}}$. For a Hermitian operator $X\in\calL(\calH)$ with eigen-decomposition $X=\sum_x x\ket{x}\bra{x}$, we define the projection $\{X\succeq 0\} \eqdef \sum_{x\geq 0} \ket{x}\bra{x}$. A quantum channel $\calE_{A\to B}$ is a completely positive and trace preserving linear map from $\calL(\calH^A)$ to $\calL(\calH^B)$. An isomorphic extension of $\calE_{A\to B}$, $U_{A\to BE}$, satisfies $\calE_{A\to B}(\rho^A) = \textnormal{tr}_E(U_{A\to BE}\rho^AU_{A\to BE}^\dagger)$ for all $\rho^A\in\calD(\calH^A)$. We denote the complementary channel of $\calE_{A\to B}$ by $\calE^\dagger_{A\to B}(\rho^A)\eqdef \calE_{A\to E}(\rho^A)\eqdef \textnormal{tr}_B(U_{A\to BE}\rho^AU_{A\to BE}^\dagger)$, which is well-defined and unique up to a unitary transformation \cite{wilde2013quantum}. A classical-quantum (cq)-channel is a map from an abstract set $\calX$ to $\calD(\calH)$, denoted by $x\mapsto \rho_x$.

 For $\rho^A \in \calD(\calH^A)$ we define von Neumann entropy $H(\rho^A) \eqdef \avgH{A}_\rho \eqdef -\tr{\rho^A\log \rho^A}$. For $\rho^{AB}\in\calD(\calH^A\otimes \calH^B)$, we define conditional von Neumann entropy $\avgH{A|B}_\rho \eqdef H(\rho^{AB}) - H(\rho^B)$ where $\rho^B\eqdef \text{tr}_A(\rho^{AB})$, and quantum mutual information $\avgI{A;B}_\rho\eqdef H(\rho^A)+H(\rho^B) - H(\rho^{AB})$. Similarly, we define conditional quantum  mutual information$\avgI{A;B|C} \eqdef H(\rho^{AC}) + H(\rho^{BC}) - H(\rho^{ABC}) - H(\rho^C)$ for any $\rho^{ABC} \in \calD(\calH^A\otimes \calH^B\otimes \calH^C)$. If $P_X$ is a distribution on $\calX$ and $x\mapsto \rho_x$ is a cq-channel, we denote the Holevo information by
\begin{align}
I(P_X, \rho_x) \eqdef H\pr{\sum_x P_X(x) \rho_x} - \sum_x P_X(x) H(\rho_x).
\end{align} 
 For $\rho, \sigma\in \calD(\calH)$, the quantum relative entropy is
 \begin{align}
 \D{\rho}{\sigma} \eqdef \begin{cases} \tr{\rho\pr{\log \rho - \log \sigma}} &\text{if }\, \text{supp} (\rho) \subset \text{supp} (\sigma),\\
 \infty&\text{otherwise},\end{cases}
\end{align}  
and the $\chi_2$ distance is
\begin{align}
 \chi_2\pr{\rho\|\sigma} \eqdef \begin{cases} \tr{\rho^2\sigma^{-1}}-1 &\text{if }\, \text{supp} (\rho) \subset \text{supp} (\sigma),\\
 \infty&\text{otherwise}.\end{cases}
\end{align}

\section{Framework for covert and secret key generation over classical-quantum channels}
\label{sec:prob-form}

\begin{figure}
  \centering
  \includegraphics[width=\linewidth]{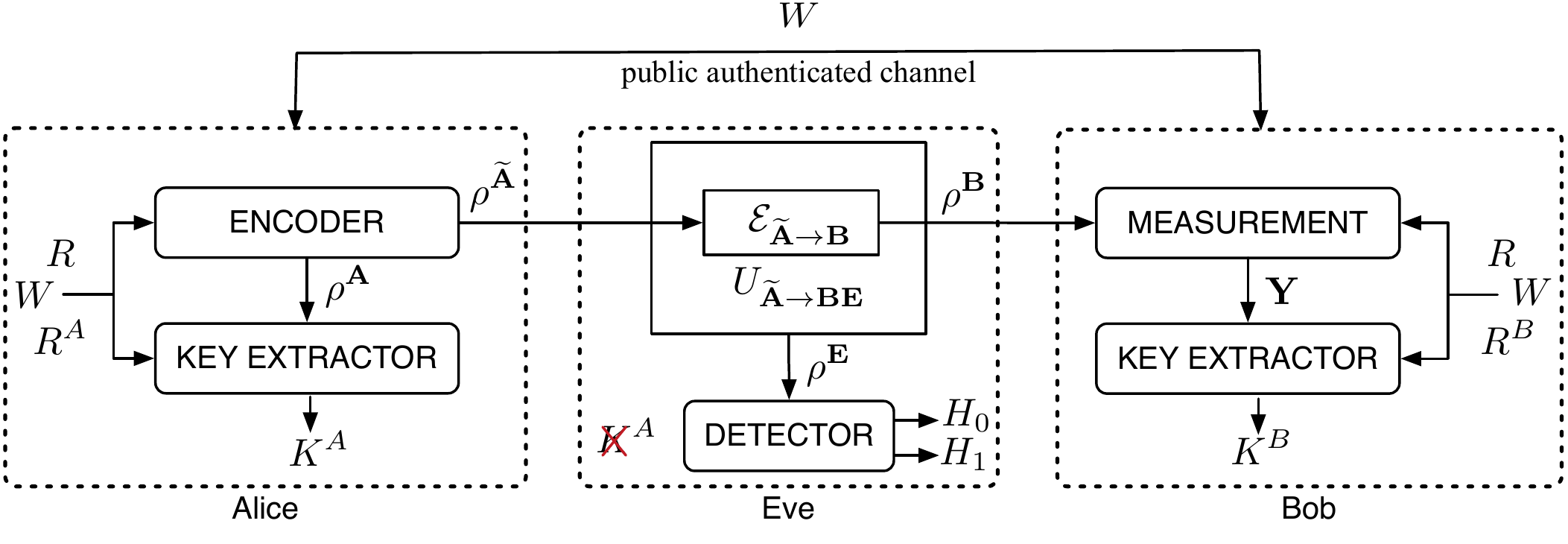}
  \caption{Model of covert  and secret key expansion}
  \label{fig:model}
\end{figure}
As illustrated in Fig.~\ref{fig:model}, we consider a setting in which two legitimate parties, Alice and Bob, desire to share a secret key while avoiding detection from an adversary, Eve, by exploiting one-way quantum channel and a two-way classical authenticated public channel of unlimited capacity. Specifically, in an entanglement-based representation, over  $T$ time steps, Alice prepares a classical-quantum state $\rho^{A\widetilde{A}}$, possibly depending on public communications, on a bipartite system described by a Hilbert space $\calH^{A} \otimes \calH^{\widetilde{A}}$ and sends the sub-system $\widetilde{A}$ to Bob. We assume that for $\calX\subset \mathbb{R}$, $\{\ket{x}^A\}_{x\in\calX}$ is an orthonormal basis for $\calH^A$,  all eigenvectors of $\rho^A$ are always in $\{\ket{x}\}_{x\in\calX}$, and for any $x\in\calX$, the conditional state $\rho_x^{\widetilde{A}}$ is fixed. For simplicity, we restrict our attention to a two-dimensional $\calH^A$, i.e., $\calX = \{0, 1\}$, in which $0$ represents an ``innocent'' symbol, corresponding to the absence of communication, while $1$ represents an `non-innocent'' symbol. We further assume that the ``start'' ($t=1$) and ``stop'' ($t=T$) times of the protocol are known to all parties and obtained through other modalities, e.g., GPS signals. Eve  expects the product state $(\rho_0^{\widetilde{A}})^{\proddist T}$ when there is no communication and may modify the states according to a quantum channel.  We denote the entire state received by Bob and acting on the product Hilbert space $(\calH^{B})^{\proddist T}$ by $\rho^{\mathbf{B}}$.

 For the purpose of covert communications, we need to distinguish protocols based on the type of Eve's attacks. In the most general case, Eve implements a \emph{coherent attack} described by a  quantum channel 
 \begin{align}\calE_{\widetilde{\mathbf{A}}\to \mathbf{B}}:\calL\pr{\pr{\calH^{\widetilde{A}}}^{\proddist T}} \to \calL\pr{\pr{\calH^B}^{\proddist T}},
 \end{align}
with isomorphic extension $U_{\mathbf{\widetilde{A}}\to \mathbf{B}\mathbf{E}}$, in which Bob receives $\rho^{\mathbf{B}} = \calE_{\widetilde{\mathbf{A}}\to \mathbf{B}}(\rho^{\widetilde{\mathbf{A}}})$ at the end of the transmission, and therefore, no useful public communication can happen during the transmission. Note that this has no impact on QKD since no useful information is shared until the end of the protocol. However, aborting the protocol in the middle could be crucial to be undetectable.
 A less powerful Eve can only implement \emph{collective attacks} described by quantum channels of the form $\calE_{\widetilde{\mathbf{A}}\to \mathbf{B}}= \calE_{\widetilde{A}\to B}^{\proddist T}$, i.e., Eve applies the same channel independently to each state transmitted by Alice. In this case, we can assume that Bob receives each state before Alice transmits the next state, which allows meaningful public communication during the transmission between Alice and Bob. Throughout the paper, we consider two scenarios  for collective attacks based on Alice's and Bob's knowledge about Eve's attack. First, when Alice and Bob have exact knowledge of the attack, we define an effective cq-channel $x \mapsto \rho_x^{BE}$, with marginal cq-channels  $x\to\rho_x^B$ and $x\to\rho_x^E$ from Alice to Bob and Eve, respectively. Second, when Eve's channel $\calE_{\widetilde{A}\to B}$ is unknown, we still consider effective cq-channels $x \mapsto \rho_x^B$ and $x\mapsto\rho_x^E$ where $\rho_x^B$ and $\rho_x^E$ are defined as $\calE_{A\to B}(\rho_x^{\widetilde{A}})$ and $\calE_{A\to B}^\dagger(\rho^{\widetilde{A}})$, respectively, and are unknown to both Alice and Bob. Our choice of $\rho_x^E$ accounts for the maximum amount of information that Eve can possibly gain, i.e., the state corresponding to a reference system for an isomorphic extension of the channel from Alice to Bob. Finally, Alice and Bob have access to independent local sources of randomness, denoted by $R^A \in \calR^A$ and $R^B\in\calR^B$, respectively, as well as a source of secret key $R\in\calR$.


For simplicity, we describe the protocols with only reverse public communication, but extension  to the general case in which forward public communication is also allowed would be possible. A protocol for key generation operates in $T$ time steps as follows. Alice and Bob draw realizations $r^A$, $r^B$, and $r$ of their local and common randomness. Subsequently, in every state $t\in\intseq{1}{T}$:
\begin{itemize}
\item Alice prepares a classical-quantum state $\rho^{A\widetilde{A}}$ as explained earlier using her local randomness $r^A$, the common randomness $r$, as well as past public messages from Bob denoted $(w_1,\cdots,w_{t-1})$ and sends $\rho^{\widetilde{A}}$ to Bob through the channel controlled by Eve;
\item Bob performs a quantum measurement on his available quantum state to obtain a classical measurement $y_t\in\calY\subset\mathbb{R}$;
\item Bob sends a message $W_t\in\calW_t$ over the public channel using his local randomness $r_B$, the common randomness $r$, as well as past measurements $y^{t-1}$. The choice of alphabet $\calW_t$ is part of the protocol design.
\end{itemize}
At the end of time step $T$, \emph{when no further public communication happens}, Eve performs a measurement on her state $\rho^{\mathbf{E}}$, as an attempt to detect the communication and obtain information about the secret key, while Alice and Bob use all their available information and randomness to compute two long binary strings $s^X$ and $s^Y$, respectively, as well as the number of bits $\ell^X$ and $\ell^Y$, respectively, to use as a secret key. The length of $s^X$ and $s^Y$ is public and fixed at the beginning of the protocol. Alice finally sets her key $k^X$ to be the first $\ell^X$ bits of $s^X$ while Bob sets his key $k^Y$ to be the first $\ell^Y$ bits of $s^Y$.

A protocol is called an $(\epsilon,\delta,\mu)$-protocol if the following properties hold. Let $W$, $S^X$, $S^Y$, $K^X$, $K^Y$, be the random variables representing the total public communication, Alice's random string, Bob's random string, Alice's key, and Bob's key, respectively. We require:
\begin{itemize}
\item $\epsilon$-reliability: $P_e\eqdef \P{K^X\neq K^Y}\leq \epsilon$, which implicitly includes the condition $\ell^X=\ell^Y$;
\item $\delta$-secrecy: $S \eqdef \D{\rho^{\mathbf{E}WS^X}}{\rho^{\mathbf{E}W} \otimes \rho^{S^X}_{\text{unif}}} \leq \delta$, where $\rho^{\mathbf{E}WS^X}$ is the joint density matrix of the eavesdropper's observations, public messages and Alice's random string, and $\rho^{S^X}_{\text{unif}} $ is a mixed state for $S^X$ corresponding to a uniform distribution;
  \item $\mu$-covertness: $C \eqdef  \D{\rho^{\mathbf{E}W}}{\pr{\rho^{\mathbf{E}}_0}\otimes \rho^{W}_{\text{unif}}}\leq\mu$, where $\rho^{\mathbf{E}}_0$ is the density matrix of the eavesdropper's observations when no communication takes place and $\rho^{W}_{\text{unif}}$ is a mixed state for $W$ corresponding to a uniform distribution on $\times _t \calW_t$.
\end{itemize}
A protocol is \emph{efficient} if it allows key expansion so that the number of key bits created exceeds the number of common randomness bits consumed. Our goal is to analyze under what conditions  efficient $(\epsilon,\mu,\delta)$-protocols might exist.

A couple of remarks are in order regarding our protocol definition. Note that the choice of
  the key length is a part of the protocol. However, $\delta$-secrecy requires the string $S^X$ to be secret and not just
  $K^X$. This is merely enforced for technical reasons, so that the
  relative entropy is a deterministic quantity irrespective of the
  length of the key. Since $\epsilon$-reliability only applies to the
  bits of $K^X$, Alice can always generate the remaining bits of $S^X$
  independently and uniformly at random using her local randomness, so
  that our definition does not incur any loss of generality.
 By convention, we assume that the public communication is not by itself a proof of communication. Instead, $\mu$-covertness only requires that the public bits look uniformly distributed and do not reveal communication on the quantum channel.
 We point out that $\delta$-secrecy and $\mu$-covertness are  ``one-shot'' guarantees, in the sense that they only ensure a low probability of detection for a single execution of the protocol. In fact, by repeating the protocol $k$ consecutive and independent times, a $(\epsilon,\delta,\mu)$-protocol gives rise to a $(k\epsilon,k\delta,k\mu)$-protocol. Additional post-processing can reduce the constant $k\epsilon$ and $k\delta$ but cannot affect the constant $k\mu$. This suggests that the protocol should be designed for small values of $\mu$ and large values of $T$. Finally, the particular choice of the quantum state $\rho_{\text{unif}}^W$ in the definition of covertness plays no role in our proofs. As long as there exists a specific state corresponding to no communication for the public communication, our proof holds and leads to  a covert and secret key generation scheme.

\section{Covert and secret key generation over known cq-channel}
\label{sec:quantum_passive_results}

We first address the situation in which the cq-channels are \emph{fixed} and known ahead of time, and in which the adversary is \emph{passive}. In this special case, the length of the key can be computed ahead of time, and there is no need to distinguish between the random strings $S^X$ and $S^Y$ and the keys $K^X$ and $K^Y$. Furthermore, it becomes possible to define a notion of covert and secret key capacity as follows. 
A throughput $\Theta$ is achievable if there exists a sequence of $(\epsilon_T, \delta_T, \mu_T)$-protocols generating  $\ell_T$ bits of secret key while consuming  $r_T$ bits of secret key over $T$ stages and such that
\begin{align}
  \lim_{T\to\infty} \epsilon_T = \lim_{T\to\infty} \delta_T &= \lim_{T\to\infty} \mu_T = 0,\\
  \ell_T &= \omega(\log T),\\
  \text{ and } \lim_{T\to \infty} \frac{ \ell_T-r_T}{\sqrt{T\mu_T}} &\geq \Theta.\label{eq:def-rate}
\end{align}
The supremum of all achievable throughputs is called the \emph{covert and secret key capacity} and denoted $C_\textnormal{qck}$. Note that the definition of the throughput already captures the scaling of the throughput with the square root of the number of channel uses, $\sqrt{T}$. The scaling is justified a posteriori by our analysis that shows that $C_{\textnormal{qck}}$ is lower bounded by a constant that only depends on the channel parameters. The unit of $C_{\textnormal{qck}}$ is therefore in nats per square root of channel use. Our main results are lower-bounds on the covert capacity obtained by showing the existence of sequences of covert secret key generation protocols using reverse or forward reconciliation. 

To analyze the performance of protocols with forward reconciliation, we build upon existing results for covert communication over cq-channels \cite{Sheikholeslami2016, Wang2016c} with appropriate extensions to guarantee secrecy. The innovative principle of our approach is best highlighted for protocols with reverse reconciliation as follows. In a first phase, Alice transmits a sequence of independent and identically distributed (iid) symbols $\mathbf{X}$ distributed according to a Bernoulli$(\alpha_T)$ distribution over the cq-channel, where
$\alpha_T\in\omega((\frac{\log T}{T})^{\frac{2}{3}})\cap o(\frac{1}{\sqrt{T}})$.
Intuitively, the choice of $\{\alpha_T\}_{T\geq 1}$ must ensure that $\mathbf{X}$ is sparse, so that the warden cannot suspect the existence of information symbols, but not so sparse that Alice and Bob cannot extract a long enough key from their observation. We shall show that our choice of $\{\alpha_T\}_{T\geq1}$  satisfies simultaneously both requirements. In a second phase, Bob measures his received quantum states in some basis and, based on the output of the measurements, generates two messages $W$ and $K$, representing public information reconciliation and secret key, respectively. Bob subsequently sends $W$ through the public channel, and Alice  recovers $K$ using $W$ and $\mathbf{X}$. Although the second phase of the protocol seems deceptively similar to a standard application of information reconciliation and privacy amplification, there exists a technical difficulty because of the specific distributions of Alice's and Bob's  observations, which precludes the use of standard tools. More precisely, in the finite-length analysis resulting from standard information reconciliation and privacy amplification, penalty terms appear that depend on the second order statistics of the conditional information density of Bob and Eve's observations given $\mathbf{X}$, which  scale as $\omega(\sqrt{T})$; However, the scaling of the covert throughput is known to be $o(\sqrt{T})$, which is dominated by those penalties and therefore prohibits key expansion. We instead resort to a technique called likelihood encoder~\cite{Song2016}, in which the encoders used to generate $W$ and $K$ are derived from different principles. In particular, instead of information reconciliation and privacy amplification, we use channel coding and channel resolvability to only analyze quantities depending on mutual information, which has the same scaling as the number of bits generated by a covert protocol.

The analysis of protocols with forward and reverse reconciliation leads to Theorem~\ref{th:main_quantum_passive} below, which proof is given in Appendix~\ref{sec:proof-theorem-passive}. 
\begin{theorem}
\label{th:main_quantum_passive}
Let $\{\ket{y}^B\}$ be any orthonormal basis for  $\calH^B$, and define $\widetilde{\rho}^{BE}_x \eqdef \sum_y \pr{\ket{y}\bra{y}^B\otimes I^E}\rho^{BE}_x\pr{\ket{y}\bra{y}^B\otimes I^E}$. Assume that $\calH^B$ and $\calH^E$ have finite dimension and $0<\chi_2\pr{\widetilde{\rho}^E_1\| \widetilde{\rho}^E_0}<\infty$. We have
\begin{align}
\label{eq:passive-forward-througput}
C_{\textnormal{qck}} \geq  \sqrt{\frac{2}{\chi_2\pr{\widetilde{\rho}^E_1\| \widetilde{\rho}^E_0}}} \pr{\avgD{\widetilde{\rho}_1^B}{\widetilde{\rho}_0^B}-\avgD{\widetilde{\rho}_1^E}{\widetilde{\rho}_0^E}},
\end{align}
and if $\widetilde{\rho}^{BE}_0 = \widetilde{\rho}^B_0 \otimes \widetilde{\rho}^E_0$, then
\begin{multline}
C_{\textnormal{qck}} \geq \sqrt{\frac{2}{\chi_2\pr{\widetilde{\rho}^E_1\| \widetilde{\rho}^E_0}}}\left( \D{\widetilde{\rho}^{BE}_1}{\widetilde{\rho}^{BE}_0} - \D{\widetilde{\rho}^E_1}{\widetilde{\rho}^E_0}\right.\\\left. - \D{\widetilde{\rho}^{BE}_1}{\widetilde{\rho}^B_1 \otimes \widetilde{\rho}^E_1}\right), 
\end{multline}
which simplifies when $\widetilde{\rho}^{BE}_1 = \widetilde{\rho}^B_1 \otimes \widetilde{\rho}^E_1$ as
\begin{align}
\label{eq:passive-reverse-througput-simp}
C_{\textnormal{qck}} \geq  \sqrt{\frac{2}{\chi_2\pr{\widetilde{\rho}^E_1\| \widetilde{\rho}^E_0}}}\avgD{\widetilde{\rho}_1^B}{\widetilde{\rho}_0^B}.
\end{align}
\end{theorem}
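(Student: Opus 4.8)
The plan is to exhibit, for each bound in the statement, an explicit family of $(\epsilon_T,\delta_T,\mu_T)$-protocols in which Alice transmits an iid Bernoulli$(\alpha_T)$ sequence $\mathbf{X}$ over the cq-channel with $\alpha_T\in\omega\pr{(\log T/T)^{2/3}}\cap o(1/\sqrt{T})$, and then to evaluate its throughput. The one analytic device tying the covertness budget $\mu_T$ to $\alpha_T$ is the second-order expansion of quantum relative entropy: since $0<\chi_2\pr{\widetilde{\rho}^E_1\|\widetilde{\rho}^E_0}<\infty$,
\begin{align}
\D{\alpha\widetilde{\rho}^E_1+(1-\alpha)\widetilde{\rho}^E_0}{\widetilde{\rho}^E_0}=\tfrac{\alpha^2}{2}\,\chi_2\pr{\widetilde{\rho}^E_1\|\widetilde{\rho}^E_0}+o(\alpha^2)\qquad(\alpha\to0),
\end{align}
so by additivity of relative entropy over the $T$ product uses, Eve's state before she sees the public message satisfies $\D{\rho^{\mathbf{E}}}{\rho^{\mathbf{E}}_0}=\tfrac{T\alpha_T^2}{2}\chi_2\pr{\widetilde{\rho}^E_1\|\widetilde{\rho}^E_0}(1+o(1))$. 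Hence $\mu_T\to0$ forces the upper endpoint $\alpha_T=o(1/\sqrt{T})$ and pins $\alpha_T\sim\sqrt{2\mu_T/\pr{T\,\chi_2\pr{\widetilde{\rho}^E_1\|\widetilde{\rho}^E_0}}}$. Since the generated key will turn out to have length $\ell_T\approx T\alpha_T\times(\text{per-symbol secret-key rate})$ and the consumed secret key $r_T$ will be $o\pr{\sqrt{T\mu_T}}$, substituting this $\alpha_T$ gives $(\ell_T-r_T)/\sqrt{T\mu_T}\to\sqrt{2/\chi_2\pr{\widetilde{\rho}^E_1\|\widetilde{\rho}^E_0}}\times(\text{per-symbol rate})$; everything then reduces to producing the per-symbol rates and to checking $\epsilon_T,\delta_T\to0$ and $\ell_T=\omega(\log T)$, which (together with the smallness of various finite-length remainders) is exactly what the lower endpoint $\alpha_T=\omega\pr{(\log T/T)^{2/3}}$ buys, noting $T\alpha_T=\omega\pr{T^{1/3}(\log T)^{2/3}}=\omega(\log T)$.

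\textbf{Forward reconciliation, Eq.~\eqref{eq:passive-forward-througput}.} I would start from the covert-communication constructions over cq-channels of \cite{Sheikholeslami2016,Wang2016c} and overlay a privacy-amplification layer. Bob measures each received state in the fixed basis $\{\ket{y}^B\}$ of the statement, obtaining a classical string $\mathbf{Y}$; Alice uses a wiretap-type code whose inner layer carries a reconciliation message decodable by Bob and whose outer layer is a key hashed against Eve's side information $(\mathbf{E},W)$. One-shot achievability for cq-channels shows that Bob reliably recovers roughly $T$ times the Holevo information of his channel under the Bernoulli$(\alpha_T)$ input while roughly $T$ times the Holevo information of Eve's channel must be sacrificed for secrecy, leaving a net key whose per-symbol rate, after the first-order expansion $I(\mathrm{Bernoulli}(\alpha),\rho_x)=\alpha\D{\rho_1}{\rho_0}+o(\alpha)$ of each Holevo information, equals $\D{\widetilde{\rho}^B_1}{\widetilde{\rho}^B_0}-\D{\widetilde{\rho}^E_1}{\widetilde{\rho}^E_0}$; inserting $\alpha_T$ yields \eqref{eq:passive-forward-througput}. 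Covertness of the public part follows by making the reconciliation message look uniform even to Eve through a resolvability argument, so that $C\leq\D{\rho^{\mathbf{E}}}{\rho^{\mathbf{E}}_0}+o(1)$ and the expansion above applies.

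\textbf{Reverse reconciliation, the second bound and its simplification.} This is the novel and hardest part, for the reason flagged before the theorem: using information reconciliation and privacy amplification directly produces second-order penalties governed by the variances of the conditional information densities of $\mathbf{Y}$ and $\mathbf{E}$ given $\mathbf{X}$, which scale as $\omega(\sqrt{T})$ and swamp the $o(\sqrt{T})$-scale key. Instead I would have Bob generate both the public reconciliation message $W$ and the key $K$ from one random codebook $\{\mathbf{y}(w,k)\}$ of sequences drawn iid from the Bernoulli$(\alpha_T)$-induced output law, using a likelihood encoder: given his measurements $\mathbf{Y}$, Bob draws $(W,K)$ with probability proportional to the likelihood of $\mathbf{Y}$ under $\mathbf{y}(W,K)$. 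On the resulting idealized joint distribution the three requirements decouple: reliability becomes a channel-coding/packing problem, so Alice recovers $K$ from $(\mathbf{X},W)$ whenever $K$ has rate below $\tfrac1T\avgI{\mathbf{X};\mathbf{Y}}$; secrecy becomes a soft-covering/channel-resolvability problem, $K$ being decoupled from $(\mathbf{E},W)$ once the codebook rate clears the corresponding resolvability threshold; and covertness becomes a second soft-covering problem showing that $W$ is approximately uniform conditioned on $\mathbf{E}$, so $C\leq\D{\rho^{\mathbf{E}}}{\rho^{\mathbf{E}}_0}+o(1)$ — here the hypothesis $\widetilde{\rho}^{BE}_0=\widetilde{\rho}^B_0\otimes\widetilde{\rho}^E_0$ is exactly what prevents Bob's outcomes from leaking to $\mathbf{E}$ through the innocent symbol. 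Expanding the Holevo-information and resolvability quantities to first order in $\alpha_T$ and combining them with the factorization of $\widetilde{\rho}^{BE}_0$, the net per-symbol rate takes the form $\D{\widetilde{\rho}^{BE}_1}{\widetilde{\rho}^{BE}_0}-\D{\widetilde{\rho}^E_1}{\widetilde{\rho}^E_0}-\D{\widetilde{\rho}^{BE}_1}{\widetilde{\rho}^B_1\otimes\widetilde{\rho}^E_1}$; substituting $\alpha_T\sim\sqrt{2\mu_T/\pr{T\chi_2\pr{\widetilde{\rho}^E_1\|\widetilde{\rho}^E_0}}}$ gives the second bound. When in addition $\widetilde{\rho}^{BE}_1=\widetilde{\rho}^B_1\otimes\widetilde{\rho}^E_1$, the last term vanishes and the additivity of relative entropy, $\D{\sigma\otimes\tau}{\sigma'\otimes\tau'}=\D{\sigma}{\sigma'}+\D{\tau}{\tau'}$, collapses the combination to $\D{\widetilde{\rho}^B_1}{\widetilde{\rho}^B_0}$, which is \eqref{eq:passive-reverse-througput-simp}.

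\textbf{Main obstacle.} I expect the crux to be the second and third steps of the reverse-reconciliation construction: one must establish genuinely one-shot secrecy and covertness bounds for the likelihood-encoder output, in the quantum relative-entropy metrics of the $\delta$- and $\mu$-conditions, whose penalty terms are controlled only by first-order, mutual-information-scale quantities of order $\alpha_T T=o(\sqrt{T})$ rather than by information-density variances of order $\omega(\sqrt{T})$; and then verify that all residual error, leakage, and Taylor-remainder contributions — including the polynomial dimension and type prefactors attached to the cq soft-covering and packing lemmas, which is why $\calH^B$ and $\calH^E$ are taken finite-dimensional — are $o(1)$ on the window $\alpha_T\in\omega\pr{(\log T/T)^{2/3}}\cap o(1/\sqrt{T})$ while $\ell_T=\omega(\log T)$ and $r_T=o\pr{\sqrt{T\mu_T}}$ still hold. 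The forward bound, by contrast, is essentially a secrecy-augmented repackaging of existing covert-communication results.
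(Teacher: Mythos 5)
Your proposal follows the paper's proof in all essentials: the same Bernoulli($\alpha_T$) signaling window, the same $\chi_2$ second-order expansion tying $\mu_T$ to $\alpha_T$, a secrecy-augmented version of the covert code of \cite{Sheikholeslami2016} for the forward bound, and for reverse reconciliation the same likelihood-encoder reduction to a cq channel-coding plus channel-resolvability auxiliary problem, with the per-symbol rate identity obtained exactly as you indicate from $\avgI{B;E}=\avgI{A;B}+\avgI{A;E}-\avgI{A;BE}+\avgI{B;E|A}$ under $\widetilde{\rho}_0^{BE}=\widetilde{\rho}_0^B\otimes\widetilde{\rho}_0^E$. The one structural detail to add is a third codebook index (the paper's $W_3$, at rate just above $I(Q_Y,\widetilde{\rho}_y^E)T$) supplying the local randomization over which Eve's state is resolved --- with only $(W,K)$ there is nothing to average over to decouple $K$ from $(\mathbf{E},W)$ --- but your plan's resolvability step forces this anyway.
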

While this result certainly does not hold for the most general quantum setting, note that the covert secret key throughputs predicted hold with a precise definition of covertness that explicitly includes the public communication and demonstrate the existence of efficient protocols that allow key expansion. Perhaps  more importantly, as apparent in the proof of the result, such protocols do \emph{not} rely on a secret key to determine the instances in which Alice transmits non-zero states; in contrast, our proof shows the existence of reconciliation and key-extraction algorithms capable of \emph{extracting} the diffuse secret correlations created by Alice's sparse transmission of non-innocent states.

\begin{figure}
  \centering
   \includegraphics[width=\linewidth]{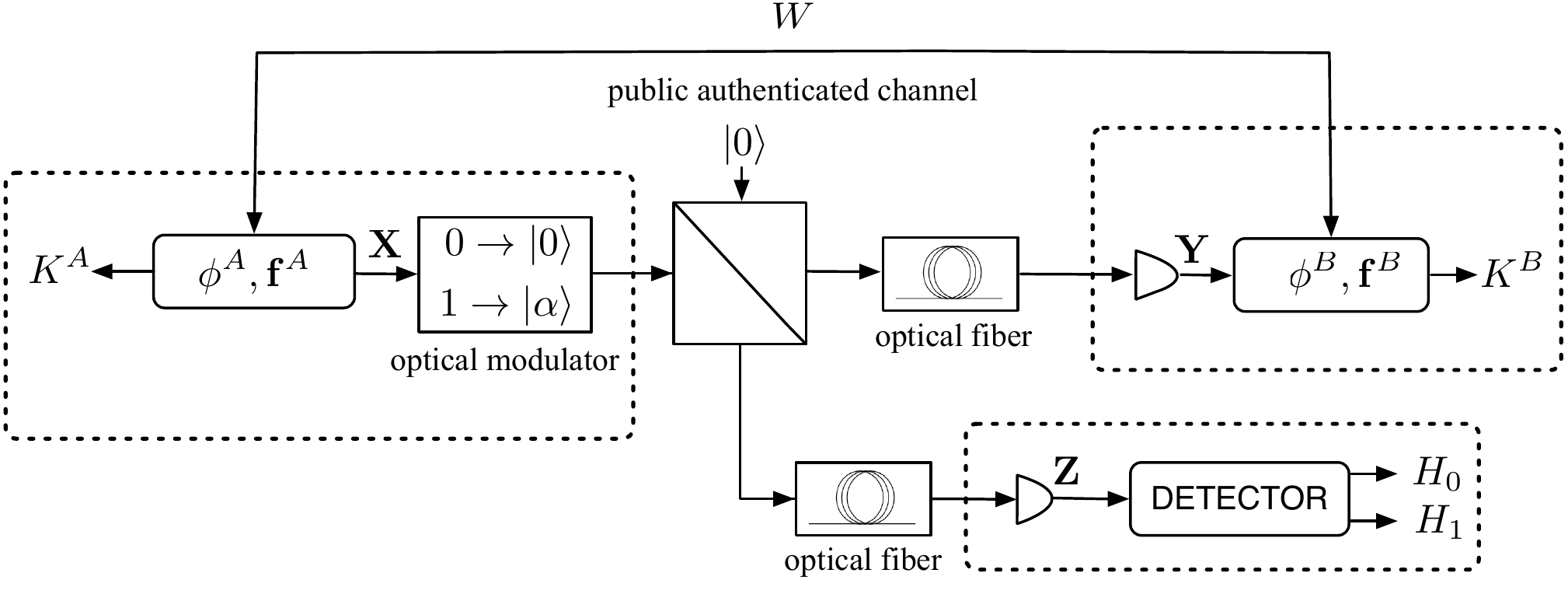}
  \caption{Simplified model of a lossy bosonic channel.}
  \label{fig:beam-splitter}
\end{figure}

\begin{figure}
  \centering
   \includegraphics[width=\linewidth]{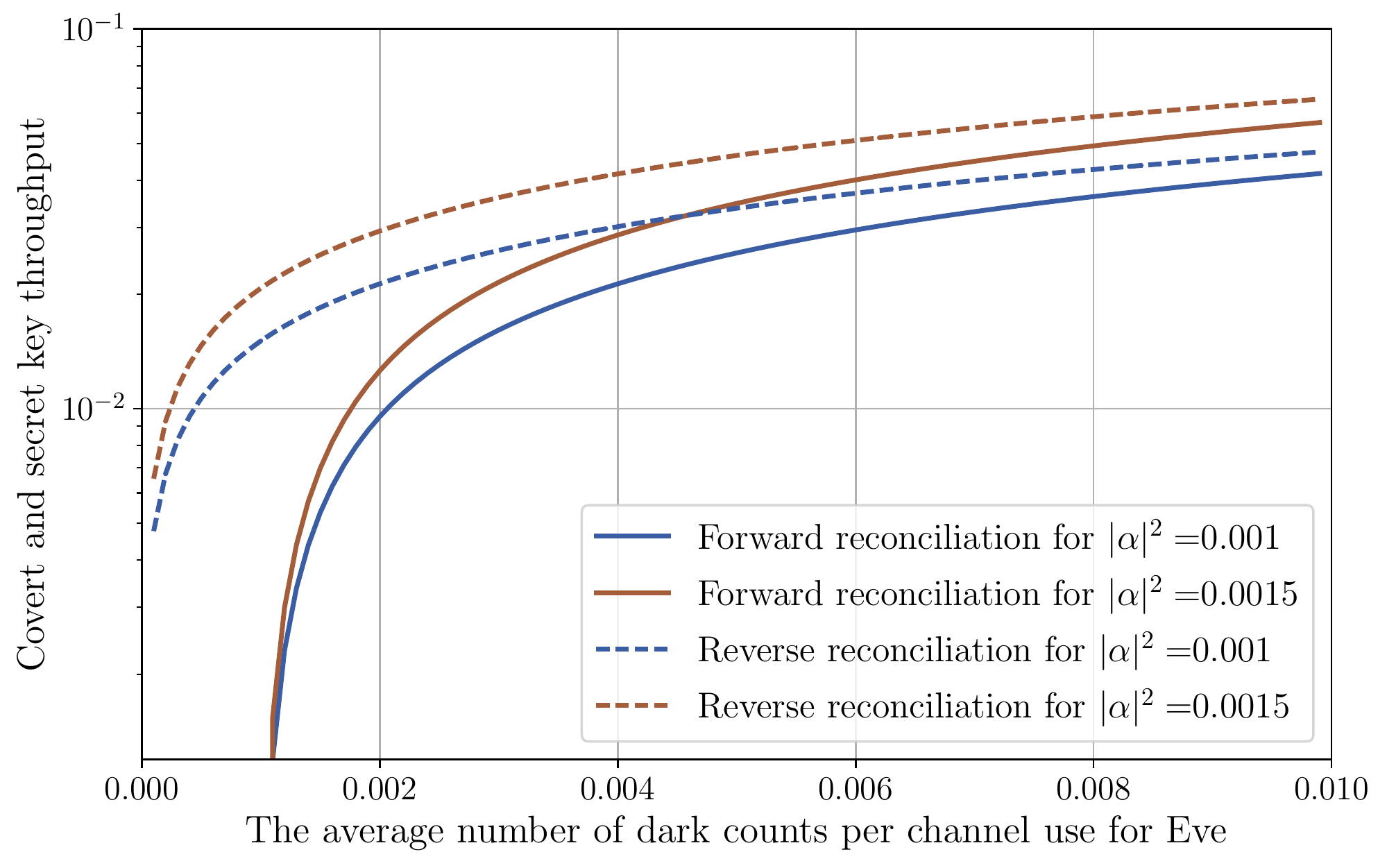}
  \caption{Covert and secret key generation throughput as a function of Eve's dark count rate.}
  \label{fig:dark-count}
\end{figure}

\begin{figure}
  \centering
  \includegraphics[width=\linewidth]{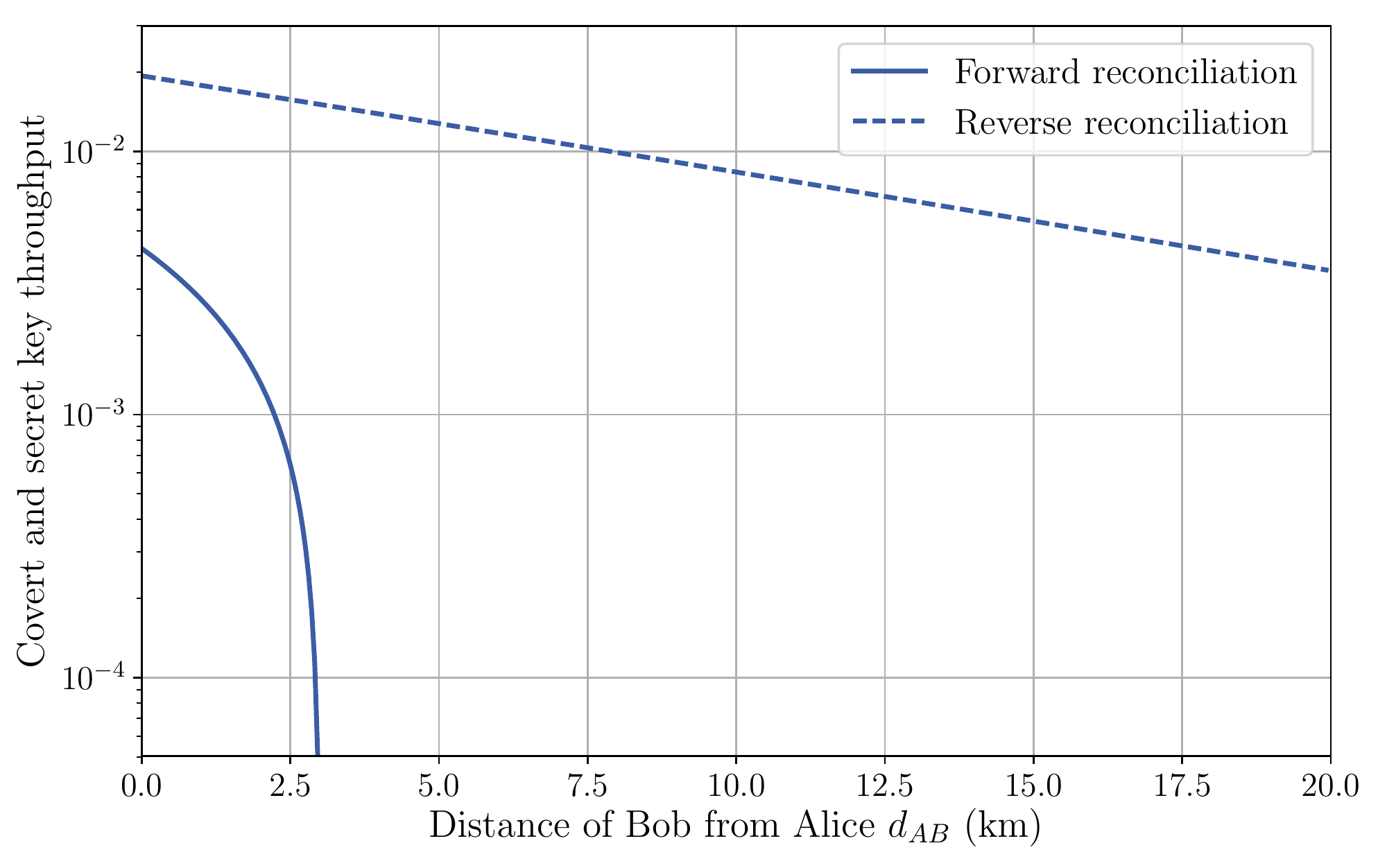}
  \caption{Covert and secret key generation throughput for a lossy bosonic channel.}
  \label{fig:dark-count-distance}
\end{figure}
As an illustration, we consider the situation depicted in Fig.~\ref{fig:beam-splitter} in which the input port of a balanced beam-splitter is in control of Alice while Bob and Eve are each connected to one of the output ports through optical fibers of length $d_{AB}$ and $d_{AE}$, respectively, and loss $\gamma$~dB/km. We further assume that the second input port is in the vacuum state, and that Alice uses the vacuum state $\ket{0}$ and a coherent state $\ket{\alpha}$ as the innocent and the information symbol, respectively. Bob and Eve measure their output ports with photodetectors to count the number of photons at each channel use. The photodetectors suffer from dark count 
that is beneficial for covert communication since detection of photons at Eve does not necessarily imply the existence communication. Let $\eta_B$ and $\eta_E$ be Bob and Eve's photodetector efficiency, respectively, and $\lambda_B$ and $\lambda_E$ be Bob and Eve's photodetector dark count rate, respectively. The achievable covert and secret key throughputs can be obtained by substituting the quantities
\begin{align}
\widetilde{\eta}_{B} &\eqdef \eta_B 10^{-\frac{d_{AB}  \gamma}{10} }\\
\widetilde{\eta}_{E} &\eqdef \eta_E10^{-\frac{d_{AE} \gamma}{10} }\\
\chi_2\pr{\widetilde{\rho}^E_1\| \widetilde{\rho}^E_0} &= e^{\frac{(\lambda_E + |\alpha|^2 \widetilde{\eta}_E)^2}{\lambda_E} -\lambda_E +2|\alpha|^2 \widetilde{\eta}_E}\\
\avgD{\widetilde{\rho}_1^B}{\widetilde{\rho}_0^B} &= (\lambda_B + |\alpha|^2 \widetilde{\eta}_B)\log\pr{\lambda_B + |\alpha|^2 \widetilde{\eta}_B} - |\alpha|^2 \widetilde{\eta}_B\\
\avgD{\widetilde{\rho}_1^E}{\widetilde{\rho}_0^E} &= (\lambda_E + |\alpha|^2 \widetilde{\eta}_E)\log\pr{\lambda_E + |\alpha|^2 \widetilde{\eta}_E} - |\alpha|^2 \widetilde{\eta}_E
\end{align}
in \eqref{eq:passive-forward-througput} and \eqref{eq:passive-reverse-througput-simp} for forward and reverse reconciliation, respectively. Note that the output states of this channel belong to infinite-dimensional spaces and, strictly speaking, one cannot directly apply Theorem~\ref{th:main_quantum_passive}. Nevertheless, since for the number states $\{\ket{n}\}_{n\geq 0}$, $\bra{n}\rho\ket{n}$ decays exponentially for all output states $\rho$,  one can construct a sequence of channels with finite-dimensional output states for which the quantities used in \eqref{eq:passive-forward-througput} and \eqref{eq:passive-reverse-througput-simp}, as well as the performance of any covert and secret key generation protocol, tend to those of the original channel.

We illustrate in Fig.~\ref{fig:dark-count} the achievable covert and secret key throughput as a function of Eve's photodetector dark count rate $\lambda_E$ for $\gamma = 0.2 $~dB/km, $\eta_B = \eta_E = 0.97$, $\lambda_B = 0.001$, and $d_{AB} = d_{AE} = 3$~km. In Fig.~\ref{fig:dark-count-distance}, we also illustrate the achievable covert and secret key throughput as a function of the distance of Bob to Alice $d_{AB}$ for  $|\alpha|^2 = 0.001$, $\gamma = 0.2$~dB/km, $\eta_B = \eta_E = 0.97$, $\lambda_B = \lambda_E = 0.001$, and $d_{AE} = 3$~km. As expected, the secret and covert key throughputs are orders of magnitude lower than their counterparts without covertness constraint. This is an unfortunate but unavoidable byproduct of the covertness constraint, which severely limits how many useful bits can be embedded in transmitted signals.

\section{Covert and secret key generation over unknown cq-channel}
\label{sec:unkown-channels}
We now relax the assumption that Alice and Bob have full-knowledge of the communication channel. To this end, we assume that to transmit ``0'' and ``1'', Alice prepares two states $\rho_0^{\widetilde{A}}$ and $\rho_1^{\widetilde{A}}$, respectively, and sends these states to Bob through an unknown but fixed quantum channel $\calE_{{\widetilde{A}}\to B}$. We know that Eve's information about the communication is limited to the output of the complementary channel of $\calE_{{\widetilde{A}}\to B}$ denoted by $\calE_{{\widetilde{A}}\to E} \eqdef \calE_{{\widetilde{A}}\to B}^ \dagger$ \cite{wilde2013quantum}. In this setting, we define two cq-channels from Alice to Bob and Eve as $x\mapsto\rho_x^B \eqdef \calE_{{\widetilde{A}}\to B}(\rho_x^{\widetilde{A}})$ and $x\mapsto\rho_x^E\eqdef \calE_{{\widetilde{A}}\to E}(\rho_x^{\widetilde{A}})$. To communicate covertly in this scenario, we propose a protocol consisting of two phases: in the first phase, Alice and Bob  covertly perform quantum tomography to derive bounds on the required parameters of the channels, and in the second phase, after the class of possible channels is sufficiently narrowed down, Alice and Bob run a \emph{universal} covert code,  the use of which is critical since there would always be an  error in the estimation phase. More precisely, by the central limit theorem, with high probability, the estimation error would be $\Omega(\frac{1}{\sqrt{T}})$ when the estimation is taking place over $O(T)$ channel uses. Thus, if a protocol is guaranteed to sustain a certain performance for only the \emph{estimated channel}, the estimation error could potentially lead to a  significant deviation in the predicted performance when the protocol is executed over $T$ uses of the \emph{true channel}. 

Before stating the main result of this section, we recall the definition of the \emph{$\chi$ representation} of a quantum channel from \cite{nielsen2002quantum}. If $\ket{1}, \cdots, \ket{d}$ is an orthonormal basis for $\calH^B$, we let $\widetilde{E}_{d(n-1) + m} \eqdef \ket{n}\bra{m}$ so that $\widetilde{E}_1, \cdots, \widetilde{E}_{d^2}$ forms an orthonormal basis for $\calL(\calH)$. By \cite{nielsen2002quantum}, there exists coefficients $\chi_{j, k}$ such that
\begin{align}
\calE(\rho)= \sum_{j, k}\widetilde{E}_j \rho\widetilde{E}_k^\dagger \chi_{j, k}.
\end{align}
The matrix $\chi$ is defined as the matrix with entries $\chi_{j,k}$. The following then holds.
\begin{theorem}
\label{th:main-collective}
Let $\widetilde{\lambda}^{\chi}$, $\widetilde{\lambda}^{B}$ and $\widetilde{\lambda}^E$ be fixed in $]0, 1]$. Let $\zeta > 0$ and let $\{\alpha_T\}_{T\geq1}$ be such that
\begin{align}
  \alpha_T \in\omega\pr{\pr{\frac{\log T}{T}}^{\frac{2}{3}}}\cap o\pr{\frac{1}{\sqrt{T}}}.\label{eq:condition_alpha}
\end{align}
There exists a vanishing sequence $\{\epsilon_T\}_{T\geq 1}$ and a sequence of $(\epsilon_T,\epsilon_T,\mu_T)$ covert and secret key generation protocols such that for all  quantum channels $\calE_{{\widetilde{A}}\to B}$ with
\begin{align}
\label{eq:collective-lambdae}
\lambda_{\min}(\calE_{{\widetilde{A}}\to E}(\rho_0^{\widetilde{A}})) \geq \widetilde{\lambda}^E,
\end{align}
we have
\begin{align}
\mu_T &\leq (1+\epsilon_T) \frac{\alpha_T^2 \chi_2(\calE_{{\widetilde{A}}\to E}(\rho_1^{\widetilde{A}})\|\calE_{{\widetilde{A}}\to E}(\rho_0^{\widetilde{A}}))T}{2}.
\end{align}
Additionally, if Eq. \eqref{eq:collective-lambdae} holds as well as
\begin{align}
\label{eq:collective-chi-min}
\lambda_{\min}(\chi) \geq \widetilde{\lambda}^{\chi},\\
\label{eq:collective-lambdab}
\lambda_{\min}(\calE_{{\widetilde{A}}\to B}(\rho_0^{\widetilde{A}})) \geq \widetilde{\lambda}^B,
\end{align}
then with probability at least $1-\epsilon_T$, the length of the generated key is at least
\begin{multline}
 (1-\zeta) (\D{\calE_{{\widetilde{A}}\to B}(\rho_1^{\widetilde{A}})}{\calE_{{\widetilde{A}}\to B}(\rho_0^{\widetilde{A}})} \\-\D{\calE_{{\widetilde{A}}\to E}(\rho_1^{\widetilde{A}})}{\calE_{{\widetilde{A}}\to E}(\rho_0^{\widetilde{A}})} )\alpha_T T.\label{eq:key_amount}
\end{multline}
\end{theorem}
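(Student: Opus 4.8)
The plan is to prove Theorem~\ref{th:main-collective} by composing a covert tomography phase with a universal covert secret-key generation phase, carefully tracking how the estimation error propagates into the covertness and key-length guarantees. First I would set up the two-phase architecture: split the $T$ channel uses into a first block of size $\beta_T T$ (for a small constant $\beta_T$, or a vanishing fraction chosen so that the estimation block is asymptotically negligible) devoted to tomography, and a second block for key generation. In the tomography phase, Alice sends the state $\rho_1^{\widetilde{A}}$ on a Bernoulli$(\alpha_T)$-selected sparse subset of channel uses and otherwise sends $\rho_0^{\widetilde{A}}$; Bob performs informationally complete measurements on the received states. Because the transmission is sparse, the number of ``1''-slots is $\Theta(\alpha_T \beta_T T) = \omega((\log T)^{2/3} (\beta_T T)^{1/3})$, which grows without bound, so the empirical estimator of the relevant channel parameters (the $\chi$-matrix of $\calE_{\widetilde A\to B}$, hence of $\calE_{\widetilde A \to E}$ via the complementary channel, and the output states $\rho_0^B,\rho_1^B,\rho_0^E,\rho_1^E$) concentrates. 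By a Chernoff/Hoeffding-type bound combined with the central limit heuristic mentioned in the text, with probability at least $1-\epsilon_T$ the estimates are within $O(T^{-\kappa})$ of the truth for some $\kappa \in (0,1/2)$; crucially this estimation block must itself be covert, which follows from the same $\chi_2$-based covertness bound as in Theorem~\ref{th:main_quantum_passive} applied to the sparse tomography transmission.

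Second I would address the abort mechanism and the universality requirement. After tomography, Alice and Bob compare their estimates against the assumed lower bounds $\widetilde\lambda^E$ (always needed for the covertness claim), and $\widetilde\lambda^\chi, \widetilde\lambda^B$ (needed for the key-length claim); if the estimated channel violates these with margin, they abort. Since abort happens before any further quantum transmission, aborting is automatically undetectable. Conditioned on not aborting, the true channel lies in a shrinking family $\mathcal{F}_T$ of channels all satisfying the eigenvalue lower bounds with slightly relaxed constants, and all within $O(T^{-\kappa})$ of the estimate. For the key-generation phase, I would invoke a \emph{universal} version of the covert secret-key protocol underlying Theorem~\ref{th:main_quantum_passive}: the key point is that the likelihood-encoder / channel-resolvability construction in the reverse-reconciliation scheme can be made to work simultaneously for all channels in $\mathcal{F}_T$ because the penalty terms depend continuously on the channel parameters and the parameters are pinned down up to $o(1)$ error (after scaling by $\alpha_T T$). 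The covertness expression $\mu_T \le (1+\epsilon_T)\alpha_T^2 \chi_2(\calE_{\widetilde A\to E}(\rho_1^{\widetilde A})\|\calE_{\widetilde A\to E}(\rho_0^{\widetilde A}))T/2$ then follows from the one-shot covertness analysis summed over the $\approx T$ uses, with the $(1+\epsilon_T)$ factor absorbing both the tomography-phase contribution and the higher-order terms in the expansion of quantum relative entropy around the product state; the condition $\alpha_T = \omega((\log T/T)^{2/3})$ is exactly what guarantees the third-order remainder is $o(\alpha_T^2 T)$ while $\alpha_T = o(1/\sqrt T)$ keeps $\alpha_T^2 T = o(1)$ so $\mu_T \to 0$.

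Third, for the key-length lower bound I would run the reverse-reconciliation argument from Theorem~\ref{th:main_quantum_passive} but with the channel replaced by the estimated channel and an extra $(1-\zeta)$ slack to absorb the estimation error. The number of key bits produced is, up to lower-order terms, $(\avgD{\widetilde\rho_1^B}{\widetilde\rho_0^B} - \avgD{\widetilde\rho_1^E}{\widetilde\rho_0^E})$ times the number of transmitted ``1''s, which concentrates around $\alpha_T T$; since the relevant relative entropies are continuous functions of the channel parameters on the compact set carved out by $\widetilde\lambda^\chi,\widetilde\lambda^B,\widetilde\lambda^E$ (these lower bounds keep the states bounded away from the boundary of the simplex where relative entropy blows up), the estimate of these quantities is accurate to $o(1)$, and multiplying by $(1-\zeta)$ with $\zeta$ fixed swallows the error for $T$ large. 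Here the eigenvalue lower bounds are doing real work: $\lambda_{\min}(\chi) \ge \widetilde\lambda^\chi$ ensures the complementary channel is well-conditioned so Eve's state estimate is meaningful, and $\lambda_{\min}(\rho_0^B), \lambda_{\min}(\rho_0^E)$ bounded below make $\chi_2$ and $\mathbb{D}$ finite and Lipschitz in a neighborhood.

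The main obstacle I expect is establishing the \emph{universality} of the key-generation code over the family $\mathcal{F}_T$ of channels consistent with the tomography output — that is, exhibiting a single encoder/decoder pair (independent of which channel in $\mathcal{F}_T$ is active) that simultaneously achieves the claimed reliability, secrecy, and key length. The likelihood-encoder and channel-resolvability machinery is naturally suited to this because its performance bounds are in terms of mutual-information-type quantities that vary continuously (indeed Lipschitz, given the eigenvalue bounds) with the channel, but making this precise requires a uniform (in the channel) version of the resolvability soft-covering lemma and of the information-reconciliation converse, with the $\alpha_T$-sparsity introducing the same square-root-scaling subtleties handled in the proof of Theorem~\ref{th:main_quantum_passive}. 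A secondary technical point is ensuring the tomography phase consumes only $o(\ell_T - r_T)$ secret key and $o(\alpha_T T)$ transmitted symbols so that it does not degrade the throughput or the covertness budget; this is a matter of choosing the tomography block length to be a suitably slowly growing or vanishing fraction of $T$, large enough for concentration yet small enough to be negligible in all the final accounting.
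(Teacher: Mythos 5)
Your two-phase architecture (covert tomography followed by a universal covert key-generation code, with an abort option) matches the paper's, but two of your steps would fail as described. First, the tomography phase: you have Alice send only $\rho_1^{\widetilde{A}}$ on the sparse ``on'' slots and Bob perform informationally complete measurements. Output-state tomography on $\calE_{\widetilde{A}\to B}(\rho_1^{\widetilde{A}})$ and $\calE_{\widetilde{A}\to B}(\rho_0^{\widetilde{A}})$ determines only $\rho_0^B$ and $\rho_1^B$; it does not identify the channel itself, and therefore cannot yield the complementary channel or Eve's states $\rho_x^E=\calE_{\widetilde{A}\to B}^{\dagger}(\rho_x^{\widetilde{A}})$, which Bob never observes and which enter both the key-length and covertness expressions through $\D{\rho_1^E}{\rho_0^E}$ and $\chi_2(\rho_1^E\|\rho_0^E)$. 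The paper must (and does) perform full \emph{process} tomography: Alice probes the channel with a tomographically complete family of $2d^2-d$ pure states, placed by pulse-position modulation at positions chosen with shared secret randomness (so that Bob can associate outcomes with probe states), estimates the $\chi$-matrix, and only then reconstructs $\widehat{\calE}^{\dagger}$. Your scheme also leaves unspecified how Bob knows which slots carry probes; without shared position information the estimator is not even well defined, and with naive Bernoulli sharing the key cost is not obviously negligible, whereas the PPM construction costs only $\ell\log q$ key bits with $\ell=\omega(\log T)$ small.

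Second, and more fundamentally, your resolution of the universality problem --- ``the penalty terms depend continuously on the channel parameters and the parameters are pinned down up to $o(1)$ error'' --- is exactly the argument the paper identifies as failing at the covert scale. Continuity suffices for the \emph{rate} accounting (an $O(T^{-\kappa})$ error in $\D{\rho_1^B}{\rho_0^B}$ costs only a relative $o(1)$ in the key length, which the $(1-\zeta)$ slack absorbs), but it does not suffice for the \emph{code}: a decoder and resolvability codebook tuned to the estimated channel, when run over the true channel, incur mismatch penalties that are per-channel-use constants of order $T^{-\kappa}$ with $\kappa<1/2$, hence of order $T^{1-\kappa}=\omega(\sqrt{T})=\omega(\alpha_T T)$ over the block --- dominating the key. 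The paper instead builds a genuinely compound code: a polynomial-size discretization of the admissible channel class (with per-channel failure probability $2^{-\omega(\log T)}$ so a union bound closes), Hayashi's channel-independent universal decoder built from Schur--Weyl projectors, a matching universal resolvability bound, and new Taylor-expansion error-exponent estimates of $\phi(s,p)$ valid for the Bernoulli$(\alpha_T)$ input. Note also that the paper's universal key-generation phase is a \emph{forward} wiretap code with no public communication (the key is Alice's message $W$, protected by a randomization index $W'$ of rate $\approx\alpha_T D^E$), which is what produces the rate $\D{\rho_1^B}{\rho_0^B}-\D{\rho_1^E}{\rho_0^E}$ in the theorem; your appeal to the reverse-reconciliation likelihood-encoder scheme would require a universal version of Bob-side resolvability that the paper does not attempt and that does not match the claimed rate expression.
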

Theorem~\ref{th:main-collective} has a slightly more complicated form than Theorem~\ref{th:main_quantum_passive} because of the statistical uncertainty associated to the estimation phase. Nevertheless, the interpretation remains intuitive and as follows. Firstly, the parameter $\alpha_T$ defined in~\eqref{eq:condition_alpha} controls the fraction of symbols ``1'' transmitted over $T$ channel uses, and should be understood as close to but slightly less than $1/\sqrt{T}$. For this choice, the covertness parameter $\mu_T$ vanishes with $T$ while the number of secret key bits covertly generated scales almost as $\Omega(\sqrt{T})$. Secondly, the condition \eqref{eq:collective-lambdae} states that the channel to the adversary should be noisy enough to allow covert operation a priori. This condition should not be surprising, as we know that covert communication is impossible in a general setting \cite{Bash2015a}. The conditions~\eqref{eq:collective-chi-min} and~\eqref{eq:collective-lambdab} are technical conditions  to avoid extreme cases, which are necessary to ascertain the reliability of our estimation protocol and can be set to the technological limits of Eve's apparatus. As apparent in the proof, these three conditions only affect the sequence $\{\epsilon_T\}_{T\geq 1}$ but not  the asymptotic covert and secret key rate guaranteed in~\eqref{eq:key_amount}. Finally, Alice and Bob can test whether  $\lambda_{\min}(\calE_{{\widetilde{A}}\to E}(\rho_0^{\widetilde{A}})) \geq \widetilde{\lambda}^E$ with high probability at the end the phase and abort the protocol if the condition is violated. In that case, we cannot guarantee the covertness constraint as we have sent too many non-innocent symbols in the estimation phase but we may still avoid detection by aborting the key-generation phase of the protocol.

\begin{figure}
\centering
\includegraphics[width=1\linewidth]{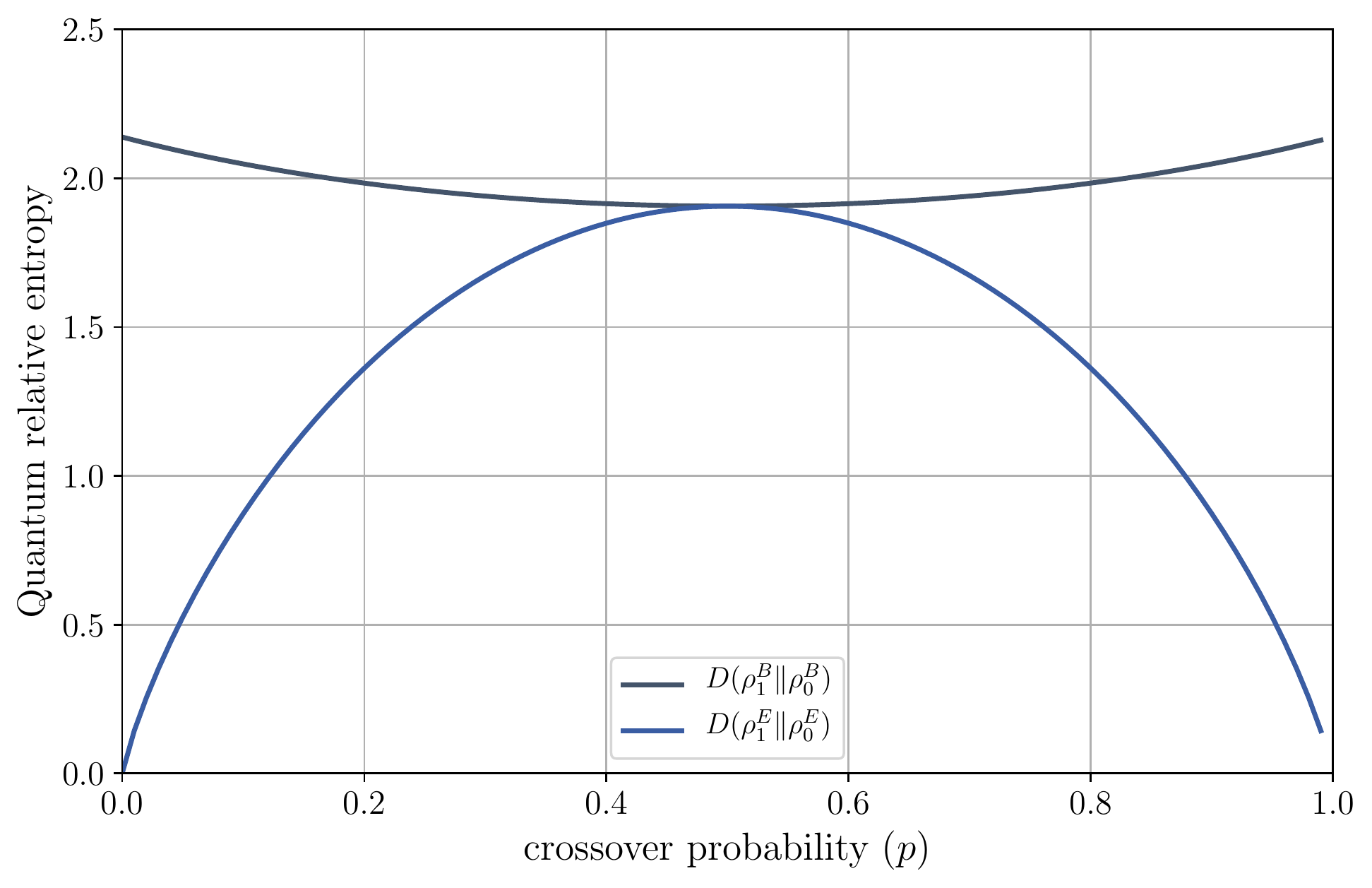}
\centering
\caption{Quantum relative entropy of main and warden channels}
\label{fig:example}
\end{figure}
We illustrate the result of Theorem~\ref{th:main-collective} for a specific quantum channel, $\calE_{{\widetilde{A}}\to B}$, from Alice to Bob. We assume that $\calE_{{\widetilde{A}}\to B}$ is a phase flip channel with flipping probability $p$, namely, $\calE_{{\widetilde{A}}\to B}(\rho^{\widetilde{A}}) = (1-p) \rho +p\sigma_z\rho \sigma_z$, and the matrix representation of Alice's transmitted states in the computational basis is
\begin{align}
\rho^{\widetilde{A}}_0 &= \begin{bmatrix}0.95&0\\0&0.05\end{bmatrix}\\
\rho^{\widetilde{A}}_1 &= \begin{bmatrix}0.2&0.3\\0.3&0.8\end{bmatrix}.
\end{align}
For $x\in\{0, 1\}$, we define $\rho^B_x \eqdef \calE_{{\widetilde{A}}\to B}(\rho^{\widetilde{A}}_x)$ and $\rho^E_x \eqdef \calE_{{\widetilde{A}}\to B}^\dagger(\rho^{\widetilde{A}}_x)$, and in Fig.~\ref{fig:example}, we show $\D{\rho^B_1}{\rho^B_0}$ and $\D{\rho^E_1}{\rho^E_0}$ for different values of $p$. By Theorem~\ref{th:main-collective}, the number of generated covert and secret key bits is on the order of $(\D{\rho^B_1}{\rho^B_0} - \D{\rho^E_1}{\rho^E_0}) \alpha_T T$, which scales as $O(\alpha_T T)$ except for $p=0.5$.
\section*{Acknowledgement}
This work was supported in part by Nation Science Foundation under the award 1527074.

\appendix

\section{Proof of Theorem~\ref{th:main_quantum_passive}}
\label{sec:proof-theorem-passive}
We prove Theorem~\ref{th:main_quantum_passive} by  generalizing  the proof of \cite[Theorem 1]{Bloch2017} to the quantum setting. The most challenging part of this generalization is to establish a channel resolvability result for cq-channels for distributions suitable for covert communications. We first introduce some preliminary concepts regarding covert communications mostly borrowed from \cite{Bloch2016a}. We also note that the use of standard proof techniques for secret key generation such as source coding with side information and privacy amplification is challenging for covert communication as discussed in Section~\ref{sec:quantum_passive_results}. We therefore resort to the likelihood encoder technique \cite{Song2016} in which we first define an auxiliary problem that can be analyzed using channel coding approaches, for which designing a code for the main problem is reduced to the design of code for the auxiliary problem.
\subsection{Preliminaries}
We define here required quantities used for our achievability proof. Suppose Alice sends iid symbols through her cq-channel $x\mapsto \rho_x^{BE}$ with each symbol  distributed according to  $Q_X\sim$ Bernoulli$(\alpha_T)$ for $\alpha_T\in(0, 1)$. Upon receiving each state, Bob makes a measurement in a fixed orthonormal basis $\{\ket{y}^B\}$ for $\calH^B$ to obtain a classical symbol $y$. In the following, we define equivalent cq-channels from Bob to Alice and Eve that results in to the same joint state for the three parties.

\begin{definition}
\label{def:q-covert-process}
Let $\alpha_T \in\omega\pr{\pr{\frac{\log T}{T}}^{\frac{2}{3}}}\cap o\pr{\frac{1}{\sqrt{T}}}$. We define 
\begin{align}
Q_{Y|X}(y|x) &\eqdef \bra{y}^B \rho_x^B \ket{y}^B,\displaybreak[0]\\
\widetilde{\rho}^{BE}_x &\eqdef \sum_y\pr{\ket{y}\bra{y}^B\otimes I^E}\rho^{BE}_x\pr{\ket{y}\bra{y}^B\otimes I^E},\displaybreak[0]\\
\widetilde{\rho}^{ABE} &\eqdef \sum_{x} Q_X(x) \ket{x}\bra{x}^A \otimes \widetilde{\rho}^{BE}_x,\displaybreak[0]\\
\widetilde{\rho}_{x,y}^E &\eqdef \frac{\textnormal{tr}_B\pr{\pr{\ket{y}\bra{y}^B\otimes I^E} \rho_x^{BE}\pr{\ket{y}\bra{y}^B\otimes I^E}}}{Q_{Y|X}(y|x)}\\
\widetilde{\rho}_y^{AE} &\eqdef \sum_{x} Q_{X|Y}(x|y) \ket{x}\bra{x}^A \otimes \widetilde{\rho}_{x, y}^E.
\end{align}
Note that the state $\widetilde{\rho}^{ABE}$ is the joint state of all parties after Bob's measurement, which is classical for both Alice and Bob, and $\widetilde{\rho}^{BE}_x$, $\widetilde{\rho}^{AE}_y$, and $\widetilde{\rho}^{E}_{x, y}$ are the corresponding conditional quantum states.
\end{definition}
The following lemma establishes useful properties of $\rho_0^{BE}$ under the assumption $\widetilde{\rho}_{0}^{BE} =  \widetilde{\rho}^B_0 \otimes \widetilde{\rho}^E_0$.
\begin{lemma}
\label{lm:rho_0}
If $\widetilde{\rho}_{0}^{BE} =  \widetilde{\rho}^B_0 \otimes \widetilde{\rho}^E_0$ then, for all $y$, it holds that $\widetilde{\rho}_{0,y}^E =  \widetilde{\rho}^E_0$. Furthermore, we have
\begin{multline}
I(Q_Y, \widetilde{\rho}_y^E) = \alpha_T\left(\D{\widetilde{\rho}_1^B}{\widetilde{\rho}_0^B} + \D{\widetilde{\rho}_1^E}{\widetilde{\rho}_0^E} - \right.\\\left.\D{\widetilde{\rho}_1^{BE}}{\widetilde{\rho}_0^{BE}}+ \D{\widetilde{\rho}_1^{BE}}{\widetilde{\rho}_1^{B} \otimes \widetilde{\rho}_1^{E}}\right) + O(\alpha_T^2).\label{eq:quan_i_yz}
\end{multline}
\end{lemma}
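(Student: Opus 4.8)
The plan is to first establish the structural fact that $\widetilde{\rho}^E_{0,y} = \widetilde{\rho}^E_0$ for all $y$ directly from the product assumption, and then to carry out a first-order Taylor expansion of the Holevo quantity $I(Q_Y,\widetilde{\rho}^E_y)$ in the small parameter $\alpha_T$. For the first part, I would expand $\widetilde{\rho}^E_{0,y}$ using its definition: since $\widetilde{\rho}^{BE}_0 = \widetilde{\rho}^B_0\otimes\widetilde{\rho}^E_0$, the operator $(\ket{y}\bra{y}^B\otimes I^E)\widetilde\rho^{BE}_0(\ket{y}\bra{y}^B\otimes I^E)$ factors as $\bra{y}\widetilde\rho^B_0\ket{y}\,\ket{y}\bra{y}^B\otimes\widetilde\rho^E_0 = Q_{Y|X}(y|0)\,\ket{y}\bra{y}^B\otimes\widetilde\rho^E_0$; note here that $Q_{Y|X}(y|0)=\bra y\rho^B_0\ket y = \bra y\widetilde\rho^B_0\ket y$ since dephasing in the $\{\ket y\}$ basis does not change diagonal entries. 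Tracing out $B$ and dividing by $Q_{Y|X}(y|0)$ gives exactly $\widetilde\rho^E_0$, provided $Q_{Y|X}(y|0)>0$; for $y$ with $Q_{Y|X}(y|0)=0$ the conditional state is irrelevant to the subsequent computation (it is weighted by $Q_Y(y)=O(\alpha_T)$ times $Q_{X|Y}(1|y)=1$, but such $y$ contribute to the average state only through $\widetilde\rho^E_{1,y}$, which I handle in the expansion).

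For the second part, I would write $Q_Y(y) = (1-\alpha_T)Q_{Y|X}(y|0) + \alpha_T Q_{Y|X}(y|1)$ and $\widetilde\rho^{AE}_y = Q_{X|Y}(0|y)\ket0\bra0\otimes\widetilde\rho^E_{0,y} + Q_{X|Y}(1|y)\ket1\bra1\otimes\widetilde\rho^E_{1,y}$, then use $I(Q_Y,\widetilde\rho^E_y) = \sum_y Q_Y(y)\,\D{\widetilde\rho^E_y}{\overline{\rho}^E}$ where $\overline\rho^E = \sum_y Q_Y(y)\widetilde\rho^E_y = (1-\alpha_T)\widetilde\rho^E_0 + \alpha_T\widetilde\rho^E_1$ is the overall Eve state (independent of $y$, since averaging $\widetilde\rho^E_{x,y}$ over $y$ weighted by $Q_{Y|X}(y|x)/1$ returns $\widetilde\rho^E_x$). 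Here $\widetilde\rho^E_y$ denotes $\mathrm{tr}_A\widetilde\rho^{AE}_y$. The strategy is then to Taylor-expand each relative entropy to first order in $\alpha_T$. Using the first part, $Q_{X|Y}(1|y) = \alpha_T Q_{Y|X}(y|1)/Q_Y(y) = O(\alpha_T)$ uniformly, so $\widetilde\rho^E_y = \widetilde\rho^E_0 + O(\alpha_T)$ and $\overline\rho^E = \widetilde\rho^E_0 + O(\alpha_T)$; I would plug these into a second-order expansion of $(\rho,\sigma)\mapsto\D{\rho}{\sigma}$ around $(\widetilde\rho^E_0,\widetilde\rho^E_0)$, noting that $\D{\widetilde\rho^E_0}{\widetilde\rho^E_0}=0$ and the first-order term in the perturbation vanishes, so $\D{\widetilde\rho^E_y}{\overline\rho^E} = O(\alpha_T^2)$ unless $Q_{X|Y}(1|y)$ is itself of order $1$ — which happens only for $y$ in the "atypical" set where $Q_{Y|X}(y|0)$ is comparable to $\alpha_T Q_{Y|X}(y|1)$ or smaller. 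The clean way to organize this is to split $I(Q_Y,\widetilde\rho^E_y)$ using the identity $\sum_y Q_Y(y)\D{\widetilde\rho^E_y}{\overline\rho^E} = \sum_{x}Q_X(x)\sum_y Q_{Y|X}(y|x)\D{\widetilde\rho^E_{x,y}}{\overline\rho^E} - \D{\overline\rho^E}{\overline\rho^E}$ — better, to use that Holevo information of the joint ensemble equals $\avgI{A;E}$ and apply the chain-rule / grouping identities at the level of von Neumann entropies of $\widetilde\rho^{ABE}$, then linearize in $\alpha_T$.

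Concretely, the cleanest route is: observe $I(Q_Y,\widetilde\rho^E_y) = \avgI{A;E}_{\widetilde\rho} - \avgI{A;E|B}_{\widetilde\rho}$ (since conditioned on $B=y$ Alice's distribution is $Q_{X|Y}(\cdot|y)$ and the conditional Eve states are $\widetilde\rho^E_{x,y}$, while the $B$-marginal recovers the $Q_Y$-ensemble) — wait, more carefully $I(Q_Y,\widetilde\rho^E_y)$ is exactly $\avgI{A;E|B}_{\widetilde\rho}$, and then I would use $\avgI{A;E|B} = \avgI{A;BE} - \avgI{A;B}$, expand each of $\avgI{A;BE}_{\widetilde\rho}$, $\avgI{A;B}_{\widetilde\rho}$ to first order in $\alpha_T$ via the standard fact that for a cq-channel $I(\mathrm{Bern}(\alpha),\rho_x) = \alpha\D{\rho_1}{\overline\rho_\alpha} + (1-\alpha)\D{\rho_0}{\overline\rho_\alpha} = \alpha\D{\rho_1}{\rho_0} + O(\alpha^2)$ when $\mathrm{supp}\,\rho_1\subseteq\mathrm{supp}\,\rho_0$ (a one-line consequence of convexity plus a second-order Taylor bound; cited from \cite{Bloch2016a} in the preliminaries). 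Applying this with the ensembles $\{\widetilde\rho^{BE}_x\}$ and $\{\widetilde\rho^B_x\}$ gives $\avgI{A;BE} = \alpha_T\D{\widetilde\rho^{BE}_1}{\widetilde\rho^{BE}_0} + O(\alpha_T^2)$ and $\avgI{A;B} = \alpha_T\D{\widetilde\rho^B_1}{\widetilde\rho^B_0} + O(\alpha_T^2)$, and then writing $\D{\widetilde\rho^{BE}_1}{\widetilde\rho^{BE}_0}$ in terms of entropies and regrouping — using $\widetilde\rho^{BE}_0 = \widetilde\rho^B_0\otimes\widetilde\rho^E_0$ — yields the claimed combination $\D{\widetilde\rho^B_1}{\widetilde\rho^B_0} + \D{\widetilde\rho^E_1}{\widetilde\rho^E_0} - \D{\widetilde\rho^{BE}_1}{\widetilde\rho^{BE}_0} + \D{\widetilde\rho^{BE}_1}{\widetilde\rho^B_1\otimes\widetilde\rho^E_1}$ after an algebraic identity among relative entropies that holds precisely because the denominator factorizes. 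The main obstacle I anticipate is controlling the $O(\alpha_T^2)$ remainder \emph{uniformly}, i.e. justifying that the second-order Taylor expansion of the relative entropy is valid with a constant independent of $T$ even though $\overline\rho^E$ drifts with $\alpha_T$; here finite-dimensionality of $\calH^B,\calH^E$ and the support/positivity hypotheses ($\widetilde\rho^E_0$ full rank, or at least $\mathrm{supp}\,\widetilde\rho^E_1\subseteq\mathrm{supp}\,\widetilde\rho^E_0$, which follows from $\chi_2<\infty$) are what make the Hessian of $\D{\cdot}{\cdot}$ bounded on a neighborhood, and I would quote the corresponding lemma from \cite{Bloch2016a,Bloch2017}. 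The secondary nuisance is the bookkeeping for $y$ with $Q_{Y|X}(y|0)=0$, which I would dispatch by noting $Q_X(1)=\alpha_T$ caps their total contribution and that on this set $\widetilde\rho^E_y = \widetilde\rho^E_{1,y}$ with $\D{\widetilde\rho^E_{1,y}}{\overline\rho^E}$ bounded (finite dimension, $\overline\rho^E\succeq(1-\alpha_T)\widetilde\rho^E_0$), so the set contributes $O(\alpha_T)\cdot O(1)$ — which is not $O(\alpha_T^2)$, so I must instead show this set is \emph{empty or its $\widetilde\rho^E_{1,y}$ equals $\widetilde\rho^E_0$}; in fact the assumption $\widetilde\rho^{BE}_0=\widetilde\rho^B_0\otimes\widetilde\rho^E_0$ does not by itself preclude $Q_{Y|X}(y|0)=0$, so the honest fix is to absorb such terms into the order-$\alpha_T$ coefficient, which is consistent with the stated expansion since $\widetilde\rho^E_{1,y}$ still appears only inside first-order quantities — this reconciliation is exactly the delicate point and I would handle it by working at the level of $\avgI{A;E|B}$ as above, where no per-$y$ conditioning on null events is needed.
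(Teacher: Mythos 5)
Your first claim ($\widetilde{\rho}_{0,y}^E=\widetilde{\rho}_0^E$) is argued essentially the same way as in the paper: the product structure makes the projector sandwich factor as $Q_{Y|X}(y|0)\,\ket{y}\bra{y}^B\otimes\widetilde{\rho}_0^E$, and the normalization does the rest. That part is fine.

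The second part has a genuine gap at its pivot point: you identify $I(Q_Y,\widetilde{\rho}_y^E)$ with $\avgI{A;E|B}_{\widetilde{\rho}}$, but by the paper's definition of the Holevo quantity this is $H\bigl(\sum_y Q_Y(y)\widetilde{\rho}_y^E\bigr)-\sum_y Q_Y(y)H(\widetilde{\rho}_y^E)=\avgH{E}-\avgH{E|B}=\avgI{B;E}_{\widetilde{\rho}}$ (the information the classical register $Y$ carries about Eve's system, which is also the operationally relevant quantity for the resolvability rate $M_3$), not the $y$-averaged Holevo information of the conditional ensembles $\{Q_{X|Y}(\cdot|y),\widetilde{\rho}_{x,y}^E\}$. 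Your earlier guess $\avgI{A;E}-\avgI{A;E|B}$ is also not $\avgI{B;E}$ in general. Because of this, your decomposition $\avgI{A;E|B}=\avgI{A;BE}-\avgI{A;B}$ produces the first-order coefficient $\D{\widetilde{\rho}_1^{BE}}{\widetilde{\rho}_0^{BE}}-\D{\widetilde{\rho}_1^B}{\widetilde{\rho}_0^B}$, and no regrouping using $\widetilde{\rho}_0^{BE}=\widetilde{\rho}_0^B\otimes\widetilde{\rho}_0^E$ turns this into the coefficient in \eqref{eq:quan_i_yz}: under that hypothesis your expression equals $\D{\widetilde{\rho}_1^E}{\widetilde{\rho}_0^E}+\D{\widetilde{\rho}_1^{BE}}{\widetilde{\rho}_1^B\otimes\widetilde{\rho}_1^E}$, whereas the stated coefficient collapses (by the identity $\D{\widetilde{\rho}_1^{BE}}{\widetilde{\rho}_0^B\otimes\widetilde{\rho}_0^E}=\D{\widetilde{\rho}_1^{BE}}{\widetilde{\rho}_1^B\otimes\widetilde{\rho}_1^E}+\D{\widetilde{\rho}_1^B}{\widetilde{\rho}_0^B}+\D{\widetilde{\rho}_1^E}{\widetilde{\rho}_0^E}$) to something different, so the two answers disagree.

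The paper's route is: $I(Q_Y,\widetilde{\rho}_y^E)=\avgI{B;E}_{\widetilde{\rho}}=\avgI{A;B}_{\widetilde{\rho}}+\avgI{A;E}_{\widetilde{\rho}}-\avgI{A;BE}_{\widetilde{\rho}}+\avgI{B;E|A}_{\widetilde{\rho}}$; the first three terms are expanded exactly by the device you cite ($I(\mathrm{Bern}(\alpha),\rho_x)=\alpha\D{\rho_1}{\rho_0}+O(\alpha^2)$), and $\avgI{B;E|A}_{\widetilde{\rho}}=(1-\alpha_T)\avgI{B;E}_{\widetilde{\rho}_0}+\alpha_T\avgI{B;E}_{\widetilde{\rho}_1}=\alpha_T\D{\widetilde{\rho}_1^{BE}}{\widetilde{\rho}_1^B\otimes\widetilde{\rho}_1^E}$, where the product hypothesis kills the $x=0$ term. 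Your plan omits the $\avgI{A;E}$ and $\avgI{B;E|A}$ contributions entirely, which is exactly where two of the four relative entropies in \eqref{eq:quan_i_yz} come from. Your remaining concerns (uniformity of the $O(\alpha_T^2)$ remainder, the null-set $Q_{Y|X}(y|0)=0$) are legitimate but peripheral; the paper handles the former by citing the expansion from the covert-communication literature and the latter becomes moot once one works with the global mutual-information identity rather than per-$y$ conditioning.
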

\begin{proof}
  By the spectral decomposition theorem, there exist orthonormal bases $\{\ket{y}^B\}$ and $\{\ket{z}^E\}$ for $\calH^B$ and $\calH^E$, respectively, such that
\begin{align}
\widetilde{\rho}_{0}^{B} &= \sum_{y} \lambda_y \ket{y}\bra{y}^B\\
\widetilde{\rho}_{0}^{E} &= \sum_{z} \lambda_z \ket{z}\bra{z}^E\\
\widetilde{\rho}_{0}^{BE}  &= \sum_{y, y', z, z'} \lambda_{yy'zz'} \ket{y}\bra{y'}^B\otimes \ket{z}\bra{z'}^E.
\end{align}
Our assumption that $\widetilde{\rho}_{0}^{BE} =  \widetilde{\rho}^B_0 \otimes \widetilde{\rho}^E_0$ implies that $ \lambda_{yy'zz'} = \lambda_y\lambda_z \indic{y=y', z= z'}$. Furthermore, for any $y$, we have by definition 
\begin{align}
\widetilde{\rho}_{0,y}^E &\eqdef \frac{\text{tr}_B\pr{\pr{\ket{y}\bra{y}^B\otimes I^E} \rho_0^{BE}\pr{\ket{y}\bra{y}^B\otimes I^E}}}{Q_{Y|X}(y|0)}\displaybreak[0]\\
&= \frac{1}{Q_{Y|X}(y|0)}\text{tr}_B\left(\pr{\ket{y}\bra{y}^B\otimes I^E}\nonumber\right.\displaybreak[0]\\
&\phantom{=}\times\left. \pr{ \sum_{y',z'} \lambda_{y'}\lambda_{z'} \ket{y'}\bra{y'}^B\otimes \ket{z'}\bra{z'}^E}\nonumber\right.\displaybreak[0]\\
&\phantom{=}\left.\times \pr{\ket{y}\bra{y}^B\otimes I^E}\right)\displaybreak[0]\\
&= \frac{\text{tr}_B\pr{\sum_{z'} \lambda_{y}\lambda_{z'} \ket{y}\bra{y}^B\otimes \ket{z'}\bra{z'}^E}}{Q_{Y|X}(y|0)}\displaybreak[0]\\
&= \frac{\lambda_{y}}{Q_{Y|X}(y|0)}\sum_{z'}\lambda_{z'}  \ket{z'}\bra{z'}^E\displaybreak[0]\\
&=  \frac{\lambda_{y}}{Q_{Y|X}(y|0)} \widetilde{\rho}_{0}^{E} \label{eq:rho_z_0}.
\end{align}
We also know that $\tr{\widetilde{\rho}_{0}^{E}} = \tr{\widetilde{\rho}_{0,y}^E } = 1$, which together with \eqref{eq:rho_z_0} yields $\widetilde{\rho}_{0}^{E} = \widetilde{\rho}_{0,y}^E $.

To prove \eqref{eq:quan_i_yz}, notice that
\begin{multline}
I(Q_Y, \widetilde{\rho}_y^{E}) \\
\begin{split}
&= \avgI{B;E}_{\widetilde{\rho}}\\
&= \avgI{A;B}_{\widetilde{\rho}} +  \avgI{A;E}_{\widetilde{\rho}} -  \avgI{A;BE}_{\widetilde{\rho}} +  \avgI{B;E|A}_{\widetilde{\rho}}.
\end{split}
\end{multline}
Moreover, for $\widetilde{\rho}_{\alpha_T}^B\eqdef (1-\alpha_T) \widetilde{\rho}_0^B +\alpha_T \widetilde{\rho}_1^B $, we can write
\begin{align}
&\avgI{A;B}_{\widetilde{\rho}} \\
&\phantom{==}= H(\widetilde{\rho}_{\alpha_T}^B ) - (1-\alpha_T)  H(\widetilde{\rho}_0^B) - \alpha_TH(  \widetilde{\rho}_1^B)\\
&\phantom{==}= -\textnormal{tr}(\widetilde{\rho}_{\alpha_T}^B\log\pr{\widetilde{\rho}_{\alpha_T}^B} - (1-\alpha_T)\widetilde{\rho}_0^B  \log(\widetilde{\rho}_0^B)\nonumber \\
&\phantom{==}\phantom{= ==============}- \alpha_T \widetilde{\rho}_1^B\log(\widetilde{\rho}_1^B))\\
&\phantom{==}= -\textnormal{tr}(\widetilde{\rho}_{\alpha_T}^B\pr{\log\pr{\widetilde{\rho}_{\alpha_T}^B} - \log\pr{\widetilde{\rho}_0^B} + \log\pr{\widetilde{\rho}_0^B}}\nonumber\\
&\phantom{==}\phantom{=} - (1-\alpha_T)\widetilde{\rho}_0^B  \log(\widetilde{\rho}_0^B) - \alpha_T \widetilde{\rho}_1^B\log(\widetilde{\rho}_1^B))\\
&\phantom{==}= -\textnormal{tr}(\widetilde{\rho}_{\alpha_T}^B \pr{\log \widetilde{\rho}_{\alpha_T}^B - \log \widetilde{\rho}_{0}^B}\nonumber\\
&\phantom{==}\phantom{==========}  - \alpha_T \widetilde{\rho}_{1}^B \pr{\log \widetilde{\rho}_{1}^B- \log \widetilde{\rho}_{0}^B}  )\\
&\phantom{==}= \alpha_T \D{\widetilde{\rho}_{1}^B}{\widetilde{\rho}_{0}^B} - \D{\widetilde{\rho}_{\alpha_T}^B}{\widetilde{\rho}_{0}^B}\\
&\phantom{==}\stackrel{(a)}{=} \alpha_T \D{\widetilde{\rho}_{1}^B}{\widetilde{\rho}_{0}^B}  + O(\alpha_T^2),
\end{align}
where $(a)$ follows from \cite[Equation (19)]{Sheikholeslami2016}. Similarly, we obtain
\begin{align}
\avgI{A;E}_{\widetilde{\rho}} &= \alpha_T \D{\widetilde{\rho}_{1}^E}{\widetilde{\rho}_{0}^Z}  + O(\alpha_T^2),\\
\avgI{A;BE}_{\widetilde{\rho}} &= \alpha_T \D{\widetilde{\rho}_{1}^{BE}}{\widetilde{\rho}_{0}^{BE}}  + O(\alpha_T^2).
\end{align}
Since $X$ is classical, \cite[Equation (11.92)]{wilde2013quantum} yields that  
\begin{align}
  \avgI{B;E|A}_{\widetilde{\rho}} 
  &= (1-\alpha_T)  \avgI{B;E}_{\widetilde{\rho}_0} + \alpha_T   \avgI{B;E}_{\widetilde{\rho}_1}\\
  &\stackrel{(a)}{=} \alpha_T   \avgI{B;E}_{\widetilde{\rho}_1}\\
  &= \alpha_T \D{\widetilde{\rho}_1^{BE}}{\widetilde{\rho}_1^{B} \otimes \widetilde{\rho}_1^{E}},
\end{align}
where $(a)$ follows from our assumption that $\widetilde{\rho}_{0}^{BE} =  \widetilde{\rho}^B_0 \otimes \widetilde{\rho}^E_0$. This completes the proof of \eqref{eq:quan_i_yz}.
\end{proof}

\subsection{One-shot results}
We recall here one-shot results for classical channel coding and classical channel resolvability (Lemma~\ref{lm:oneshot_channel}) and quantum channel resolvability (Lemma~\ref{lm:quantum-resolve}) that play a central role on our analysis. Given a classical channel $(\calX, W_{Y|X}, \calY)$, a message $W$  uniformly distributed over $\intseq{1}{M}$, and an encoder $f:\intseq{1}{M}\to \calX$, let $\widehat{P}_{WXY}(w, x, y) \eqdef \frac{1}{M}\indic{f(w) = x}W_{Y|X}(y|x)$ be the induced  Probability Mass Function (PMF)  of $W$, $X$, and $Y$, and $\widehat{W} \eqdef \arg \max_{w\in\intseq{1}{M}} W_{Y|X}(Y|f(w))$ be the maximum likelihood decoder at the output.
\begin{lemma}[One-shot Bounds]
\label{lm:oneshot_channel}
 If $F$ is a random encoder such that $\{F(w)\}_{w\in\intseq{1}{M}}$ are iid according to a distribution $P_X$ over $\calX$, then for any $\gamma \in \mathbb{R}$, we have
\begin{multline}
\label{eq:rel-oneshot}
\E[F]{\P{\widehat{W} \neq W}}\\ \leq \P[P_X \times W_{Y|X}]{\log \frac{{W}_{Y|X}(Y|X)}{({W}_{Y|X}\circ P_X)(Y)}\leq \gamma} + \frac{M}{2^\gamma},
\end{multline}
and
\begin{multline}
\label{eq:res-bound22}
\E[F]{\V{\widehat{P}_Y;W_{Y|X}\circ P_X}} \\\leq \P[P_X \times W_{Y|X}]{\log \frac{W_{Y|X}(Y|X)}{(W_{Y|X}\circ P_X)(Y)}\geq \gamma} + \sqrt{\frac{2^\gamma}{M}},
\end{multline}
where $(W_{Y|X}\circ P_X)(y) \eqdef \sum_x P_X(x)W_{Y|X}(y|x)$.
\end{lemma}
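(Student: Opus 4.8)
The plan is to prove \eqref{eq:rel-oneshot} by a threshold analysis of the maximum-likelihood decoder and \eqref{eq:res-bound22} by a truncated second-moment (soft-covering) argument, both being standard random-coding manipulations; write $P_Y \eqdef W_{Y|X}\circ P_X$ and recall $\rvinfo{X}{Y} = \log\frac{W_{Y|X}(Y|X)}{P_Y(Y)}$. For \eqref{eq:rel-oneshot}, by the exchangeability of the codewords $\{F(w)\}_{w\in\intseq{1}{M}}$ it suffices to bound the error probability conditioned on $W=1$. Under maximum-likelihood decoding an error forces one of two events: either $\rvinfo{X_1}{Y}\leq\gamma$, which accounts for the first term directly, or $\rvinfo{X_1}{Y}>\gamma$ while $W_{Y|X}(Y|F(w'))\geq W_{Y|X}(Y|F(1))$ for some $w'\neq1$. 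For the latter I would condition on $(X_1,Y)$, union-bound over the $M-1$ competitors, and invoke the Markov-type estimate $\P{W_{Y|X}(Y|F(w'))\geq t \mid Y}\leq P_Y(Y)/t$, which holds because $F(w')\sim P_X$ is independent of $(X_1,Y)$ and $\sum_x P_X(x)W_{Y|X}(Y|x)=P_Y(Y)$. Setting $t=W_{Y|X}(Y|F(1))$ and using $W_{Y|X}(Y|F(1))>2^\gamma P_Y(Y)$ on the typical event bounds each competitor's contribution by $2^{-\gamma}$, so the union bound yields the $M/2^\gamma$ term.

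For \eqref{eq:res-bound22}, recall $\widehat{P}_Y(y)=\frac1M\sum_w W_{Y|X}(y|F(w))$ and introduce the level-set truncation $W^\gamma_{Y|X}(y|x)\eqdef W_{Y|X}(y|x)\indic{\log\frac{W_{Y|X}(y|x)}{P_Y(y)}<\gamma}$, together with $Q_Y\eqdef W^\gamma_{Y|X}\circ P_X$ and $\widehat{Q}_Y(y)\eqdef\frac1M\sum_w W^\gamma_{Y|X}(y|F(w))$. The triangle inequality gives $\E[F]{\V{\widehat{P}_Y;P_Y}}\leq \E[F]{\V{\widehat{P}_Y;\widehat{Q}_Y}}+\E[F]{\V{\widehat{Q}_Y;Q_Y}}+\V{Q_Y;P_Y}$. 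Since $W^\gamma_{Y|X}\leq W_{Y|X}$ pointwise, the first and third terms are each at most the mass deleted by truncation, namely $\P[P_X\times W_{Y|X}]{\log\frac{W_{Y|X}(Y|X)}{P_Y(Y)}\geq\gamma}$. The middle term is the soft-covering step: $\E[F]{\widehat{Q}_Y(y)}=Q_Y(y)$, so Jensen's inequality gives $\E[F]{\abs{\widehat{Q}_Y(y)-Q_Y(y)}}\leq\sqrt{\Var{\widehat{Q}_Y(y)}}$, and independence of the $F(w)$ together with the truncation bound $W^\gamma_{Y|X}(y|x)\leq 2^\gamma P_Y(y)$ yields $\Var{\widehat{Q}_Y(y)}\leq\frac1M\E[P_X]{W^\gamma_{Y|X}(y|X)^2}\leq\frac{2^\gamma}{M}P_Y(y)Q_Y(y)$; a Cauchy--Schwarz sum over $y$, with $\sum_y P_Y(y)=1$ and $\sum_y Q_Y(y)\leq1$, turns this into $\sqrt{2^\gamma/M}$. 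Collecting the three contributions (and absorbing the normalization constant in $\V$) gives the claim.

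I expect the load-bearing idea to be the truncation of $W_{Y|X}$ at the information-density level $\gamma$ in the resolvability bound: without it the per-symbol variance $\Var{\widehat{P}_Y(y)}$ need not be finite, and it is precisely this truncation that trades the deleted mass (the $\P{\rvinfo{X}{Y}\geq\gamma}$ term) against a controlled second moment (the $\sqrt{2^\gamma/M}$ term). The analogous device in the coding bound is the restriction to the typical event $\{\rvinfo{X_1}{Y}>\gamma\}$ before invoking the Markov estimate, so that a single information-density threshold $\gamma$ governs both halves of the lemma; everything else --- the exchangeability reduction, the two union bounds, and the Cauchy--Schwarz estimates --- is routine.
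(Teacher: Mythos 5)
Your proof is correct. The paper does not prove this lemma itself---it simply cites Polyanskiy et al.\ for \eqref{eq:rel-oneshot} and Hayashi's resolvability lemma for \eqref{eq:res-bound22}---and your argument is a faithful self-contained reproduction of exactly those standard proofs (the information-density threshold followed by a Markov bound on the ML competitors, and the truncated second-moment soft-covering computation), so nothing further is needed.
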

\begin{proof}
See \cite{Polyanskiy2010} for \eqref{eq:rel-oneshot} and \cite{hayashi2006general} for \eqref{eq:res-bound22}.
\end{proof}
Let  $y\mapsto \rho_y$ denote a cq-channel and $P_Y$ be a {PMF} over $\calY$. If $\overline{\rho} \eqdef \sum_y P_Y(y) \rho_y$, our objective is to find an encoder $f:\intseq{1}{M}\to \calY$ such that $\|\overline{\rho} - \widehat{\rho}\|_1$ be small, where $\widehat{\rho} \eqdef \frac{1}{M}\sum_{i=1}^M \rho_{f(i)}$.
\begin{lemma}
\label{lm:quantum-resolve}
If $F:\intseq{1}{M} \to \calY$ is a random encoder whose codewords are {iid} according to $P_Y$, then for all $s\leq0$ and $\gamma $, we have
\begin{align}
\label{eq:q-res-bound}
\E[F]{\|\overline{\rho} - \widehat{\rho}\|_1 } \leq 2 \sqrt{2^{\gamma s + \phi(s)}} + \sqrt{\frac{2^{\gamma} \nu}{M}},
\end{align}
where $\phi(s) \eqdef \log\pr{\sum_y P_Y(y) \tr{\rho_y^{1-s} \overline{\rho}^s}}$ and $\nu$ is  the number of distinct eigenvalues of  $\overline{\rho}$.
\end{lemma}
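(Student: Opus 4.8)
The plan is to transpose the classical channel-resolvability estimate \eqref{eq:res-bound22} to the cq setting; everything goes through as in the commutative case except the bookkeeping forced by the non-commutativity of $\rho_y$ and $\overline{\rho}$, which is exactly what creates the extra factor $\nu$ and the restriction $s\le 0$. First I would restrict to $\text{supp}(\overline{\rho})$ and introduce the likelihood-ratio operators $\omega_y\eqdef\overline{\rho}^{-1/2}\rho_y\overline{\rho}^{-1/2}$, so that $\sum_yP_Y(y)\,\omega_y$ is the projector onto $\text{supp}(\overline{\rho})$, together with the pinching map $\mathcal{P}$ onto the $\nu$ eigenspaces of $\overline{\rho}$, which is unital, self-adjoint, trace preserving, fixes every function of $\overline{\rho}$, and obeys $X\preceq\nu\,\mathcal{P}(X)$ for $X\succeq0$. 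The reason for using $\mathcal{P}$ is that $\widehat{\omega}_y\eqdef\mathcal{P}(\omega_y)=\overline{\rho}^{-1/2}\mathcal{P}(\rho_y)\overline{\rho}^{-1/2}$ commutes with $\overline{\rho}$, hence so do all its spectral projectors.

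Fixing the threshold $\gamma$, let $\widehat{\Pi}_y\eqdef\{\widehat{\omega}_y\preceq 2^{\gamma}\}$ (a projector commuting with $\overline{\rho}$) and split $\rho_y=\widehat{\Pi}_y\rho_y\widehat{\Pi}_y+(\rho_y-\widehat{\Pi}_y\rho_y\widehat{\Pi}_y)$ into a ``typical'' piece and an ``atypical'' remainder. Using $\rho_y\preceq\nu\,\mathcal{P}(\rho_y)$ conjugated by $\widehat{\Pi}_y$, together with the fact that $\widehat{\Pi}_y$ commutes with $\mathcal{P}(\rho_y)=\overline{\rho}^{1/2}\widehat{\omega}_y\overline{\rho}^{1/2}$ and that $\widehat{\Pi}_y\widehat{\omega}_y\preceq 2^{\gamma}I$, the typical piece obeys $0\preceq\widehat{\Pi}_y\rho_y\widehat{\Pi}_y\preceq\nu\,2^{\gamma}\,\overline{\rho}$. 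Writing $\widehat{\rho}-\overline{\rho}$ as the random fluctuation of the typical part around its mean plus the averaged atypical remainders and applying the triangle inequality reduces the proof to two estimates.

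For the atypical part, the gentle-measurement lemma gives $\E[Y]{\norm[1]{\rho_Y-\widehat{\Pi}_Y\rho_Y\widehat{\Pi}_Y}}\le 2\,\E[Y]{\sqrt{\tr{\rho_Y(I-\widehat{\Pi}_Y)}}}\le 2\sqrt{\E[Y]{\tr{\rho_Y(I-\widehat{\Pi}_Y)}}}$ by concavity of the square root; then the operator inequality $I-\widehat{\Pi}_y=\{\widehat{\omega}_y\succ 2^{\gamma}\}\preceq 2^{\gamma s}\,\widehat{\omega}_y^{-s}$ (valid for $s\le 0$) followed by the Araki--Lieb--Thirring inequality applied block-by-block in the eigenbasis of $\overline{\rho}$, where $\overline{\rho}$ acts as a scalar and the inequality is legitimate because $1-s\ge 1$, yields $\E[Y]{\tr{\rho_Y(I-\widehat{\Pi}_Y)}}\le 2^{\gamma s}\sum_yP_Y(y)\tr{\rho_y^{1-s}\overline{\rho}^{\,s}}=2^{\gamma s+\phi(s)}$, hence the term $2\sqrt{2^{\gamma s+\phi(s)}}$; the averaged ``bias'' $\norm[1]{\E[Y]{\rho_Y-\widehat{\Pi}_Y\rho_Y\widehat{\Pi}_Y}}$ is controlled identically. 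For the typical part, set $\xi_y\eqdef\overline{\rho}^{-1/2}\widehat{\Pi}_y\rho_y\widehat{\Pi}_y\overline{\rho}^{-1/2}$, so $0\preceq\xi_y\preceq\nu 2^{\gamma}I$, and $\Delta\eqdef\tfrac1M\sum_i(\xi_{F(i)}-\E[Y]{\xi_Y})$, so that $\overline{\rho}^{1/2}\Delta\overline{\rho}^{1/2}$ is exactly the typical fluctuation; the Cauchy--Schwarz bound $\norm[1]{\overline{\rho}^{1/2}\Delta\overline{\rho}^{1/2}}\le\sqrt{\tr{\overline{\rho}\,\Delta^{2}}}$, followed by Jensen and the variance estimate $\E[F]{\Delta^{2}}\preceq\tfrac1M\E[Y]{\xi_Y^{2}}\preceq\tfrac{\nu 2^{\gamma}}{M}\E[Y]{\xi_Y}$ together with $\tr{\overline{\rho}\,\E[Y]{\xi_Y}}\le 1$, delivers the term $\sqrt{2^{\gamma}\nu/M}$.

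The main obstacle is the atypical-part estimate: one must reproduce \emph{exactly} the quantity $\phi(s)$ built from $\tr{\rho_y^{1-s}\overline{\rho}^{\,s}}$ while carrying a cutoff operator that remains compatible with the second-moment argument. A naive cutoff of $\omega_y$ does not work, because in the non-commuting case $\tr{\overline{\rho}\,\omega_y^{1-s}}$ need not be bounded by $\tr{\rho_y^{1-s}\overline{\rho}^{\,s}}$; routing the cutoff through the pinched operator $\widehat{\omega}_y$ is precisely what makes the Araki--Lieb--Thirring step applicable on each eigenblock of $\overline{\rho}$ and, simultaneously, is what forces the factor $\nu$ into the resolvability term. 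Tracking the exact numerical constant in $2\sqrt{2^{\gamma s+\phi(s)}}$ may require splitting the bias and fluctuation contributions slightly more carefully, but this affects none of the structure.
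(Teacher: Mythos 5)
The paper does not actually prove this lemma: it is quoted verbatim from Hayashi (\cite[Lemma 9.2]{hayashi2006quantum}), already packaged with the standard Chernoff-type bound that converts the tail term $\sum_y P_Y(y)\tr{\rho_y\{\kappa_{\overline{\rho}}(\rho_y)-2^{\gamma}\overline{\rho}\succeq 0\}}$ into $2^{\gamma s+\phi(s)}$. Your reconstruction follows exactly the architecture of that cited proof — pinching onto the eigenspaces of $\overline{\rho}$, spectral cutoff of the pinched likelihood-ratio operator, gentle measurement plus Jensen for the atypical part, Cauchy--Schwarz plus a second-moment computation for the typical part — and each step checks out: the operator bound $\widehat{\Pi}_y\rho_y\widehat{\Pi}_y\preceq\nu 2^{\gamma}\overline{\rho}$, the identity $\tr{\rho_y\widehat{\omega}_y^{-s}}=\tr{\mathcal{P}(\rho_y)^{1-s}\overline{\rho}^{\,s}}$, and the block-wise Araki--Lieb--Thirring inequality $\tr{(P_j\rho_yP_j)^{1-s}}\leq\tr{P_j\rho_y^{1-s}P_j}$ for $1-s\geq 1$ are all correct and valid for every $s\leq 0$. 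The one quantifiable discrepancy is the constant: your decomposition pays the gentle-measurement cost twice — once for the empirical atypical remainders $\frac{1}{M}\sum_i\norm[1]{\rho_{F(i)}-\widehat{\Pi}_{F(i)}\rho_{F(i)}\widehat{\Pi}_{F(i)}}$ and once for the bias $\norm[1]{\overline{\rho}-\E[Y]{\widehat{\Pi}_Y\rho_Y\widehat{\Pi}_Y}}$ — so as written you obtain $4\sqrt{2^{\gamma s+\phi(s)}}$ rather than $2\sqrt{2^{\gamma s+\phi(s)}}$; recovering the factor $2$ requires comparing the truncated empirical state directly to $\overline{\rho}$ inside the second-moment step, so that the bias is absorbed into $\tr{(\widehat{\tau}-\overline{\rho})\overline{\rho}^{-1}(\widehat{\tau}-\overline{\rho})}$ instead of being charged separately in trace norm. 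You flag this yourself, and it is immaterial for every application of the lemma in the paper, where only the $2^{-\omega(\log T)}$ decay of the right-hand side is used.
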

\begin{proof}
See \cite[Lemma 9.2]{hayashi2006quantum}.
\end{proof}

\subsection{An auxiliary problem}
To show the existence of good codes for our main problem, we use the likelihood encoder technique \cite{Song2016}, and in particular, define an auxiliary problem for which we can exploit channel coding instead of source coding. We then show how these two problems are related in Section~\ref{sec:thm-passive-proof}. Consider a cq-channel $y\mapsto \widetilde{\rho}^{AE}_y$ from Bob to Alice and Eve as in Definition~\ref{def:q-covert-process}. Bob encodes three uniformly distributed messages $W_1\in\intseq{1}{M_1}$, $W_2\in\intseq{1}{M_2}$, and $W_3\in\intseq{1}{M_3}$ into a codeword $\mathbf{Y}$ using an encoder $f:\intseq{1}{M_1}\times \intseq{1}{M_2} \times\intseq{1}{M_3} \to \calY^T$, transmits the codeword $\mathbf{Y}$ over the cq-channel, and sends $W_2$ publicly. Alice subsequently performs a measurement on her received state $\rho_{\mathbf{Y}}^A$ in a fixed basis $\{\ket{x}\}$ to obtain $\mathbf{X}$, and uses $\mathbf{X}$ and $W_2$ to decode $W_1$ as $\widehat{W}_1$. If $P^a_{\mathbf{Y}}$ denotes the induced {PMF} of $\mathbf{Y}$, and $\rho_a^{\mathbf{A}\mathbf{B}\mathbf{E}W_1W_2W_3\widehat{W}_1}$ is the joint state in the auxiliary problem, our objective is to ensure that $\P{\widehat{W}_1 \neq W_1}$, 
$\V{P^a_{\mathbf{Y}}; Q_Y^{\proddist T}}$, and $\|\rho^{\mathbf{E}W_1W_2} -\rho^{\mathbf{E}} \otimes \rho^{W_1W_2} \|_1$ are small.
\begin{lemma}
\label{lm:auxiliary_prob}
If for some $\zeta>0$
\begin{align}
\log M_3 &= \lfloor  (1+\zeta)I(Q_Y, \widetilde{\rho}_y^E) T \rfloor,\\
\log M_1 + \log M_2 + \log M_3 &= \lceil (1+\zeta) H(Q_Y) T \rceil,\\
\log M_1 +\log M_3&= \lfloor (1-\zeta)I(Q_Y, Q_{X|Y})T\rfloor,\\
\end{align}
then there exists a sequence of codes and a positive constant $\xi$ such that
\begin{align}
\P{\widehat{W}_1 \neq W_1} &\leq 2^{-\xi \alpha_T T},\\
\label{eq:py_estimation}
\V{P^a_{\mathbf{Y}}; Q_Y^{\proddist T}} &\leq 2^{-\xi  T},\\
 \|\rho^{\mathbf{E}W_1W_2} - \rho^{\mathbf{E}}\otimes\rho^{W_1W_2}  \|_1&\leq 2^{-\omega(\log T)}.\label{eq:q-aux-sec}
\end{align}
\end{lemma}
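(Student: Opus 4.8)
The plan is a single random-coding argument of the ``superposition coding for secret key generation'' type, in which the three requirements are read off the one-shot bounds already recalled: Lemma~\ref{lm:oneshot_channel} for reliability and the output statistics \eqref{eq:py_estimation}, and Lemma~\ref{lm:quantum-resolve} for the secrecy bound \eqref{eq:q-aux-sec}. I would draw the codewords $\{f(w_1,w_2,w_3)\}$ i.i.d.\ according to $Q_Y^{\proddist T}$, so that for each fixed $w_2$ the $M_1M_3$ codewords $\{f(w_1,w_2,w_3)\}_{w_1,w_3}$ and for each fixed $(w_1,w_2)$ the $M_3$ codewords $\{f(w_1,w_2,w_3)\}_{w_3}$ are again i.i.d.\ $Q_Y^{\proddist T}$; at Alice, the decoder is maximum-likelihood decoding of the pair $(W_1,W_3)$ from $(\mathbf{X},W_2)$, retaining $\widehat{W}_1$. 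Since $P^a_{\mathbf{Y}}$ is exactly the output of a codebook of $M_1M_2M_3=2^{\lceil(1+\zeta)H(Q_Y)T\rceil}$ i.i.d.\ $Q_Y^{\proddist T}$ codewords through the identity channel on $\calY^T$, \eqref{eq:py_estimation} follows from \eqref{eq:res-bound22} with $\gamma=(1+\tfrac{\zeta}{2})H(Q_Y)T$: the information-density term is $\P[Q_Y^{\proddist T}]{\sum_t(-\log Q_Y(Y_t))\geq\gamma}$, exponentially small in $T$ by a Chernoff bound (here $|\calY|<\infty$ and $H(Q_Y)=\Theta(1)$, the degenerate channel being of no interest), and $\sqrt{2^\gamma/(M_1M_2M_3)}=2^{-\Omega_\zeta(T)}$. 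For reliability, for each $w_2$ one decodes $M_1M_3=2^{\lfloor(1-\zeta)I(Q_Y,Q_{X|Y})T\rfloor}$ i.i.d.\ codewords over $Q_{X|Y}^{\proddist T}$, so \eqref{eq:rel-oneshot} with $\gamma=(1-\tfrac{\zeta}{2})I(Q_Y,Q_{X|Y})T$ bounds the error by $\P[Q_Y^{\proddist T}\times Q_{X|Y}^{\proddist T}]{\sum_t\log\tfrac{Q_{X|Y}(X_t|Y_t)}{Q_X(X_t)}\leq\gamma}+2^{-(\zeta/2)I(Q_Y,Q_{X|Y})T}$; since $I(Q_Y,Q_{X|Y})=\Theta(\alpha_T)$ and the information density is concentrated on the $\mathrm{Binomial}(T,\alpha_T)$-many coordinates carrying a ``$1$'', each contributing a bounded density close to $\D{\widetilde{\rho}_1^B}{\widetilde{\rho}_0^B}$, a Chernoff/Bernstein estimate makes this $2^{-\Omega_\zeta(\alpha_T T)}$, and $\alpha_T T=\omega(T^{1/3}(\log T)^{2/3})$ yields the claimed $2^{-\xi\alpha_T T}$.

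For \eqref{eq:q-aux-sec}, fix $(w_1,w_2)$: Eve holds $\rho^{\mathbf{E}}_{w_1w_2}=\tfrac1{M_3}\sum_{w_3}\widetilde{\rho}^E_{f(w_1,w_2,w_3)}$ with $\widetilde{\rho}^E_{\mathbf{y}}\eqdef\bigotimes_t\widetilde{\rho}^E_{y_t}$, which is the output of $T$ uses of the cq-channel $y\mapsto\widetilde{\rho}_y^E$ under an $M_3$-codeword i.i.d.\ $Q_Y^{\proddist T}$ codebook with target $\overline{\rho}\eqdef\pr{\sum_yQ_Y(y)\widetilde{\rho}_y^E}^{\proddist T}$. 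I would apply Lemma~\ref{lm:quantum-resolve}: the exponent tensorizes, $\phi_T(s)=T\phi(s)$ with $\phi(s)=\log\sum_yQ_Y(y)\tr{(\widetilde{\rho}_y^E)^{1-s}\pr{\sum_{y'}Q_Y(y')\widetilde{\rho}_{y'}^E}^{s}}$, and $\chi_2(\widetilde{\rho}_1^E\|\widetilde{\rho}_0^E)<\infty$ forces $\mathrm{supp}(\widetilde{\rho}_y^E)\subseteq\mathrm{supp}(\sum_{y'}Q_Y(y')\widetilde{\rho}_{y'}^E)$ for every $y$, so $\phi$ is finite and twice differentiable in a neighborhood of $0$ with $\phi(0)=0$, $\phi'(0)=-I(Q_Y,\widetilde{\rho}_y^E)$ and $\phi''(0)<\infty$. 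Choosing $\gamma=(1+\tfrac{\zeta}{2})I(Q_Y,\widetilde{\rho}_y^E)T$ and a suitable $s<0$ within the range of smoothness of $\phi$ (whose second-order expansion gives $\gamma s+T\phi(s)=\tfrac{\zeta}{2}I(Q_Y,\widetilde{\rho}_y^E)Ts+\tfrac{T}{2}\phi''(0)s^2+o(Ts^2)$) makes $2\sqrt{2^{\gamma s+T\phi(s)}}=2^{-\Omega(\zeta^2TI(Q_Y,\widetilde{\rho}_y^E)^2/\phi''(0))}$, while $\sqrt{2^\gamma\nu_T/M_3}=\mathrm{poly}(T)\,2^{-(\zeta/4)I(Q_Y,\widetilde{\rho}_y^E)T}$ because $\overline{\rho}$, a tensor power of a fixed finite-dimensional state, has only $\mathrm{poly}(T)$ distinct eigenvalues. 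Whether $I(Q_Y,\widetilde{\rho}_y^E)$ stays bounded away from $0$ (when $\widetilde{\rho}_{0,y}^E$ genuinely depends on $y$) or vanishes at rate $\Theta(\alpha_T)$ jointly with $\phi''(0)$ (the complementary case, which contains the product situation $\widetilde{\rho}_0^{BE}=\widetilde{\rho}_0^B\otimes\widetilde{\rho}_0^E$ of Lemma~\ref{lm:rho_0}), both terms are $2^{-\omega(\log T)}$ because $\alpha_T T=\omega(\log T)$; averaging over $(w_1,w_2)$ and a triangle inequality (using that the messages are uniform and $\rho^{\mathbf{E}}=\mathbb{E}_{w_1w_2}\rho^{\mathbf{E}}_{w_1w_2}$) bounds $\|\rho^{\mathbf{E}W_1W_2}-\rho^{\mathbf{E}}\otimes\rho^{W_1W_2}\|_1$ by $2\,\mathbb{E}_{w_1w_2}\|\rho^{\mathbf{E}}_{w_1w_2}-\overline{\rho}\|_1$, hence \eqref{eq:q-aux-sec}.

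Each of the three bounds holds in expectation over the random code with value $o(1)$, so by Markov's inequality and a union bound there is a single deterministic $f$ meeting all three simultaneously, up to a constant factor absorbed into $\xi$ and into the $\omega(\log T)$; this produces the code in the statement. I expect the real obstacle to be the secrecy step: one must verify that $\phi$ is finite and twice differentiable on a \emph{two-sided} neighborhood of $s=0$ \emph{uniformly in $T$} as $\alpha_T\to0$ — this is exactly where finite dimension and $\chi_2(\widetilde{\rho}_1^E\|\widetilde{\rho}_0^E)<\infty$ are used — and then track the joint $\alpha_T$-scaling of $I(Q_Y,\widetilde{\rho}_y^E)$ and $\phi''(0)$ accurately enough to certify the $2^{-\omega(\log T)}$ rate in the regime where both are $\Theta(\alpha_T)$. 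A secondary point of care is extracting the $\Omega(\alpha_T T)$ exponent in the reliability bound despite the information density having atoms as large as $\log(1/\alpha_T)$, which is handled by conditioning on the concentrated number of non-innocent coordinates.
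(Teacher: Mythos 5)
Your proposal is correct and follows essentially the same route as the paper: i.i.d.\ random coding with $Q_Y^{\proddist T}$, the classical one-shot bounds of Lemma~\ref{lm:oneshot_channel} (applied per-$w_2$ subcodebook for reliability with a Bernstein/Chernoff estimate, and to the identity channel for \eqref{eq:py_estimation}), the quantum resolvability bound of Lemma~\ref{lm:quantum-resolve} with a Taylor expansion of $\phi$ at $s=0$ for \eqref{eq:q-aux-sec}, and a Markov-plus-union-bound derandomization. The one point you flag as the "real obstacle" is exactly what the paper's Lemma~\ref{lm:exponent-res} supplies — a uniform third-order bound $\phi(s)>-I(Q_Y,\widetilde{\rho}_y^E)s-B(\alpha_T s^2-s^3)$ together with the choice $s\in o(\sqrt{\alpha_T})\cap\omega\pr{\frac{\log T}{T\alpha_T}}$ (rather than the quadratic optimizer $s=\Theta(\zeta)$, whose cubic remainder would dominate), which is where the condition $\alpha_T=\omega((\log T/T)^{2/3})$ enters.
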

\begin{proof}
Let $F:\intseq{1}{M_1}\times \intseq{1}{M_2}\times \intseq{1}{M_3}$ be a random encoder whose codewords are drawn independently according to $Q_Y^{\proddist T}$. By construction, Alice can assume that each symbol $X_i$ is received as the output of a {DMC} $(\calY, Q_{X|Y}, \calX)$ with input $Y_i$, and, therefore, Lemma~\ref{lm:oneshot_channel} implies that
 \begin{multline}
 \E[F]{\P{\widehat{W}_1 \neq W_1}} \\
 \begin{split}
 &=  \frac{1}{M_2} \sum_{w_2} \E[F]{\P{\widehat{W}_1 \neq W_1 |W_2 = w_2}} \\
 &\stackrel{(a)}{\leq} \P[Q_{X|Y}^{\proddist T}\times Q_Y^{\proddist T}]{\sum_{t=1}^T\log \frac{Q_{X|Y}(X_t|Y_t)}{Q_X(X_t)}\leq \gamma} + \frac{M_1 M_3}{2^\gamma}\\
 &= \P[Q_{XY}^{\proddist T}]{\sum_{t=1}^T\log \frac{Q_{Y|X}(Y_t|X_t)}{Q_Y(Y_t)}\leq \gamma} + \frac{M_1 M_3}{2^\gamma},
 \end{split}
 \end{multline}
where $(a)$ follows  from applying Lemma~\ref{lm:oneshot_channel} to the subcodebook $\{F(w_1, w_2, w_3):w_1\in\intseq{1}{M_1}, w_3 \in \intseq{1}{M_3}\}$ for a particular $w_2$. By choosing
\begin{align}
\log M_1 + \log M_3&= \lfloor (1-\zeta)I(Q_X, Q_{Y|X})T\rfloor\\
\gamma &= \left(1-\frac{\zeta}{2}\right)I(Q_X, Q_{Y|X})T,
\end{align}
and using Bernstein's inequality \cite{bernstein1924modification}, we obtain
\begin{multline}
\P[Q_{XY}^{\proddist T}]{\sum_{t=1}^T\log \frac{Q_{Y|X}(Y_t|X_t)}{Q_Y(Y_t)}\leq \gamma} + \frac{M_1 M_3}{2^\gamma}\\
\begin{split}
 &\leq \exp\left(-\frac{-\frac{1}{8}\zeta^2I(Q_Y, Q_{X|Y})^2T}{\Var{\log \frac{Q_{Y|X}(Y|X)}{Q_Y(Y)} } + \frac{1}{3}C_3 \zeta \avgI{X;Y}}\right) \\
 &\phantom{====================}+ 2 ^{-\frac{\zeta}{2}\avgI{X;Y}T}\\
&\leq 2^{-\xi \alpha_T T},
\end{split}
\end{multline}
for some $\xi>0$. Next, by using Lemma~\ref{lm:oneshot_channel} for the  channel $(\calY, Q_{Y'|Y}, \calY)$ with $Q_{Y'|Y}(y'|y) \eqdef \indic{y'=y}$ and the distribution $Q_Y$, we obtain
\begin{multline}
\E[F]{ \V{P_{\mathbf{Y}}^a; Q_Y^{\proddist T}}} \\
 \leq \P[Q_Y^{\proddist T}]{\sum_{t=1}^T\log \frac{1}{Q_Y(Y_t)}\geq \gamma} +\sqrt{ \frac{2^\gamma}{M_1M_2M_3}}.
\end{multline}
By choosing
\begin{align}
\log M_1 + \log M_2 + \log M_3 &= \lceil (1+\zeta) \avgH{Y}T\rceil\\
\gamma &= \left(1+\frac{\zeta}{2}\right) \avgH{Y} T
\end{align}
and using Hoeffding's inequality \cite{Hoeffding2}, with $\mu_Y \eqdef \min_{y:Q_Y(y) > 0} Q_Y(y)$, we obtain
\begin{multline}
\P[Q_Y^{\proddist T}]{\sum_{t=1}^T\log \frac{1}{Q_Y(Y_t)}\geq \gamma} + \sqrt{\frac{2^\gamma}{M_1M_2M_3}}\\
\begin{split}
 &\leq \exp\left(-\frac{\zeta^2\avgH{Y}^2T}{2\log^2(\mu_Y)}\right) + 2^{-\frac{\zeta}{2}\avgH{Y}T }\\
&\leq 2^{-\xi T},
\end{split}
\end{multline}
for $\xi >0 $ small enough. 

Since $W_1$ and $W_2$ are classical, we can write 
\begin{align}
\rho^{W_1W_2\mathbf{E}} = \frac{1}{M_1 M_2}\sum_{w_1, w_2} \ket{w_1w_2}\bra{w_1w_2} \otimes \rho_{w_1w_2}^{\mathbf{E}},
\end{align}
to upper-bound $\E[F]{\|\rho^{\mathbf{E}W_1W_2} - \rho^{\mathbf{E}}\otimes\rho^{W_1W_2 } \|_1}$, we apply Lemma~\ref{lm:quantum-resolve} and obtain
\begin{multline}
\E[F]{\|\rho^{W_1W_2\mathbf{E}} - \rho^{W_1W_2}\otimes \pr{\widetilde{\rho}^{E}}^{\proddist T} \|_1}\\
\begin{split}
 &= \frac{1}{M_1M_2}\sum_{w_1,w_2}\E[F]{\|\rho^{\mathbf{E}}_{w_1, w_2} -  \pr{\widetilde{\rho}^{E}}^{\proddist T} \|_1}\\
&\leq \sqrt{2^{\gamma s + T \phi(s)}} + \sqrt{\frac{2^\gamma \nu}{M_3}},
\end{split}
\end{multline}
where $\nu$ is the number of distinct eigenvalues of $\pr{\widetilde{\rho}^E}^{\proddist T}$, and 
\begin{align}
\phi(s) = \log \pr{\sum_y Q_Y(y) \tr{\pr{\widetilde{\rho}_y^E}^{1-s} \pr{\widetilde{\rho}^E}^s }}.
\label{eq:phi-res-def}
\end{align}
Upon choosing
\begin{align}
\log M_3 &= \lfloor  I(Q_Y, \widetilde{\rho}_y^E)T + \zeta  \alpha_T T \rfloor,\\
\gamma &=I(Q_Y, \widetilde{\rho}_y^E)T+ \frac{\zeta}{2} \alpha_T T,
\end{align}
we obtain
\begin{multline}
\sqrt{2^{\gamma s + T \phi(s)}} + \sqrt{\frac{2^\gamma \nu}{M_3}} \\
\begin{split}
&\leq \sqrt{2^{s\alpha_T T\pr{\frac{I(Q_Y, \widetilde{\rho}_y^E)}{\alpha_T} + \frac{\zeta}{2} + \frac{\phi(s)}{s\alpha_T}}}} + \sqrt{2^{-\frac{\zeta}{2} \alpha_T T} \nu}\\
&\stackrel{(a)}{\leq} \sqrt{2^{s\alpha_T T\pr{\frac{I(Q_Y, \widetilde{\rho}_y^E)}{\alpha_T} + \frac{\zeta}{2} + \frac{\phi(s)}{s\alpha_T}}}} \\
&\phantom{=}+ \sqrt{2^{-\frac{\zeta}{2} \alpha_T T} (T+1)^{\dim \calH^E}}\\
&\leq \sqrt{2^{s\alpha_T T\pr{\frac{I(Q_Y, \widetilde{\rho}_y^E)}{\alpha_T} + \frac{\zeta}{2} + \frac{\phi(s)}{s\alpha_T}}}} + \frac{1}{2}2^{-\xi \alpha_T  T},\label{eq:res-bound}
\end{split}
\end{multline}
where $(a)$ follows from \cite[Lemma 3.7]{hayashi2006quantum}. We now introduce  the following technical lemma to simplify the above expression.
\begin{lemma}
\label{lm:exponent-res}
Suppose $s<0$; there exists a constant $B\geq0$ such that for $T$ large enough and $|s|$ small enough, we have
\begin{align}
\phi(s) > -I(Q_Y, \widetilde{\rho}_y^E)s - B(\alpha_T s^2 - s^3).
\end{align}
\end{lemma}
\begin{proof}
See Appendix~\ref{sec:error-exponent}.
\end{proof}
Applying Lemma~\ref{lm:exponent-res} to \eqref{eq:res-bound}, we obtain
\begin{multline}
\sqrt{2^{s\alpha_T T\pr{\frac{I(Q_Y, \widetilde{\rho}_y^E)}{\alpha_T} + \frac{\zeta}{2} + \frac{\phi(s)}{s\alpha_T}}}}\\
\begin{split} 
&\leq \sqrt{2^{s\alpha_T T\pr{\frac{I(Q_Y, \widetilde{\rho}_y^E)}{\alpha_T} + \frac{\zeta}{2} + \frac{-I(Q_Y, \widetilde{\rho}_Y^E)s - B(\alpha_T s^2 - s^3)}{s\alpha_T}}}} \\
&= \sqrt{2^{s\alpha_T T\pr{ \frac{\zeta}{2} + \frac{ B(\alpha_T s - s^2)}{\alpha_T}}}} 
\end{split}
\end{multline}
By choosing $s = o(\sqrt{\alpha_T}) \cap \omega(\frac{\log T}{T\alpha_T})$\footnote{To find such $s$, it is required that $\sqrt{\alpha_T} =  \omega(\frac{\log T}{T\alpha_T})$ or equivalently $\alpha_T = \omega\pr{\pr{\frac{\log T}{T}}^{\frac{2}{3}}}$}, the above expression goes to zero faster than any polynomial.
Therefore, for a random encoder, we have
\begin{align}
\E[F]{\P{W_1 \neq \widehat{W}_1} } &\leq 2^{-\xi \alpha_T T}\label{eq:rel-aux}\\
\E[F]{\V{P^a_{\mathbf{Y}}; Q_Y^{\proddist T}}} &\leq 2^{-\xi T}\label{eq:y-app-aux}\\
\E[F]{\|\rho^{W_1W_2\mathbf{E}} -  \pr{\widetilde{\rho}^{E}}^{\proddist T}\otimes\rho^{W_1W_2} \|_1} &\leq 2^{-\omega(\log T)},\label{eq:secrecy-aux}
\end{align}
if
\begin{align}
\log M_3 &= \lfloor  (1+\zeta)I(Q_Y, \widetilde{\rho}_y^E) T \rfloor,\\
\log M_1 + \log M_2 + \log M_3 &= \lceil (1+\zeta) H(Q_Y) T \rceil,\\
\log M_1 +\log M_3&= \lfloor (1-\zeta)I(Q_Y, Q_{X|Y})T\rfloor.
\end{align}
Upon defining the events
\begin{align}
\calE_1 &\eqdef \{\P{W_1 \neq \widehat{W}_1} \leq 4\times2^{-\xi \alpha_T T}\},\\
\calE_2 &\eqdef \{\V{P^a_{\mathbf{Y}}; Q_Y^{\proddist T}} \leq 4\times2^{-\xi T}\},\\
\calE_3 &\eqdef \{\|\rho^{W_1W_2\mathbf{E}} -  \pr{\widetilde{\rho}^{E}}^{\proddist T}\otimes\rho^{W_1W_2} \|_1 \leq 4\times2^{-\omega(\log T)}\},
\end{align}
and using Markov inequality, we have
\begin{multline}
\P[F]{\calE_1\cap \calE_2\cap \calE_3} \\
\begin{split}
&\geq 1 - \P[F]{\calE_1^c} - \P[F]{\calE_2^c} - \P[F]{\calE_3^c}\\
 &\geq 1- \frac{\E[F]{\P{W_1 \neq \widehat{W}_1} }}{2^{-\xi \alpha_T T}} - \frac{\E[F]{\V{P^a_{\mathbf{Y}}; Q_Y^{\proddist T}}}}{ 4\times2^{-\xi T}}\\
 &\phantom{====}-\frac{\E[F]{\|\rho^{W_1W_2\mathbf{E}} -  \pr{\widetilde{\rho}^{E}}^{\proddist T}\otimes\rho^{W_1W_2} \|_1}}{4\times2^{-\omega(\log T)}}\\
 &\geq \frac{1}{4}.
 \end{split}
\end{multline}
Therefore, there exists a realization $f$ of $F$ with
\begin{align}
\P{W_1 \neq \widehat{W}_1} &\leq 4\times2^{-\xi \alpha_T T},\\
\V{P^a_{\mathbf{Y}}; Q_Y^{\proddist T}} &\leq 4\times2^{-\xi T},\\
\|\rho^{W_1W_2\mathbf{E}} -  \pr{\widetilde{\rho}^{E}}^{\proddist T}\otimes\rho^{W_1W_2} \|_1 &\leq 4\times2^{-\omega(\log T)}.
\end{align}
\end{proof}

\subsection{Proof of Theorem~\ref{th:main_quantum_passive}}
\label{sec:thm-passive-proof}
Using the likelihood encoder technique, we first prove that 
\begin{multline}
C_{\textnormal{qck}} \geq \sqrt{\frac{2}{\chi_2\pr{\widetilde{\rho}^E_1\| \widetilde{\rho}^E_0}}} (\D{\widetilde{\rho}^{BE}_1}{\widetilde{\rho}^{BE}_0} \\- \D{\widetilde{\rho}^E_1}{\widetilde{\rho}^E_0}- \D{\widetilde{\rho}^{BE}_1}{\widetilde{\rho}^B_1 \otimes \widetilde{\rho}^E_1}).
\end{multline}
Consider a specific code for the auxiliary problem and let $\widetilde{\rho}^{\mathbf{ABE}W_1W_2\widehat{W}_1} $ be the corresponding induced joint quantum state. Because all random variables $W_1$, $W_2$, $\mathbf{X}$, and $\mathbf{Y}$ are classical, we can define their induced joint {PMF} denoted by $\widetilde{P}_{W_1W_2\mathbf{X}\mathbf{Y}}$. We then use the conditional {PMF}s $\widetilde{P}_{W_1W_2|\mathbf{Y}}$ and $\widetilde{P}_{\widehat{W}_1|\mathbf{X}W_2 }$ as the encoder and decoder, respectively, in the main problem resulting in the induced joint quantum state $\widehat{ \rho}^{\mathbf{ABE}W_1W_2W_3\widehat{W}_1}$. By our construction, we can decompose both joint states $\widetilde{\rho}^{\mathbf{ABE}W_1W_2W_3\widehat{W}_1} $ and $ \widehat{\rho}^{\mathbf{ABE}W_1W_2\widehat{W}_1}$  as 
\begin{multline}
\widetilde{ \rho}^{\mathbf{ABE}W_1W_2W_3\widehat{W}_1} = \sum_{w_1,w_2, \widehat{w}_1, \mathbf{y}, \mathbf{x}}  \widetilde{P}_{\mathbf{Y}}(\mathbf{y})\\
 \times \widetilde{P}_{W_1W_2|\mathbf{Y}}(w_1, w_2|\mathbf{y}) Q_{X|Y}^{\proddist T}(\mathbf{x}|\mathbf{y}) \widetilde{P}_{\widehat{W}_1|\mathbf{X}W_2 }(\widehat{w}_1|\mathbf{x}, w_2)\\
 \times \ket{\mathbf{y}\mathbf{x}w_1w_2\widehat{w}_1}\bra{\mathbf{y}\mathbf{x}w_1w_2\widehat{w}_1}\otimes \widetilde{\rho}_{\mathbf{x},\mathbf{y}}^{\mathbf{E}},
\end{multline}
and
\begin{multline}
\widehat{ \rho}^{\mathbf{ABE}W_1W_2W_3\widehat{W}_1} 
= \sum_{w_1,w_2, \widehat{w}_1, \mathbf{y}, \mathbf{x}}  Q_Y^{\proddist T}(\mathbf{y}) \\
\times \widetilde{P}_{W_1W_2|\mathbf{Y}}(w_1, w_2|\mathbf{y}) Q_{X|Y}^{\proddist T}(\mathbf{x}|\mathbf{y})  \widetilde{P}_{\widehat{W}_1|\mathbf{X}W_2 }(\widehat{w}_1|\mathbf{x}, w_2)\\
\times \ket{\mathbf{y}\mathbf{x}w_1w_2\widehat{w}_1}\bra{\mathbf{y}\mathbf{x}w_1w_2\widehat{w}_1}\otimes \widetilde{\rho}_{\mathbf{x},\mathbf{y}}^{\mathbf{E}}.
\end{multline}
Since they differ  only in the distribution of $\mathbf{Y}$, we have
\begin{multline}
\|\widetilde{\rho}^{\mathbf{ABE}W_1W_2W_3\widehat{W}_1}  - \widehat{\rho}^{\mathbf{ABE}W_1W_2W_3\widehat{W}_1} \|_1 \\
\leq 2\V{\widetilde{P}^a_{\mathbf{Y}}; Q_Y^{\proddist T}}
 \stackrel{(a)}{\leq} 2^{-\xi T},
\end{multline}
where $(a)$ follows from \eqref{eq:y-app-aux}. Thus, we  upper-bound the probability of error in the main problem as
\begin{multline}
\P[\widehat{P}]{W_1 \neq W_2}\\
\begin{split}
& \leq \P[\widetilde{P}]{W_1 \neq W_2}  \\
&\phantom{=}+ \|\widetilde{\rho}^{\mathbf{ABE}W_1W_2W_3\widehat{W}_1}  - \widehat{\rho}^{\mathbf{ABE}W_1W_2W_3\widehat{W}_1} \|_1 \\
&\leq 2^{-\zeta \alpha_T T} + 2^{-\zeta T},
\end{split}
\end{multline}
and upper-bound the sum of secrecy and covertness as
\begin{align}
S+ C 
&\eqdef\D{\widehat{\rho}^{W_1 W_2 \mathbf{E}}}{{\rho}_{\text{unif}}^{W_1}\otimes \widehat{\rho}^{W_2\mathbf{E}}}\nonumber\displaybreak[0]\\
&\phantom{=========} + \D{\widehat{\rho}^{W_2\mathbf{E}}}{{\rho}_{\text{unif}}^{W_2}\otimes \rho_0^{\proddist T}}\displaybreak[0]\\
&= \D{\widehat{\rho}^{W_1 W_2 \mathbf{E}}}{{\rho}_{\text{unif}}^{W_1W_2}\otimes \widehat{\rho}^{\mathbf{E}}} + \D{\widehat{\rho}^{\mathbf{E}}}{\rho_0^{\proddist T}}\displaybreak[0]\\
&\stackrel{(a)}{=} \D{\widehat{\rho}^{W_1 W_2 \mathbf{E}}}{{\rho}_{\text{unif}}^{W_1W_2}\otimes \widehat{\rho}^{\mathbf{E}}}\nonumber \displaybreak[0]\\
&\phantom{======} +\frac{1}{2} \alpha_T^2\chi_2(\rho_1^E\|\rho_0^E)T + O(\alpha_T^3T)\displaybreak[0]\\
&\stackrel{(b)}{\leq}  \|\widehat{\rho}^{W_1 W_2 \mathbf{E}} - {\rho}_{\text{unif}}^{W_1W_2}\otimes \widehat{\rho}^{\mathbf{E}} \|_1\nonumber\displaybreak[0]\\
&\times \log \frac{M_1M_2\pr{\dim \calH^E}^T}{\frac{1}{M_1M_2}\lambda_{min}(\widetilde{\rho}^E)^T \|\widehat{\rho}^{W_1 W_2 \mathbf{E}} - {\rho}_{\text{unif}}^{W_1W_2}\otimes \widehat{\rho}^{\mathbf{E}} \|_1 }\nonumber \displaybreak[0]\\
& \phantom{======} +\frac{1}{2} \alpha_T^2\chi_2(\rho_1^E\|\rho_0^E)T + O(\alpha_T^3T)\displaybreak[0]\\
&= \|\widehat{\rho}^{W_1 W_2 \mathbf{E}} - {\rho}_{\text{unif}}^{W_1W_2}\otimes \widehat{\rho}^{\mathbf{E}} \|_1\nonumber\displaybreak[0]\\
&\phantom{===} \times\pr{O(T)-\log  \|\widehat{\rho}^{W_1 W_2 \mathbf{E}} - {\rho}_{\text{unif}}^{W_1W_2}\otimes \widehat{\rho}^{\mathbf{E}} \|_1  }\nonumber\displaybreak[0]\\
&\phantom{======} +\frac{1}{2} \alpha_T^2\chi_2(\rho_1^E\|\rho_0^E)T + O(\alpha_T^3T)\displaybreak[0]\\
&\stackrel{(c)}{\leq}\pr{2^{-\zeta \alpha_T T }+ 2^{-\zeta T}}O(T)\nonumber\displaybreak[0]\\
&\phantom{=====}+\frac{1}{2} \alpha_T^2\chi_2(\rho_1^E\|\rho_0^E)T + O(\alpha_T^3T),
\label{eq:sec_cov}
\end{align}
where $(a)$ follows from \cite[Lemma 7]{Sheikholeslami2016}, $(b)$ follows from Lemma~\ref{lm:div-bound-l1}, and $(c)$ follows from
\begin{multline}
\|\widehat{\rho}^{W_1 W_2 \mathbf{E}} - {\rho}_{\text{unif}}^{W_1W_2}\otimes \widehat{\rho}^{\mathbf{E}} \|_1 \\
\begin{split}
&\leq  \|\widetilde{\rho}^{W_1 W_2 \mathbf{E}} - {\rho}_{\text{unif}}^{W_1W_2}\otimes \widehat{\rho}^{\mathbf{E}} \|_1 
+ \|\widehat{\rho}^{W_1 W_2 \mathbf{E}} - \widetilde{\rho}^{W_1 W_2 \mathbf{E}} \|_1 \\
&\leq 2^{-\zeta \alpha_T T} + 2^{-\zeta T}.
\end{split}
\end{multline}
The throughput of the coding scheme is  lower-bounded by \eqref{eq:passive-throughput} shown below.
\begin{widetext}

\begin{align}
\frac{\log M_1}{\sqrt{T C}} &\geq \frac{ \log M_1}{\sqrt{T \pr{ \pr{2^{-\zeta \alpha_T T }+ 2^{-\zeta T}}O(T)+\frac{1}{2} \alpha_T^2\chi_2(\rho_1^E\|\rho_0^E)T + O(\alpha_T^3T)}}} \\
&\geq \sqrt{\frac{2}{\chi_2(\rho_1^E\|\rho_0^E)}}\frac{ \lfloor (1-\zeta)\avgI{A;B}_{\widetilde{\rho}}T\rfloor - \lceil  \avgI{B;E}_{\widetilde{\rho}} T + \zeta\alpha_T T\rceil }{T\alpha_T(1+o(1))}\\
&= \sqrt{\frac{2}{\chi_2\pr{\widetilde{\rho}^E_1\| \widetilde{\rho}^E_0}}} \pr{\D{\widetilde{\rho}^{BE}_1}{\widetilde{\rho}^{BE}_0} - \D{\widetilde{\rho}^E_1}{\widetilde{\rho}^E_0} - \D{\widetilde{\rho}^{BE}_1}{\widetilde{\rho}^B_1 \otimes \widetilde{\rho}^E_1}} + o(1).\label{eq:passive-throughput}
\end{align}
\end{widetext}
We now turn to the proof of
\begin{align}
C_{\textnormal{qck}} \geq\sqrt{\frac{2}{\chi_2\pr{\widetilde{\rho}^E_1\| \widetilde{\rho}^E_0}}} \pr{ \avgD{\widetilde{\rho}_1^B}{\widetilde{\rho}_0^B}-\avgD{\widetilde{\rho}_1^E}{\widetilde{\rho}_0^E}}.
\end{align}
Note that if $\avgD{\widetilde{\rho}_1^B}{\widetilde{\rho}_0^B} \leq \avgD{\widetilde{\rho}_1^E}{\widetilde{\rho}_0^E}$, the result is trivial. Therefore, we can assume that $\avgD{\widetilde{\rho}_1^B}{\widetilde{\rho}_0^B} > \avgD{\widetilde{\rho}_1^E}{\widetilde{\rho}_0^E}$. Let $M_1$ and $M_2$ be such that
\begin{align}
\log M_1 + \log M_2 &= \lfloor (1-\zeta) I(Q_X, \widetilde{\rho}_y^B) \rfloor,\label{eq:throughput-forward}\\
\log M_2 &= \lceil (1+\zeta) I(Q_X, \widetilde{\rho}_y^E) \rceil.\label{eq:throughput-forward2}
\end{align}
The protocol is then as follows. Alice chooses a random binary string of length $\log M_1 + \log M_2$ and transmits this string through a covert code introduced in \cite{Sheikholeslami2016}. Alice and Bob  subsequently extract the first $\log M_1$ bits of the string as the key. The reliability and covertness proof follows exactly from \cite{Sheikholeslami2016}. For secrecy, note that
\begin{multline}
 \D{\rho^{\mathbf{E}MS^X}}{\rho^{\mathbf{E}M} \otimes \rho^{S^X}_{\text{unif}}} \\
 \begin{split}
&\eqdef \D{\rho^{\mathbf{E}S^X}}{\rho^{\mathbf{E}} \otimes \rho^{S^X}_{\text{unif}}} \\
&=  \frac{1}{M_1} \sum_{w_1 = 1}^{M_1}\D{\rho^{\mathbf{E}}_{w_1}}{\rho^{\mathbf{E}}}.
\end{split}
\end{multline}
Similar to the proof of \eqref{eq:secrecy-aux}, one can show that the above expression is upper-bounded by $2^{-\omega(\log T)}$ provided that $\log M_2 = \lceil (1+\zeta) I(Q_X, \widetilde{\rho}_y^E) \rceil$. Lower-bounding the throughput as in \eqref{eq:passive-throughput}  using \eqref{eq:throughput-forward} and \eqref{eq:throughput-forward2} concludes the proof.

\section{Proof of Theorem~\ref{th:main-collective}}
\label{sec:proof-theorem-unknown}

\subsection{Universal covert communication}
\label{sec:universal-covert-code}
The following theorem shows that knowing  bounds on $\lambda_{\min}(\rho_0^B)$, $\lambda_{\min}(\rho_0^E)$,  $\D{\rho_1^B}{\rho_0^B}$ and $\D{\rho_1^E}{\rho_0^E}$ is all that is required to covertly generate a secret key.
\begin{theorem}
\label{th:universal-covert}
Let $D^B$, $D^E$, $\widetilde{\lambda}^B$, and $\widetilde{\lambda}^E$ be fixed numbers and $\{\alpha_T\}_{T\geq 1}$ be as in Definition~\ref{def:q-covert-process}. For any $\zeta>0$, there exists a sequence of codes $\{\calC_T\}_{T\geq 1}$ such that for all cq-channels $x\mapsto \rho_x^{BE}$ satisfying
\begin{align}
\label{eq:db-cond}
\D{\rho_1^B}{\rho_0^B} &\geq D^B,\\
\label{eq:dw-cond}
\D{\rho_1^E}{\rho_0^E} &\leq D^E,\\
\label{eq:lb-cond}
\lambda_{\min}(\rho_0^B) &\geq \widetilde{\lambda}^B,\\
\label{eq:lw-cond}
\lambda_{\min}(\rho_0^E) &\geq \widetilde{\lambda}^E,
\end{align}
we have
\begin{align}
P_e &\leq 2T^{-5},\\
S &\leq L_1T^{-4},\\
C &\leq \frac{\alpha_T^2\chi_2(\rho_1^E\|\rho_0^E)}{2}T + L_1T^{-4} + 2\sqrt{L_1}\log \frac{2}{\widetilde{\lambda}^E}T^{-1}\nonumber\\
&\phantom{=}+ L_2\alpha_T^3T ,\\
\log M &= (1-2\zeta)(D^B - D^E)\alpha_T T,
\end{align}
where $L_1, L_2>0$ depend on the $\dim \calH^E$ and $\widetilde{\lambda}^E$.
\end{theorem}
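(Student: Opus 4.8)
The plan is to prove Theorem~\ref{th:universal-covert} by mimicking the forward-reconciliation part of the proof of Theorem~\ref{th:main_quantum_passive}, but arranging every estimate to hold \emph{simultaneously} for all cq-channels $x\mapsto\rho_x^{BE}$ compatible with the four bounds $D^B,D^E,\widetilde\lambda^B,\widetilde\lambda^E$. First I would fix the input distribution $Q_X\sim\text{Bernoulli}(\alpha_T)$ and draw a single random codebook $\{\mathbf{c}(m)\}_{m\in\intseq{1}{M}}$, $M=M_1M_2$, with codewords iid according to $Q_X^{\proddist T}$; crucially this codebook does not depend on the channel. As in the forward scheme, Alice sends a uniform message $m=(w_1,w_2)\in\intseq{1}{M_1}\times\intseq{1}{M_2}$, discloses $w_2$ over the public channel, and Alice and Bob keep $w_1$ as the key, so $\log M=\log M_1$. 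The rates are chosen as $\log(M_1M_2)=\lceil(1-\zeta')\alpha_T D^B T\rceil$ and $\log M_2=\lceil(1+\zeta')\alpha_T D^E T\rceil$, where $\zeta'$ is a rescaling of $\zeta$ selected so that $\log M_1=(1-2\zeta)(D^B-D^E)\alpha_T T$.

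For reliability, for any admissible channel a first-order expansion of the Holevo information (as in \cite[Eq.~(19)]{Sheikholeslami2016}) gives $I(Q_X,\rho_x^B)\geq(1-o(1))\alpha_T\D{\rho_1^B}{\rho_0^B}\geq(1-o(1))\alpha_T D^B$, while $\lambda_{\min}(\rho_0^B)\geq\widetilde\lambda^B$ bounds the range and variance of the relevant information density. Since $\log(M_1M_2)\leq(1-\zeta')\alpha_T D^B T$, the one-shot channel-coding bound (the cq-analogue of \eqref{eq:rel-oneshot}, via the Hayashi--Nagaoka inequality) together with Bernstein's inequality yields $\E[F]{P_e}\leq 2^{-\Omega(\alpha_T T)}$ for each fixed channel. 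To obtain a single code valid for the whole family, I would discretize the (compact) set of admissible channels by a polynomially fine net, use the decoder tuned to each net point, union-bound over the net---permissible because $2^{-\Omega(\alpha_T T)}$ is superpolynomially small since $\alpha_T T\in\omega(T^{1/3}(\log T)^{2/3})$---and extend to all channels by Lipschitz continuity of $P_e$ in the channel parameters, the eigenvalue bounds preventing any blow-up; alternatively one may invoke a universal cq-decoder achieving the Holevo rate for every channel with input $Q_X$. This gives $P_e\leq 2T^{-5}$.

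Secrecy and covertness are properties of the encoder alone. For any admissible channel and any $w_1$, the size-$M_2$ subcodebook soft-covers Eve: since $\log M_2\geq(1+\zeta')\alpha_T D^E T\geq(1+\zeta'/2)I(Q_X,\rho_x^E)T$ (using $I(Q_X,\rho_x^E)\leq(1+o(1))\alpha_T D^E$), the quantum resolvability bound of Lemma~\ref{lm:quantum-resolve} combined with a uniform version of the exponent estimate of Lemma~\ref{lm:exponent-res}---with a constant depending only on $\dim\calH^E$ and $\widetilde\lambda^E$---gives $\E[F]{\|\rho_{w_1}^{\mathbf{E}}-(\rho_{\alpha_T}^E)^{\proddist T}\|_1}\leq 2^{-\Omega(\alpha_T T)}$ and likewise $\|\rho^{W_2\mathbf{E}}-\rho_{\text{unif}}^{W_2}\otimes(\rho_{\alpha_T}^E)^{\proddist T}\|_1\leq 2^{-\Omega(\alpha_T T)}$. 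Feeding these into the trace-distance-to-divergence bound (Lemma~\ref{lm:div-bound-l1}) with $\lambda_{\min}(\rho_0^E)\geq\widetilde\lambda^E$ yields $S\leq L_1T^{-4}$ and $C\leq\D{(\rho_{\alpha_T}^E)^{\proddist T}}{(\rho_0^E)^{\proddist T}}+L_1T^{-4}+2\sqrt{L_1}\log(2/\widetilde\lambda^E)T^{-1}$; the Taylor expansion $T\D{\rho_{\alpha_T}^E}{\rho_0^E}=\frac{\alpha_T^2}{2}\chi_2(\rho_1^E\|\rho_0^E)T+O(\alpha_T^3T)$ (as in \cite[Lemma~7]{Sheikholeslami2016}) then produces the stated $C$ bound, with $L_2\alpha_T^3T$ absorbing the remainder. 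A net over the channel family plus Lipschitz continuity of these divergences (again using the eigenvalue bounds) upgrades both to hold uniformly, and a final union bound over the polynomially many net points shows that, with probability bounded away from $0$, one realization of the codebook satisfies all three bounds at every net point and hence, by continuity, at every admissible channel; fixing such a realization gives the deterministic code.

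The main obstacle I anticipate is the \emph{uniformity} over the channel family. Concretely: (i) proving the compound version of the resolvability exponent (a uniform Lemma~\ref{lm:exponent-res}), so that the error terms have a single constant $L_1,L_2$ depending only on $\dim\calH^E$ and $\widetilde\lambda^E$ rather than on the individual channel---this requires bounding the Taylor remainder of $\phi(s)$ uniformly, which is where $\lambda_{\min}(\rho_0^E)\geq\widetilde\lambda^E$ is essential; and (ii) verifying that the Lipschitz constants for $P_e$, for the soft-covering trace distances, and for the two relative entropies $\D{\cdot}{\cdot}$ appearing in $S$ and $C$ are finite and controlled over the whole family, which is precisely what $\lambda_{\min}(\rho_0^B)\geq\widetilde\lambda^B$ and $\lambda_{\min}(\rho_0^E)\geq\widetilde\lambda^E$ buy us (relative entropy and its derivatives blow up only near the boundary of full support). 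Everything else---the choice of $M_1,M_2$, Bernstein/Hoeffding, the trace-distance-to-divergence step, the net counting---follows the template already established in Appendix~\ref{sec:proof-theorem-passive}.
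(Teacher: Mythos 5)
Your treatment of secrecy and covertness matches the paper's proof essentially step for step: a polynomially fine net over the admissible channels (Lemma~\ref{lm:cont-disc-compound-cq}), a per-channel resolvability bound whose exponent is made channel-uniform via the Taylor remainder of $\phi(s)$ with a constant controlled by $\widetilde{\lambda}^E$ (Lemma~\ref{lm:exponent-res2}), a Markov-plus-union bound over the net (valid precisely because $2^{-\Omega(\alpha_T T)}$ beats the polynomial net size), extension to all channels by the $T^{-5}$ closeness of the $T$-fold product states, and finally Lemma~\ref{lm:div-bound-l1} together with the decomposition of $\D{\rho^{\mathbf{E}}}{\pr{\rho_0^E}^{\proddist T}}$ to obtain the stated bound on $C$. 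These are Lemmas~\ref{lm:aym-resolvability} and~\ref{lm:universal-resolvability} and the concluding argument of the proof.

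The gap is in reliability. Your primary route, ``use the decoder tuned to each net point, union-bound over the net, and extend to all channels by Lipschitz continuity of $P_e$,'' does not produce a code: unlike the soft-covering quantities, $P_e$ is not a property of the encoder alone, and Bob must commit to a single POVM without knowing the channel. A union bound over per-point decoders only shows that for each net point \emph{some} decoder works, not that \emph{one} decoder works at every point. Your fallback (``invoke a universal cq-decoder'') is the paper's actual route, but it is the crux of the proof rather than a citable black box. The paper explicitly argues that the compound-channel construction of Bjelakovic et al.\ cannot be adapted here because its approximation penalty overwhelms the $O(\alpha_T T)$ covert rate, and instead builds a channel-independent decoder from Hayashi's permutation-covariant operators $\sigma_{\mathbf{x}}$ and $\sigma_{U,T}$ (Schur--Weyl duality), with $\Gamma_{\mathbf{x}}=\{\sigma_{\mathbf{x}}-\gamma\sigma_{U,T}\succeq 0\}$ fed into the one-shot bound of Lemma~\ref{lm:one-shot-universal-cq-reliability}. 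Making this work in the covert regime then requires (i) restricting to codeword weights in $[w_\ell,w_u]$ around $T\alpha_T$ via Chernoff bounds, and (ii) a second-order expansion $\phi(s,p)\geq sI(p)-B(ps^2+s^3)$ of the Gallager-type exponent near $(s,p)=(0,0)$ with $B$ uniform over the class (Lemma~\ref{lm:exponent-reliability}, where $\widetilde{\lambda}^B$ enters), followed by the choice $s\in o(\sqrt{\alpha_T})\cap\omega(\log T/(T\alpha_T))$. The classical Bernstein argument you invoke applies to the forward scheme of Theorem~\ref{th:main_quantum_passive}, where Bob measures each output in a fixed basis; it does not apply to the quantum decoding needed here. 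Without these ingredients the reliability claim, and hence the theorem, is not established.
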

The remainder of this section is dedicated to the proof of the above result. We first adapt a result from \cite{bjelakovic2009classical}, which shows that for any class of cq-channels, there exists a \emph{finite} class of cq-channels that approximates the main class with high precision.
\begin{lemma}
\label{lm:cont-disc-compound-cq}
Consider a compound cq-channel $x \mapsto \rho_x^B(\theta)$ where $x \in \cal X$, $\rho_x^B \in \calD(\calH)$, $\calH$ is a $d$-dimensional Hilbert space, and $\theta\in\Theta$ is an arbitrary index set. There exists a constant $K>0$  that depends only on $d$ such that for  all $T\in\mathbb{N}$, there exists another compound cq-channel $x\mapsto\rho_x^B(\widetilde{\theta})$ with $x\in\calX$, $\rho_x^B \in \calD(\calH)$, and $\widetilde{\theta} \in \widetilde{\Theta}$ such that
\begin{enumerate}
\item the set $\widetilde{\Theta}$ is finite, i.e.,
\begin{align}
\label{eq:first-cond}
|\widetilde{\Theta}| \leq K^{|\calX|} T^{6|\calX| d^2};
\end{align}
\item for all $\theta\in\Theta$, there exists a $\widetilde{\theta}\in\widetilde{\Theta}$ such that for all $\mathbf{x} \in \calX^T$, we have
\begin{align}
\label{eq:second-cond}
\|\rho_{\mathbf{x}}^{\mathbf{B}}(\theta) -\rho_{\mathbf{x}}^{\mathbf{B}}(\widetilde{\theta})  \|_1 \leq T^{-5};
\end{align}
\item 
for all {PMFs} $P_X$ over $\calX$, we have
\begin{align}
\label{eq:third-cond}
 \min_{\widetilde{\theta}\in \widetilde{\Theta}} I(P_X, \rho_x^B(\widetilde{\theta})) \geq \inf_{\theta \in \Theta} I(P_X, \rho_x^B(\theta)) -2T^{-6}\log\pr{T^6 d}.
\end{align}
\end{enumerate}
\end{lemma}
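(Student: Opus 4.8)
The plan is to adapt the compound-channel discretization of \cite{bjelakovic2009classical} to the cq-setting. The key observation is that, since $\calX$ is finite and the output space is $d$-dimensional, a cq-channel $x\mapsto\rho_x^B$ is just a tuple $(\rho_x^B)_{x\in\calX}\in\calD(\calH)^{|\calX|}$, and $\calD(\calH)$ is a convex body of trace-norm diameter at most $2$ inside the $(d^2-1)$-dimensional real affine space of Hermitian trace-one operators. Such a body admits, for any scale $\epsilon\in(0,1]$, an $\epsilon$-net of cardinality at most $(c/\epsilon)^{d^2-1}$ for a universal constant $c$; choosing $\epsilon$ polynomially small in $T$ keeps this polynomial in $T$, and the product over $\calX$ only multiplies the exponent by $|\calX|$.

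Concretely, I would fix $\epsilon\eqdef T^{-6}$ and let $\calN\subseteq\calD(\calH)$ be a maximal $\epsilon$-separated subset in $\|\cdot\|_1$, which is automatically an $\epsilon$-net with $|\calN|\leq(c/\epsilon)^{d^2-1}\leq K\,T^{6d^2}$ for a $K$ depending only on $d$. For each $\theta\in\Theta$ and each $x\in\calX$ I would fix a closest net point $\sigma_x^{(\theta)}\in\calN$, so $\|\rho_x^B(\theta)-\sigma_x^{(\theta)}\|_1\leq\epsilon$, and define the \emph{finite} index set $\widetilde\Theta\eqdef\{(\sigma_x^{(\theta)})_{x\in\calX}:\theta\in\Theta\}\subseteq\calN^{|\calX|}$, setting $\rho_x^B(\widetilde\theta)\eqdef\sigma_x$ for $\widetilde\theta=(\sigma_x)_{x\in\calX}$. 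Then $|\widetilde\Theta|\leq|\calN|^{|\calX|}\leq K^{|\calX|}T^{6|\calX|d^2}$, which is \eqref{eq:first-cond}. For \eqref{eq:second-cond}, given $\theta$ I take $\widetilde\theta\eqdef(\sigma_x^{(\theta)})_{x\in\calX}\in\widetilde\Theta$ and telescope over the $T$ tensor factors, using $\|A\otimes B\|_1=\|A\|_1\|B\|_1$ and $\|\rho\|_1=1$ for density operators, to obtain $\|\rho_{\mathbf{x}}^{\mathbf{B}}(\theta)-\rho_{\mathbf{x}}^{\mathbf{B}}(\widetilde\theta)\|_1\leq T\epsilon=T^{-5}$ for every $\mathbf{x}\in\calX^T$.

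The content of \eqref{eq:third-cond} is that the finite model does not introduce channels with artificially low Holevo information, and here it is essential that $\widetilde\Theta$ is the \emph{image} of the nearest-net-tuple map rather than the full product net $\calN^{|\calX|}$: by construction every $\widetilde\theta\in\widetilde\Theta$ is, componentwise, within trace norm $\epsilon$ of a genuine family channel $(\rho_x^B(\theta))_{x\in\calX}$. Given a PMF $P_X$, let $\widetilde\theta^\ast\in\widetilde\Theta$ attain $\min_{\widetilde\theta}I(P_X,\rho_x^B(\widetilde\theta))$ (a minimum, $\widetilde\Theta$ being finite) and let $\theta'\in\Theta$ be a channel it approximates; then $\|\rho_x^B(\theta')-\rho_x^B(\widetilde\theta^\ast)\|_1\leq\epsilon$ for all $x$, hence also $\|\overline{\rho}(\theta')-\overline{\rho}(\widetilde\theta^\ast)\|_1\leq\epsilon$ by convexity of the trace norm, where $\overline{\rho}(\cdot)\eqdef\sum_xP_X(x)\rho_x^B(\cdot)$. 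Expanding $I(P_X,\rho_x^B(\cdot))=H(\overline{\rho}(\cdot))-\sum_xP_X(x)H(\rho_x^B(\cdot))$ and applying a continuity estimate for the von Neumann entropy (e.g.\ Fannes--Audenaert: two $d$-dimensional states at trace-norm distance $\leq\epsilon$ have entropies differing by at most $\tfrac{\epsilon}{2}\log(d-1)+h_b(\tfrac{\epsilon}{2})$) to the output term and, after convexity, to the conditional terms, one gets $|I(P_X,\rho_x^B(\theta'))-I(P_X,\rho_x^B(\widetilde\theta^\ast))|\leq 2T^{-6}\log(T^6d)$ once $T$ is large enough that the binary-entropy remainders are absorbed; since $I(P_X,\rho_x^B(\theta'))\geq\inf_{\theta\in\Theta}I(P_X,\rho_x^B(\theta))$, this is exactly \eqref{eq:third-cond}, uniformly in $P_X$.

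I expect the main obstacle to be twofold, and essentially bookkeeping-free once identified: first, recognizing that $\widetilde\Theta$ must be the image of the discretization map, since taking all of $\calN^{|\calX|}$ would include near-constant tuples with $I\approx0$ and break \eqref{eq:third-cond}; second, pinning down the entropy-continuity estimate sharply enough that its lower-order ($h_b$) terms fit inside the budget $2T^{-6}\log(T^6d)$ for all relevant $T$. By contrast, the cardinality bound \eqref{eq:first-cond} is a standard volume-ratio count and \eqref{eq:second-cond} a one-line telescoping estimate.
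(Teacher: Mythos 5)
Your proposal is correct and follows essentially the same route as the paper: a $T^{-6}$-scale discretization of the set of cq-channels restricted to representatives that approximate genuine members of the family, a telescoping bound for the tensor-product states, and Fannes-type entropy continuity for the Holevo information. The only differences are cosmetic—you build the $\epsilon$-net directly by a volume argument where the paper imports the covering from \cite[Theorem 5.5]{bjelakovic2009classical}, and you use the Audenaert refinement where the paper uses the plain Fannes bound $\delta\log(d\delta^{-1})$—and both fit within the stated constants.
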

\begin{proof}
We modify the proof provided in \cite{bjelakovic2009classical} to derive a tighter upper-bound on the approximation error of the new compound channel at the expense of increasing its size. By \cite[Theorem 5.5]{bjelakovic2009classical}, for all $\kappa > 0$, there exists a partition of all cq-channels from $\calX$ to $\calD(\calH)$  denoted by $\Pi = \{\pi_1, \cdots, \pi_n\}$ such that $n\leq K^{|\calX|} \kappa^{-|\calX|d^2}$, where $K$ only depends on the dimension of $\calH$, d, and the diameter of $\Pi$ is at most $\kappa$, i.e., for all $i\in\intseq{1}{n}$, for any two channels $x\mapsto\rho_x^B$ and $x\mapsto\widetilde{\rho}_x^B$ in $\pi_i$, for any $x\in\calX$, we have $\|\rho_x^B-\widetilde{\rho}_x^B\|_1 \leq \kappa$. Setting $\kappa = T^{-6}$, this implies that there exists a partition of size at most $K^{|\calX|}T^{6|\calX|d^2}$ and diameter at most $T^{-6}$. We construct the new compound cq-channel $x\mapsto \rho_x^B(\widetilde{\theta})$ by selecting an arbitrary channel from each $\pi_i$ whose intersection with $\{x\mapsto \rho_x^B(\theta): \theta\in\Theta\}$ is non-empty.
 We now show that this compound channel satisfies the conditions mentioned in the statement of the lemma. Since we select at most one channel from each $\pi_i$, $|\widetilde{\Theta}|\leq n \leq  K^{|\calX|} T^{6|\calX| d^2}$, and thus, we have \eqref{eq:first-cond}. To prove \eqref{eq:second-cond}, consider any $\theta\in\Theta$. By our construction, there should be a $\widetilde{\theta}\in\widetilde{\Theta}$  such that $x\mapsto \rho_x^B(\widetilde{\theta})$ and $x\mapsto \rho_x^B(\theta)$ belong to the same $\pi_i$. Therefore, for any $\mathbf{x} \in \calX^T$, we have
\begin{multline}
\|\rho^{\mathbf{B}}_{\mathbf{x}}(\theta) -\rho^{\mathbf{B}}_{\mathbf{x}}(\widetilde{\theta})  \|_1\\
\begin{split}
&= \| \rho_{x_1}^B(\theta) \otimes \cdots \otimes \rho_{x_T}^B(\theta) -\rho_{x_1}^B(\widetilde{\theta}) \otimes \cdots \otimes \rho_{x_T}^B(\widetilde{\theta})  \|_1\\
&\leq \sum_{t=1}^T \|\rho_{x_t}^B(\theta)  - \rho_{x_t}^B(\widetilde{\theta})  \|_1\\
&\stackrel{(a)}{\leq} T^{-5},
\end{split}
\end{multline}
where $(a)$ follows since $x\mapsto \rho_x^B(\theta)$ and $x\mapsto \rho_x^B(\widetilde{\theta})$ belong to the same $\pi_i$, and the diameter of the partition is less than $T^{-6}$. Finally, let $P_X$ be any {PMF} over $\calX$; to lower-bound $ \min_{\widetilde{\theta}\in \widetilde{\Theta}} I(P_X, \rho_x^B(\widetilde{\theta}))$ as in \eqref{eq:third-cond}, take any  $\widetilde{\theta}\in\widetilde{\Theta}$ and consider $\theta$ such that $x\mapsto \rho_x^B(\theta)$ and $x\mapsto \rho_x^B(\widetilde{\theta})$ belong to the same $\pi_i$. To complete the lemma, it is enough to show that
\begin{align}
I(P_X, \rho_x^B(\widetilde{\theta})) \geq I(P_X, \rho_x^B({\theta})) - 2T^{-6}\log\pr{T^6 d}.
\end{align}
To this end, we have
\begin{multline}
I(P_X, \rho_x^B(\widetilde{\theta})) \\
\begin{split}
&= H\pr{\sum_x P_X(x) \rho_x^B(\widetilde{\theta})} -\sum_x P_X(x)  H\pr{\rho_x^B(\widetilde{\theta})}\\
&\stackrel{(a)}{\geq} H\pr{\sum_x P_X(x) \rho_x^B({\theta})} -\sum_x P_X(x)  H\pr{\rho_x^B({\theta})}\\
&\phantom{=} - 2T^{-6}\log\pr{T^6 d}\\
&= I(P_X, \rho_x^B(\widetilde{\theta})) -  2T^{-6}\log\pr{T^6 d},
\end{split}
\end{multline}
where $(a)$ follows from Fannes's inequality which states that for any two $\rho$ and $\sigma$ in $\calD(\calH)$, if $\|\rho - \sigma\|_1\leq \delta \leq e^{-1}$, we have $|H(\rho) - H(\sigma)| \leq \delta \log(d\delta^{-1})$.
\end{proof}
\subsubsection{Universal reliability result}
We next prove a universal reliability result suitable for covert communications. Note that we cannot use the result of \cite{bjelakovic2009classical} directly since the  input distribution  used to analyze covert communications changes with the block-length. Indeed, our inspection of the proof of \cite{bjelakovic2009classical} suggests that the technique cannot be adapted for the covert case since the the penalty arising from the approximation of a class of channels dominates the number of bits that one can transmit covertly, which scales as $O\pr{\sqrt{T}}$. Therefore, we use a different approach based on the quantum universal decoder introduced by Hayashi in \cite{hayashi2009universal}. We first state the following lemma from \cite{bjelakovic2009classical} which is a general achievability result for cq-channels.
\begin{lemma} [{\cite[Theorem 5.4]{bjelakovic2009classical}}]
\label{lm:one-shot-universal-cq-reliability}

Let $x\mapsto \rho_x^B$ be any cq-channel with input set $\calX$, and $M$ be a positive integer. For all $x$, let $\Gamma_x$ be an operator on $\calH^B$ with $0\leq \Gamma_x \leq I$, and $P_X$ be a probability distribution over $\calX$. If $F:\intseq{1}{M} \to \calX$ is a random encoder whose codewords are {iid} according to $P_X$,  there exists a ``universal" decoder corresponding to a POVM $\{\Lambda_w\}_{w=1}^M$ depending on the operators $\Gamma_x$ and the encoder $F$ (not on the channel) such that the average probability of error satisfies
\begin{multline}
\label{eq:cq-one-shot-rel}
\E[F]{\sum_{w=1}^M \pr{1-\tr{\rho_{F(w)}^B\Lambda_w}}} \\
\leq 2\sum_x P_X(x) \tr{\rho_x^B\Gamma_x } + 4 M \sum_{x}  P_X(x) \tr{\rho^B \Gamma_{x}},
\end{multline}
where $\rho^B \eqdef \sum_x P_X(x) \rho_x^B$.
\end{lemma}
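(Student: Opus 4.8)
The plan is to run the by-now-standard random-coding argument for classical--quantum channels built on the Hayashi--Nagaoka operator inequality together with a ``pretty good'' (square-root) decoding measurement. The only point that needs care for the qualifier \emph{universal} is the observation that the decoder is assembled purely from the operators $\{\Gamma_x\}_{x\in\calX}$ and the (random) encoder $F$, and never refers to the channel states $\rho_x^B$; the channel enters only in the error estimate. One could of course simply invoke \cite[Theorem~5.4]{bjelakovic2009classical}, but the self-contained sketch is as follows.

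First I would fix a realization $f$ of $F$, put $\Sigma \eqdef \sum_{w=1}^M \Gamma_{f(w)}$, and set $\Lambda_w \eqdef \Sigma^{-1/2}\Gamma_{f(w)}\Sigma^{-1/2}$ for $w\in\intseq{1}{M}$, where $\Sigma^{-1/2}$ denotes the generalized inverse on $\text{supp}(\Sigma)$, completing the family with the ``erasure'' outcome $\Lambda_0 \eqdef I - \sum_{w=1}^M\Lambda_w \succeq 0$. This is a legitimate POVM that manifestly depends only on $\{\Gamma_x\}_x$ and $f$, which is exactly the universality claim. The possible non-invertibility of $\Sigma$ I would handle by the usual device of replacing $\Sigma$ with $\Sigma + \varepsilon I$, carrying out the estimates below, and letting $\varepsilon \downarrow 0$.

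Next, for each $w$ I would apply the Hayashi--Nagaoka inequality with $S = \Gamma_{f(w)}$ (which satisfies $0 \preceq S \preceq I$) and $T = \sum_{w'\neq w}\Gamma_{f(w')} \succeq 0$ to get $I - \Lambda_w \preceq 2(I - \Gamma_{f(w)}) + 4\sum_{w'\neq w}\Gamma_{f(w')}$. Tracing this against $\rho_{f(w)}^B$ bounds the conditional error of message $w$ by a ``missed-detection'' term $2\,\tr{\rho_{f(w)}^B(I - \Gamma_{f(w)})}$ plus an ``off-diagonal confusion'' term $4\sum_{w'\neq w}\tr{\rho_{f(w)}^B\,\Gamma_{f(w')}}$. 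Taking $\E[F]{\cdot}$ and averaging over the $M$ messages, and using that $F(1),\dots,F(M)$ are i.i.d.\ according to $P_X$: each diagonal term averages to $\sum_x P_X(x)\tr{\rho_x^B(I-\Gamma_x)}$, while for $w'\neq w$ the independence of $F(w)$ and $F(w')$ yields $\E[F]{\tr{\rho_{F(w)}^B\Gamma_{F(w')}}} = \sum_x P_X(x)\tr{\rho^B\Gamma_x}$ with $\rho^B = \sum_x P_X(x)\rho_x^B$; bounding the $M-1$ cross terms by $M$ gives \eqref{eq:cq-one-shot-rel}.

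The hard part is really confined to the construction of $\{\Lambda_w\}$: one must verify that the square-root measurement with the generalized inverse still obeys Hayashi--Nagaoka with the universal constants $2$ and $4$ (that is, that the $\varepsilon$-regularization genuinely passes to the limit) and that the erasure outcome $\Lambda_0$ absorbs the residual ``leakage'' without disturbing the two-term bound. Everything after that is linearity of the trace and of expectation together with the i.i.d.\ product structure of the codewords; this is precisely what \cite[Theorem~5.4]{bjelakovic2009classical} packages, so in the paper I would state the lemma as a direct quotation of that result.
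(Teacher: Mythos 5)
Your proposal is correct, and since the paper imports this lemma verbatim from \cite[Theorem 5.4]{bjelakovic2009classical} without reproving it, the citation route you end on is exactly what the paper does. The square-root-measurement/Hayashi--Nagaoka sketch you add is the standard proof of that theorem and is sound, including the key observation that universality is automatic because $\{\Lambda_w\}$ is assembled only from $\{\Gamma_x\}$ and $F$, never from the states $\rho_x^B$.

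One point worth flagging: your derivation produces the missed-detection term $2\sum_x P_X(x)\tr{\rho_x^B\pr{I-\Gamma_x}}$ for the \emph{average} error, whereas the displayed inequality \eqref{eq:cq-one-shot-rel} has $2\sum_x P_X(x)\tr{\rho_x^B\Gamma_x}$ and a sum (not an average) over $w$ on the left. Your version is the one that actually follows from Hayashi--Nagaoka for the square-root measurement built from the $\Gamma_x$, and it is also the form the paper needs downstream: in the proof of Lemma~\ref{lm:universal-reliability} the first term is controlled via Hayashi's exponent estimate for the \emph{atypical} event $\{\sigma_{\mathbf{x}}-\gamma\sigma_{U,T}\prec 0\}$, i.e.\ for $I-\Gamma_{\mathbf{x}}$, while the second term uses $\Gamma_{\mathbf{x}}$ itself. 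So the discrepancy is a transcription slip in the lemma as quoted rather than a gap in your argument; just make sure the statement you quote and the bound you actually derive agree before asserting that the computation ``gives \eqref{eq:cq-one-shot-rel}.''
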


We next consider a stationary memoryless cq-channel  $x\mapsto \rho_x^B$ with $T$ channel uses and for each codeword $\mathbf{x}\in\calX^T$, we aim to construct the operator $\Gamma_{\mathbf{x}}$ \emph{independent} of the channel such that we would be able to upper-bound the right hand side of \eqref{eq:cq-one-shot-rel}. We shall follow the approach in \cite{hayashi2009universal}, which is based on the following result from representation theory.
\begin{theorem}[Schur-Weyl Duality]
Let $H$ be a $d$-dimensional Hilbert space over $\mathbb{C}$. For any $T\geq 1$, we have the decomposition
\begin{align}
H^{\proddist T} = \mathop{\oplus}_{\mathbf{t} \in Y_T^d} \calU_{\mathbf{t}} \otimes \calV_{\mathbf{t}},
\end{align}
where $Y_T^d \eqdef \{(t_1, \cdots, t_d)\in \mathbb{Z}^d: t_1 \geq \cdots \geq t_d \geq 0, \sum_{i=1}^d t_i = T\}$, $\calU_{\mathbf{t}}$ is an irreducible representation of $\text{SU}(d)$,  and $\calV_{\mathbf{t}}$ is an irreducible representation of the $T^{th}$ order symmetric group.
\end{theorem}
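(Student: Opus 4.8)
The plan is to prove this by the \emph{double commutant} (Schur--Weyl) argument, using the two natural commuting actions on $H^{\proddist T}$. First I would set up the permutation action of the symmetric group $S_T$ on the $T$ tensor factors and the diagonal action $g\mapsto g^{\proddist T}$ of $\text{SU}(d)$, and observe that the two actions commute. Write $\calA$ for the image of the group algebra $\mathbb{C}[S_T]$ in $\text{End}(H^{\proddist T})$ and $\calB$ for the complex linear span of $\{g^{\proddist T}:g\in\text{SU}(d)\}$; both are finite-dimensional semisimple algebras --- $\calA$ by Maschke's theorem, and $\calB$ because it is a unital $*$-closed subalgebra of the endomorphism algebra of a finite-dimensional Hilbert space (equivalently, because representations of the compact group $\text{SU}(d)$ are completely reducible).

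The crux is to show that $\calA$ and $\calB$ are mutual commutants inside $\text{End}(H^{\proddist T})$. That $\calB\subseteq\calA'$ is immediate from the commuting property. For the reverse inclusion $\calA'\subseteq\calB$ I would use the identification $\text{End}(H^{\proddist T})\cong\text{End}(H)^{\proddist T}$, under which an operator commuting with every permutation is exactly an element of the symmetric subspace $\text{Sym}^T(\text{End}(H))$. A polarization argument then shows this symmetric subspace is spanned by the diagonal tensors $\{A^{\proddist T}:A\in\text{End}(H)\}$: collecting the coefficient of each monomial in $t\mapsto(t_1A_1+\cdots+t_mA_m)^{\proddist T}$ yields every symmetrization of an elementary tensor. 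Since $A\mapsto A^{\proddist T}$ is polynomial, $\text{GL}(d,\mathbb{C})$ is Zariski-dense in $\text{End}(H)$, scalar rescaling reduces $\text{GL}(d,\mathbb{C})$ to $\text{SL}(d,\mathbb{C})$, and $\text{SU}(d)$ is Zariski-dense in $\text{SL}(d,\mathbb{C})$, the span of $\{g^{\proddist T}:g\in\text{SU}(d)\}$ already equals $\text{Sym}^T(\text{End}(H))$. Hence $\calA'=\calB$, and taking commutants once more, $\calB'=\calA$.

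With the double commutant relation in hand, the structure theory of semisimple algebras gives a decomposition $H^{\proddist T}=\bigoplus_{\mathbf{t}}\calU_{\mathbf{t}}\otimes\calV_{\mathbf{t}}$ in which $\mathbf{t}$ runs over the distinct irreducible $\calB$-modules that actually occur, $\calU_{\mathbf{t}}$ is the corresponding irreducible $\text{SU}(d)$-representation and $\calV_{\mathbf{t}}$ the irreducible $S_T$-representation paired with it, each occurring with multiplicity equal to the dimension of the other. It remains to pin down the index set: the irreducible representations of $S_T$ are labelled by partitions of $T$, and one checks --- via the Weyl dimension formula, which is nonzero exactly in this range, or via the Robinson--Schensted--Knuth correspondence applied to words of length $T$ over a $d$-letter alphabet --- that the partitions occurring are precisely $\mathbf{t}=(t_1\geq\cdots\geq t_d\geq 0)$ with $\sum_i t_i=T$, i.e.\ Young diagrams with at most $d$ rows, and that $\calU_{\mathbf{t}}$ is the $\text{SU}(d)$-irrep of highest weight $\mathbf{t}$.

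I expect the main obstacle to be the polarization-and-density step that identifies the commutant $\calA'$ with $\calB$, i.e.\ showing that every permutation-invariant endomorphism of $H^{\proddist T}$ is a linear combination of tensor powers $g^{\proddist T}$; once that is established, semisimplicity, the abstract double-commutant decomposition, and the combinatorial determination of which Young diagrams appear are all standard.
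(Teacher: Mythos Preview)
The paper does not prove this theorem at all: it is quoted as a classical result from representation theory (attributed via \cite{hayashi2009universal}) and used as a black box to build the universal decoder. Your double-commutant argument is the standard and correct proof of Schur--Weyl duality, so there is nothing to compare against; your sketch is sound, with the polarization/Zariski-density step correctly identified as the only nontrivial point.
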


In \cite{hayashi2009universal}, for all $\mathbf{t}\in Y_T^d$ and all $T$, the author introduced several quantum states that satisfy universal matrix inequalities for all density matrices and  all cq-channels. Since those quantum states are a substantial ingredient of the construction of our universal decoder, we state here their definition and properties from \cite{hayashi2009universal}.
\begin{definition}
For $\mathbf{t} \in Y_Y^d$, let $I_{\mathbf{t}}$ be the projection onto the subspace $\calU_{\mathbf{t}} \otimes \calV_{\mathbf{t}}$. Define
\begin{align}
\sigma_\mathbf{t} &\eqdef \frac{1}{\textnormal{dim} (\calU_{\mathbf{t}} \otimes \calV_{\mathbf{t}})} I_{\mathbf{t}}\\
\sigma_{U, T} &\eqdef \sum_{\mathbf{t} \in Y_T^d} \frac{1}{|Y_T^d|} \sigma_{\mathbf{t}}. 
\end{align}
Moreover, for $\mathbf{x}' = (0, \cdots, 0, 1, \cdots, 1) \in \calX^T$ with $\wt{\mathbf{x}'} = m$, we define $\sigma_{\mathbf{x}'} \eqdef \sigma_{U, T-m} \otimes \sigma_{U, m}$. For any $\mathbf{x} \in \calX^T$ with $\wt{\mathbf{x}} = m$, we suppose $\mathbf{x} = \pi \mathbf{x}'$ where $\pi$ is a permutation of $T$ elements and define $\sigma_{\mathbf{x}} \eqdef U_{\pi} \sigma_{\mathbf{x}'} U_{\pi}^\dagger$ where $U_{\pi}$ is the unitary representation of $\pi$. 
\end{definition}
\begin{lemma}
\label{lm:universal-dec-prop}
For any density matrix $\rho$ on $\calH$ and any cq-channel $x\mapsto \rho_x^B$, we have
\begin{align}
T^{\frac{d(d-1)}{2}} |Y_T^d| \sigma_{U, T} &\succeq \rho^{\proddist T},\\
T^{|\calX|\frac{d(d-1)}{2}} |Y_T^d| \sigma_\mathbf{x} \succeq \rho_{\mathbf{x}}^{\mathbf{B}}.
\end{align}
\end{lemma}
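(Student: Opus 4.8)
The plan is to prove both inequalities by the purely representation-theoretic argument of \cite{hayashi2009universal}, whose only ingredients are the Schur--Weyl decomposition stated above together with Schur's lemma, and the elementary fact that any positive semidefinite operator $A$ obeys $A \preceq \tr{A}\, I$ (its eigenvalues are nonnegative and sum to $\tr{A}$).

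For the first inequality I would start from the observation that $\rho^{\proddist T}$, being a tensor power, commutes with $U_{\pi}$ for every permutation $\pi$ of the $T$ tensor factors, hence lies in the commutant of the $S_T$-action, which by Schur--Weyl equals $\bigoplus_{\mathbf{t} \in Y_T^d} \calL(\calU_{\mathbf{t}}) \otimes I_{\calV_{\mathbf{t}}}$. Thus $\rho^{\proddist T} = \bigoplus_{\mathbf{t}} A_{\mathbf{t}} \otimes I_{\calV_{\mathbf{t}}}$ with each $A_{\mathbf{t}} \succeq 0$ on $\calU_{\mathbf{t}}$; applying $A_{\mathbf{t}} \preceq \tr{A_{\mathbf{t}}}\, I_{\calU_{\mathbf{t}}}$ and noting $\tr{A_{\mathbf{t}}}\dim\calV_{\mathbf{t}} = \tr{I_{\mathbf{t}}\rho^{\proddist T}} \leq 1$ gives $A_{\mathbf{t}} \otimes I_{\calV_{\mathbf{t}}} \preceq \tfrac{1}{\dim\calV_{\mathbf{t}}} I_{\mathbf{t}} = \dim(\calU_{\mathbf{t}})\, \sigma_{\mathbf{t}}$. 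Summing over the mutually orthogonal blocks yields $\rho^{\proddist T} \preceq \sum_{\mathbf{t}} \dim(\calU_{\mathbf{t}})\sigma_{\mathbf{t}} \preceq \pr{\max_{\mathbf{t} \in Y_T^d} \dim\calU_{\mathbf{t}}}\, \abs{Y_T^d}\, \sigma_{U, T}$, and it remains to bound $\dim\calU_{\mathbf{t}}$: the Weyl dimension formula $\dim\calU_{\mathbf{t}} = \prod_{1 \leq i < j \leq d}\frac{t_i - t_j + j - i}{j - i}$ exhibits it as a polynomial in the parts of $\mathbf{t}$ of degree $\tfrac{d(d-1)}{2}$ (the number of positive roots), so it is at most $T^{d(d-1)/2}$ on $Y_T^d$ up to lower-order terms that are irrelevant here.

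For the second inequality I would first note that conjugation by $U_{\pi}$ preserves $\preceq$ and, by the definition of $\sigma_{\mathbf{x}}$ and $\rho^{\mathbf{B}}_{\mathbf{x}}$, maps the inequality for $\mathbf{x}'$ to the one for any $\mathbf{x}$ of the same weight; so it suffices to take $\mathbf{x} = \mathbf{x}' = (0,\dots,0,1,\dots,1)$ with $\wt{\mathbf{x}'} = m$, where $\rho^{\mathbf{B}}_{\mathbf{x}'} = (\rho_0^B)^{\proddist(T-m)} \otimes (\rho_1^B)^{\proddist m}$. Applying the block-diagonalization above to each tensor factor separately --- each commutes with the symmetric group acting on its own set of systems --- gives $(\rho_0^B)^{\proddist(T-m)} \preceq (T-m)^{d(d-1)/2} \sum_{\mathbf{s} \in Y_{T-m}^d}\sigma_{\mathbf{s}}$ and the analogue for $(\rho_1^B)^{\proddist m}$. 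Taking the tensor product, using $\pr{\sum_{\mathbf{s} \in Y_{T-m}^d}\sigma_{\mathbf{s}}}\otimes\pr{\sum_{\mathbf{u} \in Y_m^d}\sigma_{\mathbf{u}}} = \abs{Y_{T-m}^d}\abs{Y_m^d}\,\sigma_{\mathbf{x}'}$ and the elementary estimates $\pr{(T-m)m}^{d(d-1)/2} \leq T^{\abs{\calX}d(d-1)/2}$ and $\abs{Y_{T-m}^d}, \abs{Y_m^d} \leq \abs{Y_T^d}$ (the number of partitions into at most $d$ parts is nondecreasing), one collects the polynomial prefactors and arrives at a bound of the claimed form, with the combinatorial factor entering as $\abs{Y_T^d}^{\abs{\calX}}$, one per homogeneous block.

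I expect the conceptual part --- the Schur--Weyl block-diagonalization plus the PSD inequality $A \preceq \tr{A}I$ --- to be short and robust. The only genuinely fiddly step, and the one needing care, is the polynomial bookkeeping: extracting a clean degree-$\tfrac{d(d-1)}{2}$ bound on $\dim\calU_{\mathbf{t}}$ from the Weyl formula, and tracking the $\abs{Y_T^d}$-type combinatorial factors across the $\abs{\calX}$ homogeneous tensor blocks. Since every downstream use of this lemma needs only \emph{some} polynomial-in-$T$ prefactor (it is always dominated by an exponential decay elsewhere in the reliability bound), the precise exponents and constants are immaterial and I would not try to optimize them.
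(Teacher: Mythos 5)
Your proof is correct and is essentially the argument behind the result the paper simply cites (Eqs.\ (6)--(7) of Hayashi's universal coding paper): Schur--Weyl block-diagonalization of the permutation-invariant operator, the bound $A \preceq \tr{A}I$ on each $\calU_{\mathbf{t}}$-block, and the Weyl dimension formula for the polynomial prefactor. The one discrepancy --- your bookkeeping yields a prefactor $|Y_T^d|^{|\calX|}$ (and $(T+1)^{d(d-1)/2}$ rather than $T^{d(d-1)/2}$) instead of the single $|Y_T^d|$ in the statement --- is real, since $|Y_{T-m}^d||Y_m^d|$ can exceed $|Y_T^d|$, but you flag it explicitly and it is harmless: every downstream use (e.g.\ the $T^{d(d-1)}|Y_T^d|$ bound in the reliability proof, and the $|Y_T^d|^{2s}$ factor quoted from Hayashi's Eq.\ (18)) tolerates any fixed polynomial in $T$.
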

\begin{proof}
See \cite[Equation (6) and (7)]{hayashi2009universal}.
\end{proof}

\begin{lemma}
\label{lm:universal-reliability}
Fix $\zeta$ and $\widetilde{\lambda}$ in $]0, 1[$. Let $x\mapsto \rho_x^{B}(\theta)$ be a compound cq-channel with $\theta\in\Theta$ and $x \in \calX = \{0, 1\}$ such that  $\lambda_{\min}(\rho_0^B) \geq \widetilde{\lambda}$ for all $\theta\in \Theta$.  For a fixed $T$, let 
\begin{align}
\log M \eqdef \lfloor (1-\zeta) \alpha_T  \inf_{\theta \in \Theta} \D{\rho_1^B(\theta)}{ \rho_0^B(\theta)} T\rfloor,
\end{align}
and $F:\intseq{1}{M} \to \calX^T$ be a random encoder such that $F(1), \cdots, F(M)$ are {iid} according to $P_X^{\proddist T}$ with $P_X=$ Bernoulli($\alpha_T$) and $\alpha_T$ as in Definition~\ref{def:q-covert-process}.

Then, there exists $T_0$ that depends only on $\dim \calH$, $\zeta$, and $\widetilde{\lambda}$ such that for all $T\geq T_0$,
\begin{align}
\P[F]{\forall \theta\in\Theta, P_e(\theta) \leq 2T^{-5}} \geq  \frac{2}{3}.
\end{align}
\end{lemma}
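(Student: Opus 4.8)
The plan is to replace the (possibly uncountable) compound channel $\{x\mapsto\rho_x^B(\theta)\}_{\theta\in\Theta}$ by a finite one via Lemma~\ref{lm:cont-disc-compound-cq}, to build a single channel-independent decoder so that the error can be bounded simultaneously over this finite family, and then to transfer the guarantee back to all of $\Theta$ by a trace-distance estimate. Set $d=\dim\calH$; since $|\calX|=2$, Lemma~\ref{lm:cont-disc-compound-cq} gives a finite index set $\widetilde{\Theta}$ with $|\widetilde{\Theta}|\leq K^2 T^{12 d^2}$ such that every $\theta$ admits a $\widetilde{\theta}\in\widetilde{\Theta}$ with $\|\rho_{\mathbf{x}}^{\mathbf{B}}(\theta)-\rho_{\mathbf{x}}^{\mathbf{B}}(\widetilde{\theta})\|_1\leq T^{-5}$ for every $\mathbf{x}\in\calX^T$. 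Taking $\mathbf{x}$ to be the single-$1$ pattern (resp.\ the all-zero pattern) and tracing out the unused systems also yields $\|\rho_1^B(\theta)-\rho_1^B(\widetilde{\theta})\|_1\leq T^{-5}$ and its analogue for the $0$ state, so by continuity of the relative entropy on $\{\sigma:\lambda_{\min}(\sigma)\geq\widetilde{\lambda}/2\}$ the single-letter divergences — and hence $\log M$ — change only by a quantity tending to $0$; in particular $\log M\leq(1-\frac{\zeta}{2})\,\alpha_T\,\D{\rho_1^B(\widetilde{\theta})}{\rho_0^B(\widetilde{\theta})}\,T$ for every $\widetilde{\theta}\in\widetilde{\Theta}$ once $T$ is large. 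Since the decoder constructed below does not depend on the channel, the induced POVM $\{\Lambda_w\}$ is the same for $\theta$ and its approximant $\widetilde{\theta}$, and as $0\preceq\Lambda_w\preceq I$ this gives $P_e(\theta)\leq P_e(\widetilde{\theta})+T^{-5}$ for the matching $\widetilde{\theta}$.

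For the finite family I would invoke Hayashi's universal decoder~\cite{hayashi2009universal}. For each $\mathbf{x}\in\calX^T$ I define the acceptance operator $\Gamma_{\mathbf{x}}$ purely from the universal states $\sigma_{\mathbf{x}}$ and $\sigma_{U,T}$ of Lemma~\ref{lm:universal-dec-prop} — a threshold $\{\sigma_{\mathbf{x}}\succeq\beta_T\sigma_{U,T}\}$ with $\log\beta_T$ slightly above $\log M$, covariant under the permutation acting on the weight pattern of $\mathbf{x}$ — and feed $\{\Gamma_{\mathbf{x}}\}$ into the one-shot bound~\eqref{eq:cq-one-shot-rel} of Lemma~\ref{lm:one-shot-universal-cq-reliability}; the resulting POVM depends only on $\{\Gamma_{\mathbf{x}}\}$ and on the random codebook $F$, hence is genuinely channel-independent. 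Using the majorizations $\rho_{\mathbf{x}}^{\mathbf{B}}(\widetilde{\theta})\preceq T^{d(d-1)}|Y_T^d|\,\sigma_{\mathbf{x}}$ and $\rho^{\mathbf{B}}(\widetilde{\theta})\preceq T^{d(d-1)/2}|Y_T^d|\,\sigma_{U,T}$ from Lemma~\ref{lm:universal-dec-prop}, where $\rho^{\mathbf{B}}(\widetilde{\theta})=\pr{\sum_x P_X(x)\rho_x^B(\widetilde{\theta})}^{\proddist T}$, I would bound the two terms of~\eqref{eq:cq-one-shot-rel}: the first, by a Bernstein/Hoeffding concentration estimate for the information density $\log(\sigma_{\mathbf{x}}/\sigma_{U,T})$ of the sparse Bernoulli$(\alpha_T)$ input, which concentrates around $\alpha_T\,\D{\rho_1^B(\widetilde{\theta})}{\rho_0^B(\widetilde{\theta})}\,T$ with fluctuations $O(\sqrt{\alpha_T T})$ and — crucially — uniformly in $\widetilde{\theta}$ since $\sigma_{\mathbf{x}},\sigma_{U,T}$ dominate the true states; and the second, $4M\sum_{\mathbf{x}}P_X^{\proddist T}(\mathbf{x})\tr{\rho^{\mathbf{B}}(\widetilde{\theta})\Gamma_{\mathbf{x}}}$, via $\tr{\rho^{\mathbf{B}}(\widetilde{\theta})\Gamma_{\mathbf{x}}}\leq T^{d(d-1)/2}|Y_T^d|\,\beta_T^{-1}$. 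This yields $\E[F]{P_e(\widetilde{\theta})}\leq\eta_T$ with $\eta_T$ super-polynomially small in $T$, uniformly in $\widetilde{\theta}\in\widetilde{\Theta}$. Markov's inequality then gives $\P[F]{P_e(\widetilde{\theta})>T^{-5}}\leq\eta_T T^5$, a union bound over $\widetilde{\Theta}$ gives $\P[F]{\exists\,\widetilde{\theta}:P_e(\widetilde{\theta})>T^{-5}}\leq K^2 T^{12 d^2}\eta_T T^5\leq\frac13$ for $T\geq T_0$, and on the complementary event $P_e(\theta)\leq P_e(\widetilde{\theta})+T^{-5}\leq 2T^{-5}$ for all $\theta\in\Theta$, which is the claim (with a comfortable margin $2/3$).

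The crux — and the main obstacle — is showing that these overheads are genuinely negligible, which is exactly where the covert scaling of $\alpha_T$ is indispensable. Every penalty incurred above is only sub-exponential in $T$: the Schur--Weyl multiplicity $|Y_T^d|\leq(T+1)^d$, the type prefactors $T^{O(d^2)}$, the discretization count $|\widetilde{\Theta}|\leq K^2 T^{12 d^2}$, and the gap $\log\beta_T-\log M=O(\log T)$. The only budget against which these are paid is the slack $\alpha_T\,\D{\rho_1^B(\widetilde{\theta})}{\rho_0^B(\widetilde{\theta})}\,T-\log\beta_T\geq\frac{\zeta}{2}\,\alpha_T\,\D{\rho_1^B(\widetilde{\theta})}{\rho_0^B(\widetilde{\theta})}\,T-O(\log T)$ available in the concentration estimate, and this slack is $\omega(\log T)$ because $\alpha_T\in\omega\pr{(\log T/T)^{2/3}}$ forces $\alpha_T T/\log T\to\infty$; hence $\eta_T T^{12 d^2+5}\to 0$ and the union bound closes. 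This is precisely why the compound-channel technique of~\cite{bjelakovic2009classical}, whose channel-approximation penalty is too large, cannot be used in the covert regime, while Hayashi's universal decoder — carrying only polynomial overhead — can. The threshold $T_0$ is then chosen large enough to dominate all these sub-exponential factors, which yields the stated dependence on $\dim\calH$, $\zeta$, and $\widetilde{\lambda}$.
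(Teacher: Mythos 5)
Your proposal follows essentially the same route as the paper: discretize $\Theta$ via Lemma~\ref{lm:cont-disc-compound-cq}, run Hayashi's channel-independent decoder with $\Gamma_{\mathbf{x}}$ built from $\sigma_{\mathbf{x}}$ and $\sigma_{U,T}$ through the one-shot bound of Lemma~\ref{lm:one-shot-universal-cq-reliability}, bound the confusion term by the operator domination of Lemma~\ref{lm:universal-dec-prop}, then apply Markov plus a union bound over $\widetilde{\Theta}$ and transfer back to $\Theta$ with the $T^{-5}$ trace-distance penalty. Your accounting of why the polynomial overheads ($|Y_T^d|$, $|\widetilde{\Theta}|$, type prefactors) are absorbed by the slack $\frac{\zeta}{2}\alpha_T \D{\rho_1^B}{\rho_0^B}T = \omega(\log T)$ is exactly the point the paper makes.

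The one place where your sketch is thinner than the argument actually requires is the first term, $\sum_{\mathbf{x}}P_X^{\pn}(\mathbf{x})\tr{\rho_{\mathbf{x}}^{\mathbf{B}}\Gamma_{\mathbf{x}}}$. You describe it as a Bernstein/Hoeffding concentration of the ``information density $\log(\sigma_{\mathbf{x}}/\sigma_{U,T})$,'' but this quantity is a quantum missed-detection probability and there is no classical concentration inequality to invoke directly; the domination $\rho_{\mathbf{x}}^{\mathbf{B}}\preceq T^{|\calX|d(d-1)/2}|Y_T^d|\sigma_{\mathbf{x}}$ goes in the wrong direction for upper-bounding $\tr{\rho_{\mathbf{x}}^{\mathbf{B}}\Gamma_{\mathbf{x}}}$. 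The paper instead splits the sum by codeword weight (Chernoff bounds for $\wt{\mathbf{x}}$ outside $[w_\ell,w_u]$ with $w_{\ell,u}=T\alpha_T\mp(T\alpha_T)^{2/3}$), and for the mid-weight codewords invokes Hayashi's exponent bound $\tr{\rho_{\mathbf{x}}^{\mathbf{B}}\Gamma_{\mathbf{x}}}\leq\min_{s}(T+1)^{d+sd(d-1)}|Y_T^d|^{2s}\gamma^{s}e^{-T\phi(s,\wt{\mathbf{x}}/T)}$, which it then controls via the uniform Taylor expansion $\phi(s,p)\geq sI(p)-B(ps^2+s^3)$ of Lemma~\ref{lm:exponent-reliability}. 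That expansion is the only place the hypothesis $\lambda_{\min}(\rho_0^B)\geq\widetilde{\lambda}$ is used (it makes the third derivative of $\phi$ bounded uniformly over the channel class and over $p\leq\widetilde{p}$), so a complete write-up of your proof would need to supply this step — or an equivalent uniform second-order control of the exponent — rather than cite concentration of measure; the rest of your argument then closes exactly as you describe.
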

\begin{proof}
We first consider the compound cq-channel $x\mapsto \rho_x^B(\widetilde{\theta})$ obtained by applying Lemma~\ref{lm:cont-disc-compound-cq} to the compound cq-channel $x\mapsto \rho_x^B(\theta)$. By Lemma~\ref{lm:one-shot-universal-cq-reliability}, for each $\widetilde{\theta}\in \widetilde{\Theta}$, the expectation of the probability of error with respect to random coding is upper-bounded by
\begin{align}
\label{eq:rel-compound-n}
2\sum_{\mathbf{x}}P_X^{\proddist T}(\mathbf{x}) \tr{\rho_{\mathbf{x}} \Gamma_{\mathbf{x}}} +4M \sum_{\mathbf{x}} P_X^{\proddist T}(\mathbf{x}) \tr{\pr{\rho^{B}}^{\proddist T} \Gamma_{\mathbf{x}}},
\end{align}
where $\Gamma_{\mathbf{x}} \eqdef \{\sigma_{\mathbf{x}} - \gamma \sigma_{U, T} \succeq 0\}$ \cite{hayashi2009universal}. To upper-bound the first term in \eqref{eq:rel-compound-n}, we split the summation into three parts based on the weight of the codeword $\mathbf{x}$. In particular, for two thresholds $w_\ell < T\alpha_T < w_u \leq  2T\alpha_T$, we obtain with a  Chernoff bound
\begin{align}
\sum_{\mathbf{x}: \textnormal{wt}({\mathbf{x}}) < w_\ell}P_X^{\proddist T}(\mathbf{x}) \tr{\rho_{\mathbf{x}} \Gamma_{\mathbf{x}}}
&\leq \sum_{\mathbf{x}: \textnormal{wt}({\mathbf{x}}) < w_\ell}P_X^{\proddist T}(\mathbf{x})\\
&= \P[P_X^{\proddist T}]{\wt{\mathbf{X}} \leq w_\ell}\\
&\leq e^{-\frac{1}{2} \pr{1-\frac{w_\ell}{T\alpha_T}}^2 T\alpha_T},
\end{align}
and analogously
\begin{align}
\sum_{\mathbf{x}: \textnormal{wt}({\mathbf{x}}) > w_u}P_X^{\proddist T}(\mathbf{x}) \tr{\rho_{\mathbf{x}} \Gamma_{\mathbf{x}}}
&\leq e^{-\frac{1}{3} \pr{\frac{w_u }{T\alpha_T} - 1}^2 T\alpha_T}.
\end{align}
To upper-bound the remaining terms, for $Q_X \sim$ Bernoulli($p$), let us define
\begin{multline}
\phi(s, p) \eqdef -(1-s) \\
\times \log\pr{\tr{\pr{\sum_x Q_X(x) \pr{\rho_x^B(\widetilde{\theta})}^{1-s}}^{\frac{1}{1-s}}}}.
\end{multline}
Then, by \cite[Equation (18)]{hayashi2009universal}, we have
\begin{multline}
\sum_{\mathbf{x}: w_\ell \leq \textnormal{wt}({\mathbf{x}}) \leq w_u}P_X^{\proddist T}(\mathbf{x}) \tr{\rho_{\mathbf{x}} \Gamma_{\mathbf{x}}}\\
\begin{split}
&{\leq} \sum_{\mathbf{x}: w_\ell \leq \textnormal{wt}({\mathbf{x}}) \leq w_u}P_X^{\proddist T}(\mathbf{x}) \min_{s\in[0, 1]} (T+1)^{d+sd(d-1)}\\
&\phantom{=========}\times|Y_T^d|^{2s} \gamma^s e^{-T\phi\pr{s, \frac{\textnormal{wt}({\mathbf{x}})}{T}}}\\
&\leq (T+1)^{d^2}|Y_T^d|^{2} \max_{w\in\intseq{w_\ell}{w_u}} \min_{s\in[0, 1]} \gamma^se^{-T\phi\pr{s, \frac{w}{T}}}.\label{eq:mid-weight-phi}
\end{split}
\end{multline}
We introduce a result bounding $\phi(s, p)$ for small $s$ and $p$.
\begin{lemma}
\label{lm:exponent-reliability}
For all $\widetilde{\lambda}, \widetilde{s},\widetilde{p} \in [0, 1]$, there exists a universal constant $B>0$ such that for all cq-channels $x\mapsto \rho^B_x$ with $\lambda_{\min}(\rho^B_0) \geq \widetilde{\lambda}$ and for  all $s\leq \widetilde{s}$ and $p\leq \widetilde{p} $ , we have 
\begin{align}
\phi(s, p) \geq s I(p) - B(p s^2 + s^3),
\end{align}
where $I(p) \eqdef I(Q_X, \rho_x)$ with $Q_X \sim $ Bernoulli($p$). Furthermore, for small enough $p$, we have
\begin{align}
I(p) \geq p \D{\rho_1^B}{\rho_0^B} - Bp^2.
\end{align} 
\end{lemma}
\begin{proof}
See Appendix~\ref{sec:error-exponent}.
\end{proof}
Applying Lemma~\ref{lm:exponent-reliability} to \eqref{eq:mid-weight-phi}, we obtain for all $s$ small enough,
\begin{multline}
 (T+1)^{d^2}|Y_T^d|^{2} \max_{w\in\intseq{w_\ell}{w_u}} \gamma^se^{-T\phi\pr{s, \frac{w}{T}}} \\
 \begin{split}
 &\leq  (T+1)^{d^2}|Y_T^d|^{2} \max_{w\in\intseq{w_\ell}{w_u}} \gamma^se^{-T\pr{s I\pr{\frac{w}{T}} -B\pr{\frac{w}{T} s^2 + s^3}}} \\
  &\leq  (T+1)^{d^2}|Y_T^d|^{2}\\
  &\phantom{=}\times \max_{w\in\intseq{w_\ell}{w_u}} \gamma^se^{-T\pr{\frac{w}{T}s\D{\rho_1^B}{\rho_0^B}  -B\pr{\frac{w^2}{T^2}+\frac{w}{T} s^2 + s^3}}} \\
    &\leq  (T+1)^{d^2}|Y_T^d|^{2}\\
  &\phantom{=}\times \gamma^se^{-T\pr{\frac{w_\ell}{T}s\D{\rho_1^B(\widetilde{\theta})}{\rho_0^B(\widetilde{\theta})}  -B\pr{\frac{w_u^2}{T^2}+\frac{w_u}{T} s^2 + s^3}}}.
 \end{split}
\end{multline}
To upper-bound the second term in \eqref{eq:rel-compound-n}, we use the operator inequality $A\{A\succeq 0\} \succeq 0$ for any Hermitian operator $A$. Hence, we have for all $\mathbf{x}$
\begin{align}
(\sigma_{\mathbf{x}} - \gamma \sigma_{U, T}) \Gamma_{\mathbf{x}} \succeq 0.
\end{align}
This implies that
\begin{multline}
\left(\sigma_{\mathbf{x}} - \gamma \sigma_{U, T} + \frac{\gamma}{T^{d(d-1)}|Y_T^d| } \pr{\rho^B}^{\proddist T}\right.\\
\left. -  \frac{\gamma}{T^{d(d-1)}|Y_T^d| } \pr{\rho^B}^{\proddist T}\right) \Gamma_{\mathbf{x}} \succeq 0.
\end{multline}
Thus, we have
\begin{multline}
\tr{\pr{\sigma_{\mathbf{x}} -   \frac{\gamma}{T^{d(d-1)}|Y_T^d| } \pr{\rho^B}^{\proddist T}} \Gamma_{\mathbf{x}}}\\
\begin{split}
& \geq \tr{ \pr{ \gamma \sigma_{U, T} - \frac{\gamma}{T^{d(d-1)}|Y_T^d| } \pr{\rho^B}^{\proddist T} } \Gamma_{\mathbf{x}}} \\
&\stackrel{(a)}{\geq} 0,
\end{split}
\end{multline}
where $(a)$ follows since by Lemma \ref{lm:universal-dec-prop}, $ \pr{ \gamma \sigma_{U, T} - \frac{\gamma}{T^{d(d-1)}|Y_T^d| } \pr{\rho^B}^{\proddist T} } \succeq 0$. Accordingly, we conclude that
\begin{align}
 \sum_{\mathbf{x}} P_X^{\proddist T}(\mathbf{x}) \tr{\pr{\rho^{B}}^{\proddist T} \Gamma_{\mathbf{x}}} 
 &\leq \frac{T^{d(d-1)}|Y_T^d|}{\gamma}.
\end{align}
Substituting the derived upper-bounds in \eqref{eq:rel-compound-n}, we obtain

\begin{multline}
\E[F]{P_e(\widetilde{\theta})} \leq 2\left(e^{-\frac{1}{2} \pr{1-\frac{w_\ell}{T\alpha_T}}^2 T\alpha_T}\right.\\
 \left.+ e^{-\frac{1}{3} \pr{\frac{w_u }{T\alpha_T} - 1}^2 T\alpha_T}+(T+1)^{d^2}|Y_T^d|^{2} \gamma^s\right.\\
 \left. \times e^{-T\pr{\frac{w_\ell}{T}s\D{\rho_1^B(\widetilde{\theta})}{\rho_0^B(\widetilde{\theta})}  -B\pr{\frac{w_u^2}{T^2}+\frac{w_u}{T} s^2 + s^3}}}\right) \\
 + 4M \frac{T^{d(d-1)}|Y_T^d|}{\gamma}.
\end{multline}
By choosing 
\begin{align}
w_\ell &= T\alpha_T - (T\alpha_T)^{\frac{2}{3}},\\
w_u &=  T\alpha_T + (T\alpha_T)^{\frac{2}{3}},\\
\gamma &= \left \lfloor \pr{1-\frac{\zeta}{2}} \alpha_T  \inf_{\theta \in \Theta} \D{\rho_1^B(\theta)}{ \rho_0^B(\theta)} T\right\rfloor, \\
s &= o(\sqrt{\alpha_T}) \cap \omega(\frac{\log T}{T\alpha_T}),
\end{align}
we obtain
\begin{align}
\label{eq:ub-e-pe}
\E[F]{P_e(\widetilde{\theta})}\leq 2^{-\omega(\log T)},
\end{align}
where the term $-\omega(\log T)$ depends on $\widetilde{\lambda}$, $\zeta$, and $\dim \calH$.
By Markov's inequality and the union bound, we have
\begin{align}
\P[F]{\forall \widetilde{\theta}\in\widetilde{\Theta}, P_e(\widetilde{\theta})\leq 3|\widetilde{\Theta}|\E[F]{P_e(\widetilde{\theta})}} \geq \frac{2}{3}.
\end{align}
By Lemma~\ref{lm:cont-disc-compound-cq}, $|\widetilde{\Theta}|$ is upper-bounded by a polynomial in $T$. This together with \eqref{eq:ub-e-pe} implies that  $3|\widetilde{\Theta}|\E[F]{P_e(\widetilde{\theta})}  = 2^{-\omega(\log T)}$. Finally, by Lemma~\ref{lm:cont-disc-compound-cq}, for all $\theta\in\Theta$, there exists $\widetilde{\theta}\in\widetilde{\Theta}$ such that $P_e(\theta) \leq P_e(\widetilde{\theta}) + T^{-5}$. Thus, for large enough $T$, we have
\begin{align}
\P[F]{\forall \theta\in\Theta, P_e(\theta) \leq 2T^{-5}} \geq  \frac{2}{3}.
\end{align}
\end{proof}
\subsubsection{Universal resolvability result}
We next prove an asymptotic resolvability result for covert distributions.  
\begin{lemma}
\label{lm:aym-resolvability}
Fix $\widetilde{\lambda}$ and $\zeta$ in $]0, 1[$. Consider a cq-channel $x\mapsto \rho_x^E$ with $x\in\calX = \{0, 1\}$ such that $\lambda_{\min}(\rho_0^E) \geq \widetilde{\lambda}$. Let $P_X$ be the covert distribution in Definition~\ref{def:q-covert-process}, $M'$ be an integer satisfying
\begin{align}
M' \geq \lceil (1+\zeta) \alpha_T \D{\rho_1^E}{\rho_0^E} T\rceil,
\end{align}
and $F:\intseq{1}{M'} \to \calX^T$ be a random encoder such that all codewords are distributed according to $P_X^{\proddist T}$ independently. Then, we have
\begin{align}
\E[F]{\left\|\widehat{\rho}^{\mathbf{E}}-\pr{\rho^E}^{\proddist T}\right\|_1} \leq 2^{-\omega(\log T)},
\end{align}
where the constant hidden in $\omega(\log T)$  depends only on $\zeta$, $\widetilde{\lambda}$, and $\dim \calH$, $\widehat{\rho}^{\mathbf{E}} \eqdef \frac{1}{M'}\sum_{i=1}^{M'} \rho^{\mathbf{E}}_{F(i)} $ and $\rho^E \eqdef \sum_x P_X(x)\rho_x^E$.
\end{lemma}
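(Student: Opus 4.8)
The plan is to follow the tail of the proof of Lemma~\ref{lm:auxiliary_prob} almost verbatim, applied to the memoryless cq-channel $\mathbf{x}\mapsto\rho_{\mathbf{x}}^{\mathbf{E}}=\rho_{x_1}^E\otimes\cdots\otimes\rho_{x_T}^E$ with product input $P_X^{\proddist T}$. The case $\rho_1^E=\rho_0^E$ is trivial, since then $\rho_x^E=\rho^E$ for all $x$ and $\widehat{\rho}^{\mathbf{E}}=\pr{\rho^E}^{\proddist T}$ identically; so assume $\D{\rho_1^E}{\rho_0^E}>0$. Since $\rho^E=(1-\alpha_T)\rho_0^E+\alpha_T\rho_1^E\succeq(1-\alpha_T)\widetilde{\lambda}I$ is invertible, Lemma~\ref{lm:quantum-resolve} applies and gives, for every $s\le0$ and $\gamma$, $\E[F]{\|\widehat{\rho}^{\mathbf{E}}-\pr{\rho^E}^{\proddist T}\|_1}\le 2\sqrt{2^{\gamma s+T\phi(s)}}+\sqrt{2^\gamma\nu/M'}$, where $\phi(s)\eqdef\log\pr{\sum_x P_X(x)\tr{\pr{\rho_x^E}^{1-s}\pr{\rho^E}^s}}$ (the resolvability exponent tensorizes because the channel is memoryless and the codewords i.i.d.) and $\nu\le(T+1)^{\dim\calH^E}$ by \cite[Lemma 3.7]{hayashi2006quantum}, as $\rho^E$ has at most $\dim\calH^E$ distinct eigenvalues.

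The second error term is handled immediately. By the first-order expansion used in Lemma~\ref{lm:rho_0} (see \cite[Equation~(19)]{Sheikholeslami2016}), $I(P_X,\rho_x^E)=\alpha_T\D{\rho_1^E}{\rho_0^E}+O(\alpha_T^2)$, whereas $\log M'\ge(1+\zeta)\alpha_T\D{\rho_1^E}{\rho_0^E}T$; hence $\log M'-I(P_X,\rho_x^E)T\ge\zeta\alpha_T\D{\rho_1^E}{\rho_0^E}T-O(\alpha_T^2T)$, which is $\Theta(\alpha_T T)$ and dominates the $\dim\calH^E\log(T+1)=O(\log T)$ coming from $\log\nu$, because $\alpha_T T=\omega(\log T)$ by $\alpha_T\in\omega((\log T/T)^{2/3})$. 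I would therefore pick $\gamma$ strictly between $I(P_X,\rho_x^E)T$ and $\log M'-\dim\calH^E\log(T+1)$ with both gaps $\Theta(\alpha_T T)$; then $\sqrt{2^\gamma\nu/M'}\le 2^{-\Omega(\alpha_T T)}=2^{-\omega(\log T)}$.

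For the first term I would establish, in the spirit of Lemma~\ref{lm:exponent-res}, a bound of the form: there is a constant $B>0$ depending only on $\dim\calH^E$ and $\widetilde{\lambda}$ such that for $T$ large and $s<0$ with $|s|$ small, $\phi(s)$ differs from its first-order expansion $-I(P_X,\rho_x^E)s$ by at most $B(\alpha_T s^2+|s|^3)$. This uses $\phi(0)=0$, $\phi'(0)=-I(P_X,\rho_x^E)$, the fact that $\phi''$ near $0$ is $O(\alpha_T)$ (since under $P_X=$ Bernoulli$(\alpha_T)$ the average state $\rho^E$ is $O(\alpha_T)$-close to $\rho_0^E$), and a uniform bound on $\phi'''$ obtained from $\rho^E\succeq(1-\alpha_T)\widetilde{\lambda}I$, which controls the negative power $\pr{\rho^E}^s$ and its derivatives — exactly the argument of Appendix~\ref{sec:error-exponent}. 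Plugging this estimate and the choice of $\gamma$ into $\gamma s+T\phi(s)$, the $I(P_X,\rho_x^E)$-terms cancel, leaving $\gamma s+T\phi(s)\le -c\,\alpha_T T|s|+B\,(\alpha_T T|s|^2+T|s|^3)$ for a positive constant $c$. Choosing $s=o(\sqrt{\alpha_T})\cap\omega(\log T/(\alpha_T T))$ — feasible precisely because $\alpha_T\in\omega((\log T/T)^{2/3})$ — makes the remainder $B(\alpha_T T|s|^2+T|s|^3)=o(\alpha_T T|s|)$ while $\alpha_T T|s|=\omega(\log T)$, so $2\sqrt{2^{\gamma s+T\phi(s)}}\le 2^{-\omega(\log T)}$ as well. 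Summing the two bounds proves the lemma; since $B$, the threshold on $T$, and all implied constants depend only on $\zeta$, $\widetilde{\lambda}$, and $\dim\calH^E$ (using also $\D{\rho_1^E}{\rho_0^E}\le\log(1/\widetilde{\lambda})$ to keep the threshold uniform), the estimate is uniform over the stated class.

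The routine part is the bookkeeping of the preceding paragraphs, which mirrors the end of the proof of Lemma~\ref{lm:auxiliary_prob}. The crux is the exponent estimate: one must confirm that the leading term of $\phi(s)$ is \emph{exactly} $-I(P_X,\rho_x^E)s$, so that it cancels against the matching term in $\gamma s$ and a genuinely negative exponent of size $\Theta(\alpha_T T|s|)$ survives, while keeping the higher-order remainders controlled by constants independent of the particular channel. This last uniformity — not asserted by Lemma~\ref{lm:exponent-res} — is exactly what the hypothesis $\lambda_{\min}(\rho_0^E)\ge\widetilde{\lambda}$ buys.
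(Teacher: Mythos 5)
Your proposal is correct and follows essentially the same route as the paper: the one-shot quantum resolvability bound of Lemma~\ref{lm:quantum-resolve} with $\gamma$ placed between $I(P_X,\rho_x^E)T$ and $\log M'$, the eigenvalue-count bound $\nu\le(T+1)^{\dim\calH^E}$, a uniform Taylor estimate $\phi(s)\ge -I(P_X,\rho_x^E)s-B(\alpha_T s^2-s^3)$ (this is exactly the paper's Lemma~\ref{lm:exponent-res2}, whose uniformity in the channel is indeed what $\lambda_{\min}(\rho_0^E)\ge\widetilde{\lambda}$ provides), and the choice $s\in o(\sqrt{\alpha_T})\cap\omega(\log T/(T\alpha_T))$. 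No gaps.
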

\begin{proof}
The proof is akin to the resolvability part of the proof of Lemma~\ref{lm:auxiliary_prob} specialized to the channel  from Alice to Eve. By Lemma~\ref{lm:quantum-resolve}, we have
\begin{align}
\E[F]{\|\widehat{\rho}^{\mathbf{E}} -  \pr{{\rho}^{E}}^{\proddist T} \|_1}&\leq \sqrt{2^{\gamma s + T \phi(s, \alpha_T)}} + \sqrt{\frac{2^\gamma \nu}{M'}},
\end{align}
where $\nu$ is the number of distinct eigenvalues of $\pr{\rho^E}^{\proddist T}$, and for $Q_X\sim$ Bernoulli($p$), we define  \begin{align}
\phi(s, p) \eqdef \log \pr{\sum_x P_X(x) \tr{\pr{\rho_x^E}^{1-s} \pr{\rho^E}^s }}.
\end{align}
For $\gamma = \alpha_T \D{\rho_1^E}{\rho_0^E} T + \frac{\zeta}{2}\alpha_TT$, we have
\begin{multline}
\sqrt{2^{\gamma s + T \phi(s, \alpha_T)}} + \sqrt{\frac{2^\gamma \nu}{M'}}\\
\begin{split} 
&\leq \sqrt{2^{s\alpha_T T\pr{\D{\rho_1^E}{\rho_0^E} + \frac{\zeta}{2} + \frac{\phi(s, \alpha_T)}{s\alpha_T}}}} + \sqrt{2^{-\frac{\zeta}{2} \alpha_T T} \nu}\\
&\stackrel{(a)}{\leq} \sqrt{2^{s\alpha_T T\pr{\D{\rho_1^E}{\rho_0^E} + \frac{\zeta}{2} + \frac{\phi(s, \alpha_T)}{s\alpha_T}}}}\\
&\phantom{=} + \sqrt{2^{-\frac{\zeta}{2} \alpha_T T} (T+1)^{\dim \calH^E}}\\
&\leq \sqrt{2^{s\alpha_T T\pr{\D{\rho_1^E}{\rho_0^E} + \frac{\zeta}{2} + \frac{\phi(s, \alpha_T)}{s\alpha_T}}}} + \frac{1}{2}2^{-\xi \alpha_T  T},\label{eq:res-bound2}
\end{split}
\end{multline}
where $(a)$ follows from \cite[Lemma 3.7]{hayashi2006quantum} and $\xi$ is small positive number. The following lemma is the counterpart of Lemma~\ref{lm:exponent-res} for the channel from Alice to Eve.
\begin{lemma}
\label{lm:exponent-res2}
Fix $\widetilde{s} < 0$, $\widetilde{p}\in[0, 1]$, and $\widetilde{\lambda} \in [0, 1]$. There exists a universal constant $B>0$ such that for all cq-channels $x\mapsto \rho_x^E$, $p\in[0, \widetilde{p}]$, and $s\in [\widetilde{s}, 0]$, we have
\begin{align}
\phi(s, p) > -I(p)s - B(p s^2 - s^3),
\end{align}
where $I(p)\eqdef I(P_X, \rho_x^E)$.
\end{lemma}
\begin{proof}
See Appendix~\ref{sec:error-exponent}.
\end{proof}
Applying Lemma~\ref{lm:exponent-res2} to \eqref{eq:res-bound2}, we obtain
\begin{multline}
\sqrt{2^{s\alpha_T T\pr{\D{\rho_1^E}{\rho_0^E} + \frac{\zeta}{2} + \frac{\phi(s, \alpha_T)}{s\alpha_T}}}} \\
\begin{split}
&\leq \sqrt{2^{s\alpha_T T\pr{\D{\rho_1^E}{\rho_0^E} + \frac{\zeta}{2} + \frac{-\alpha_T \D{\rho_1^E}{\rho_0^E}s - B(\alpha_T^2+ \alpha_T s^2 - s^3)}{s\alpha_T}}}} \\
&= \sqrt{2^{s\alpha_T T\pr{ \frac{\zeta}{2} + \frac{ B(\alpha_T^2+ \alpha_T s - s^2)}{\alpha_T}}}} 
\end{split}
\end{multline}
By choosing $s = o(\sqrt{\alpha_T}) \cap \omega(\frac{\log T}{T\alpha_T})$\footnote{To find such $s$, it is required that $\sqrt{\alpha_T} =  \omega(\frac{\log T}{T\alpha_T})$ or equivalently $\alpha_T = \omega\pr{\pr{\frac{\log T}{T}}^{\frac{2}{3}}}$}, the above expression goes to zero faster than any polynomial.
\end{proof}
\begin{lemma}
\label{lm:universal-resolvability}
Fix $\zeta$ and $\widetilde{\lambda}$ in $]0, 1[$. Let $x\mapsto \rho_x^E(\theta)$ be a compound cq-channel with $x\in \calX = \{0, 1\}$ and $\theta\in\Theta$ such that for all $\theta\in\Theta$, $\lambda_{\min}(\rho_0^E) \geq \widetilde{\lambda}$.  Let $P_X$ be as in Lemma~\ref{lm:aym-resolvability}. Let $M'$ be an integer satisfying 
\begin{align}
M' \geq \lceil (1+\zeta) \alpha_T\sup_{\theta\in\Theta}\D{\rho_1^E(\theta)}{\rho_0^E(\theta)} T\rceil,
\end{align}
and $F:\intseq{1}{M}\times \intseq{1}{M'} \to \calX^T$ be a random encoder such that all codewords are independently distributed according to $P_X^{\proddist T}$. Then, there exists $T_0$ depending only on $\dim \calH$, $\zeta$, and $\widetilde{\lambda}$ such that for all $T\geq T_0$, we have
\begin{align}
\P[F]{\forall \theta\in\Theta, \frac{1}{M} \sum_{w=1}^M \left\|\widehat{\rho}^{\mathbf{E}}_w-\pr{\rho^E(\theta)}^{\proddist T}\right\|_1 \leq 2T^{-5}} \geq \frac{2}{3}
\end{align}
where $\widehat{\rho}^{\mathbf{E}}_w \eqdef \frac{1}{M'}\sum_{i=1}^{M'} \rho^{\mathbf{E}}_{F(w,i)} $ and $\rho^E(\theta) \eqdef \sum_x P_X(x)\rho_x^E(\theta)$.
\end{lemma}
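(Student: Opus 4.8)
\emph{Proof strategy.} The plan is to follow the blueprint of the universal reliability proof (Lemma~\ref{lm:universal-reliability}), but the resolvability setting is actually easier: there is no decoder, so it suffices to exhibit one random codebook $F$ that simultaneously scrambles Eve's state for \emph{every} channel of the compound family, and this I would get by discretizing the family, applying the single-channel statement (Lemma~\ref{lm:aym-resolvability}) to each element, and taking a union bound. Throughout I write $\widehat{\rho}^{\mathbf{E}}_w(\theta)\eqdef\frac{1}{M'}\sum_{i=1}^{M'}\rho^{\mathbf{E}}_{F(w,i)}$ when the channel in force is $x\mapsto\rho_x^E(\theta)$.

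First I would apply Lemma~\ref{lm:cont-disc-compound-cq} (with a partition taken twice as fine, which only changes the constant $K$) to the compound cq-channel $x\mapsto\rho_x^E(\theta)$, producing a finite family $x\mapsto\rho_x^E(\widetilde{\theta})$, $\widetilde{\theta}\in\widetilde{\Theta}$, with $|\widetilde{\Theta}|$ polynomial in $T$, with $\|\rho_{\mathbf{x}}^{\mathbf{E}}(\theta)-\rho_{\mathbf{x}}^{\mathbf{E}}(\widetilde{\theta})\|_1\leq\tfrac12 T^{-5}$ for all $\mathbf{x}\in\calX^T$, and, from the underlying partition of per-symbol diameter at most $\tfrac12 T^{-6}$, with $\|\rho_x^E(\theta)-\rho_x^E(\widetilde{\theta})\|_1\leq\tfrac12 T^{-6}$ for $x\in\{0,1\}$; in particular $\lambda_{\min}(\rho_0^E(\widetilde{\theta}))\geq\widetilde{\lambda}/2$ for every $\widetilde{\theta}$ once $T$ is large. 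Next I would check that $M'$ stays admissible for each discretized channel: since the quantum relative entropy is uniformly continuous on $\{\rho_0:\lambda_{\min}(\rho_0)\geq\widetilde{\lambda}/2\}$, with a modulus $O(T^{-6}\log T)=o(\alpha_T)$ depending only on $\widetilde{\lambda}$ and $\dim\calH$, one has $\D{\rho_1^E(\widetilde{\theta})}{\rho_0^E(\widetilde{\theta})}\leq\sup_{\theta\in\Theta}\D{\rho_1^E(\theta)}{\rho_0^E(\theta)}+o(\alpha_T)$, so the hypothesis $M'\geq\lceil(1+\zeta)\alpha_T\sup_\theta\D{\rho_1^E(\theta)}{\rho_0^E(\theta)}T\rceil$ entails, for $T$ large, the hypothesis of Lemma~\ref{lm:aym-resolvability} applied to $x\mapsto\rho_x^E(\widetilde{\theta})$ with $\zeta$ and $\widetilde{\lambda}$ replaced by $\zeta/2$ and $\widetilde{\lambda}/2$. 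Applying Lemma~\ref{lm:aym-resolvability} to each sub-codebook $\{F(w,i)\}_{i=1}^{M'}$ (iid $\sim P_X^{\proddist T}$) and each $\widetilde{\theta}$, then averaging over $w$, yields
\begin{align}
\E[F]{\frac{1}{M}\sum_{w=1}^M\left\|\widehat{\rho}^{\mathbf{E}}_w(\widetilde{\theta})-\pr{\rho^E(\widetilde{\theta})}^{\proddist T}\right\|_1}\leq 2^{-\omega(\log T)}
\end{align}
for every $\widetilde{\theta}\in\widetilde{\Theta}$, the $\omega(\log T)$ depending only on $\zeta$, $\widetilde{\lambda}$, $\dim\calH$.

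I would then close the argument with Markov's inequality and a union bound over the polynomially many $\widetilde{\theta}$: with probability at least $2/3$ over $F$, every $\widetilde{\theta}\in\widetilde{\Theta}$ obeys $\frac1M\sum_w\|\widehat{\rho}^{\mathbf{E}}_w(\widetilde{\theta})-(\rho^E(\widetilde{\theta}))^{\proddist T}\|_1\leq 3|\widetilde{\Theta}|2^{-\omega(\log T)}=2^{-\omega(\log T)}\leq T^{-5}$ for $T$ large. To pass from $\widetilde{\Theta}$ back to $\Theta$, fix $\theta\in\Theta$ and its approximant $\widetilde{\theta}$; a triangle inequality gives
\begin{multline}
\left\|\widehat{\rho}^{\mathbf{E}}_w(\theta)-\pr{\rho^E(\theta)}^{\proddist T}\right\|_1\leq\left\|\widehat{\rho}^{\mathbf{E}}_w(\widetilde{\theta})-\pr{\rho^E(\widetilde{\theta})}^{\proddist T}\right\|_1\\+\left\|\widehat{\rho}^{\mathbf{E}}_w(\theta)-\widehat{\rho}^{\mathbf{E}}_w(\widetilde{\theta})\right\|_1+\left\|\pr{\rho^E(\theta)}^{\proddist T}-\pr{\rho^E(\widetilde{\theta})}^{\proddist T}\right\|_1,
\end{multline}
where, by convexity of $\|\cdot\|_1$ and its subadditivity under tensor products together with the per-symbol bound $\|\rho_x^E(\theta)-\rho_x^E(\widetilde{\theta})\|_1\leq\tfrac12 T^{-6}$, the last two terms are each at most $T\cdot\tfrac12 T^{-6}=\tfrac12 T^{-5}$. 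Averaging over $w$ and using the good event yields $\frac1M\sum_w\|\widehat{\rho}^{\mathbf{E}}_w(\theta)-(\rho^E(\theta))^{\proddist T}\|_1\leq T^{-5}+T^{-5}=2T^{-5}$ for all $\theta\in\Theta$, which is the claim.

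The step I expect to need the most care is verifying admissibility of $M'$ after discretization: the randomization rate $\log M'$ is calibrated against $\sup_{\theta\in\Theta}\D{\rho_1^E(\theta)}{\rho_0^E(\theta)}$, and one must be sure that replacing each $\theta$ by its finite-family approximant does not push the per-channel target rate past the $\zeta$-margin that Lemma~\ref{lm:aym-resolvability} consumes. This is precisely where the standing assumption $\lambda_{\min}(\rho_0^E)\geq\widetilde{\lambda}$ is genuinely used, as it is what keeps $\log\rho_0^E$ bounded and hence the quantum relative entropy stable under the $O(T^{-6})$ perturbation of the channels; the remaining work is routine bookkeeping with trace-norm triangle inequalities, Markov's inequality, and the polynomial bound on $|\widetilde{\Theta}|$.
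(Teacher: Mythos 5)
Your proposal follows essentially the same route as the paper's proof: discretize the compound family via Lemma~\ref{lm:cont-disc-compound-cq}, apply the single-channel resolvability bound of Lemma~\ref{lm:aym-resolvability} to each element of the finite family, combine Markov's inequality with a union bound over the polynomially many representatives, and transfer back to $\Theta$ by a triangle inequality. In fact you are somewhat more careful than the paper on two points it glosses over — verifying that $M'$ and the eigenvalue condition remain admissible for the discretized channels, and accounting for both error terms (the codebook-induced states and the tensor-power reference state) when passing from $\widetilde{\Theta}$ to $\Theta$ — but these refinements do not change the argument's structure.
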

\begin{proof}
We again consider the compound cq-channel $x\mapsto \rho_x^E(\widetilde{\theta})$ from Lemma~\ref{lm:cont-disc-compound-cq}. By Lemma~\ref{lm:aym-resolvability}, for all $\widetilde{\theta}\in\widetilde{\Theta}$, we have
\begin{multline}
\label{eq:ub-e-pe}
\E[F]{\frac{1}{M} \sum_{w=1}^M \left\|\widehat{\rho}^{\mathbf{E}}_w-\pr{\rho^E(\widetilde{\theta})}^{\proddist T}\right\|_1 }\\
\begin{split}
&= \frac{1}{M} \sum_{w=1}^M\E[F]{ \left\|\widehat{\rho}^{\mathbf{E}}_w-\pr{\rho^E(\widetilde{\theta})}^{\proddist T}\right\|_1 }\\
 &\leq 2^{-\omega(\log T)}.
 \end{split}
\end{multline}
By Markov's inequality and the union bound, we have
\begin{multline}
\mathbb{P}_F\left(\forall \widetilde{\theta}\in\widetilde{\Theta}, \frac{1}{M} \sum_{w=1}^M \left\|\widehat{\rho}^{\mathbf{E}}_w-\pr{\rho^E(\widetilde{\theta})}^{\proddist T}\right\|_1 \right.\\
\left.\leq 3|\widetilde{\Theta}|\E[F]{\frac{1}{M} \sum_{w=1}^M \left\|\widehat{\rho}^{\mathbf{E}}_w-\pr{\rho^E(\widetilde{\theta})}^{\proddist T}\right\|_1}\right) \geq \frac{2}{3}.
\end{multline}
Since $|\widetilde{\Theta}|$ is upper-bounded by a polynomial in $T$, we have  
\begin{align}
 3|\widetilde{\Theta}|\E[F]{\frac{1}{M} \sum_{w=1}^M \left\|\widehat{\rho}^{\mathbf{E}}_w-\pr{\rho^E(\widetilde{\theta})}^{\proddist T}\right\|_1} = 2^{-\omega(\log T)}.
\end{align} 
Finally, by Lemma~\ref{lm:cont-disc-compound-cq}, for all $\theta\in\Theta$, there exists $\widetilde{\theta}\in\widetilde{\Theta}$ such that 
\begin{multline}
\frac{1}{M} \sum_{w=1}^M \left\|\widehat{\rho}^{\mathbf{E}}_w-\pr{\rho^E({\theta})}^{\proddist T}\right\|_1\\
 \leq \frac{1}{M} \sum_{w=1}^M \left\|\widehat{\rho}^{\mathbf{E}}_w-\pr{\rho^E(\widetilde{\theta})}^{\proddist T}\right\|_1 + T^{-5}.
\end{multline}
 Thus, for large enough $T$, we have
\begin{align}
\P[F]{\forall \theta\in\Theta, \frac{1}{M} \sum_{w=1}^M \left\|\widehat{\rho}^{\mathbf{E}}_w-\pr{\rho^E({\theta})}^{\proddist T}\right\|_1\leq 2T^{-5}} \geq  \frac{2}{3}.
\end{align}
\end{proof}
\subsubsection{Proof of Theorem~\ref{th:universal-covert}}

We are now ready to provide the proof of the main result of this section. Our code construction is  similar to \cite{Bloch2016a}, which uses wiretap coding to ensure the security of a covert message. Fix $\zeta$, $\widetilde{\lambda}^B$, $\widetilde{\lambda}^E$, $D^B$, and $D^E$, and let $\Theta$ be an arbitrary indexing of all cq-channels $x\mapsto\rho_x^{BE}$ satisfying \eqref{eq:db-cond}-\eqref{eq:lw-cond} for which the corresponding cq-channel to $\theta\in\Theta$ is $x\mapsto\rho_x^{BE}(\theta)$. Considering the sequence $\{\alpha_T\}_{T\geq 1}$ from Definition~\ref{def:q-covert-process}, for a fixed large enough $T$, let $P_X$ be Bernoulli($\alpha_T$); let $F:\intseq{1}{M}\times\intseq{1}{M'}\to\calX^T$ be a random encoder whose codewords are {iid} according to $P_X^{\proddist T}$ that encodes two messages $W$ and $W'$ uniformly distributed over $\intseq{1}{M}$ and $\intseq{1}{M'}$, respectively, to a codeword $\mathbf{X}$. By Lemma~\ref{lm:universal-reliability}, for
\begin{align}
\log M  + \log M'& =\lfloor (1-\zeta)\alpha_T \inf_{\theta\in\Theta}\D{\rho_1^{B}}{\rho_0^B}T \rfloor\\
&\geq \lfloor (1-\zeta)\alpha_T D^BT\rfloor,
\end{align}
we have
\begin{align}
\label{eq:p-rel-uni}
\P[F]{\forall \theta\in\Theta, P_e(\theta) \leq 2T^{-5}} \geq \frac{2}{3},
\end{align}
where $P_e(\theta)$ is the probability that at least one of the messages $W$ and $W'$ is not decoded correctly at the receiver when the cq-channel corresponding to index $\theta$ is used. Moreover, by Lemma~\ref{lm:universal-resolvability}, for
\begin{align}
\log M' &= \lceil (1+\zeta) \alpha_T\sup_{\theta\in\Theta}\D{\rho_1^E(\theta)}{\rho_0^E(\theta)} T\rceil\\
&\leq  \lceil (1+\zeta) \alpha_T D^E T\rceil,
\end{align}
we have
\begin{align}
\label{eq:p-res-uni}
\P[F]{\forall \theta\in\Theta, \frac{1}{M} \sum_{w=1}^M \D{\widehat{\rho}^{\mathbf{E}}_w}{\pr{\rho^E(\theta)}^{\proddist T}} \leq  2T^{-5}} \geq \frac{2}{3},
\end{align}
where $\widehat{\rho}^{\mathbf{E}}_w  $ and $\rho^E(\theta)$ are defined in the statement of Lemma~\ref{lm:universal-resolvability}. Inequalities \eqref{eq:p-rel-uni} and \eqref{eq:p-res-uni} imply that there exists a realization $f$  of $F$  such that for all $\theta\in\Theta$,
\begin{align}
P_e(\theta)& \leq 2T^{-5},\\
\frac{1}{M} \sum_{w=1}^M \left\|\widehat{\rho}^{\mathbf{E}}_w-\pr{\rho^E(\theta)}^{\proddist T}\right\|_1&\leq  2T^{-5}.
\end{align}
Hence, by Lemma~\ref{lm:div-bound-l1}, we upper-bound the quantum relative entropy between the induced quantum states and $\pr{\rho^E(\theta)}^{\proddist T}$ as
\begin{multline}
\frac{1}{M} \sum_{w=1}^M \D{\widehat{\rho}^{\mathbf{E}}_w}{\pr{\rho^E(\theta)}^{\proddist T}}\\
\begin{split}
&\leq 2T^{-5}\log \frac{d^{T}}{\pr{\lambda_{\min}(\rho^E(\theta))}^T 2T^{-5}}\\
&= 2T^{-4}\pr{\log \frac{d}{\lambda_{\min}(\rho^E(\theta))} + \frac{5\log T - \log 2 }{T}}.
\end{split}
\end{multline}
To lower-bound the minimum eigenvalue of $\rho^E(\theta)$, we use Lemma~\ref{lm:lambda-min-max} to obtain for large $T$,
\begin{align}
\lambda_{\min}(\rho^E(\theta)) 
&= \lambda_{\min}(\alpha_T\rho^E_1(\theta) + (1-\alpha_T)\rho^E_0(\theta))\\
&\geq  \lambda_{\min}((1-\alpha_T)\rho^E_0(\theta)) - \|\alpha_T\rho^E_1(\theta)\|_1 \\
&\geq  (1-\alpha_T)\lambda_{\min}(\rho^E_0(\theta)) - \alpha_T\\ 
&\geq \frac{\widetilde{\lambda}^E}{2}.
\end{align}
Therefore, for some constant $L_1 > 0$ depending on $d$ and $\widetilde{\lambda}^E$, we have
\begin{align}
\frac{1}{M} \sum_{w=1}^M \D{\widehat{\rho}^{\mathbf{E}}_w}{\pr{\rho^E(\theta)}^{\proddist T}} &\leq L_1T^{-4}.
\end{align}
To analyze the secrecy of the protocol, since there is no public communication and $W$ is the key extracted at Alice's end, the information leakage to the adversary is
\begin{align}
\D{\rho^{\mathbf{E}W}}{\rho^{\mathbf{E}} \otimes \rho^{W}_{\text{unif}}} 
&\stackrel{(a)}{=} \D{\rho^{\mathbf{E}W}}{\rho^{\mathbf{E}} \otimes \rho^{W}_{\text{unif}}}\\
&\leq \D{\rho^{\mathbf{E}W}}{\pr{\rho^E(\theta)}^{\proddist T}\otimes \rho^{W}_{\text{unif}}}\\
&= \frac{1}{M} \sum_{w=1}^M \D{\widehat{\rho}^{\mathbf{E}}_w}{\pr{\rho^E(\theta)}^{\proddist T}}\\
 &\leq L_1T^{-4},
\end{align}
where $(a)$ follows since there is no public communication. For the covertness, first note that by convexity of quantum relative entropy, we have
\begin{align}
\D{\rho^{\mathbf{E}}}{\pr{\rho^{E}(\theta)}^{\proddist T}} \leq \frac{1}{M} \sum_{w=1}^M \D{\widehat{\rho}^{\mathbf{E}}_w}{\pr{\rho^E(\theta)}^{\proddist T}}\leq L_1T^{-4}.
\end{align}
 We can subsequently bound $\D{\rho^{\mathbf{E}}}{\pr{\rho^{E}_0(\theta)}^{\proddist T}}$ as
\begin{align}
&\D{\rho^{\mathbf{E}}}{\pr{\rho^{E}_0(\theta)}^{\proddist T}}\\\displaybreak[0]
&\nonumber= \D{\rho^{\mathbf{E}}}{\pr{\rho^{E}(\theta)}^{\proddist T}} +\D{\pr{\rho^{E}(\theta)}^{\proddist T}}{\pr{\rho^{E}_0(\theta)}^{\proddist T}} \\ \displaybreak[0]
&\nonumber\phantom{=}+\textnormal{tr}\left(\pr{\rho^{\mathbf{E}}- \pr{\rho^{E}(\theta)}^{\proddist T}}\right.\\
&\left.\phantom{=}\times\pr{\log \pr{\rho^{E}(\theta)}^{\proddist T} - \log\pr{\rho^{E}_0(\theta)}^{\proddist T} }\right)\\ \displaybreak[0]
&\nonumber\leq \D{\rho^{\mathbf{E}}}{\pr{\rho^{E}(\theta)}^{\proddist T}} +\D{\pr{\rho^{E}(\theta)}^{\proddist T}}{\pr{\rho^{E}_0(\theta)}^{\proddist T}} \\ \displaybreak[0]
&\phantom{=}+\left\|\rho^{\mathbf{E}}- \pr{\rho^{E}(\theta)}^{\proddist T}\right\|_1 T\pr{\log \frac{1}{\lambda_{\min}(\rho^E_0(\theta)) \lambda_{\min}(\theta)} }\\ \displaybreak[0]
&\nonumber\stackrel{(a)}{\leq}\D{\rho^{\mathbf{E}}}{\pr{\rho^{E}(\theta)}^{\proddist T}} +\D{\pr{\rho^{E}(\theta)}^{\proddist T}}{\pr{\rho^{E}_0(\theta)}^{\proddist T}} \\ \displaybreak[0]
&\phantom{=}+\sqrt{ \D{\rho^{\mathbf{E}}}{\pr{\rho^{E}(\theta)}^{\proddist T}} } T\\\displaybreak[0]
&\phantom{=}\times\pr{\log \frac{1}{\lambda_{\min}(\rho^E_0(\theta)) \lambda_{\min}(\rho^E(\theta))} }\\ \displaybreak[0]
&\nonumber\leq L_1T^{-4} + \D{\pr{\rho^{E}(\theta)}^{\proddist T}}{\pr{\rho^{E}_0(\theta)}^{\proddist T}}\\\displaybreak[0]
&\phantom{=} + \sqrt{L_1T^{-4}}T\log \frac{1}{\lambda_{\min}(\rho^E_0(\theta)) \lambda_{\min}(\rho^E(\theta))}\\\displaybreak[0] 
&\nonumber= \D{\pr{\rho^{E}(\theta)}^{\proddist T}}{\pr{\rho^{E}_0(\theta)}^{\proddist T}}  + L_1T^{-4}\\\displaybreak[0]
&\phantom{=} + \sqrt{L_1}\log \frac{1}{\lambda_{\min}(\rho^E_0(\theta)) (\rho^E(\theta))}T^{-1},\\\displaybreak[0]
&\nonumber\stackrel{(b)}{\leq}\frac{\alpha_T^2\chi_2(\rho_1^E(\theta))\|\rho_0^E(\theta)}{2}T + L_2\alpha_T^3T + L_1T^{-4} \\\displaybreak[0]
&\phantom{=}+ \sqrt{L_1}\log \frac{1}{\lambda_{\min}(\rho^E_0) \lambda_{\min}(\rho^E(\theta))}T^{-1},\\\displaybreak[0]
&\nonumber\leq \frac{\alpha_T^2\chi_2(\rho_1^E(\theta))\|\rho_0^E(\theta)}{2}T + L_2\alpha_T^3T + L_1T^{-4}\\\displaybreak[0]
&\phantom{=} + 2\sqrt{L_1}\log \frac{2}{\widetilde{\lambda}^E}T^{-1},
\end{align}
where $(a)$ follows from Pinsker inequality, and $(b)$ follows from \cite{Sheikholeslami2016}.

\subsection{Covert Quantum Tomography}
\label{sec:covert-quant-tomogr}

\subsubsection{Instantiation of a covert estimation protocol}
\label{sec:link-covert-estim}

We now detail how Alice and Bob can covertly form estimates of $\D{\rho_1^B}{\rho_0^B}$ and $\D{\rho_1^W}{\rho_0^W}$. By our discussion at the beginning of Section~\ref{sec:unkown-channels}, if the channel from Alice to Bob is $\calE_{{\widetilde{A}}\to B}$, the goal of the estimation phase would be to  first verify the conditions \eqref{eq:collective-chi-min} and \eqref{eq:collective-lambdab}, and if they hold, to estimate $D^B(\calE)\eqdef\D{\calE_{{\widetilde{A}}\to B}(\rho_1^{\widetilde{A}})}{\calE_{{\widetilde{A}}\to B}(\rho_0^{\widetilde{A}})}$ and $D^E(\calE)\eqdef\D{\calE_{{\widetilde{A}}\to B}^\dagger(\rho_1^{\widetilde{A}})}{\calE_{{\widetilde{A}}\to B}^\dagger(\rho_0^{\widetilde{A}})}$. The protocol will be aborted otherwise. We shall use standard quantum tomography \cite{nielsen2002quantum} and adapt it to be covert. We start the description of the estimation phase by formally defining an estimation protocol. Suppose Alice and Bob have access to private randomness $R$ distributed according to $P_R$ over $\calR$ and use $T'$ channel uses for the estimation phase. The estimation protocol consists of an encoder function $f:\calR \to \calD(\calH)^{T'}$ for Alice, a POVM $\mathbf{M}_r = \{M_r^j\}_{j\in \calJ}$ for each $r\in\calR$ applied by Bob to his received state $\rho^{\mathbf{B}}$ when $R=r$ and results in an output $j$ in $\calJ$ the set of all possible outputs of the measurement, one function $H: \calJ\to\{0, 1\}$ used by Bob to verify that \eqref{eq:collective-chi-min} and \eqref{eq:collective-lambdab} hold, and two estimators $\widehat{D}^B:\calJ\to\mathbb{R}$ and $\widehat{D}^W:\calJ\to\mathbb{R}$ used by Bob to form estimations of $\smash{\D{\calE_{{\widetilde{A}}\to B}(\rho_1^{\widetilde{A}})}{\calE_{{\widetilde{A}}\to B}(\rho_0^{\widetilde{A}})}}$ and $\smash{\D{\calE_{{\widetilde{A}}\to B}^\dagger(\rho_1^{\widetilde{A}})}{\calE_{{\widetilde{A}}\to B}^\dagger(\rho_0^{\widetilde{A}})}}$, respectively.

 We now explicitly instantiate a covert estimation protocol. Consider any number of channel uses $T'$ and any quantum channel $\calE: \calL(\calH) \to \calL(\calH)$ where $\calH$ is a $d$-dimensional Hilbert space. Let $\widetilde{E}_1, \cdots, \widetilde{E}_{d^2}$ be defined as discussed in the beginning of Section~\ref{sec:unkown-channels}, i.e., for an orthonormal basis $\ket{1}, \cdots, \ket{d}$, we let $\widetilde{E}_{d(n-1) + m} \eqdef \ket{n}\bra{m}$.  Our goal is to estimate $\calE(\widetilde{E}_n)$ for all $n\in\intseq{1}{d^2}$ from which we would have a complete characterization of the quantum channel $\calE$. To do so, the main idea is that Alice would send some states through Pulse-Position Modulation (PPM) to Bob for which Bob performs quantum state tomography. More concretely, Alice and Bob first agree on two integers $q$ and $\ell$ such that $q\ell \leq T'$ and sample an {iid} sequence  $U_1, \cdots, U_\ell$  from their private randomness where each $U_i$ is uniformly distributed over $\intseq{1}{q}$. Alice then transmits the innocent state $\rho_0^{\widetilde{A}}$ on the $i^{th}$ channel uses unless 
\begin{align}
i \in \calI \eqdef \{U_1, U_2 + q, \cdots, U_\ell + q (\ell-1)\}.
\end{align}
To determine the state that should be sent by Alice on the positions in $\calI$, let us define the vectors 
\begin{align}
\ket{n, m, +} &\eqdef \frac{\ket{n}+\ket{m}}{\sqrt{2}}\\
\ket{n, m, -} &\eqdef \frac{\ket{n}+i\ket{m}}{\sqrt{2}}
\end{align}
and consider pure states 
\begin{multline}
\calS \eqdef \{\ket{n, m, +}\bra{n, m, +}:
n\neq m\}\\ \cup \{\ket{n, m, -}\bra{n, m, -}:  n\neq m\} \cup \{\ket{n}\bra{n}:n\in \intseq{1}{d}\},
\end{multline}
where $\card{\calS} =2d^2-d$.  On the positions in $\calI$,  in an arbitrary but known order, Alice transmits   each  state in $\calS$ $\lfloor \ell / \card{S}\rfloor$ times. Then, for each state $\rho\in \calS$, Bob receives $\lfloor \ell / \card{S}\rfloor$ independent copies of $\calE(\rho)$, and performs a POVM defined by $\{\widetilde{\rho}, I - \widetilde{\rho}\}$  for each operator $\widetilde{\rho} \in \calS$, $\widetilde{\ell} \eqdef \lfloor \lfloor \ell / \card{S}\rfloor / \card{\calS}\rfloor$ times. Let $\widehat{N}(\rho, \widetilde{\rho})$ be the number of times the result of the measurement $\{\widetilde{\rho}, I - \widetilde{\rho}\}$ on $\calE(\rho)$ corresponds to $\widetilde{\rho}$ and let $\widehat{f}(\rho, \widetilde{\rho}) \eqdef \widehat{N}(\rho, \widetilde{\rho})/\widetilde{\ell}$. Bob subsequently estimates  $\calE(\rho)$ for each $\rho\in \calS$ as
\begin{align}
\widehat{\calE}(\rho)
&\eqdef\sum_{n\neq m} \ket{n}\bra{m} \left(\widehat{f}(\rho, \ket{n, m, +}\bra{n, m, +}) \right. \nonumber\\
&\left.\phantom{===}- i\widehat{f}(\rho, \ket{n, m, -}\bra{n, m, -}) - \frac{1-i}{2}\widehat{f}(\rho, \ket{n}\bra{n})\right.\\
&\left.\phantom{===}- \frac{1-i}{2}\widehat{f}(\rho, \ket{m}\bra{m})\right)+\sum_n \ket{n}\bra{n} \widehat{f}(\rho, \ket{n}\bra{n}).\label{eq:calEhat-rho}
\end{align}
Since $\{\widetilde{E}_j: j \in \intseq{1}{d^2}\}$ is an orthonormal basis for $\calL(\calH)$, we can write $\widehat{\calE}(\rho) \eqdef \sum_{j}\widetilde{E}_j \widehat{\lambda}_{\rho, j}$ for some unique $\widehat{\lambda}_{\rho, j}$.
Then, for $n, m\in\intseq{1}{d}$, we define
\begin{align}
\widehat{\calE}(\widetilde{E}_{d(n-1)+m}) \label{eq:lambda-hat-def}
&\eqdef \begin{cases}\widehat{\calE}(\ket{n, m, +}\bra{n, m, +})  \quad &n\neq m,\\
+ i \widehat{\calE}(\ket{n, m, -}\bra{n, m, -})\\
 - \frac{1+i}{2}\widehat{\calE}(\ket{n}\bra{n}) \\
 - \frac{1+i}{2} \widehat{\calE}(\ket{m}\bra{m})\\
 \\
\widehat{\calE}(\ket{n}\bra{n})\quad&n=m,
\end{cases}
\end{align}
which is enough to characterize a quantum channel. We can similarly write $ {\calE}(\widetilde{E}_{d(n-1)+m}) = \sum_{j}\widetilde{E}_j{\lambda}_{d(n-1)+m,j}$ for some unique ${\lambda}_{d(n-1)+m,j}$. We next attempt to form an estimation of the $\chi$-representation of the channel $\calE$, $\{\chi_{j, k}\}$, defined at the beginning of Section~\ref{sec:unkown-channels}. By \cite{nielsen2002quantum}, for some fixed $\kappa_{j,k}^{j',k'}$,
\begin{align}
\chi_{j ,k} = \sum_{j', k'} \kappa_{j,k}^{j',k'} \lambda_{j',k'}.
\end{align}
We thus define $\widehat{\chi}_{j,k} \eqdef \sum_{j', k'} \kappa_{j,k}^{j',k'} \widehat{\lambda}_{j',k'}$. Finally, for some $\tau>0$, we define
\begin{align}
H &\eqdef \indic{\lambda_{\min}(\widehat{\chi}) \geq \widetilde{\lambda}^{\chi} -\tau \text{ and } \lambda_{\min}(\widehat{\calE}(\rho_0^{\widetilde{A}})) \geq \widetilde{\lambda}^{B} - \tau },\\
\widehat{D}^B &\eqdef \D{\widehat{\calE}(\rho_1^{\widetilde{A}})}{\widehat{\calE}(\rho_0^{\widetilde{A}})} - \tau,\\
\widehat{D}^E &\eqdef \D{\widehat{\calE}^\dagger(\rho_1^{\widetilde{A}})}{\widehat{\calE}^\dagger(\rho_0^{\widetilde{A}})}+\tau.
\end{align}

The next theorem establishes bounds on the performance of the described covert estimation protocol.
\begin{theorem}
\label{th:estimation-analysis}
There exist some $\xi > 0$ that depends on $\tau$, $d$, $\widetilde{\lambda}^\chi$, $\widetilde{\lambda}^B$, and $\widetilde{\lambda}^E$ such that
\begin{multline}
\D{\rho^{\mathbf{E}W} }{(\rho_0^{\mathbf{E}}) \otimes \rho_{\text{unif}}^W} \leq \frac{\ell}{q}\pr{\frac{\dim \calH^E}{\widetilde{\lambda}^E} - 1}, 
\label{eq:d-est-ph}
\end{multline}
\begin{multline}
\P{H=0|\lambda_{\min}(\chi) \geq \widetilde{\lambda}^{\chi}, \lambda_{\min}(\calE(\rho^{\widetilde{A}}_0)) \geq \widetilde{\lambda}^B} \\\leq 2^{-\xi n},\label{eq:H0-prob}
\end{multline}
\begin{multline}
\mathbb{P}\left(H=1|\lambda_{\min}(\chi) \leq \widetilde{\lambda}^{\chi} - 2\tau\right.\\
 \left.\textnormal{ or } \lambda_{\min}(\calE(\rho^{\widetilde{A}}_0)) \leq \widetilde{\lambda}^B - 2\tau\right) \leq 2^{-\xi n}\label{eq:H1-prob},
\end{multline}
\begin{multline}
\label{eq:p-par-est}
\mathbb{P}\left(D^B(\calE)- 2\tau \leq \widehat{D}^B \leq D^B(\calE),\right.\\
\left.D^E(\calE) \leq \widehat{D}^E \leq D^E(\calE) + 2\tau\right.\\
\left.|\lambda_{\min}(\chi) \geq \widetilde{\lambda}^{\chi} -2\tau,\right.
 \left.\lambda_{\min}(\calE(\rho^{\widetilde{A}}_0)) \geq \widetilde{\lambda}^B-2\tau\right)\\\geq 1-  2^{-\xi \ell}.
\end{multline}
\end{theorem}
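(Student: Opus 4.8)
The plan is to reduce all four claims to a single scalar accuracy level $\epsilon=\epsilon(\tau,d,\widetilde{\lambda}^{\chi},\widetilde{\lambda}^{B},\widetilde{\lambda}^{E})$ for the empirical frequencies $\widehat{f}(\rho,\widetilde{\rho})$ produced by Bob's tomography, and to run the argument on the event $\calG_\epsilon$ that $\abs{\widehat{f}(\rho,\widetilde{\rho})-\tr{\calE(\rho)\widetilde{\rho}}}\le\epsilon$ for every pair $(\rho,\widetilde{\rho})$ with $\rho,\widetilde{\rho}\in\calS$. The covertness bound \eqref{eq:d-est-ph}, however, is purely deterministic, and I would do it first. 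Since there is no public communication during the estimation phase, its left-hand side equals $\D{\rho^{\mathbf{E}}}{\rho_0^{\mathbf{E}}}$, and because the $\ell$ PPM blocks use independent positions, $\rho^{\mathbf{E}}=\bigotimes_{j=1}^{\ell}\tau_j$ (channel uses outside the blocks carry the innocent state and drop out), where, writing $\sigma_0=\calE_{{\widetilde{A}}\to E}(\rho_0^{\widetilde{A}})$ and $\sigma^{(j)}$ for Eve's state under the active symbol of block $j$, $\tau_j=\frac{1}{q}\sum_{k=1}^{q}\sigma_0^{\proddist (k-1)}\otimes\sigma^{(j)}\otimes\sigma_0^{\proddist (q-k)}$ and the reference is $\sigma_0^{\proddist q}$ per block. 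Using the standard bound $\D{\rho}{\sigma}\le\log\tr{\rho^2\sigma^{-1}}$, the multiplicativity of $\tr{\rho^2\sigma^{-1}}$ under tensor products, and $\log(1+x)\le x$, the divergence is at most $\sum_{j=1}^{\ell}\chi_2\pr{\tau_j\|\sigma_0^{\proddist q}}$; a direct computation (the off-diagonal terms of $\tau_j^2$ each contribute $1$ after tracing against $(\sigma_0^{-1})^{\proddist q}$) gives $\chi_2\pr{\tau_j\|\sigma_0^{\proddist q}}=\frac{1}{q}\chi_2\pr{\sigma^{(j)}\|\sigma_0}$, and $\chi_2\pr{\sigma^{(j)}\|\sigma_0}=\tr{(\sigma^{(j)})^2\sigma_0^{-1}}-1\le\tr{\sigma_0^{-1}}-1\le\dim\calH^{E}/\widetilde{\lambda}^{E}-1$ by $(\sigma^{(j)})^2\preceq\sigma^{(j)}\preceq I$ and $\lambda_{\min}(\sigma_0)\ge\widetilde{\lambda}^{E}$; summing over $j$ yields \eqref{eq:d-est-ph}.

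For the three probabilistic claims I would work on $\calG_\epsilon$. Each $\widehat{N}(\rho,\widetilde{\rho})$ is a sum of $\widetilde{\ell}$ i.i.d.\ $\mathrm{Bernoulli}(\tr{\calE(\rho)\widetilde{\rho}})$ variables, so Hoeffding's inequality and a union bound over the at most $\card{\calS}^{2}\le 4d^{4}$ pairs give $\P{\calG_\epsilon^{c}}\le 8d^{4}\,2^{-2\widetilde{\ell}\epsilon^{2}/\ln 2}$, which (as $\widetilde{\ell}=\lfloor\lfloor \ell/\card{\calS}\rfloor/\card{\calS}\rfloor=\Omega(\ell)$) has the announced form $2^{-\xi\ell}$, with $\xi$ a function of $\epsilon$ and hence of $\tau,d,\widetilde{\lambda}^{\chi},\widetilde{\lambda}^{B},\widetilde{\lambda}^{E}$ (this is the $2^{-\xi n}$ of \eqref{eq:H0-prob}--\eqref{eq:H1-prob}). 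On $\calG_\epsilon$ the reconstruction is Lipschitz in $\epsilon$ with $d$-dependent constants: each entry of $\widehat{\calE}(\rho)$ in \eqref{eq:calEhat-rho} is a fixed linear combination of the $\widehat{f}$'s, giving $\|\widehat{\calE}(\rho)-\calE(\rho)\|_{1}\le C_{1}(d)\epsilon$; propagating through \eqref{eq:lambda-hat-def}, $\widehat{\chi}_{j,k}=\sum_{j',k'}\kappa_{j,k}^{j',k'}\widehat{\lambda}_{j',k'}$, and $\widehat{\calE}(\rho_x^{\widetilde{A}})=\sum_{j,k}\widetilde{E}_j\rho_x^{\widetilde{A}}\widetilde{E}_k^{\dagger}\widehat{\chi}_{j,k}$ gives $\|\widehat{\chi}-\chi\|_{1}\le C_{2}(d)\epsilon$ and $\|\widehat{\calE}(\rho_x^{\widetilde{A}})-\calE(\rho_x^{\widetilde{A}})\|_{1}\le C_{2}(d)\epsilon$ for $x\in\{0,1\}$; and, after projecting $\widehat{\chi}$ onto the convex set of $\chi$-matrices of valid channels when necessary (which costs at most a constant factor), passing to the complementary channel through the Choi/Stinespring correspondence and Uhlmann's theorem gives $\|\widehat{\calE}^{\dagger}(\rho_x^{\widetilde{A}})-\calE^{\dagger}(\rho_x^{\widetilde{A}})\|_{1}\le C_{3}(d)\epsilon$, the residual unitary freedom of the complementary channel being harmless since $\D{\cdot}{\cdot}$ is unitary invariant. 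Weyl's inequality then controls $\abs{\lambda_{\min}(\widehat{\chi})-\lambda_{\min}(\chi)}$ and $\abs{\lambda_{\min}(\widehat{\calE}(\rho_0^{\widetilde{A}}))-\lambda_{\min}(\calE(\rho_0^{\widetilde{A}}))}$ by $C_{2}(d)\epsilon$, and combining Fannes' inequality for the von Neumann entropy term with the operator Lipschitzness of $\log$ on $[\mu,1]$ for the cross term gives a continuity bound for $\D{\cdot}{\cdot}$, valid when the second argument has $\lambda_{\min}\ge\mu$, of the form $g(C(d)\epsilon;d,\mu)\to 0$. Shrinking $\epsilon$ so that all the $C_{i}(d)\epsilon$ and these moduli lie well below $\tau$---using the conditionings $\lambda_{\min}(\calE(\rho_0^{\widetilde{A}}))\ge\widetilde{\lambda}^{B}-2\tau$, the standing assumption $\lambda_{\min}(\calE^{\dagger}(\rho_0^{\widetilde{A}}))\ge\widetilde{\lambda}^{E}$ (cf.\ \eqref{eq:collective-lambdae}), and $\tau\le\min(\widetilde{\lambda}^{B},\widetilde{\lambda}^{E})/4$ to keep the relevant second arguments at $\lambda_{\min}\ge\widetilde{\lambda}^{B}/2$ resp.\ $\widetilde{\lambda}^{E}/2$---yields, on $\calG_\epsilon$, $\abs{\D{\widehat{\calE}(\rho_1^{\widetilde{A}})}{\widehat{\calE}(\rho_0^{\widetilde{A}})}-D^{B}(\calE)}\le\tau$ and $\abs{\D{\widehat{\calE}^{\dagger}(\rho_1^{\widetilde{A}})}{\widehat{\calE}^{\dagger}(\rho_0^{\widetilde{A}})}-D^{E}(\calE)}\le\tau$.

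The three probabilistic claims are then bookkeeping. For \eqref{eq:H0-prob}: if $\lambda_{\min}(\chi)\ge\widetilde{\lambda}^{\chi}$ and $\lambda_{\min}(\calE(\rho_0^{\widetilde{A}}))\ge\widetilde{\lambda}^{B}$, then on $\calG_\epsilon$ we have $\lambda_{\min}(\widehat{\chi})\ge\widetilde{\lambda}^{\chi}-\tau$ and $\lambda_{\min}(\widehat{\calE}(\rho_0^{\widetilde{A}}))\ge\widetilde{\lambda}^{B}-\tau$, i.e.\ $H=1$, so $\P{H=0\mid\cdots}\le\P{\calG_\epsilon^{c}}\le 2^{-\xi\ell}$; \eqref{eq:H1-prob} is the mirror statement, since its hypothesis forces (on $\calG_\epsilon$) one of $\lambda_{\min}(\widehat{\chi})$, $\lambda_{\min}(\widehat{\calE}(\rho_0^{\widetilde{A}}))$ strictly below its threshold and hence $H=0$. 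For \eqref{eq:p-par-est}, under its conditioning the $\tau$-accuracy of the previous paragraph gives, on $\calG_\epsilon$, $\widehat{D}^{B}=\D{\widehat{\calE}(\rho_1^{\widetilde{A}})}{\widehat{\calE}(\rho_0^{\widetilde{A}})}-\tau\in[D^{B}(\calE)-2\tau,D^{B}(\calE)]$ and $\widehat{D}^{E}=\D{\widehat{\calE}^{\dagger}(\rho_1^{\widetilde{A}})}{\widehat{\calE}^{\dagger}(\rho_0^{\widetilde{A}})}+\tau\in[D^{E}(\calE),D^{E}(\calE)+2\tau]$, so the event in \eqref{eq:p-par-est} contains $\calG_\epsilon$ and hence has probability $\ge 1-2^{-\xi\ell}$. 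I expect the main obstacle to be the complementary-channel step in the second paragraph: turning the possibly non-CP/TP estimate $\widehat{\calE}$ into an estimate $\widehat{\calE}^{\dagger}$ of the true complementary channel that is trace-norm close requires routing through the Choi matrix, projecting onto the set of valid channels, and invoking continuity of purifications, and is where the argument needs the most care; the eigenvalue-perturbation and relative-entropy-continuity estimates are, by comparison, standard.
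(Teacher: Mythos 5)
Your overall architecture — a deterministic $\chi_2$ computation for covertness, a high-probability event from Hoeffding plus a union bound, Lipschitz propagation of the estimation error through the reconstruction formulas, eigenvalue perturbation, and a continuity bound for quantum relative entropy when the second argument has eigenvalues bounded away from zero — is exactly the paper's. The covertness part is the same argument: independence of the $\ell$ PPM blocks gives additivity, the divergence is dominated by the $\chi_2$-divergence (the paper cites \cite{ruskai1990convexity}), and the off-diagonal cross terms each trace out to $1$, leaving the $1/q$ factor; that computation matches. The reliability/accuracy bookkeeping at the end matches too.

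The genuine divergence is in how you estimate the complementary channel, and here you are right that it is the delicate step. You propose (i) projecting $\widehat{\chi}$ onto the set of valid $\chi$-matrices, then (ii) passing through the Choi/Stinespring correspondence and Uhlmann's theorem to bound $\|\widehat{\calE}^\dagger(\rho)-\calE^\dagger(\rho)\|_1$. The paper does neither. Instead it derives and uses the \emph{explicit} formula $\calE^\dagger(\rho)=\sqrt{\chi}^*\,\widetilde{\rho}\,\sqrt{\chi}^*$ (proved via a Kraus decomposition tied to the eigendecomposition of $\chi$), and then controls $\|\sqrt{\chi}-\sqrt{\widehat{\chi}}\|_1$ by an operator-function derivative bound (Lemma~\ref{lm:norm-derivative}) together with $\lambda_{\min}(\chi)\ge\widetilde{\lambda}^\chi-2\tau$ and $\|\widehat{\chi}-\chi\|_1$ small. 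No projection is needed: on the good event $\widehat{\chi}$ is automatically positive-definite (Weyl), so $\sqrt{\widehat{\chi}}$ exists and the formula is meaningful even if $\widehat{\calE}$ is not exactly CPTP. Two caveats about your route. First, the Stinespring-continuity bound you would invoke (the Kretschmann–Schlingemann–Werner theorem) gives the distance between complementary channels as $O(\sqrt{\,\cdot\,})$ in the diamond-norm distance of the primary channels, not linearly; you would therefore get $\|\widehat{\calE}^\dagger(\rho)-\calE^\dagger(\rho)\|_1\le C\sqrt{\epsilon}$ rather than $C\epsilon$ — not fatal (shrink $\epsilon$ further), but it changes the constants and should be stated. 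Second, the projection you propose must be done in a metric compatible with both the diamond norm (for the Stinespring step) and the entrywise bound on $\widehat{\chi}$ coming from Hoeffding; the ``costs at most a constant factor'' claim is true but needs the $\lambda_{\min}(\chi)$ lower bound to certify that the projection does not move $\widehat{\chi}$ far, which is precisely the same ingredient the paper uses directly. Net: your approach is plausible and more modular, but the paper's explicit $\sqrt{\chi}$ computation is what yields the closed-form constants appearing in Lemma~\ref{lm:approx-states} and is self-contained, whereas yours imports a substantial external continuity theorem and is somewhat looser.
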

\begin{figure}
  \centering
  \includegraphics[width=\linewidth]{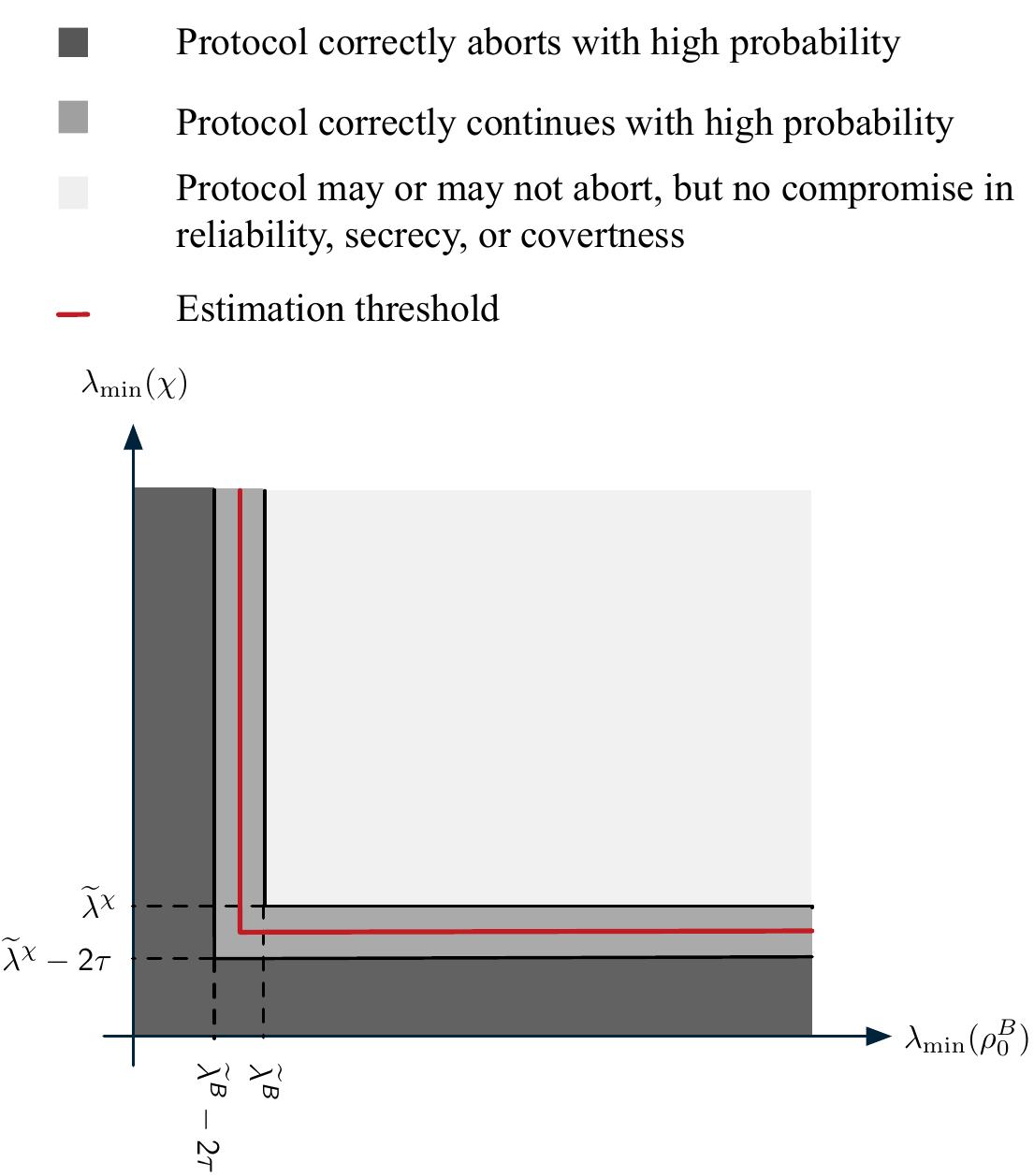}
  \caption{Testing the conditions \eqref{eq:collective-chi-min} and \eqref{eq:collective-lambdab}}
  \label{fig:estimation}
\end{figure}
We shall prove Theorem~\ref{th:estimation-analysis} in Section~\ref{sec:proof-estimation}. Note that~\eqref{eq:d-est-ph} characterizes the covertness of the estimation protocol by bounding the relative entropy between the state induced by the estimation protocol and the state in which there is no communication. \eqref{eq:H0-prob} and~\eqref{eq:H1-prob} characterize the robustness of estimation since \eqref{eq:H0-prob} bounds the probability that the channel satisfies the required condition \eqref{eq:collective-chi-min} and \eqref{eq:collective-lambdab} but Alice and Bob abort the protocol while \eqref{eq:H1-prob} bounds the probability that Alice and Bob run the key generation phase but the channel does not satisfy the required conditions. Finally,~\eqref{eq:p-par-est} characterizes the accuracy of the estimation by bounding the probability that the estimated parameters of the channel are close to their true values.

As depicted in Fig.~\ref{fig:estimation}, there is a technical subtlety in verifying \eqref{eq:collective-chi-min} and \eqref{eq:collective-lambdab} because the channel estimation error in finite number of channel uses prevents us from testing with absolute certainty that \eqref{eq:collective-chi-min} and \eqref{eq:collective-lambdab} hold. In other words, there could exist a set of channels for which, based on the estimation error, Alice and Bob may or may not abort the protocol; regardless, the protocol ensures that if the key generation phase is executed, it is reliable, secure, and covert. 

We conclude this section by analyzing the performance of a covert key generation protocol obtained by combining the covert estimation protocol with the universal code introduced in Section~\ref{sec:universal-covert-code}. More precisely, Alice and Bob first perform the described estimation protocol $\calP$ over $T'$ channel uses.  Using $O(\log T')$ channel uses and $O(\log T')$ bits of private common randomness, Bob transmits the one-time-padded $H$, $\widehat{D}^B$, and $\widehat{D}^{W}$ over the public channel. If $H=0$, Alice abort the protocol. If $H=1$, after obtaining $\widehat{D}^B$ and $\widehat{D}^E$, Alice and Bob run the universal code $\calC_T$ introduced in Theorem~\ref{th:universal-covert} for $D^B = \widehat{D}^B$, $D^E = \widehat{D}^E$, and the lower-bounds on the minimum eigenvalue of $\rho_0^B$ and $\rho_0^E$, $\widetilde{\lambda}^B-2\tau$ and $\widetilde{\lambda}^E$, respectively. The rationale behind the conservative choice for the minimum eigenvalue of $\rho_0^B$ is that, because to the estimation error, Alice and Bob might accept the channels for which $\lambda_{\min}(\rho_0^B)$ is slightly less than $\widetilde{\lambda}_B$. We characterize the reliability, secrecy, and covertness of the overall protocol in the next lemma and provide the poof in Section~\ref{sec:proof-estimation}.
\begin{lemma}
\label{lm:cascade-analysis}
For all channels $\calE_{{\widetilde{A}}\to B}$ if we only know $\lambda_{\min}(\calE_{{\widetilde{A}}\to E}(\rho^{\widetilde{A}}_0)) \geq \widetilde{\lambda}^E$, we have
 \begin{align}
 P_e &\leq \P{H= 1},\displaybreak[0]\\
 S &\leq   \P{H= 1}\pr{T \log \frac{1}{\widetilde{\lambda}^E} + \ell^{\max}},\displaybreak[0]\\
 C &\leq \P{H=1} T \log \frac{1}{\widetilde{\lambda}^E} + \delta,
 \end{align}
 where $L_1, L_2>0$ depend on $\dim \calH^E$ and $\widetilde{\lambda}^E$, and $\ell^{\max}$ is the maximum length of the key.
 
In addition, for a quantum channel $\calE_{{\widetilde{A}}\to B}$ with $\lambda_{\min}(\calE_{{\widetilde{A}}\to E}(\rho^{\widetilde{A}}_0)) \geq \widetilde{\lambda}^E$ and $\lambda_{\min}(\calE_{{\widetilde{A}}\to B}(\rho^{\widetilde{A}}_0)) \geq \widetilde{\lambda}^B-2\tau$, and an estimation protocol $\calP$, define $\epsilon\eqdef \P{H=1 \textnormal{ and } (D^B(\calE)\leq \widehat{D}^B \textnormal{ or } D^E(\calE) \geq \widehat{D}^E )}$ and $\delta \eqdef \D{\rho^{\mathbf{E}W} }{(\rho_0^{\mathbf{E}}) \otimes \rho_{\text{unif}}^W}$. For the protocol described above, we have
 \begin{align}
 P_e& \leq 2T^{-5} + \epsilon,\\
 S &\leq L_1T^{-4} + \epsilon \pr{T \log \frac{1}{\widetilde{\lambda}^E} + \ell^{\max}},\\
 C &\leq  \frac{\alpha_T^2\chi_2(\rho_1^E(\theta))\|\rho_0^E(\theta)}{2}T + L_2\alpha_T^3T + L_1T^{-4}\nonumber \\
 &\phantom{=}+ 2\sqrt{L_1}\log \frac{2}{\widetilde{\lambda}^E}T^{-1} + \epsilon T \log \frac{1}{\widetilde{\lambda}^E} + \delta.
 \end{align}

\end{lemma}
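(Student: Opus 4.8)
The plan is to run the entire analysis conditionally on the abort flag $H\in\{0,1\}$ produced at the end of the estimation phase. Since $H$ is a deterministic function of Bob's estimation measurement and each party holds a (one‑time‑padded) copy of it, the key length and all post‑processing are deterministic functions of $H$ within each branch. Writing the three metrics as mixtures over $\{H=0\}$ and $\{H=1\}$ and invoking joint convexity of the quantum relative entropy, I get $S\leq\sum_{h}\P{H=h}\D{\rho^{\mathbf{E}WS^X}_h}{\rho^{\mathbf{E}W}_h\otimes\rho^{S^X}_{\text{unif}}}$, $C\leq\sum_h\P{H=h}\D{\rho^{\mathbf{E}W}_h}{(\rho_0^{\mathbf{E}})\otimes\rho^{W}_{\text{unif}}}$, and $P_e=\sum_h\P{K^X\neq K^Y,H=h}$. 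On $\{H=0\}$ the key‑generation phase is never run: Alice transmits innocent states over those $T$ uses, sets $\ell^X=0$, and fills $S^X$ with fresh uniform local randomness; hence the $H=0$ term contributes $0$ to $P_e$ (the empty keys agree) and to $S$ (the string is uniform and independent of $(\mathbf{E},W)$), and to $C$ it contributes only the estimation‑protocol covertness. Collecting the estimation‑phase relative entropies from both branches, convexity collapses them into $\delta\eqdef\D{\rho^{\mathbf{E}W}}{(\rho_0^{\mathbf{E}})\otimes\rho^{W}_{\text{unif}}}$ of the estimation protocol $\calP$, which is the trailing term in both parts and is controlled by eq.~\eqref{eq:d-est-ph} of Theorem~\ref{th:estimation-analysis}.

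For Part 1, where no reliability or secrecy guarantee is available on $\{H=1\}$, I bound that branch crudely. The error probability is at most $1$, so $P_e\leq\P{H=1}$. For secrecy, write the branch relative entropy as $I(\mathbf{E}W;S^X)+\D{\rho^{S^X}}{\rho^{S^X}_{\text{unif}}}$; the first term is at most the von Neumann entropy of Eve's key‑generation register $\mathbf{E}_{\mathrm{kg}}$, which is at most $T\log\dim\calH^E\leq T\log(1/\widetilde{\lambda}^E)$ since $\lambda_{\min}(\rho_0^E)\geq\widetilde{\lambda}^E$ forces $\dim\calH^E\leq 1/\widetilde{\lambda}^E$, and the second is at most $\ell^{\max}$, giving $S\leq\P{H=1}\!\left(T\log(1/\widetilde{\lambda}^E)+\ell^{\max}\right)$. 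For covertness, conditioning inside $\{H=1\}$ on the realized estimates $(\widehat{D}^B,\widehat{D}^E)$ makes $\rho^{\mathbf{E}_{\mathrm{kg}}}$ a fixed state independent of the estimation phase, so the full state factors; the key‑generation part of the relative entropy is then $\D{\rho^{\mathbf{E}_{\mathrm{kg}}}}{(\rho_0^E)^{\proddist T}}\leq\log\!\big(1/\lambda_{\min}((\rho_0^E)^{\proddist T})\big)\leq T\log(1/\widetilde{\lambda}^E)$, and adding $\delta$ yields $C\leq\P{H=1}T\log(1/\widetilde{\lambda}^E)+\delta$.

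For Part 2, under the additional hypothesis $\lambda_{\min}(\rho_0^B)\geq\widetilde{\lambda}^B-2\tau$, introduce the \emph{good‑estimate} event $G\eqdef\{H=1\}\cap\{D^B(\calE)>\widehat{D}^B\}\cap\{D^E(\calE)<\widehat{D}^E\}$, so that $\P{\{H=1\}\setminus G}=\epsilon$ by the definition of $\epsilon$. Refine the $\{H=1\}$ term by conditioning on $G$ versus its complement and, inside $G$, on the realized pair $(\widehat{D}^B,\widehat{D}^E)$ (a discrete random variable): conditioned on such a pair the key‑generation phase runs the code $\calC_T$ of Theorem~\ref{th:universal-covert} instantiated with $D^B=\widehat{D}^B$, $D^E=\widehat{D}^E$ and eigenvalue margins $\widetilde{\lambda}^B-2\tau,\widetilde{\lambda}^E$, and this code is independent of the estimation‑phase state, so the state factors and the estimation‑phase terms again collapse to $\delta$. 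On $G$ the \emph{true} cq‑channel $x\mapsto\rho_x^{BE}$ meets all four hypotheses of Theorem~\ref{th:universal-covert}: $\D{\rho_1^B}{\rho_0^B}=D^B(\calE)\geq\widehat{D}^B$, $\D{\rho_1^E}{\rho_0^E}=D^E(\calE)\leq\widehat{D}^E$, $\lambda_{\min}(\rho_0^B)\geq\widetilde{\lambda}^B-2\tau$, and $\lambda_{\min}(\rho_0^E)\geq\widetilde{\lambda}^E$. Hence on $G$ it gives $P_e\leq 2T^{-5}$, $S\leq L_1T^{-4}$, and the claimed key‑generation covertness bound; on $\{H=1\}\setminus G$ (total probability $\epsilon$) I reuse the Part 1 crude bounds, now with prefactor $\epsilon$. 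Adding the $G$‑contribution, the $\epsilon$‑contribution, and the collapsed $\delta$ produces the three inequalities of Part 2.

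The main obstacle is this transfer step: one must check that \emph{on the event $G$} the actual channel lies in the class of channels for which $\calC_T$ was designed — which is exactly why the estimators $\widehat{D}^B$ and $\widehat{D}^E$ in Section~\ref{sec:covert-quant-tomogr} are biased by $-\tau$ and $+\tau$ in the correct directions — and one must condition on the realized estimates so that $\calC_T$ is ``frozen'' and decoupled from Eve's estimation‑phase observations before invoking Theorem~\ref{th:universal-covert}. The remainder is bookkeeping of the convexity splittings together with the crude worst‑case bounds, all of which remain finite because $\lambda_{\min}(\rho_0^E)\geq\widetilde{\lambda}^E$.
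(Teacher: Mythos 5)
Your proposal is correct and follows essentially the same route as the paper: condition on whether the (deliberately conservatively biased) estimates are valid — your event $G$ is exactly the complement of the paper's $\calA^c$ within $\{H=1\}$, with $\P{\{H=1\}\setminus G}=\epsilon$ — apply Theorem~\ref{th:universal-covert} with the frozen estimates on the good event, use the crude worst-case bounds $P_e\leq 1$, $S\leq T\log(1/\widetilde{\lambda}^E)+\ell^{\max}$, $C\leq T\log(1/\widetilde{\lambda}^E)$ on the bad event, and add $\delta$ for the (independent) estimation phase. You are somewhat more explicit than the paper about Part 1 and about justifying the crude bounds, but the decomposition and the invocation of Theorem~\ref{th:universal-covert} are the same.
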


\subsubsection{Proof of Theorem~\ref{th:estimation-analysis}  and Lemma~\ref{lm:cascade-analysis}}
\label{sec:proof-estimation}


To show that the desired parameters of the channel are approximated properly by their associated estimators, we first show that the estimated channel $\widehat{\calE}$ defined in \eqref{eq:lambda-hat-def} is close to the true channel $\calE$, i.e.,  for all $j$ and $k$, with high probability, $\widehat{\lambda}_{j, k}$ is close to $\lambda_{j, k}\eqdef \tr{\widetilde{E}_k^\dagger \calE(\widetilde{E}_j)}$.

\begin{lemma}
\label{lm:lambda-bound}
For all $\gamma > 0$ and $\kappa_{\max} \eqdef \max_{j,k,j',k'} |\kappa_{j,k}^{j',k'}|$, we have
\begin{align}
\label{eq:lambad-bound}
\P{\exists j, k: |\lambda_{j, k} - \widehat{\lambda}_{j, k}| \geq \gamma} \leq  16 d^4 e^{-\frac{1}{256} \widetilde{\ell}\gamma^2},
\end{align}
and
\begin{align}
\label{eq:chi-bound}
\P{\exists j, k: |\chi_{j, k} - \widehat{\chi}_{j, k}| \geq  d^2 \kappa_{\max}\gamma} \leq  16 d^4 e^{-\frac{1}{256} \widetilde{\ell}\gamma^2}.
\end{align}
\end{lemma}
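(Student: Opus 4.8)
The plan is to reduce Lemma~\ref{lm:lambda-bound} to Hoeffding's inequality plus a union bound, with some elementary bookkeeping for the reconstruction formulas. The starting observation is that each empirical frequency $\widehat f(\rho,\widetilde\rho)$ is the sample mean of i.i.d.\ Bernoulli trials. Indeed, on the slots indexed by $\calI$ Alice sends each state of $\calS$ a fixed number of times, and since under a collective attack the channel $\calE$ is applied independently to each transmitted state, the $\widetilde\ell$ binary measurements $\{\widetilde\rho,I-\widetilde\rho\}$ that define $\widehat N(\rho,\widetilde\rho)$ are i.i.d.\ trials with success probability $f(\rho,\widetilde\rho)\eqdef\tr{\widetilde\rho\,\calE(\rho)}\in[0,1]$ (a fresh group of copies being used for every pair $(\rho,\widetilde\rho)\in\calS\times\calS$). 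Hence $\widehat N(\rho,\widetilde\rho)$ is binomially distributed with parameters $\widetilde\ell$ and $f(\rho,\widetilde\rho)$, so $\widehat f(\rho,\widetilde\rho)=\widehat N(\rho,\widetilde\rho)/\widetilde\ell$ is the mean of $\widetilde\ell$ i.i.d.\ $[0,1]$-valued variables with expectation $f(\rho,\widetilde\rho)$. Fixing a deviation level $t$ proportional to $\gamma$, Hoeffding's inequality gives $\P{|\widehat f(\rho,\widetilde\rho)-f(\rho,\widetilde\rho)|\ge t}\le 2e^{-2\widetilde\ell t^2}$ for each of the $|\calS|^2\le 4d^4$ pairs, and a union bound shows that outside an event of probability at most $16d^4 e^{-\widetilde\ell\gamma^2/256}$ \emph{all} the $\widehat f(\rho,\widetilde\rho)$ lie within $t$ of $f(\rho,\widetilde\rho)$ (the loose constants $16$ and $1/256$ leave room for the $\calO(1)$ factors appearing below).

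On this good event I would propagate the error through the reconstruction. The key point is that \eqref{eq:calEhat-rho} and \eqref{eq:lambda-hat-def} are precisely the standard linear process-tomography inversion (cf.\ \cite{nielsen2002quantum}): feeding the true frequencies $\{f(\rho,\widetilde\rho)\}$ into these formulas returns exactly $\lambda_{j,k}=\tr{\widetilde E_k^\dagger\calE(\widetilde E_j)}$, whereas feeding the empirical $\{\widehat f(\rho,\widetilde\rho)\}$ returns $\widehat\lambda_{j,k}$. Consequently each difference $\widehat\lambda_{j,k}-\lambda_{j,k}$ is a \emph{fixed} linear combination of at most a constant number (here at most sixteen) of the errors $\widehat f(\rho,\widetilde\rho)-f(\rho,\widetilde\rho)$, with coefficients of modulus bounded by an absolute constant (the moduli $1,1,\tfrac{\sqrt2}{2},\tfrac{\sqrt2}{2}$ from \eqref{eq:calEhat-rho}, composed once more through \eqref{eq:lambda-hat-def}). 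Choosing $t$ so that this product is at most $\gamma$ then gives $|\widehat\lambda_{j,k}-\lambda_{j,k}|\le\gamma$ for all $j,k$ on the good event, which is exactly \eqref{eq:lambad-bound}.

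For \eqref{eq:chi-bound} I would simply push the bound through the fixed linear relation $\chi_{j,k}=\sum_{j',k'}\kappa_{j,k}^{j',k'}\lambda_{j',k'}$ from \cite{nielsen2002quantum}: since the estimator uses the same coefficients applied to $\widehat\lambda_{j',k'}$, the triangle inequality yields $|\widehat\chi_{j,k}-\chi_{j,k}|\le d^2\kappa_{\max}\,\max_{j',k'}|\widehat\lambda_{j',k'}-\lambda_{j',k'}|$ (the factor $d^2\kappa_{\max}$ reflecting that, for the matrix-unit basis chosen here, the relation between the $\chi$'s and the $\lambda$'s is an essentially sparse linear bijection), so $\{\exists j,k:\,|\widehat\chi_{j,k}-\chi_{j,k}|\ge d^2\kappa_{\max}\gamma\}\subseteq\{\exists j,k:\,|\widehat\lambda_{j,k}-\lambda_{j,k}|\ge\gamma\}$ and \eqref{eq:chi-bound} inherits the probability bound of \eqref{eq:lambad-bound}. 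The only genuinely delicate part of the argument is the bookkeeping in the second paragraph — verifying that \eqref{eq:calEhat-rho}--\eqref{eq:lambda-hat-def} is the exact inverse of the map $\{f(\rho,\widetilde\rho)\}\mapsto\{\lambda_{j,k}\}$ (so that the estimator is unbiased when fed exact data) and then counting terms and coefficients carefully enough to land on the stated constants; once that is in place, the rest is a routine concentration estimate.
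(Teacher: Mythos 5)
Your proposal is correct and follows essentially the same route as the paper: Hoeffding's inequality for the binomial frequencies $\widehat f(\rho,\widetilde\rho)$, a union bound, and linear propagation of the error through the two-layer reconstruction \eqref{eq:calEhat-rho}--\eqref{eq:lambda-hat-def} and then through $\chi_{j,k}=\sum_{j',k'}\kappa_{j,k}^{j',k'}\widehat\lambda_{j',k'}$. The only (cosmetic) difference is that you apply the union bound once over all $|\calS|^2\leq 4d^4$ frequency pairs and then propagate deterministically on the good event, whereas the paper splits the deviation $\gamma$ across the terms of each reconstruction layer and unions at the level of the $\lambda$'s; both yield the stated constants.
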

\begin{proof}
We only prove \eqref{eq:lambad-bound} as \eqref{eq:chi-bound} then follows from the definition of $\widehat{\chi}_{j, k}$. Notice first that, by our construction,  the distribution of $\widehat{N}(\rho,\widetilde{\rho}) = \widetilde{\ell} \widehat{f}(\rho,\widetilde{\rho})$ is Binomial$(\tr{ \widetilde{\rho}\calE(\rho)}, \widetilde{\ell})$ for all $\rho, \widetilde{\rho}\in\calS$. Therefore,  Hoeffding's inequality yields for all $\gamma > 0 $ that
\begin{align}
\label{eq:general-hoeff}
\P{|\widehat{f}(\rho, \widetilde{\rho}) - \tr{\widetilde{\rho}\calE(\rho)}| \geq \gamma} \leq 2 \exp{-2\widetilde{\ell} \gamma^2}.
\end{align}
For all $n$, $m$, $n'$, and $m'$, using the equality 
\begin{multline}
\label{eq:e-expansion}
\widetilde{E}_{d(n-1) + m} = \ket{n, m,+}\bra{n, m,+} \\+ i \ket{n, m,-}\bra{n, m, -}- \frac{1+i}{2}\ket{n}\bra{n} - \frac{1+i}{2}\ket{m}\bra{m},
\end{multline}
we expand $\lambda_{j, k}$ and $\widehat{\lambda}_{j, k}$ in terms of $\tr{\widetilde{\rho}\calE(\rho)}$ and $\widehat{f}(\rho, \widetilde{\rho})$, respectively, and apply \eqref{eq:general-hoeff}. More precisely, by definition of $\lambda_{d(n-1)+m, d(n'-1)+m'} $, we have
\begin{multline}
\lambda_{d(n-1)+m, d(n'-1)+m'} \\
\begin{split}
&= \tr{\widetilde{E}_{ d(n'-1)+m'}^\dagger \calE(\widetilde{E}_{d(n-1)+m})}\\
&= \tr{\widetilde{E}_{ d(n'-1)+m'}^\dagger \calE(\ket{n, m,+}\bra{n, m,+})} \\
&\phantom{=}+ i \tr{\widetilde{E}_{ d(n'-1)+m'}^\dagger \calE(\ket{n, m,-}\bra{n, m,-})}\\
&\phantom{=}-\frac{1+i}{2}\tr{\widetilde{E}_{ d(n'-1)+m'}^\dagger \calE(\ket{n}\bra{n})}\\
&\phantom{=} -\frac{1+i}{2} \tr{\widetilde{E}_{ d(n'-1)+m'}^\dagger \calE(\ket{m}\bra{m})}.
\end{split}
\end{multline}
We now fix $n', m'\in\intseq{1}{d}$ and $\rho\in\calS$ and for simplicity, let $j\eqdef d(n'-1)+m'$, $\ket{+} \eqdef \ket{n', m',+}$, $\ket{-}\eqdef \ket{n', m', -}$. Then, by \eqref{eq:e-expansion},	
\begin{multline}
\tr{\widetilde{E}_j^\dagger \calE(\rho)} = \tr{\ket{+}\bra{+}\calE(\rho)}  - i \tr{\ket{-}\bra{-}\calE(\rho)}\\
  -\frac{1-i}{2}\tr{\ket{n}\bra{n}\calE(\rho)}-\frac{1-i}{2}\tr{\ket{m}\bra{m}\calE(\rho)}.
\end{multline}
Therefore, we obtain the upper-bound in \eqref{eq:p-lambda-fig}.
\begin{widetext}
\begin{multline}
\P{|\widehat{\lambda}_{\rho, j} -\tr{\widetilde{E}_j \calE(\rho)} | \geq \gamma}\\
\begin{split} 
&=\mathbb{P}\pr{|\widehat{f}(\rho, \ket{+}\bra{+}) - i\widehat{f}(\rho, \ket{-}\bra{-}) - \frac{1-i}{2}\widehat{f}(\rho, \ket{n'}\bra{n'})- \frac{1-i}{2}\widehat{f}(\rho, \ket{m'}\bra{m'}) - \tr{\widetilde{E}_j \calE(\rho)}| \geq \gamma}\\
 &\leq \P{|\widehat{f}(\rho, \ket{+}\bra{+})  - \tr{\ket{+}\bra{+} \calE(\rho)}| \geq \frac{\gamma}{4}} + \P{|\widehat{f}(\rho, \ket{-}\bra{-})  - \tr{\ket{-}\bra{-} \calE(\rho)}| \geq \frac{\gamma}{4}} +\\
 &~~~~~\P{|\widehat{f}(\rho, \ket{n'}\bra{n'})  - \tr{\ket{n'}\bra{n'} \calE(\rho)}| \geq \frac{\gamma}{2\sqrt{2}}} + \P{|\widehat{f}(\rho, \ket{m'}\bra{m'})  - \tr{\ket{m'}\bra{m'} \calE(\rho)}| \geq \frac{\gamma}{2\sqrt{2}}}\\
 &\leq  4e^{-\frac{1}{8}\widetilde{\ell}\gamma^2}.\label{eq:p-lambda-fig}
 \end{split}
\end{multline}
\end{widetext}
Similarly, to analyze the second term in the right hand side of \eqref{eq:calEhat-rho}, we have
\begin{align}
\P{|\widehat{f}(\rho, \ket{n'}\bra{n'}) - \tr{\ket{n'}\bra{n'}\calE(\rho)}| \geq \gamma} \leq e^{-2\widetilde{\ell} \gamma^2}.
\end{align}
Thus, the union bound implies that
\begin{multline}
\P{\exists j: |\widehat{\lambda}_{\rho, j} - \tr{\widetilde{E}_j\calE(\rho)}| \geq \gamma } \\
\leq d(d-1)4e^{-\frac{1}{8} \widetilde{\ell} \gamma^2} + de^{-2\widetilde{\ell} \gamma^2} \leq 4d^2e^{-\frac{1}{8} \widetilde{\ell} \gamma^2}.
\end{multline}
Moreover, because we have 
\begin{multline}
\widehat{\lambda}_{j, k} = \widehat{\lambda}_{\ket{+}\bra{+}, k} \\+ i\widehat{\lambda}_{\ket{-}\bra{-}, k} - \frac{1+i}{2}\widehat{\lambda}_{\ket{n'}\bra{n'}, k} -\frac{1+i}{2} \widehat{\lambda}_{\ket{m'}\bra{m'}, k},
\end{multline}
we obtain
\begin{align}
\P{\exists j, k: |\lambda_{j, k} - \widehat{\lambda}_{j, k}| \geq \gamma} \leq  16 d^4 e^{-\frac{1}{256} \widetilde{\ell}\gamma^2}.
\end{align}
\end{proof}

\begin{lemma}
\label{lm:approx-states}
For any $\rho \in \calD(\calH)$ and $0<\gamma < \frac{\lambda_{\min}(\chi)}{d^5 \kappa_{\max}}$, we have
\begin{multline}
\P{\|\calE(\rho) -\widehat{\calE}(\rho) \|_1\geq d^6\kappa_{\max} \gamma } \\
\leq 16 d^4 e^{-\frac{1}{256} \widetilde{\ell}\gamma^2},
\end{multline}
\begin{multline}
\P{\|\calE^\dagger(\rho) -\widehat{\calE}^\dagger(\rho) \|_1 \geq  \frac{d^{18} \kappa_{\max}^2 \lambda_{\max}(\widetilde{\rho})\sqrt{\lambda_{\max}(\chi)} \gamma^2 }{2\pr{\lambda_{\min}(\chi) - d^{5}\kappa_{\max} \gamma}} } \\
\leq 16 d^4 e^{-\frac{1}{256} \widetilde{\ell}\gamma^2},
\end{multline}
where $\widetilde{\rho}\eqdef \sum_{j,k} \tr{\widetilde{E}_j \rho \widetilde{E}_k^\dagger} \ket{j}\bra{k}$.
\end{lemma}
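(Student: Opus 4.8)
The plan is to obtain both estimates from the entrywise bounds of Lemma~\ref{lm:lambda-bound}, but to treat the two maps very differently: the reconstruction $\chi\mapsto\calE$ is linear, whereas $\chi\mapsto\calE^\dagger$ passes through a matrix square root and is the source of all the difficulty.

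\emph{First inequality.} Writing $\calE$ and $\widehat{\calE}$ in their $\chi$-representations, $\calE(\rho)-\widehat{\calE}(\rho)=\sum_{j,k}(\chi_{j,k}-\widehat{\chi}_{j,k})\,\widetilde{E}_j\rho\widetilde{E}_k^\dagger$. Since each $\widetilde{E}_j=\ket{n}\bra{m}$ is a partial isometry, $\|\widetilde{E}_j\rho\widetilde{E}_k^\dagger\|_1\leq\|\rho\|_1=1$, so the triangle inequality over the $d^4$ index pairs gives $\|\calE(\rho)-\widehat{\calE}(\rho)\|_1\leq\sum_{j,k}|\chi_{j,k}-\widehat{\chi}_{j,k}|$. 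By \eqref{eq:chi-bound} the event that some $|\chi_{j,k}-\widehat{\chi}_{j,k}|$ attains $d^2\kappa_{\max}\gamma$ has probability at most $16d^4e^{-\widetilde{\ell}\gamma^2/256}$, and off this event the sum stays below $d^4\cdot d^2\kappa_{\max}\gamma=d^6\kappa_{\max}\gamma$, which is the claim.

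\emph{Second inequality.} I would work on the complement of that same event. The entrywise bounds first upgrade to a crude operator-norm bound $\|\chi-\widehat{\chi}\|_\infty\leq d^5\kappa_{\max}\gamma$, and the hypothesis $\gamma<\lambda_{\min}(\chi)/(d^5\kappa_{\max})$ then forces $\widehat{\chi}\succeq(\lambda_{\min}(\chi)-d^5\kappa_{\max}\gamma)\,I\succ0$; in particular $\widehat{\chi}$ has a genuine Hermitian square root whose smallest eigenvalue is bounded below by $\lambda_{\min}(\chi)-d^5\kappa_{\max}\gamma$. Next I would invoke the reconstruction formula for the complementary channel: if $\ket{v_j}$ denotes the image of $\ket{j}$ under $\chi^{1/2}$ and $\ket{\widehat{v}_j}$ the image of $\ket{j}$ under $\widehat{\chi}^{1/2}$, then $\calE^\dagger(\rho)=\sum_{j,j'}\widetilde{\rho}_{j,j'}\ket{v_j}\bra{v_{j'}}$ and likewise for $\widehat{\calE}^\dagger$, the use of the Hermitian square root fixing the unitary ambiguity of the complementary channel so that this $\ell_1$-difference is the relevant one. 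The bound then follows by splitting $\ket{v_j}\bra{v_{j'}}-\ket{\widehat{v}_j}\bra{\widehat{v}_{j'}}$ into two rank-one pieces, controlling $\|v_j-\widehat{v}_j\|$ by the operator perturbation bound for the square root on $\{A\succeq\lambda I\}$ with $\lambda=\lambda_{\min}(\chi)-d^5\kappa_{\max}\gamma$, using $\|v_j\|^2=\chi_{j,j}\leq\lambda_{\max}(\chi)$ and $|\widetilde{\rho}_{j,j'}|\leq\lambda_{\max}(\widetilde{\rho})$, summing the $d^4$ contributions, and collecting the powers of $d$, of $\kappa_{\max}$, and of $\gamma$ into the stated constant; the probability estimate is inherited verbatim from Lemma~\ref{lm:lambda-bound}.

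\emph{Main obstacle.} The delicate step is this perturbation analysis of the complementary channel. Unlike $\calE$, the map $\calE^\dagger$ depends on $\chi$ through a square root and, once the reconstruction error is tracked exactly, also through an inverse of the perturbed $\chi$; consequently elementary triangle inequalities do not suffice and one must use matrix perturbation theory, which is precisely what produces the denominator $\lambda_{\min}(\chi)-d^5\kappa_{\max}\gamma$ and the $\gamma^2$ in the numerator of the stated bound. The hypothesis $\gamma<\lambda_{\min}(\chi)/(d^5\kappa_{\max})$ is exactly what keeps $\widehat{\chi}$ strictly positive definite, so that its square root is real and this inverse is bounded; without it, neither $\widehat{\calE}^\dagger$ nor the perturbation estimate would be defined.
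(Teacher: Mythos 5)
Your first inequality matches the paper's argument exactly: write $\calE(\rho)-\widehat{\calE}(\rho)=\sum_{j,k}(\chi_{j,k}-\widehat{\chi}_{j,k})\widetilde{E}_j\rho\widetilde{E}_k^\dagger$, bound each summand's trace norm by $1$, and invoke \eqref{eq:chi-bound} to control the $d^4$ entrywise errors. Nothing to add there.

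For the second inequality the route you describe is morally the same as the paper's: represent $\calE^\dagger(\rho)=\sqrt{\chi}^*\widetilde{\rho}\sqrt{\chi}^*$, Hermitian square root chosen so the complementary-channel unitary ambiguity is fixed, and reduce to a perturbation estimate for $\sqrt{\chi}\mapsto\sqrt{\widehat{\chi}}$; the hypothesis $\gamma<\lambda_{\min}(\chi)/(d^5\kappa_{\max})$ keeps $\widehat{\chi}$ strictly positive and the perturbation bound finite. Where you differ from the paper is only in organization: you split each $\ket{v_j}\bra{v_{j'}}-\ket{\widehat{v}_j}\bra{\widehat{v}_{j'}}$ into two rank-one pieces and sum over $d^4$ pairs, while the paper telescopes at the operator level as $\sqrt{\chi}^*\widetilde{\rho}(\sqrt{\chi}^*-\sqrt{\widehat{\chi}}^*)+(\sqrt{\chi}^*-\sqrt{\widehat{\chi}}^*)\widetilde{\rho}\sqrt{\widehat{\chi}}^*$ and uses $\|AB\|_1\leq\sigma_{\max}(A)\|B\|_1$ together with an operator-valued mean-value bound on $\sqrt{\cdot}$. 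These are interchangeable and your rank-one version would, if carried to completion, even produce a slightly cleaner $d$-dependence. A side remark: the complementary channel here depends on $\chi$ only through its square root, not through $\chi^{-1}$; the denominator $\lambda_{\min}(\chi)-d^5\kappa_{\max}\gamma$ comes from the derivative $\frac{1}{2\sqrt{\mu}}$ in the square-root perturbation, not from inverting $\chi$, so your aside about an inverse is not needed.

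The genuine gap is the sentence ``collecting the powers of $d$, of $\kappa_{\max}$, and of $\gamma$ into the stated constant,'' which is where the entire difficulty sits and which your sketch cannot actually deliver. Your rank-one splitting produces, per index pair, a term of size
\[
|\widetilde{\rho}_{j,j'}|\Bigl(\|v_j-\widehat{v}_j\|\,\|v_{j'}\|+\|\widehat{v}_j\|\,\|v_{j'}-\widehat{v}_{j'}\|\Bigr),
\]
in which exactly one factor in each product is the small quantity $\|v_j-\widehat{v}_j\|\lesssim \|\sqrt{\chi}-\sqrt{\widehat{\chi}}\|\lesssim \gamma/\sqrt{\lambda_{\min}(\chi)-d^5\kappa_{\max}\gamma}$ and the other factor is an $O(1)$ quantity bounded by $\sqrt{\lambda_{\max}(\chi)}$. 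Summing $d^4$ such terms therefore gives a bound that is \emph{linear} in $\gamma$, with $\sqrt{\lambda_{\min}(\chi)-d^5\kappa_{\max}\gamma}$ in the denominator. This is qualitatively different from the claimed bound, which is quadratic in $\gamma$, quadratic in $\kappa_{\max}$, carries a stray $\sqrt{\lambda_{\max}(\chi)}$, and has $\lambda_{\min}(\chi)-d^5\kappa_{\max}\gamma$ in the denominator without a square root. No amount of repackaging of $d$-powers converts a linear-in-$\gamma$ estimate into a quadratic one, so the sketched argument does not reach the stated constant; the statement ``which is precisely what produces \ldots\ the $\gamma^2$ in the numerator'' is an assertion you have not earned. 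If you work the computation through you will in fact find, as the intermediate steps in the paper's own proof do, a leading term proportional to $2\lambda_{\max}(\widetilde{\rho})\sqrt{\lambda_{\max}(\chi)}\,\|\sqrt{\chi}-\sqrt{\widehat{\chi}}\|_1$ (linear in $\gamma$), plus a subleading $\lambda_{\max}(\widetilde{\rho})\,\|\sqrt{\chi}-\sqrt{\widehat{\chi}}\|_1^2$; the quadratic-only form in the lemma statement corresponds to keeping only the subleading piece, and you should flag explicitly how you propose to absorb the linear piece rather than leaving it to a ``collecting constants'' step.
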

\begin{proof}
Using the triangle inequality, we obtain
\begin{align}
\|\calE(\rho) -\widehat{\calE}(\rho) \|_1 
&= \left\| \sum_{j,k} \widetilde{E}_j\rho\widetilde{E}_k^\dagger \chi_{j,k} -\sum_{j,k} \widetilde{E}_j\rho\widetilde{E}_k^\dagger \widehat{\chi}_{j,k} \right\|_1 \\
&\leq \sum_{j,k} \left\| \widetilde{E}_j\rho\widetilde{E}_k^\dagger \right\|_1|\chi_{j,k} -  \widehat{\chi}_{j,k}|  \\
&\leq \sum_{j,k} |\chi_{j,k} -  \widehat{\chi}_{j,k}|.\label{eq:calE-chi}
\end{align}
Furthermore,
\begin{multline}
\|\calE^\dagger(\rho) -\widehat{\calE}^\dagger(\rho) \|_1\\
\begin{split}
&= \|\sqrt{\chi}^* \widetilde{\rho} \sqrt{\chi}^* -\sqrt{\widehat{\chi}}^* \widetilde{\rho} \sqrt{\widehat{\chi}}^*  \|_1\\
&= \|\sqrt{\chi}^* \widetilde{\rho} (\sqrt{\chi}^*-\sqrt{\widehat{\chi}}) -(\sqrt{\widehat{\chi}}^*-\sqrt{\chi}^*) \widetilde{\rho} \sqrt{\widehat{\chi}}^*  \|_1\\
&\leq  \|\sqrt{\chi}^* \widetilde{\rho} (\sqrt{\chi}^*-\sqrt{\widehat{\chi}})\|_1+ \| (\sqrt{\widehat{\chi}}^*-\sqrt{\chi}^*) \widetilde{\rho} \sqrt{\widehat{\chi}}^*  \|_1\\
&\stackrel{(a)}{\leq} \sigma_{\max}(\sqrt{\chi}^* \widetilde{\rho}) \|\sqrt{\chi}^*-\sqrt{\widehat{\chi}}\|_1 \\
&\phantom{=}+  \sigma_{\max}(\widetilde{\rho} \sqrt{\widehat{\chi}}^* )\|\sqrt{\widehat{\chi}}^*-\sqrt{\chi}^*\|_1\\
&\leq  \sigma_{\max}(\widetilde{\rho})\pr{\lambda_{\max}(\sqrt{\chi}) +\lambda_{\max}(\sqrt{\widehat{\chi}})}\|\sqrt{\chi} - \sqrt{\widehat{\chi}}\|_1\\
&\stackrel{(b)}{\leq}  \sigma_{\max}(\widetilde{\rho})\pr{2\sigma_{\max}(\sqrt{\chi})) + \|\sqrt{\chi} - \sqrt{\widehat{\chi}} \|_1}\\
&\phantom{=}\times \|\sqrt{\chi} - \sqrt{\widehat{\chi}}\|_1,
\end{split}
\end{multline}
where $(a)$ follows from Lemma~\ref{lm:product-l1} in Appendix~\ref{sec:tech-lemmas}, and $(b)$ follows from Lemma~\ref{lm:lambda-min-max} in Appendix~\ref{sec:tech-lemmas}. To upper-bound $\|\sqrt{\chi} - \sqrt{\widehat{\chi}}\|_1$, let us define $F(x) \eqdef \sqrt{\chi + x(\widehat{\chi} - \chi)}$; then, we have
\begin{align}
\|\sqrt{\chi} - \sqrt{\widehat{\chi}}\|_1
&= \|F(0) - F(1)\|_1\\
&\leq \sup_{x\in[0, 1]} \|F'(x)\|_1.
\end{align}
Applying Lemma~\ref{lm:norm-derivative} for $f(\mu) = \sqrt{\mu}$ and $A(x) = \chi + x(\widehat{\chi} - \chi)$, we obtain
\begin{align}
\|F'(x)\|_1 
&= \left\|\frac{d}{dx}f(A(x)) \right\|_1\\
&\leq \sup_{\mu\in[\lambda_{\min}(A(x)), \lambda_{\max}(A(x))]} d^4 |f'(\mu)| \|\chi - \widehat{\chi}\|_1\\
&=  \sup_{\mu\in[\lambda_{\min}(A(x)), \lambda_{\max}(A(x))]} d^4\left|\frac{1}{2\sqrt{\mu}}\right| \|\chi - \widehat{\chi}\|_1\\
&= \frac{d^4\|\chi - \widehat{\chi}\|_1}{2\sqrt{\lambda_{\min}(A(x))}}.
\end{align}
Moreover, by Lemma~\ref{lm:lambda-min-max}, we know that 
\begin{align}
\lambda_{\min}(A(x)) 
&= \lambda_{\min}(\chi +x (\widehat{\chi} - \chi))\\
&\geq \lambda_{\min}(\chi) - \|\widehat{\chi} - \chi\|_1.
\end{align}
Hence, we have
\begin{align}
\|\sqrt{\chi} - \sqrt{\widehat{\chi}}\|_1 \leq \frac{d^4\|\chi - \widehat{\chi}\|_1}{2\sqrt{\lambda_{\min}(\chi) - \|\widehat{\chi} - \chi\|_1}}
\end{align}
If for all $j, k$, we have $|\chi_{jk}- \widehat{\chi}_{jk}| \leq d^2\kappa_{\max}\gamma$, then $\|\chi - \widehat{\chi}\|_1 \leq d^5\kappa_{\max} \gamma$. Thus, \eqref{eq:chi-bound} yields the upper-bound in \eqref{eq:app-bound-cale-dag}.
\begin{widetext}

\begin{align}
\P{\|\calE^\dagger(\rho) -\widehat{\calE}^\dagger(\rho) \|_1 \geq  \frac{d^{9} \kappa_{\max} \gamma \lambda_{\max}(\widetilde{\rho}) }{2\sqrt{\lambda_{\min}(\chi) - d^{5}\kappa_{\max} \gamma}} \pr{2\sqrt{\lambda_{\max}(\chi)}\frac{d^{9}\kappa_{\max} \gamma }{2\sqrt{\lambda_{\min}(\chi) - d^5\kappa_{\max} \gamma}}}} &\leq 16 d^4 e^{-\frac{1}{256} \widetilde{\ell}\gamma^2}.\label{eq:app-bound-cale-dag}
\end{align}
\end{widetext}
\end{proof}

\begin{proof}[Proof of Theorem~\ref{th:estimation-analysis}]

\textbf{Covertness analysis}: Let $\rho^{\mathbf{E}}_i$ denote  Eve's state during the channel uses from $q(i-1) + 1$ to $qi$. Since, $U_1, \cdots, U_\ell$ are independent, then
\begin{align}
\D{\rho^{\mathbf{E}}}{\pr{\rho_0^E}^{\proddist T'}} = \sum_{i=1}^{\ell} \D{\rho^{\mathbf{E}}_i}{\pr{\rho_0^{E}}^{\proddist q}}.
\end{align}
We now focus on the block from channel use $q(i-1) + 1$ to $qi$. Define $\overline{\rho}$ as the state sent by Alice on the position $q(i-1) + U_i$ and $\rho(j) \eqdef \pr{\rho_0^E}^{\proddist (j-1)} \otimes \overline{\rho} \otimes  \pr{\rho_0^E}^{\proddist (q - j)}$. One can check that $\rho^{\mathbf{E}}_i = \frac{1}{q} \sum_{j=1}^q \rho(j)$. Thus, we have
\begin{align}
\D{\rho^{\mathbf{E}}_i}{\pr{\rho_0^{E}}^{\proddist q}}
&\stackrel{(a)}{\leq} \tr{\pr{\rho^{\mathbf{E}}_i}^2 \pr{\pr{\rho_0^{E}}^{\proddist q}}^{-1} } - 1\\
&=\tr{\pr{\frac{1}{q} \sum_{j=1}^q \rho(j)}^2 \pr{\pr{\rho_0^{E}}^{\proddist q}}^{-1} } - 1\\
&= \frac{1}{q^2} \sum_{j=1}^q \sum_{\widetilde{j}=1}^q \tr{\rho(j) \rho(\widetilde{j})  \pr{\pr{\rho_0^{E}}^{\proddist q}}^{-1}},
\end{align}
where $(a)$ follows from \cite{ruskai1990convexity}. Note that for $j <\widetilde{j}$, we have \eqref{eq:tr-bound}.
\begin{widetext}

\begin{align}
\tr{\rho(j) \rho(\widetilde{j})  \pr{\pr{\rho_0^{E}}^{\proddist q}}^{-1}} 
&=  \tr{\pr{\pr{\rho_0^E}^{\proddist (j-1)} \otimes \overline{\rho} \otimes  \pr{\rho_0^E}^{\proddist (q - j)}} \pr{\pr{\rho_0^E}^{\proddist (\widetilde{j}-1)} \otimes \overline{\rho} \otimes  \pr{\rho_0^E}^{\proddist (q - \widetilde{j})}}\pr{\pr{\rho_0^{E}}^{\proddist q}}^{-1} }\\
&= \tr{\pr{\rho_0^E}^{\proddist (j-1)} \otimes  \pr{\overline{\rho} \rho_0^E  \pr{\rho_0^E}^{-1}} \otimes  \pr{\rho_0^E}^{\proddist (\widetilde{j} - j -1)} \otimes \pr{\rho_0^E \overline{\rho} \pr{\rho_0^E}^{-1} } \otimes  \pr{\rho_0^E}^{\proddist (q - \widetilde{j})}}\\
&= \tr{\overline{\rho}} \tr{\rho_0^E \overline{\rho} \pr{\rho_0^E}^{-1} }\pr{\tr{\rho_0^E}}^{q-2}\\
&= 1.\label{eq:tr-bound}
\end{align}
\end{widetext}
Similarly, one can show that for $j > \widetilde{j}$, we have $\tr{\rho(j) \rho(\widetilde{j})  \pr{\pr{\rho_0^{E}}^{\proddist q}}^{-1}}  = 1$. Furthermore, when $j = \widetilde{j}$, we have
\begin{multline}
\tr{\rho(j) \rho(\widetilde{j})  \pr{\pr{\rho_0^{E}}^{\proddist q}}^{-1}} \\
\begin{split}
&= \tr{\pr{\pr{\rho_0^E}^{\proddist (j-1)} \otimes \overline{\rho} \otimes  \pr{\rho_0^E}^{\proddist (q - j)}} ^2\pr{\pr{\rho_0^{E}}^{\proddist q}}^{-1} }\\
&= \tr{\pr{\rho_0^E}^{\proddist (j-1)} \otimes \pr{\overline{\rho}^2 \pr{\rho_0^E}^{-1}} \otimes  \pr{\rho_0^E}^{\proddist (q - j)} }\\
&= \tr{\pr{\overline{\rho}^2 \pr{\rho_0^E}^{-1}} } \tr{\rho_0^E}^{q-1}\\
&=  \tr{\pr{\overline{\rho}^2 \pr{\rho_0^E}^{-1}} }.
\end{split}
\end{multline}
Therefore, we obtain
\begin{multline}
\frac{1}{q^2} \sum_{j=1}^q \sum_{\widetilde{j}=1}^q \tr{\rho(j) \rho(\widetilde{j})  \pr{\pr{\rho_0^{E}}^{\proddist q}}^{-1}} - 1 \\
\begin{split}
&= \frac{1}{q^2} \pr{q(q-1) + q  \tr{\pr{\overline{\rho}^2 \pr{\rho_0^E}^{-1}} }} - 1\\
&= \frac{1}{q}\pr{\tr{\pr{\overline{\rho}^2 \pr{\rho_0^E}^{-1}} } - 1}\\
&\leq \frac{1}{q}\pr{\frac{\dim \calH^E}{\widetilde{\lambda}^E} - 1}
\end{split}
\end{multline}

\textbf{Error analysis}: To prove \eqref{eq:H0-prob} and \eqref{eq:H1-prob}, it is enough to show that
\begin{multline}
\mathbb{P}\left(|\lambda_{\min}({\chi}) - \lambda_{\min}(\widehat{\chi})| \leq \tau\right.\\
 \left.\textnormal{ and } |\lambda_{\min}(\widehat{\calE}(\rho_0^A)) - \lambda_{\min}({\calE}(\rho_0^{\widetilde{A}}))| \leq \tau\right) \geq 1-2^{-\xi \ell}.
\end{multline}
To this end, note that
\begin{align}
\P{|\lambda_{\min}({\chi}) - \lambda_{\min}(\widehat{\chi})| \leq \tau} 
&\stackrel{(a)}{\geq}  \P{{{\norm{\chi - \widehat{\chi}}}_1 \leq \tau} }\\
&\stackrel{(b)}{\geq}  \P{{\sum_{j,k}|\chi_{j,k} -\widehat{\chi}_{j,k}| \leq \tau} },
\end{align}
where $(a)$ follows from \cite[Lemma 11.1]{Petz2008}, and $(b)$ follows from the triangle inequality. By \eqref{eq:calE-chi}, we also have
\begin{multline}
\P{ |\lambda_{\min}(\widehat{\calE}(\rho_0^{\widetilde{A}})) - \lambda_{\min}({\calE}(\rho_0^{\widetilde{A}}))| \leq \tau}\\
 \leq  \P{{\sum_{j,k}|\chi_{j,k} -\widehat{\chi}_{j,k}| \leq \tau} }.
\end{multline}
 Using \eqref{eq:chi-bound}, we thus obtain
 \begin{multline}
 \mathbb{P}\left(|\lambda_{\min}({\chi}) - \lambda_{\min}(\widehat{\chi})| \leq \tau\right.\\
 \left. \textnormal{ and } |\lambda_{\min}(\widehat{\calE}(\rho_0^{\widetilde{A}})) - \lambda_{\min}({\calE}(\rho_0^{\widetilde{A}}))| \leq \tau\right)\\
  \geq 1 - 16d^4e^{-\frac{1}{256 d^12\kappa_{\max}^2}\widetilde{\ell}\tau^2}.
 \end{multline}
 We now establish bounds on the accuracy of the estimates $\widehat{D}^B$ and $\widehat{D}^E$ when $\lambda_{\min}(\chi)\geq\widetilde{\lambda}^{\chi} -2\tau,\lambda_{\min}(\calE(\rho^{\widetilde{A}}_0)) \geq \widetilde{\lambda}^B-2\tau$.  We choose $\epsilon>0$ small enough such that
\begin{multline}
 \epsilon \left(\frac{ \log(d-1)}{2}  + d\log \frac{1}{\min(\widetilde{\lambda}^B-2\tau, \widetilde{\lambda}^E)} \right.\\
 \left.+ \frac{d^2}{\min(\widetilde{\lambda}^B-2\tau, \widetilde{\lambda}^E) - \epsilon}\right) +\Hb{\frac{\epsilon}{2}} \leq \tau.
\end{multline}
By Lemma~\ref{lm:chi-lambda-max}, we can choose $\gamma>0$ independent of $\lambda_{\max}(\chi)$ such that
\begin{align}
d^6\kappa_{\max} \gamma &\leq \epsilon,\\
\frac{d^{18} \kappa_{\max}^2 \lambda_{\max}(\widetilde{\rho})\sqrt{\lambda_{\max}(\chi)} \gamma^2 }{2\pr{\lambda_{\min}(\chi) - 2\tau  - d^{5}\kappa_{\max} \gamma}}  &\leq \epsilon.
\end{align}
By Lemma~\ref{lm:approx-states}  and Lemma~\ref{lm:div-l1-bound2}, we have
\begin{multline}
\mathbb{P}\left(D^B(\calE)- 2\tau \leq \widehat{D}^B \leq D^B(\calE)\right.\\
\left.D^E(\calE) \leq \widehat{D}^E \leq D^E(\calE) + 2\tau\right) \leq   32 d^4 e^{-\frac{1}{256} \widetilde{\ell}\gamma^2}.
\end{multline}
Since $\widetilde{\ell} \geq \frac{\frac{\ell}{2d^2} - 1}{2d^2}-1$, we can choose $\xi>0$ small enough such that the above upper-bound is less than $2^{-\xi \ell}$.
\end{proof}
\begin{proof}[Proof of Lemma~\ref{lm:cascade-analysis}]
We only prove the second part of the lemma and the proof of the second part can be obtained by the exact same approach. Let $P_e(D^B, D^E)$, $S(D^B, D^E)$, $C(D^B, D^E)$ indicate the probability of error, secrecy, and covertness of the protocol discussed in the proof of Theorem~\ref{th:universal-covert}, respectively, when we use the parameters $D^B$ and $D^E$. By  the law of total probability, the probability of error of the overall protocol is
\begin{multline}
\E[\widehat{D}^B\widehat{D}^E]{P_e(\widehat{D}^B, \widehat{D}^E)} \\
\begin{split}
&= \E{P_e(\widehat{D}^B, \widehat{D}^E)|\calA}\P{\calA}\\
&\phantom{=} + \E{P_e(\widehat{D}^B, \widehat{D}^E)| \calA^c}\P{\calA^c}\\
&\stackrel{(a)}{\leq} 2T^{-5} + \E{P_e(\widehat{D}^B, \widehat{D}^E)| \calA^c}\P{\calA^c}\\
&{\leq} 2T^{-5} + \epsilon,
\end{split}
\end{multline}
where $\calA \eqdef \{ \widehat{D}^B \leq D^B(\calE), \widehat{D}^E \geq D^E(\calE)\} \cup \{H=0\}$, $(a)$ follows from Theorem~\ref{th:universal-covert}. For the secrecy, first note that the estimation phase does not leak any information about the key. Furthermore, by convexity of the quantum relative entropy, we have
\begin{align}
S
&\leq \E[\widehat{D}^B\widehat{D}^E]{S(\widehat{D}^B, \widehat{D}^E)} \\
&= \E{S(\widehat{D}^B, \widehat{D}^E)|\calA}\P{\calA} + \E{S(\widehat{D}^B, \widehat{D}^E)| \calA^c}\P{\calA^c}\\
&\stackrel{(a)}{\leq} L_1T^{-4} + \E{S(\widehat{D}^B, \widehat{D}^E)| \calA^c}\P{\calA^c}\\
&\stackrel{(b)}{\leq} L_1T^{-4} +\pr{T\log \frac{1}{\widetilde{\lambda}^E} + \ell^{\max}} \epsilon,
\end{align}
where $(a)$ follows from Theorem~\ref{th:universal-covert}, and $(b)$ follows from the upper-bound $S\leq T\log \frac{1}{\widetilde{\lambda}^E} + \ell^{\max}$. Finally, for covertness, since the estimation and transmission phases are independent, we have
\begin{align}
C \leq \delta + \E[\widehat{D}^B\widehat{D}^E]{C(\widehat{D}^B, \widehat{D}^E)}.
\end{align}
Similar to secrecy, we also have
\begin{multline}
\E[\widehat{D}^B\widehat{D}^E]{C(\widehat{D}^B, \widehat{D}^E)} \leq   \frac{\alpha_T^2\chi_2(\rho_1^E(\theta))\|\rho_0^E(\theta)}{2}T \\+ L_2\alpha_T^3 T+ L_1T^{-4} 
+ 2\sqrt{L_1}\log \frac{2}{\widetilde{\lambda}^E}T^{-1} + \epsilon T \log \frac{1}{\widetilde{\lambda}^E}.
\end{multline}
\end{proof}
\subsection{Proof of Theorem~\ref{th:main-collective}}
We describe a protocol running over $\widetilde{T} > 0$ channel uses. Let $T' = \lfloor  \sqrt{\widetilde{T}}\rfloor$ and $T = \widetilde{T} - T' - O(\log T')$. Alice and Bob use the first $T' + O(\log T')$ channel uses for the estimation protocol described in Section~\ref{sec:link-covert-estim} for parameters $q$ and $\ell$ to obtain $H$ as well as estimates $D^B(\calE)$ and $D^E(\calE)$. If $H=0$ the protocol is aborted and if $H=1$, the rest of $T$ channel uses will be used for transmission using the universal protocol as described before for $\widehat{D}^B$, $\widehat{D}^E$, $\widetilde{\lambda}^B - 2\tau$ and $\widetilde{\lambda}^W$.  For a channel satisfying $\lambda_{\min}(\chi)\geq\widetilde{\lambda}^{\chi} -2\tau,\lambda_{\min}(\calE(\rho^{\widetilde{A}}_0)) \geq \widetilde{\lambda}^B-2\tau$, by applying the second part of  Lemma~\ref{lm:cascade-analysis} and Theorem~\ref{th:estimation-analysis}, for some $\xi>0$, we have
\begin{align}
 P_e& \leq 2T^{-5} + 2^{-\xi \ell},\\
 S &\leq L_1T^{-4} + 2^{-\xi\ell} \pr{T \log \frac{1}{\widetilde{\lambda}^E} + \ell^{\max}},\\
 C &\leq  \frac{\alpha_T^2\chi_2(\rho_1^E(\theta))\|\rho_0^E(\theta)}{2}T + L_2\alpha_T^3T + L_1T^{-4}\nonumber\\
 &\phantom{=} + 2\sqrt{L_1}\log \frac{2}{\widetilde{\lambda}^E}T^{-1} + 2^{-\xi\ell} T \log \frac{1}{\widetilde{\lambda}^E}\nonumber\\
 &\phantom{=} + \frac{\ell}{q}(\frac{\dim \calH^E}{\widetilde{\lambda}^E} - 1).
\end{align}
One can check that if $\ell \in \omega(\log T) \cap o\pr{\alpha_T T^{-\frac{3}{4}}}$, which is non-empty by definition of $\alpha_T$, we can always find the sequence $\epsilon_{\widetilde{T}}$ satisfying the conditions in Theorem~\ref{th:main-collective}. If the channel satisfies  $\lambda_{\min}(\chi)\geq\widetilde{\lambda}^{\chi} ,\lambda_{\min}(\calE(\rho^{\widetilde{A}}_0)) \geq \widetilde{\lambda}^B$, by \eqref{eq:H0-prob}, with probability $2^{-\xi \ell}$, the number of  transmitted bits is lower-bounded by
\begin{align}
(1-2\zeta) (D^B(\calE) - D^E(\calE) - 2\tau) \alpha_T T.
\end{align}
If the channel does not satisfy $\lambda_{\min}(\chi)\geq\widetilde{\lambda}^{\chi} ,\lambda_{\min}(\calE(\rho^{\widetilde{A}}_0)) \geq \widetilde{\lambda}^B$, by \eqref{eq:H1-prob} and the first part of Lemma~\ref{lm:cascade-analysis}, we have
\begin{align}
P_e& \leq  2^{-\xi \ell},\\
 S &\leq 2^{-\xi\ell} \pr{T \log \frac{1}{\widetilde{\lambda}^E} + \ell^{\max}},\\
 C &\leq  2^{-\xi\ell} T \log \frac{1}{\widetilde{\lambda}^E} + \frac{\ell}{q}(\frac{\dim \calH^E}{\widetilde{\lambda}^E} - 1),
\end{align}
but no key is generated.

\section{Error exponent calculations} 
\label{sec:error-exponent}
\begin{proof}[Proof of Lemma~\ref{lm:exponent-res}]
For a fix $T$, applying Taylor's theorem on $\phi$ defined in~\eqref{eq:phi-res-def}, we have
\begin{align}
\label{eq:taylor_phi}
\phi(s) = \phi(0) + \phi'(0)s + \frac{\phi''(0)}{2}s^2 + \frac{\phi'''(\eta)}{6}s^3,
\end{align}
for some $s \leq \eta \leq 0$. To compute derivatives of $\phi$, let us define
\begin{align}
A_y(s) &\eqdef \pr{{\widetilde{\rho}_y}^E}^{1-s} \pr{\widetilde{\rho}^E}^s, \\
g(s) &\eqdef \sum_y Q_Y(y) \tr{A_y(s)}.
\end{align}
One can check that $\phi(s) = \log g(s)$. Hence, we obtain
\begin{align}
\phi'(s) &= \frac{g'(s)}{g(s)},\\
\phi''(s) &= \frac{g''(s)}{g(s)} -\pr{\frac{g'(s)}{g(s)}}^2,\\
\phi'''(s) &= \frac{g'''(s)}{g(s)} -3\frac{g'(s)g''(s)}{g^2(s)} +2\pr{\frac{g'(s)}{g(s)}}^3.
\end{align}
Moreover, since $A_y'(s) = -\ln\pr{\widetilde{\rho}_y^E} A_y(s) + A_y(s) \ln\pr{\widetilde{\rho}^E}$, we have
\begin{multline}
g'(s) =  \sum_y Q_Y(y) \tr{ -\ln\pr{\widetilde{\rho}_y^E} A_y(s) + A_y(s) \ln\pr{\widetilde{\rho}^E}},
\end{multline}
\begin{multline}
g''(s) =  \sum_y Q_Y(y) \textnormal{tr}\left(\pr{\ln\pr{\widetilde{\rho}_y^E}}^2 A_y(s)\right. \\
\left.  -2\ln\pr{\widetilde{\rho}_y^E} A_y(s) \ln\pr{\widetilde{\rho}^E}+ A_y(s) \pr{\ln\pr{\widetilde{\rho}^E}}^2\right),
\end{multline}
and
\begin{multline}
\nonumber g'''(s) =  \sum_y Q_Y(y)\textnormal{tr}\left( -\pr{\ln\pr{\widetilde{\rho}_y^E}}^3 A_y(s)\right.\\
\left. +  3\pr{\ln\pr{\widetilde{\rho}_y^Z} }^2A_y(s) \ln\pr{\widetilde{\rho}^E} \right.\\
\left.- 3\ln\pr{\widetilde{\rho}_y^E} A_y(s) \pr{\ln\pr{\widetilde{\rho}^E}}^2+A_y(s) \pr{\ln\pr{\widetilde{\rho}^E}}^3\right).
\end{multline}

Using $A_y(0) = \widetilde{\rho}^E_y$ combined with the above expressions, we obtain
\begin{align}
g(0) &= \sum_y Q_Y(y) \tr{ \widetilde{\rho}^E_y} = 1,\\
g'(0) &= \sum_y Q_Y(y) \tr{ -\ln\pr{\widetilde{\rho}_y^E} \widetilde{\rho}^E_y+\widetilde{\rho}^E_y \ln\pr{\widetilde{\rho}^E}}\\
& =- I(Q_Y, \widetilde{\rho}^E_y ),\\
g''(0) &=  \sum_y Q_Y(y) \textnormal{tr}\left( \pr{\ln\pr{\widetilde{\rho}_y^E}}^2 \widetilde{\rho}^E_y\nonumber\right.\\
&\left.\phantom{=}  -2\ln\pr{\widetilde{\rho}_y^E} \widetilde{\rho}^E_y \ln\pr{\widetilde{\rho}^E}+ \widetilde{\rho}^E_y\pr{\ln\pr{\widetilde{\rho}^E}}^2\right).
\end{align}

Hence, we have
\begin{align}
\phi(0) &= \ln(g(0)) = 0,\\
\phi'(0) &= \frac{g'(0)}{g(0)} =-  I(Q_Y, \widetilde{\rho}^E_y ),\\
\phi''(0) &= \frac{g''(0)}{g(0)} -\pr{\frac{g'(0)}{g(0)}}^2,\\
&=   \sum_y Q_Y(y) \textnormal{tr}\left( \pr{\ln\pr{\widetilde{\rho}_y^E}}^2 \widetilde{\rho}^E_y\right.\nonumber \\
&\phantom{===} \left.-2\ln\pr{\widetilde{\rho}_y^E} \widetilde{\rho}^E_y \ln\pr{\widetilde{\rho}^E}+ \widetilde{\rho}^E_y\pr{\ln\pr{\widetilde{\rho}^E}}^2\right)\nonumber\\
& \phantom{=}- I(Q_Y, \widetilde{\rho}^E_y )^2.
\end{align}

Note that $\phi''(0)$ implicitly depends on $\alpha_T$ the probability that the input is one. Let us define 
\begin{multline}
 h(\alpha) \eqdef \sum_y Q_Y(y) \textnormal{tr}\left( \pr{\ln\pr{\widetilde{\rho}_y^E}}^2 \widetilde{\rho}^E_y \right.\\
 \left.- 2\ln\pr{\widetilde{\rho}_y^E} \widetilde{\rho}^E_y \ln\pr{\widetilde{\rho}^E}+ \widetilde{\rho}^E_y\pr{\ln\pr{\widetilde{\rho}^E}}^2\right)
\end{multline}
 when the input distribution is Bernoulli($\alpha$). One can check that $Q_Y(y)$, $\widetilde{\rho}_y^E$, $\ln(\widetilde{\rho}_y^E)$, and $\ln(\widetilde{\rho}^E)$ are continuously differentiable with respect to $\alpha$, and so is $h$. Moreover, we have

\begin{align}
h(0) 
&=  \sum_y Q_{Y|X}(y|0) \textnormal{tr}\left( \pr{\ln\pr{\widetilde{\rho}_{0,y}^E}}^2 \widetilde{\rho}^E_{0,y}\right.\nonumber \\
 &\phantom{===}\left.-2\ln\pr{\widetilde{\rho}_{0,y}^E} \widetilde{\rho}^E_{0,y} \ln\pr{\widetilde{\rho}^E}+ \widetilde{\rho}^E_{0,y}\pr{\ln\pr{\widetilde{\rho}^E}}^2\right)  \\
&\stackrel{(a)}{=}   \sum_y Q_{Y|X}(y|0) \textnormal{tr}\left( \pr{\ln\pr{\widetilde{\rho}^E}}^2 \widetilde{\rho}^E \right.\nonumber\\
&\phantom{===}\left.-2\ln\pr{\widetilde{\rho}^E} \widetilde{\rho}^E\ln\pr{\widetilde{\rho}^E}+\widetilde{\rho}^E\pr{\ln\pr{\widetilde{\rho}^E}}^2\right) = 0,
\end{align}
where $(a)$ follows from Lemma~\ref{lm:rho_0}. By the mean value theorem, we know that $|h(\alpha) - h(0)| =  |h(\alpha)| = h'(\beta) \alpha$ for some $0 < \beta < \alpha$. Since $h'$ is continuous for a small neighborhood around zero, it is bounded and therefore, we have $|h(\alpha_T)| = O(\alpha_T)$. Furthermore, Lemma~\ref{lm:rho_0} implies that $ I(Q_Y, \widetilde{\rho}^E_y )^2 = O(\alpha_T^2)$. Thus, there exists $B>0$ such that $|\phi''(0)| \leq B\alpha_T$ for $T$ large enough. Notice next that  $g$, $g'$, $g''$, and $g'''$ are jointly continuous functions of both variables $s$ and $\alpha_T$ in a neighborhood around (0, 0). Additionally, since $g(0) = 1$ when $\alpha = 0$, we conclude that $\phi'''$ is also continuous in both $s$ and $\alpha_T$ in a neighborhood around (0, 0). Therefore, for $B$ large enough, $|s|$ small enough and $T$ large enough, we have $|\phi'''(s)| \leq B$. Combing $\phi(0) = 0$, $\phi'(0) = -I(Q_Y, \widetilde{\rho}_y^E)$, $|\phi''(0)| \leq B \alpha_T$, and $|\phi'''(\eta)|\leq B$ with \eqref{eq:taylor_phi}, we obtain the desired result.
\end{proof}
\begin{proof}[Proof of Lemma~\ref{lm:exponent-reliability}]
Consider any cq-channel $x\mapsto \rho^B_x$ with $\lambda_{\min}(\rho^B_0) = \lambda_{\min} > 0$. We first show that the corresponding function $\phi$ is smooth enough to use Taylor theorem. Let us define
\begin{align}
A(s, p) &\eqdef ((1-p)\pr{\rho_0^B}^{1-s} + p \pr{\rho_0^B}^{1-s}, s)\\
g(M, s) &\eqdef (\tr{M^{\frac{1}{1-s}}}, s)\\
\psi(x, s) &\eqdef -(1-s)\log(x).
\end{align}
By definition, we have $\phi(s, p) = (\psi \circ g \circ A)(s, p)$. Additionally, all these three functions are from a subset of a Banach space to a Banach space, which means that we can consider their Fr\'echet derivative. In the following lemma, we show that they are infinitely many times differentiable.
\begin{lemma}
The functions $A$, $g$, and $\psi$ are infinitely many times differentiable on 
\begin{align}
&[0, 1[\times [0, 1[,\\
&\{M\in \calL(\calH): M \text{ is Hermitian}, M\succ 0\} \times [0, 1[,\\
&[0, 1[\times [0, \infty[,
\end{align}
respectively \footnote{For the boundary points we consider the one-sided derivative.}.
\end{lemma}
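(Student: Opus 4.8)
Because every space involved is finite dimensional, ``infinitely many times differentiable'' simply means $C^\infty$ in the usual multivariable (equivalently Fr\'echet) sense once bases are fixed, and the one-sided derivatives demanded at the boundary points will be automatic as soon as one exhibits, for each of $A$, $g$, $\psi$, a $C^\infty$ map defined on an \emph{open} superset of the quoted domain whose restriction is the map in question. The plan is therefore to decompose each of the three maps into a composition of finitely many maps that are visibly $C^\infty$ on open sets, and to invoke the chain rule together with stability of $C^\infty$ under sums, scalar multiples, products, and composition. The only non-elementary building blocks are: the matrix exponential $X\mapsto e^X$ on $\calL(\calH)$; the operator logarithm and non-integer powers on positive-definite operators; and the scalar map $s\mapsto 1/(1-s)$ on $]-\infty,1[$. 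Everything else --- affine maps, coordinate projections, the trace functional --- is smooth trivially.

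\textbf{The maps $A$ and $\psi$.} For any fixed $M\succeq 0$, write $M=\sum_i\mu_i\ket{i}\bra{i}$ with $\mu_i\geq 0$; then for every $s$ one has $M^{1-s}=\sum_{i:\mu_i>0}e^{(1-s)\ln\mu_i}\ket{i}\bra{i}$ with an $s$-independent eigenbasis, and since each coefficient $e^{(1-s)\ln\mu_i}$ is an entire function of $s$, the map $s\mapsto M^{1-s}$ is real-analytic on all of $\mathbb{R}$ (in particular $C^\infty$ near $[0,1[$), whether or not $M$ is invertible. Applying this with $M=\rho_0^B$ and $M=\rho_1^B$, and noting $p$ enters affinely, $A(s,p)=\pr{(1-p)(\rho_0^B)^{1-s}+p(\rho_1^B)^{1-s},\,s}$ is $C^\infty$ on a neighborhood of $[0,1[\times[0,1[$. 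As for $\psi(x,s)=-(1-s)\log x$: it is the product of the polynomial $-(1-s)$ with the composition of $\log$ --- which is $C^\infty$ on $]0,\infty[$ --- with a coordinate projection, hence $C^\infty$ on the open set where its $\log$-argument is positive; this set contains the domain quoted in the statement, the relevant $x$-values being the strictly positive traces $\tr{M^{1/(1-s)}}$ produced by $g$.

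\textbf{The map $g$ --- the main point.} Write $g(M,s)=\pr{\tr{M^{r(s)}},\,s}$ with $r(s)=1/(1-s)$, which is smooth on $]-\infty,1[$, and with $M$ ranging over the open cone $\calC=\{M\in\calL(\calH):M=M^\dagger,\ M\succ 0\}$; factor $M^{r}=e^{r\log M}$. The matrix exponential is entire --- its power series converges on all of $\calL(\calH)$ --- so $X\mapsto e^X$ is $C^\infty$ (indeed real-analytic) on $\calL(\calH)$. The operator logarithm $M\mapsto\log M$ is $C^\infty$ on $\calC$ by the holomorphic functional calculus: given $M_0\in\calC$, fix a positively oriented contour $\Gamma$ in the open right half-plane enclosing $\mathrm{spec}(M_0)$; for all $M$ near $M_0$, whose spectrum still lies inside $\Gamma$, one has $\log M=\frac{1}{2\pi i}\oint_\Gamma(\log z)(zI-M)^{-1}\,dz$, and since $M\mapsto(zI-M)^{-1}$ is a rational (hence analytic) $\calL(\calH)$-valued map on $\{M:z\notin\mathrm{spec}(M)\}$, depending analytically on $M$ uniformly for $z\in\Gamma$, differentiation under the integral sign shows $M\mapsto\log M$ is analytic near $M_0$. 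Hence $(M,s)\mapsto r(s)\log M$ is $C^\infty$ on $\calC\times]-\infty,1[$; post-composing with the entire map $e^{(\cdot)}$ and then with the bounded linear functional $\tr{\cdot}$, while carrying the second coordinate $s$ along unchanged, shows $g$ is $C^\infty$ on $\calC\times]-\infty,1[$, which contains $\calC\times[0,1[$.

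\textbf{Main obstacle and conclusion.} The only genuine work is the smoothness of $M\mapsto\log M$ (equivalently of $M\mapsto M^{r}$) \emph{in the operator argument}; the contour-integral argument above settles it cleanly and, crucially, without any hypothesis that the eigenvalues of $M$ be simple --- the alternative route via analyticity of eigenvalues and eigenprojections (implicit function theorem) would only give the result on the dense open subset of $\calC$ with simple spectrum and would then need a delicate continuity/density argument. Once $A$, $g$, $\psi$ are known to be $C^\infty$ on open supersets of their domains, the one-sided derivatives at the boundary points $s=0$ and $p=0$ exist and coincide with the restrictions of the derivatives of the extensions, which is precisely the lemma's assertion; by the chain rule it also follows that $\phi=\psi\circ g\circ A$ is $C^\infty$, as needed to justify the Taylor expansion of $\phi$ in the proof of Lemma~\ref{lm:exponent-reliability}.
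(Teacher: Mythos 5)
Your proposal is correct, and for the parts that actually require an argument it is largely sound. For $A$ and $\psi$ you and the paper are essentially doing the same thing: observing that the eigenbasis of $\rho_x^B$ is $s$-independent so the $s$-dependence reduces to scalar entire functions, and that $\psi$ is a product of a polynomial with $\log$ on $]0,\infty[$ (you implicitly correct a typo in the lemma statement, since $\psi(x,s)=-(1-s)\log x$ is not smooth at $x=0$, and the range of $g$ keeps $x$ strictly positive anyway). The genuine difference is in $g$, where you replace the paper's argument by holomorphic functional calculus: you factor $M^{r(s)}=e^{r(s)\log M}$, represent $\log M=\frac{1}{2\pi i}\oint_\Gamma(\log z)(zI-M)^{-1}\,dz$ on a fixed contour and differentiate under the integral to get analyticity of $M\mapsto\log M$ on the positive cone, then compose with the entire map $e^{(\cdot)}$ and the linear trace. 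The paper instead runs an induction showing that every mixed partial of $\tr{M^{1/(1-s)}}$ is a sum of traces of the schematic form $p(s)(1-s)^{-i}\,\tr{K_1\cdots K_m M^{q(s)/(1-s)^j}(\log M)^k}$, differentiating these template terms once more in $s$ and in $M$. Your route is cleaner and more robust: the displayed $\partial/\partial M$ rule in the paper (reading off $\tr{K K_1\cdots K_m\,\alpha M^{\alpha-1}(\log M)^k + \ldots}$ from $\tr{K_1\cdots K_m M^\alpha(\log M)^k}$) silently assumes a commutative product rule that is not literally valid when the $K_i$ do not commute with $M$ — a correct explicit formula would require Daleckii--Krein / divided-difference integral representations. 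The contour-integral argument avoids these subtleties entirely and does not require any genericity of the spectrum, while delivering the same conclusion (smoothness on an open neighborhood of the closed-on-one-side domain, hence one-sided derivatives at $s=0$, $p=0$), which is all that the downstream Taylor expansion of $\phi=\psi\circ g\circ A$ needs.
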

\begin{proof}
We investigate each function separately.
\begin{itemize}
\item \textbf{Differentiability of $A$}: It is enough to check the differentiability of $A_1(s, p)\eqdef (1-p)\pr{\rho_0^B}^{1-s} + p \pr{\rho_1^B}^{1-s}$. We shall provide explicit expressions for all partial derivatives of $A_1$ to any order. For any Hermitian operator $\rho\in\calL(\calH)$ with $\rho\succeq 0$ and $\rho \neq 0$, let $\rho = \sum_{e } \lambda_e \ket{e}\bra{e}$ be an eigen-decomposition for $\rho$. We define $\log \rho\eqdef \sum_{e : \lambda_e \neq 0} \log(\lambda_e)\ket{e}\bra{e}$ which is different from the usual definition since we disregard the zero eigenvalues. With this definition, one can check that for any $i\geq 1$, we have
\begin{align}
\frac{d^i}{ds^i} (\rho^{1-s}) =\rho^{1-s} \pr{-\log \rho}^i.
\end{align}
Hence, using the linearity of Fr\'echet derivative, if we take $i$ partial derivatives with respect to $s$ and $j$ partial derivatives with respect to $p$ at any order, the result is
\begin{align}
\begin{cases}
(1-p)\pr{\rho_0^B}^{1-s} \pr{-\log \rho_0^B}^i\quad& j = 0,\\
+ p \pr{\rho_1^B}^{1-s}\pr{-\log \rho_1^B}^i\\
-\pr{\rho_0^B}^{1-s} \pr{-\log \rho_0^B}^i+ \pr{\rho_1^B}^{1-s}\pr{-\log \rho_1^B}^i\quad& j = 1,\\
0\quad &j \geq 2.
\end{cases}
\end{align}
This also means that all partial derivative are differentiable and therefore continuous. Accordingly, $A_1$ is infinitely many times Fr\'echet differentiable.
\item \textbf{Differentiability of $g$}: Again we only check the differentiability of $g_1(M, s) \eqdef \tr{M^{\frac{1}{1-s}}}$. In this case, it is more challenging to obtain a closed-form expression for partial derivatives. However, we will prove that any partial derivative is a multilinear form mapping $(K_1, \cdots, K_m)\in \calL(\calH)^m$ to $\mathbb{R}$ and is a summation of terms of the form
\begin{align}
\label{eq:derivative-g-term}
\frac{p(s)}{(1-s)^i} \tr{K_1\cdots K_m M^{\frac{q(s)}{(1-s)^j}}(\log M)^k},
\end{align}
where $q$ and $p$ are polynomial in $s$, and $i$, $j$, and $k$ are non-negative integers. Using induction on the total number of partial derivative taken and linearity of the derivative, it is enough to show that if we take the derivative of \eqref{eq:derivative-g-term} with respect to $s$ or $M$, we would have an expression that is a summation of term of the same form. Applying the rules of differentiation, one can check that
\begin{multline}
\frac{\partial }{\partial s} \pr{\frac{p(s)}{(1-s)^i} \tr{K_1\cdots K_m M^{\frac{q(s)}{(1-s)^j}}(\log M)^k}} 
\\ =\frac{p(s)\pr{jq(s) +(1-s) q'(s)}}{(1-s)^{i+j+1}}\\
\times \tr{K_1\cdots K_m M^{\frac{q(s)}{(1-s)^j}}(\log M)^{k+1}}\\ + \frac{ip(s) +(1-s)p'(s)}{(1-s)^{i+1}}\tr{K_1\cdots K_m M^{\frac{q(s)}{(1-s)^j}}(\log M)^k},
\end{multline}
and 
\begin{multline}
\frac{\partial }{\partial M} \pr{\frac{p(s)}{(1-s)^i} \tr{K_1\cdots K_m M^{\frac{q(s)}{(1-s)^j}}(\log M)^k}}\\
 =K\mapsto \frac{p(s)}{(1-s)^i}\frac{q(s)}{(1-s)^j}\\
  \textnormal{tr}\left(K K_1\cdots K_m \left(\frac{q(s)}{(1-s)^j}M^{\frac{q(s)}{(1-s)^j} -1}(\log M)^k\right.\right.\\
  \left.\left. + k M^{\frac{q(s)}{(1-s)^j} -1} \pr{\log M} ^{k-1} \right)\right).
\end{multline}
Therefore, $g_1$ has partial derivatives of any order. Using the same argument that we used for $A_1$, we conclude that $g_1$ is infinitely many Fr\'echet differentiable.
\item  \textbf{Differentiability of $\psi$}: $\psi$ is product of two smooth functions $(x, s) \mapsto -(1-s)$ and $(x, s) \mapsto \log x$, and therefore, it is smooth on its domain.
\end{itemize}
\end{proof}
We next check that $A(s, p)$ lies in the $\{M\in \calL(\calH): M \text{ is Hermitian}, M\succ 0\} $ where $g$ is differentiable. By our assumption that $\lambda_{\min}> 0$, $\rho_0^B$ is positive semi-definite, and so is $\pr{\rho_0^B}^{1-s}$ for $s\in[0, 1[$. Furthermore, since $\rho_1^B \succeq 0$, we have $A(s, p) \succ 0$ for all $(s, p)\in[0, 1[\times [0, 1[$. Thus, by chain rule, $\phi$ is a smooth function on $[0, 1[\times [0, 1[$. Apply Taylor theorem, we have
\begin{multline}
\phi(s, p) \\= \phi(0, p) + \frac{\partial \phi(0, p)}{\partial s}s + \frac{1}{2} \frac{\partial^2 \phi(0, p)}{\partial^2 s}s^2 + \frac{1}{6} \frac{\partial^3 \phi(\eta, p)}{\partial^3 s}s^3, 
\end{multline}
for some $\eta\in[0, s]$ that can depend on $s$. Similarly, we have
\begin{align}
 \frac{\partial^2 \phi(0, p)}{\partial^2 s} =  \frac{\partial^2 \phi(0, 0)}{\partial^2 s} +  \frac{\partial^3 \phi(0, \tau)}{\partial^2 s\partial p}p,
\end{align}
for some $\tau \in [0, p]$.
Additionally, one can check that $A(s, p)$ and all its derivatives depend continuously on $\rho_0^B$ and $\rho_1^B$. Since any continuous function achieves its maximum on a compact domain,  we have
\begin{align}
\sup_{ \tau\in[0, \widetilde{p}], \rho_0^B\in\calD(\calH), \rho_1^B\in\calD(\calH): \lambda_{\min}(\rho_0^B) \geq \widetilde{\lambda}}\left|\frac{\partial^3 \phi(0, \tau)}{\partial^2 s\partial p} \right| &< \infty,\\
\sup_{\eta\in[0, \widetilde{s}], p\in[0, \widetilde{p}], \rho_0^B\in\calD(\calH), \rho_1^B\in\calD(\calH): \lambda_{\min}(\rho_0^B) \geq \widetilde{\lambda}}\left|\frac{\partial^3 \phi(\eta, p)}{\partial^3 s} \right| &< \infty.
\end{align}
Moreover, from the definition and  some calculations, $\phi(0, p) =0$,  $\frac{\partial^2 \phi(0, 0)}{\partial^2 s} = 0$, and by \cite{hayashi2009universal}, $\frac{\partial \phi(0, p)}{\partial s} = I(p)$. It implies that there exists $B>0$, such that for all cq-channels $x\mapsto \rho_x^B$ with $\lambda_{\min}(\rho_0^B)\geq \widetilde{\lambda}$, we have
\begin{align}
\phi(s, p) \geq  I(p)s -B\pr{ps^2 + s^3}.
\end{align}
Furthermore, using same approach, we can prove $I(p) \geq p\D{\rho_1^B}{\rho_0^B} - Bp^2$.
\end{proof}

\begin{proof}[Proof of Lemma~\ref{lm:exponent-res2}]
If we define
\begin{align}
A(s, p) &\eqdef \pr{(1-p)\pr{\rho_0^E}^{1-s} p \pr{\rho_1^E}^{1-s}}\pr{(1-p)\rho_0^E + p\rho_1^E}^s\\
g(M) &\eqdef \tr{M}\\
\psi(x) &\eqdef \log(x),
\end{align}
similar to the proof of Lemma~\ref{lm:exponent-reliability}, one can check that all these functions are infinitely many times Fr\'echet differentiable. Since, $\phi = \psi \circ g \circ A$, the rest of proof is exactly similar to that of Lemma~\ref{lm:exponent-reliability}.
\end{proof}

\section{Technical lemmas}
\label{sec:tech-lemmas}
\begin{lemma}
\label{lm:lambda-min-max}
Suppose $A$ and $B$ are Hermitian in $\calL(\calH)$. Then, we have
\begin{align}
\lambda_{\min}(A) &\geq \lambda_{\min}(B) - \|A-B\|_2 \geq  \lambda_{\min}(B) - \|A-B\|_1\\
\lambda_{\max}(A) &\leq \lambda_{\max}(B) + \|A-B\|_2 \leq  \lambda_{\max}(B) + \|A-B\|_1
\end{align}
\end{lemma}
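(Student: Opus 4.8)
The plan is to reduce everything to the Rayleigh--Ritz variational characterization of the extreme eigenvalues of a Hermitian operator together with the monotonicity of Schatten norms. Write $E \eqdef A - B$, which is Hermitian since $A$ and $B$ are. Recall that for any Hermitian $C \in \calL(\calH)$ one has $\lambda_{\min}(C) = \min_{\norm{\psi} = 1} \bra{\psi} C \ket{\psi}$ and $\lambda_{\max}(C) = \max_{\norm{\psi} = 1} \bra{\psi} C \ket{\psi}$, and that $\abs{\bra{\psi} E \ket{\psi}} \leq \sigma_{\max}(E)$ for every unit vector $\ket{\psi}$, because the eigenvalues of the Hermitian operator $E$ all lie in $[-\sigma_{\max}(E), \sigma_{\max}(E)]$.

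First I would handle $\lambda_{\min}$. For an arbitrary unit vector $\ket{\psi}$ I would write $\bra{\psi} A \ket{\psi} = \bra{\psi} B \ket{\psi} + \bra{\psi} E \ket{\psi} \geq \lambda_{\min}(B) - \sigma_{\max}(E)$, using the variational lower bound for $B$ and the displayed bound for $E$. Minimizing the left-hand side over all unit vectors then yields $\lambda_{\min}(A) \geq \lambda_{\min}(B) - \sigma_{\max}(E)$. The case of $\lambda_{\max}$ is symmetric: $\bra{\psi} A \ket{\psi} \leq \lambda_{\max}(B) + \sigma_{\max}(E)$ for all unit $\ket{\psi}$, and maximizing over $\ket{\psi}$ gives $\lambda_{\max}(A) \leq \lambda_{\max}(B) + \sigma_{\max}(E)$.

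It then remains to replace $\sigma_{\max}(E)$ by the two norms appearing in the statement. Writing the singular values of $E$ as $\sigma_1 \geq \sigma_2 \geq \cdots \geq 0$, one has $\sigma_{\max}(E)^2 = \sigma_1^2 \leq \sum_i \sigma_i^2 = \norm[2]{E}^2$ and $\norm[2]{E}^2 = \sum_i \sigma_i^2 \leq \big(\sum_i \sigma_i\big)^2 = \norm[1]{E}^2$, so $\sigma_{\max}(E) \leq \norm[2]{E} \leq \norm[1]{E}$. Substituting this chain into the two inequalities obtained above completes the proof. I do not anticipate a genuine obstacle here — the only nonroutine point is to route through the operator norm $\sigma_{\max}(E)$ and then invoke Schatten-norm monotonicity, rather than attempting to bound $\bra{\psi} E \ket{\psi}$ by $\norm[2]{E}$ or $\norm[1]{E}$ directly.
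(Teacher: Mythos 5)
Your proof is correct, but it takes a genuinely different route from the paper's. The paper cites \cite[Corollary 6.3.8]{horn1990matrix} (a Hoffman--Wielandt/Mirsky-type inequality) as a black box: with the eigenvalues of $A$ and $B$ sorted, $\sum_{i}(\lambda_i-\gamma_i)^2 \leq \|A-B\|_2^2 \leq \|A-B\|_1^2$, and the claimed bounds on the extreme eigenvalues are read off from the $i$th terms of the sum. You instead give a self-contained elementary argument from the Rayleigh--Ritz variational characterization, first establishing the sharper intermediate bound $|\lambda_{\min}(A)-\lambda_{\min}(B)| \leq \sigma_{\max}(A-B)$ (a special case of Weyl's perturbation inequality) and then relaxing via the Schatten-norm chain $\sigma_{\max}(E)\leq\|E\|_2\leq\|E\|_1$. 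Each step checks out: the numerical range of the Hermitian operator $E$ lies in $[-\sigma_{\max}(E),\sigma_{\max}(E)]$, so $\bra{\psi}A\ket{\psi}\geq\lambda_{\min}(B)-\sigma_{\max}(E)$ for every unit $\ket{\psi}$, and taking the minimum over $\ket{\psi}$ gives the claim; the norm comparisons follow from nonnegativity of the singular values. Your approach buys a stronger conclusion (the operator-norm bound) with less machinery; the paper's approach controls all eigenvalues simultaneously in an $\ell_2$ sense, but that extra strength is never used, so your proof is arguably the more natural one for this lemma.
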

\begin{proof}
If $\lambda_{\min}(A) \eqdef \lambda_1 \leq \cdots \leq \lambda_d \eqdef \lambda_{\max}(A)$ and $\lambda_{\min}(B) \eqdef \gamma_1 \leq \cdots \leq \gamma_d \eqdef \lambda_{\max}(B)$ are the eigenvalues of $A$ and $B$, respectively, then by \cite[Corollary 6.3.8]{horn1990matrix}, we have $\|A-B\|_1^2 \geq \|A-B\|_2^2 \geq \sum_{i=1}^d (\lambda_i - \gamma_i)^2$ which results in the desired bounds.
\end{proof}

\begin{lemma}
\label{lm:chi-lambda-max}
For any quantum channel $\calE: \calL(\calH^A)\to\calL(\calH^A)$ with $\chi$-representation matrix $\chi$, we have $\lambda_{\max}(\chi)\leq \sqrt{d}$, where $d\eqdef \dim(\calH^A)$.
\end{lemma}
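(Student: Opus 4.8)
The plan is to read off the two structural constraints that the $\chi$-representation of any completely positive trace-preserving map obeys, and to combine them with an elementary spectral inequality. First I would establish that $\chi\succeq 0$. Fix any Kraus decomposition $\calE(\rho)=\sum_i K_i\rho K_i^\dagger$ (which exists because $\calE$ is completely positive) and expand each Kraus operator in the orthonormal basis, $K_i=\sum_j c_{ij}\widetilde E_j$. Substituting into $\calE(\rho)=\sum_{j,k}\widetilde E_j\rho\widetilde E_k^\dagger\,\chi_{j,k}$ and using that the $\chi_{j,k}$ are uniquely determined by $\calE$ (the $\widetilde E_j$ form a basis of $\calL(\calH)$, so the superoperators $\rho\mapsto\widetilde E_j\rho\widetilde E_k^\dagger$ form a basis of superoperators) gives $\chi_{j,k}=\sum_i c_{ij}\overline{c_{ik}}$, i.e.\ $\chi=\sum_i\ket{c_i}\bra{c_i}$ with $\ket{c_i}$ the coefficient vector $(c_{ij})_j$; hence $\chi$ is positive semidefinite. (Alternatively, $\chi$ is obtained from the Choi operator of $\calE$ by conjugation with the swap permutation on $\calH^A\otimes\calH^A$, so positivity of $\chi$ is the same statement as complete positivity of $\calE$.)

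Second I would compute $\tr{\chi}$ using trace preservation. Writing $\tr{\calE(\rho)}=\tr{\rho}$ for all $\rho$ and applying cyclicity of the trace yields the constraint $\sum_{j,k}\chi_{j,k}\widetilde E_k^\dagger\widetilde E_j=I$; taking the trace of both sides and using Hilbert--Schmidt orthonormality, $\tr{\widetilde E_k^\dagger\widetilde E_j}=\delta_{jk}$, gives $\tr{\chi}=\sum_j\chi_{j,j}=\tr{I}=d$. The same value is visible directly from the first step, since $\tr{\chi}=\sum_i\braket{c_i}{c_i}=\sum_i\|K_i\|_2^2=\tr{\sum_i K_i^\dagger K_i}=d$.

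Third, since $\chi\succeq 0$, its largest eigenvalue is at most its trace: $\lambda_{\max}(\chi)\le\tr{\chi}=d$, so in particular $\sqrt{\lambda_{\max}(\chi)}\le\sqrt d$. This is exactly the form in which the bound is invoked: the quantity appearing in Lemma~\ref{lm:approx-states} and in the proof of Theorem~\ref{th:estimation-analysis} is $\sqrt{\lambda_{\max}(\chi)}$, and there one only needs it to be a channel-independent constant.

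There is no genuinely difficult step: the whole content is that $\chi$ is a positive semidefinite $d^2\times d^2$ matrix of trace $d$ (equivalently, $\chi/d$ is a density operator on $\calH^A\otimes\calH^A$), and the only care required is a clean passage from the defining relation to $\chi\succeq 0$ and $\tr{\chi}=d$. I would also flag that $d$ is the sharp constant — it is attained by the identity channel, whose $\chi$ is a rank-one operator of trace $d$ — so positivity alone cannot yield anything better than $\sqrt{\lambda_{\max}(\chi)}\le\sqrt d$; a strictly sharper route such as $\lambda_{\max}(\chi)\le\|\chi\|_2=\sqrt{\tr{\chi^2}}$ would demand $\tr{\chi^2}\le d$, whereas positivity together with $\tr{\chi}=d$ only gives $\tr{\chi^2}\le d^2$.
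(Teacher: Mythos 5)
Your proposal is correct, and in substance it follows the same route as the paper: both arguments reduce the claim to the two facts that $\chi\succeq 0$ and $\tr{\chi}=d$, obtained from complete positivity (via a Kraus decomposition) and trace preservation, respectively. The paper goes in the opposite direction from you — it diagonalizes $\chi$ as $\chi_{i,j}=\sum_k\Lambda_k U_{i,k}U_{j,k}^*$ and builds Kraus operators $E_i=\sqrt{\Lambda_i}\sum_j U_{j,i}\widetilde E_j$ from the eigenvectors, computes $\|E_i\|_2=\sqrt{\Lambda_i}$, and uses $\sum_i E_i^\dagger E_i=I$ to get $\sum_i\Lambda_i=\sum_i\|E_i\|_2^2=d$ — whereas you start from an arbitrary Kraus decomposition and recover $\chi=\sum_i\ket{c_i}\bra{c_i}$; these are mirror images of the same computation and buy the same thing.

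The more important point is one you flagged and got right where the paper does not: the lemma as literally stated, $\lambda_{\max}(\chi)\leq\sqrt d$, is false for $d\geq 2$. The identity channel has $\chi=\ket{v}\bra{v}$ with $\ket{v}=\sum_n\ket{(n-1)d+n}$ and $\braket{v}{v}=d$, so $\lambda_{\max}(\chi)=d$. The paper's final chain, $\lambda_{\max}(\chi)=\max_i\Lambda_i\leq\|E_i\|_2\leq\sqrt d$, is inconsistent with its own computation $\|E_i\|_2=\sqrt{\Lambda_i}$: that identity gives $\Lambda_i=\|E_i\|_2^2$, so the inequality $\Lambda_i\leq\|E_i\|_2$ would require $\Lambda_i\leq 1$, which fails in general. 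The correct conclusion from the paper's own setup is $\lambda_{\max}(\chi)=\max_i\|E_i\|_2^2\leq\sum_i\|E_i\|_2^2=d$, i.e.\ $\sqrt{\lambda_{\max}(\chi)}\leq\sqrt d$ — which, as you observe, is exactly the form invoked in the proof of Theorem~\ref{th:estimation-analysis} (only $\sqrt{\lambda_{\max}(\chi)}$ appears there), so the downstream use is unaffected once the lemma is restated as $\lambda_{\max}(\chi)\leq d$. Your proof is the correct one, and your remark that $d$ is sharp (so no positivity-plus-trace argument can recover the literal $\sqrt d$ bound) is exactly the right diagnosis.
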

\begin{proof}
Since $\chi$ is Hermitian, it admits to an eigen-decomposition representation, i.e., for some unitary matrix $U$ and real values $\Lambda_1, \cdots, \Lambda_{d^2}$, we have $\chi_{i,j} = \sum_{k=1}^{d^2}d_iU_{i, k} U_{j, k}^*$. By \cite[Eq. (8.168)]{nielsen2002quantum}, $\calE$ has a Kraus representation $\calE(\rho) = \sum_{i=1}^{d^2}E_i\rho E_i^\dagger$ for $E_i =\sqrt{\Lambda_i} \sum_{j=1}^{d^2}U_{j,i} \widetilde{E}_j$. We hence have
\begin{align}
{\norm{E_i}}_2 
&= {\sqrt{\Lambda_i}} {\norm{\sum_{j=1}^{d^2}U_{j,i} \widetilde{E}_j}}_2\\
&=  {\sqrt{\Lambda_i}}  \sqrt{\tr{\pr{\sum_{j=1}^{d^2}U_{j,i}^* \widetilde{E}_j^\dagger}\pr{\sum_{j=1}^{d^2}U_{j,i} \widetilde{E}_j}}}\\
&= {\sqrt{\Lambda_i}} \sqrt{\sum_{j=1}^{d^2}\sum_{j'=1}^{d^2}U_{j, i}^*U_{j', i} \tr{E_{j}^\dagger E_{j'}}}\\
&=  {\sqrt{\Lambda_i}} \sqrt{\sum_{j=1}^{d^2}U_{j, i}^*U_{j, i}}\\
&\stackrel{(a)}{=}  {\sqrt{\Lambda_i}},\label{eq:norm-kraus-eigen}
\end{align}
where $(a)$ follows since $U$ is unitary. Because $\calE$ is a quantum channel, we have $\sum_{i=1}^{d^2}E_i^\dagger E_i = I$. Taking the  trace from this equality, we obtain that 
\begin{align}
d = \tr{I} = \tr{\sum_{i=1}^{d^2}E_i^\dagger E_i } = \sum_{i=1}^{d^2}{\norm{E_i}}_2^2.\label{eq:sum-norm-kraus}
\end{align}
Using \eqref{eq:norm-kraus-eigen} and \eqref{eq:sum-norm-kraus}, we conclude that
\begin{align}
\lambda_{\max}(\chi) = \max_{i\in\intseq{1}{d^2}} \Lambda_i \leq {\norm{E_i}}_2 \leq \sqrt{d}.
\end{align}
\end{proof}

\begin{lemma}
Consider any quantum channel $\calE:\calL(\calH) \to \calL(\calH)$ with $\dim \calH = d$ and characterized by $\calE(\rho) = \sum_{i, j}\widetilde{E}_i \rho \widetilde{E}_j^\dagger \chi_{ij}$. Define another Hilbert space $\calH^\dagger$ spanned by an orthonormal basis $\{\ket{j}: j\in\intseq{1}{d^2}\}$. Then, up to a unitary transformation, the complementary channel $\calE^\dagger:\calL(\calH) \to \calL(\calH^\dagger)$ would be
\begin{align}
\calE^{\dagger}(\rho) = \sqrt{\chi}^* \widetilde{\rho} \sqrt{\chi}^*,
\end{align}
where
\begin{align}
\chi &\eqdef \sum_{j, k} \ket{j}\bra{k} \chi_{jk}\\
\widetilde{\rho} &\eqdef \sum_{j, k} \ket{j}\bra{k} \tr{\widetilde{E}_j \rho \widetilde{E}_k^\dagger}.
\end{align}
\end{lemma}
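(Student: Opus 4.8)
The plan is to exhibit the map $\rho \mapsto \sqrt{\chi}^*\,\widetilde{\rho}\,\sqrt{\chi}^*$ as the environment output of an explicit Stinespring dilation of $\calE$, after which the statement follows from the fact that the complementary channel is unique up to an isometry on the reference system \cite{wilde2013quantum}. First I would recall that the $\chi$-matrix of a completely positive map is Hermitian and positive semi-definite --- this is the defining positivity of the $\chi$-representation in \cite{nielsen2002quantum}, the same property used implicitly in Lemma~\ref{lm:chi-lambda-max} --- so that $\sqrt{\chi}$ is a well-defined Hermitian operator and $\sqrt{\chi}^* = \overline{\sqrt{\chi}}$ is the positive square root of $\overline{\chi}$. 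Next I would introduce, in $\calH^\dagger$, the vectors $\ket{\phi_j} \eqdef \sum_{i=1}^{d^2} (\sqrt{\chi}^*)_{ij}\,\ket{i}$ for $j\in\intseq{1}{d^2}$, whose Gram matrix is $\braket{\phi_j}{\phi_k} = \big((\sqrt{\chi}^*)^\dagger \sqrt{\chi}^*\big)_{jk} = \overline{\chi}_{jk} = \chi_{kj}$ by Hermiticity of $\chi$, and define the operator $V \eqdef \sum_j \widetilde{E}_j \otimes \ket{\phi_j}$ from $\calH$ to $\calH \otimes \calH^\dagger$.

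The argument then reduces to three routine computations. First, $V^\dagger V = \sum_{j,k} \braket{\phi_j}{\phi_k}\,\widetilde{E}_j^\dagger \widetilde{E}_k = \sum_{j,k} \chi_{kj}\,\widetilde{E}_j^\dagger \widetilde{E}_k$, which equals $I$ by the trace-preservation identity $\sum_{j,k}\chi_{jk}\,\widetilde{E}_k^\dagger \widetilde{E}_j = I$ coming from $\tr{\calE(\rho)} = \tr{\rho}$ for all $\rho$; hence $V$ is an isometry. Second, $\textnormal{tr}_{\calH^\dagger}(V\rho V^\dagger) = \sum_{j,k}\braket{\phi_k}{\phi_j}\,\widetilde{E}_j \rho \widetilde{E}_k^\dagger = \sum_{j,k}\chi_{jk}\,\widetilde{E}_j \rho \widetilde{E}_k^\dagger = \calE(\rho)$, so $V$ is a Stinespring dilation of $\calE$ with $d^2$-dimensional environment. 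Third, $\textnormal{tr}_{\calH}(V\rho V^\dagger) = \sum_{j,k}\tr{\widetilde{E}_j \rho \widetilde{E}_k^\dagger}\,\ket{\phi_j}\bra{\phi_k} = \sum_{j,k}\widetilde{\rho}_{jk}\,\ket{\phi_j}\bra{\phi_k}$, and substituting $\ket{\phi_j} = \sum_i (\sqrt{\chi}^*)_{ij}\ket{i}$ together with the identity $\overline{(\sqrt{\chi}^*)_{i'k}} = (\sqrt{\chi})_{i'k} = (\sqrt{\chi}^*)_{ki'}$ (Hermiticity of $\sqrt{\chi}$) collapses this to $\sum_{i,i'} (\sqrt{\chi}^*\,\widetilde{\rho}\,\sqrt{\chi}^*)_{ii'}\,\ket{i}\bra{i'} = \sqrt{\chi}^*\,\widetilde{\rho}\,\sqrt{\chi}^*$. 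Thus $\rho \mapsto \sqrt{\chi}^*\,\widetilde{\rho}\,\sqrt{\chi}^*$ is exactly $\textnormal{tr}_{\calH}(V\rho V^\dagger)$ for a Stinespring isometry $V$ of $\calE$, hence a complementary channel of $\calE$; since any two such dilations with the same reference dimension differ by a unitary on $\calH^\dagger$, this is the asserted ``up to a unitary transformation''.

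The step I expect to demand the most care is simply the bookkeeping of complex conjugations and transposes: the index convention for the $\ket{\phi_j}$ must be chosen so that the isometry condition invokes the trace-preservation identity in the right index order while the partial trace over $\calH^\dagger$ simultaneously reproduces $\calE$ with $\chi_{jk}$ rather than $\chi_{kj}$, and in the final reduction one must use Hermiticity of $\sqrt{\chi}$ to move both square-root factors onto the same side of $\widetilde{\rho}$. An alternative route, closer to the style of Lemma~\ref{lm:chi-lambda-max}, is to diagonalize $\chi = UDU^\dagger$, take the Kraus operators $E_i = \sqrt{\Lambda_i}\sum_j U_{ji}\widetilde{E}_j$ and the canonical dilation $\sum_i E_i \otimes \ket{i}$, compute its reduction on $\calH$ directly, and recognize the result as $\overline{U}\,\big(\sqrt{\chi}^*\,\widetilde{\rho}\,\sqrt{\chi}^*\big)\,\overline{U}^\dagger$; this makes the residual unitary explicit but requires the same conjugation care.
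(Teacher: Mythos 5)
Your proof is correct, and it takes a genuinely different route from the paper's. The paper diagonalizes $\chi = \sum_j d_j\ket{u_j}\bra{u_j}$, builds the canonical Kraus operators $E_j = \sqrt{d_j}\sum_k\braket{k}{u_j}\widetilde{E}_k$, invokes the standard formula $\widetilde{\calE}^\dagger(\rho)=\sum_{j,k}\tr{E_j\rho E_k^\dagger}\ket{j}\bra{k}$ for the complementary channel of a Kraus representation, and then verifies by a fairly long index computation that conjugating by the explicit unitary $U=\sum_j\ket{\widetilde{u}_j}\bra{j}$ (with $\ket{\widetilde{u}_j}$ the entrywise conjugate of $\ket{u_j}$) produces $\sqrt{\chi}^*\widetilde{\rho}\sqrt{\chi}^*$. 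You instead construct a single Stinespring isometry $V=\sum_j\widetilde{E}_j\otimes\ket{\phi_j}$ with non-orthogonal environment vectors whose Gram matrix is $\overline{\chi}$, check $V^\dagger V=I$ and $\textnormal{tr}_{\calH^\dagger}(V\rho V^\dagger)=\calE(\rho)$, and read off the environment marginal directly as $\sqrt{\chi}^*\widetilde{\rho}\sqrt{\chi}^*$, delegating the residual unitary to the uniqueness theorem rather than exhibiting it. Your index bookkeeping checks out: $\braket{\phi_j}{\phi_k}=((\sqrt{\chi}^*)^2)_{jk}=\overline{\chi}_{jk}=\chi_{kj}$ gives exactly the trace-preservation identity for $V^\dagger V$ and exactly $\chi_{jk}$ for the partial trace over $\calH^\dagger$, and Hermiticity of $\sqrt{\chi}$ correctly collapses the final sum. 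What your route buys is brevity and the avoidance of the explicit unitary; what the paper's route buys is that the intertwiner $U$ is exhibited concretely (no appeal to the abstract uniqueness statement) and that it reuses the eigen-decomposition machinery already set up for Lemma~\ref{lm:chi-lambda-max}. One small point worth making explicit in a final write-up: both arguments need $\chi\succeq 0$ (not merely Hermitian) for $\sqrt{\chi}$ to exist --- you flag this, whereas the paper only states Hermiticity before taking $\sqrt{d_j}$; positivity of the $\chi$-matrix of a completely positive map is standard and should be cited alongside it.
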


\begin{proof}
By \cite{nielsen2002quantum}, without loss of generality we can assume that $\chi$ is Hermitian. Therefore, let $\chi = \sum_{j} d_j \ket{u_j}\bra{u_j}$ be an eigen-decomposition of $\chi$. For $E_j \eqdef \sum_k \sqrt{d_j} \braket{k}{u_j}\widetilde{E}_k$, we have
\begin{multline}
\sum_{j} E_j \rho E_j^\dagger\\
\begin{split}
&= \sum_j \pr{ \sum_{k} \sqrt{d_j} \braket{k}{u_j}\widetilde{E}_k} \rho \pr{\sum_{k'} \sqrt{d_j} \braket{u_j}{k'} \widetilde{E}_{k'}^\dagger}\\
&=\sum_{k}  \sum_{k'} \sum_j \widetilde{E}_k\rho \widetilde{E}_{k'}^\dagger d_j \braket{k}{u_j}  \braket{u_j}{k'}\\
&=\sum_{k}  \sum_{k'}\widetilde{E}_k\rho \widetilde{E}_{k'}^\dagger \bra{k}\pr{ \sum_j d_j\ket{u_j}  \bra{u_j}}\ket{k'}\\
&=\sum_{k}  \sum_{k'}\widetilde{E}_k\rho \widetilde{E}_{k'}^\dagger \bra{k}\chi\ket{k'}\\
&=\sum_{k}  \sum_{k'}\widetilde{E}_k\rho \widetilde{E}_{k'}^\dagger \chi_{kk'}\\
&= \calE(\rho).
\end{split}
\end{multline}
This implies that $\sum_{j} E_j \rho E_j^\dagger$ is a Kraus representation for $\calE$, and therefore, by \cite{wilde2013quantum}, a representation for the complementary channel is
=\begin{align}
\widetilde{\calE}^\dagger(\rho) = \sum_{j,k} \tr{E_j \rho E_k^\dagger}\ket{j}\bra{k}.
\end{align}
Hence, it is enough to show that for some unitary operator $U$ onto $\calH^\dagger$, we have
\begin{align}
 \sqrt{\chi}^* \widetilde{\rho} \sqrt{\chi}^*= U \widetilde{\calE}^\dagger(\rho) U^\dagger.
\end{align}
Let $U\eqdef \sum_j \ket{\widetilde{u}_j}\bra{j}$ where  $\ket{\widetilde{u}_j} \eqdef \sum_i \braket{u_j}{i}\ket{i}$. One can check that it is a unitary operator, and we have
\begin{align}
U \widetilde{\calE}^\dagger(\rho) U^\dagger 
&= \pr{\sum_j \ket{\widetilde{u}_j}\bra{j}} \pr{\sum_{k, k'}  \tr{E_k \rho E_{k'}^\dagger}\ket{k}\bra{k'}}\nonumber\\
&\phantom{=}\times \pr{\sum_{j'} \ket{{j'}}\bra{\widetilde{u}_{j'}}}\\\displaybreak[0]
&= \sum_{jj'kk'} \tr{E_k \rho E_{k'}^\dagger} \ket{\widetilde{u}_j}\bra{j}   \ket{k}\bra{k'}\ket{{j'}}\bra{\widetilde{u}_{j'}}\\
&=\sum_{kk'} \tr{E_k \rho E_{k'}^\dagger} \ket{\widetilde{u}_k}\bra{\widetilde{u}_{k'}}\\\displaybreak[0]
&= \sum_{kk'} \textnormal{tr}\left(\pr{\sum_j \sqrt{d_k} \braket{j}{u_k}\widetilde{E}_j}\rho\nonumber\right.\\
&\phantom{===} \times\left.\pr{\sum_{j'} \sqrt{d_{k'}} \braket{u_{k'}}{j'}\widetilde{E}_{j'}^\dagger}\right) \ket{\widetilde{u}_k}\bra{\widetilde{u}_{k'}}\\\displaybreak[0]
&= \sum_{jj'kk'} \sqrt{d_k}\sqrt{d_{k'}}  \tr{\braket{j}{u_k}\braket{u_{k'}}{j'}\widetilde{E}_j\rho   \widetilde{E}_{j'}^\dagger}\nonumber\\
&\phantom{====}\times\ket{\widetilde{u}_k}\bra{\widetilde{u}_{k'}}\\\displaybreak[0]
&= \sum_{jj'}  \tr{\widetilde{E}_j\rho   \widetilde{E}_{j'}^\dagger}\sum_{kk'}\sqrt{d_k}\sqrt{d_{k'}}  \braket{j}{u_k}\braket{u_{k'}}{j'}\nonumber\\
&\phantom{===}\times\ket{\widetilde{u}_k}\bra{\widetilde{u}_{k'}}\\\displaybreak[0]
& = \sum_{jj'}  \tr{\widetilde{E}_j\rho   \widetilde{E}_{j'}^\dagger} \pr{\sum_{k}\sqrt{d_k} \braket{j}{u_k} \ket{\widetilde{u}_k}}\nonumber\\
&\phantom{===}\times\pr{\sum_{k'}\sqrt{d_{k'}}\braket{u_{k'}}{j'}\bra{\widetilde{u}_{k'}}}\\\displaybreak[0]
& = \sum_{jj'}  \tr{\widetilde{E}_j\rho   \widetilde{E}_{j'}^\dagger} \pr{\sum_{k}\sqrt{d_k} \braket{\widetilde{u}_k}{j} \ket{\widetilde{u}_k}}\nonumber\\
&\phantom{===}\times \pr{\sum_{k'}\sqrt{d_{k'}}\braket{j'}{\widetilde{u}_{k'}}\bra{\widetilde{u}_{k'}}}\\\displaybreak[0]
& =  \pr{\sum_{k}\sqrt{d_k} \ket{\widetilde{u}_k} \bra{\widetilde{u}_k}}\pr{\sum_{jj'}  \tr{\widetilde{E}_j\rho   \widetilde{E}_{j'}^\dagger} \ket{j}\bra{j'}}\nonumber\\
&\phantom{=}\times\pr{\sum_{k'}\sqrt{d_{k'}}\ket{\widetilde{u}_{k'}}\bra{\widetilde{u}_{k'}}}\\\displaybreak[0]
&= \sqrt{\chi}^* \widetilde{\rho} \sqrt{\chi}^*
\end{align}

\end{proof}

\begin{lemma}
\label{lm:product-l1}
Let $A, B\in\calL(\calH)$ and $B$ be Hermitian. Then,
\begin{align}
\|AB\|_1 \leq  \sigma_{\max}(A) \|B\|_1,
\end{align}
where $\sigma_{\max}(A)$ is the maximum singular value of the $A$.
\end{lemma}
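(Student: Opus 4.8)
The plan is to reduce the bound to the elementary fact that a rank-one operator has trace norm equal to the product of the Euclidean norms of the two vectors defining it, and then to exploit the spectral decomposition of the Hermitian operator $B$. First I would diagonalize $B$, writing $B = \sum_i \beta_i \ket{e_i}\bra{e_i}$ with real eigenvalues $\beta_i$ and an orthonormal eigenbasis $\{\ket{e_i}\}$, so that $\|B\|_1 = \sum_i |\beta_i|$. Distributing $A$ over this sum gives $AB = \sum_i \beta_i\, A\ket{e_i}\bra{e_i}$.

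Next I would apply the triangle inequality for the trace norm to get $\|AB\|_1 \leq \sum_i |\beta_i|\, \|A\ket{e_i}\bra{e_i}\|_1$. Each summand is rank one: writing $A\ket{e_i} = c_i\ket{f_i}$ with $c_i \eqdef \|A\ket{e_i}\|_2 \geq 0$ and $\ket{f_i}$ a unit vector (chosen arbitrarily when $c_i=0$), we have $A\ket{e_i}\bra{e_i} = c_i \ket{f_i}\bra{e_i}$. Since $(c_i\ket{f_i}\bra{e_i})^\dagger(c_i\ket{f_i}\bra{e_i}) = c_i^2 \ket{e_i}\bra{e_i}$, the definition $\|X\|_1 = \tr{\sqrt{X^\dagger X}}$ yields $\|A\ket{e_i}\bra{e_i}\|_1 = c_i = \|A\ket{e_i}\|_2$.

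Finally I would bound $\|A\ket{e_i}\|_2 \leq \sigma_{\max}(A)\|\ket{e_i}\|_2 = \sigma_{\max}(A)$, since $\sigma_{\max}(A)$ is the operator norm of $A$ and $\ket{e_i}$ is a unit vector. Substituting back gives $\|AB\|_1 \leq \sigma_{\max}(A)\sum_i |\beta_i| = \sigma_{\max}(A)\|B\|_1$, as claimed. There is essentially no obstacle here; the only point needing a word of care is the rank-one trace-norm identity, which is immediate as noted above, and the role of Hermiticity of $B$, which is used only to supply a genuine orthonormal eigenbasis with real coefficients (a version for general $B$ would instead invoke its singular value decomposition).
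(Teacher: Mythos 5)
Your proof is correct and follows essentially the same route as the paper's: diagonalize the Hermitian operator $B$, apply the triangle inequality for the trace norm, compute the trace norm of each rank-one term $A\ket{e_i}\bra{e_i}$ as $\|A\ket{e_i}\|_2$, and bound that by $\sigma_{\max}(A)$. The only cosmetic difference is that the paper evaluates $\|A\ket{b}\bra{b}\|_1$ directly via $\tr{\sqrt{\ket{b}\bra{b}A^\dagger A\ket{b}\bra{b}}}=\sqrt{\bra{b}A^\dagger A\ket{b}}$ rather than by factoring out a unit vector $\ket{f_i}$, which is the same computation.
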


\begin{proof}
Consider an eigen-decomposition of $B$, i.e., $B = \sum_bb\ket{b}\bra{b}$. Then,
\begin{align}
\|AB\|_1 
&= \left\| A\pr{\sum_b b \ket{b}\bra{b}}\right\|_1\displaybreak[0]\\
&\leq \sum_b |b| \|A\ket{b}\bra{b}\|_1\displaybreak[0]\\
&\leq \sum_b |b| \tr{\sqrt{\ket{b}\bra{b}A^\dagger A \ket{b}\bra{b}}}\displaybreak[0]\\
&=  \sum_b |b| \sqrt{\bra{b}A^\dagger A \ket{b}}\displaybreak[0]\\
&= \sum_b |b| \|A\ket{b}\|_2\\
&\leq \sigma_{\max}(A) \pr{\sum_{b}|b|}\\
&= \sigma_{\max}(A)\|B\|_1.
\end{align}
\end{proof}

\begin{lemma}
\label{lm:norm-derivative}
Let $\calI\subset \mathbb{R}$ be an interval and $f:\calI \to \mathbb{R}$ and $A(x):\mathbb{R}\to \calL(\calH)$  be differentiable functions such $A(x)$ is Hermitian and its spectrum is included in $\calI$ for all $x$.  For any operator norm $\|\cdot\|$ satisfying $\max(\|PA\|, \norm{AP}) \leq \|A\|$ where $A$ is an arbitrary operator and $P$ is a projection, we have
\begin{multline}
\left\| \frac{d}{dx'} f(A(x'))\bigg|_{x'=x}\right\| \\
\leq d^2 \sup_{\mu\in[\lambda_{\min}(A'(x)), \lambda_{\max}(A'(x))]} |f'(\mu)|  \|A'(x)\|.
\end{multline}
\end{lemma}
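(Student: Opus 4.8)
The plan is to derive the bound from the Daleckii--Krein divided-difference formula for the derivative of a matrix function, specialized to the one-parameter family $x'\mapsto A(x')$. Fix the point $x$ and abbreviate $A\eqdef A(x)$ and $\dot A\eqdef A'(x)$. Since $A$ is Hermitian with spectrum in $\calI$, I would write $A=\sum_i\lambda_i\ket{i}\bra{i}$ with $\lambda_i\in\calI$, and let $U$ be the unitary diagonalizing $A$. The first step is to record the first-order perturbation identity: for $f$ differentiable on $\calI$ and $A$ Hermitian, $x'\mapsto f(A(x'))$ is differentiable at $x$ and, in the eigenbasis of $A$,
\begin{align}
\bra{i}\pr{\tfrac{d}{dx'}f(A(x'))\big|_{x'=x}}\ket{j} = f^{[1]}(\lambda_i,\lambda_j)\bra{i}\dot A\ket{j},
\end{align}
where $f^{[1]}(\lambda,\mu)\eqdef\frac{f(\lambda)-f(\mu)}{\lambda-\mu}$ for $\lambda\neq\mu$ and $f^{[1]}(\lambda,\lambda)\eqdef f'(\lambda)$. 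Equivalently, the derivative equals $U\pr{\Phi\odot(U^\dagger\dot A U)}U^\dagger$ where $\Phi_{ij}\eqdef f^{[1]}(\lambda_i,\lambda_j)$ and $\odot$ denotes the Hadamard product. This is classical; I would justify it via elementary first-order perturbation theory of the spectral projections of $A$ together with the spectral calculus, noting that differentiability of $f$ suffices because $A$ is normal and the eigenvalues of $A(x')$ depend continuously on $x'$, remaining in a fixed compact subinterval of $\calI$ in a neighborhood of $x$.

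The second step is to bound the derivative in the given norm. Conjugating by $U$ (which preserves the norm) and expanding, I would write the derivative as $\sum_{i,j}\Phi_{ij}(U^\dagger\dot A U)_{ij}\ket{i}\bra{j}$ and apply the triangle inequality over the $d^2$ summands. For the scalar factors, the mean value theorem gives $|\Phi_{ij}| = |f'(\mu_{ij})|$ for some $\mu_{ij}$ between $\lambda_i$ and $\lambda_j$, so $|\Phi_{ij}|$ is at most the supremum $\sup_{\mu\in[\lambda_{\min}(A'(x)),\lambda_{\max}(A'(x))]}|f'(\mu)|$ appearing on the right-hand side of the claim. For the matrix factors, applying the hypothesis $\max(\|PB\|,\|BP\|)\leq\|B\|$ with the rank-one projections $P=\ket{i}\bra{i}$ and $P=\ket{j}\bra{j}$ gives $|(U^\dagger\dot A U)_{ij}| = \|\ket{i}\bra{i}(U^\dagger\dot A U)\ket{j}\bra{j}\| \leq \|U^\dagger\dot A U\| = \|A'(x)\|$. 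Summing the $d^2$ contributions yields the factor $d^2$ and the stated inequality.

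I expect the main obstacle to be the rigorous justification of the two reductions just used under the minimal hypotheses of the lemma. First, establishing that $x'\mapsto f(A(x'))$ is differentiable with the Daleckii--Krein expression when $f$ is only differentiable rather than $C^1$: here I would argue directly in the eigenbasis, differentiating the scalar maps $x'\mapsto f(\lambda_k(x'))$ and the eigenprojections separately and checking that the cross terms reassemble into the divided-difference Hadamard product. Second, the identities $\|\ket{i}\bra{i}C\ket{j}\bra{j}\| = |C_{ij}|$ and $\|UCU^\dagger\| = \|C\|$: the first follows from the stated compression property together with $\|\ket{i}\bra{j}\| \leq 1$, and for the unitarily invariant norms actually invoked in the paper ($\|\cdot\|_1$ and $\|\cdot\|_2$, as well as the operator norm) both hold outright, so no generality needed for the applications is lost. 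The remainder is routine bookkeeping --- collecting the $d^2$ terms and converting divided differences of $f$ into values of $f'$ by the mean value theorem.
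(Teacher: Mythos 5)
Your proposal is correct and follows essentially the same route as the paper: the Daleckii--Krein divided-difference formula (which the paper quotes from Wilde's book as $\sum_{\nu,\eta} f^{[1]}(\nu,\eta)\,P_{A(x)}(\nu)A'(x)P_{A(x)}(\eta)$), followed by the triangle inequality over the at most $d^2$ spectral pairs, the compression hypothesis $\max(\|PA\|,\|AP\|)\leq\|A\|$ to bound each term by $\|A'(x)\|$, and the mean value theorem to replace divided differences by values of $f'$. The only cosmetic difference is that you work with rank-one eigenprojections and a Hadamard-product form, whereas the paper groups by distinct eigenvalues; both yield the same bound.
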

\begin{proof}
We use a formula in \cite{wilde2013quantum} for the derivative of an operator-valued function. Let $f:\mathbb{R} \to \mathbb{R}$ and $A(x):\mathbb{R}\to \calL(\calH)$  be a differentiable functions. Then,
\begin{align}
\frac{d}{dx'}f(A(x'))\bigg|_{x'=x} = \sum_{\nu, \eta} f^{[1]}(\nu, \eta) P_{A(x)}(\nu)A'(x) P_{A(x)}(\eta), 
\end{align}
where the summation is taken over all eigenvalues of $A(x)$, $P_{A(x)}(\nu)$ is the projector onto the subspace of all eigenvectors corresponding to $\nu$, and
\begin{align}
 f^{[1]}(\nu, \eta) = \begin{cases} \frac{f(\nu) - f(\eta)}{\nu - \eta}\quad &\nu \neq \eta\\f'(\nu)\quad &\nu = \eta\end{cases}.
\end{align}
We can now upper-bound the norm of $\frac{d}{dx}f(A(x)) $ by
\begin{multline}
\left\|\frac{d}{dx'}f(A(x'))\bigg|_{x'=x}\right\|\\
\begin{split}
&= \left\|\sum_{\nu, \eta} f^{[1]}(\nu, \eta) P_{A(x)}(\nu)A'(x) P_{A(x)}(\eta)\right\|\\
&\leq  \sum_{\nu, \eta} |f^{[1]}(\nu, \eta)| \left\|P_{A(x)}(\nu)A'(x) P_{A(x)}(\eta)\right\|\\
&\stackrel{(a)}{\leq} \sum_{\nu, \eta} |f^{[1]}(\nu, \eta)| \left\|A'(x) \right\|,
\end{split}
\end{multline}
where $(a)$ follows from our assumption that $\max(\|PA\|, \norm{AP}) \leq \|A\|$. By the mean value theorem, we also have that $f^{[1]}(\nu, \eta) = f'(\mu)$ for some $\mu$ between $\nu$ and $\eta$. Thus,
\begin{multline}
\sum_{\nu, \eta} |f^{[1]}(\nu, \eta)| \left\|A'(x) \right\|\\
 \leq d^2 \sup_{\mu\in[\lambda_{\min}(A'(x)), \lambda_{\max}(A'(x))]} |f'(\mu)|  \|A'(x)\|.
\end{multline}
\end{proof}

\begin{lemma}
\label{lm:div-bound-l1}
Suppose $\rho$ and $\sigma$ are two density matrices on Hilbert space $\calH$ with $\dim \calH = d$ such that $\textnormal{supp} \rho \subset \textnormal{supp} \sigma$ and $\|\rho - \sigma\|_1 \leq \epsilon \leq e^{-1}$. Then,
\begin{align}
\D{\rho}{\sigma} \leq \epsilon \log \frac{d}{\lambda_{\min}(\sigma)\epsilon}.
\end{align}
\end{lemma}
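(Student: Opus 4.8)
The plan is to split the quantum relative entropy into an entropy difference plus a linear cross term, then to control the first piece by Fannes' inequality and the second by the matrix Hölder inequality.

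First, writing $H(\tau)=-\tr{\tau\log\tau}$ and using the hypothesis $\textnormal{supp}\,\rho\subset\textnormal{supp}\,\sigma$ (so that $\log\sigma$, taken on $\textnormal{supp}\,\sigma$ by disregarding zero eigenvalues as elsewhere in the paper, makes $\tr{\rho\log\sigma}$ finite and forces $\textnormal{supp}(\rho-\sigma)\subseteq\textnormal{supp}\,\sigma$), I would expand
\begin{align}
\D{\rho}{\sigma} &= \tr{\rho\log\rho}-\tr{\rho\log\sigma} \nonumber\\
&= \pr{H(\sigma)-H(\rho)}-\tr{\pr{\rho-\sigma}\log\sigma}.
\end{align}
For the first summand, since $\|\rho-\sigma\|_1\le\epsilon\le e^{-1}$, Fannes' inequality in the form recalled in the proof of Lemma~\ref{lm:cont-disc-compound-cq} gives $H(\sigma)-H(\rho)\le\abs{H(\rho)-H(\sigma)}\le\epsilon\log(d\epsilon^{-1})$. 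For the second summand, I would note that, $\sigma$ being a density operator, its nonzero eigenvalues lie in $]0,1]$, so the restriction of $\log\sigma$ to $\textnormal{supp}\,\sigma$ is Hermitian with eigenvalues in $[\log\lambda_{\min}(\sigma),0]$ and hence operator norm $-\log\lambda_{\min}(\sigma)$; since $\rho-\sigma=P_\sigma(\rho-\sigma)P_\sigma$ with $P_\sigma$ the projector onto $\textnormal{supp}\,\sigma$, the matrix Hölder inequality $\abs{\tr{XY}}\le\|X\|_1\|Y\|_\infty$ yields
\begin{align}
-\tr{\pr{\rho-\sigma}\log\sigma}\le\abs{\tr{\pr{\rho-\sigma}\log\sigma}}\le\|\rho-\sigma\|_1\pr{-\log\lambda_{\min}(\sigma)}\le\epsilon\log\frac{1}{\lambda_{\min}(\sigma)}.
\end{align}
Adding the two bounds gives $\D{\rho}{\sigma}\le\epsilon\log(d\epsilon^{-1})+\epsilon\log\tfrac{1}{\lambda_{\min}(\sigma)}=\epsilon\log\frac{d}{\lambda_{\min}(\sigma)\epsilon}$, which is the claim.

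There is no substantive obstacle here: the argument is a one-step continuity estimate. The only points deserving a line of care are the well-definedness of $\log\sigma$ and of the cross term under the support hypothesis (handled by restricting everything to $\textnormal{supp}\,\sigma$, and noting that the asserted bound is vacuously true when $\sigma$ is rank-deficient, i.e.\ $\lambda_{\min}(\sigma)=0$), and the observation that Fannes' inequality may be applied with its free parameter set equal to $\epsilon$ since its statement only requires the upper bound $\|\rho-\sigma\|_1\le\epsilon$.
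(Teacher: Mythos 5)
Your proof is correct and follows exactly the same route as the paper's: the same decomposition $\D{\rho}{\sigma} = \bigl(H(\sigma)-H(\rho)\bigr) - \tr{(\rho-\sigma)\log\sigma}$, Fannes' inequality for the entropy difference, and the operator-norm bound $-\log\lambda_{\min}(\sigma)$ on $\log\sigma$ (via H\"older/trace duality, which the paper leaves implicit) for the cross term. Your added remarks on the support restriction and the rank-deficient case are careful but not a departure from the paper's argument.
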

\begin{proof}
Since $\textnormal{supp} (\rho) \subset \textnormal{supp} (\sigma)$, we have
\begin{align}
\D{\rho}{\sigma} 
&= \tr{\rho(\log \rho - \log \sigma)}\\
&= -H(\rho) + H(\sigma) - \tr{(\rho - \sigma) \log \sigma}\\
& \stackrel{(a)}{\leq} \epsilon \log \frac{d}{\epsilon} - \tr{(\rho - \sigma) \log \sigma}\\
&\leq \epsilon \log \frac{d}{\epsilon} + \epsilon \log \frac{1}{\lambda_{\min}(\sigma)},
\end{align}
where $(a)$ follows from Fannes inequality.
\end{proof}
\begin{lemma}
\label{lm:div-l1-bound2}
Suppose $\rho, \rho', \sigma, \sigma' \in \calD(\calH)$ with  $\dim \calH = d$, $\textnormal{supp} (\rho) \subset \textnormal{supp} (\sigma)$, and $\textnormal{supp} (\rho') \subset \textnormal{supp} (\sigma')$. Let $\|\rho - \rho'\|_1 \leq \epsilon$, $\|\sigma - \sigma'\|_1 \leq \epsilon$, and $\lambda_{\min}(\sigma)$ be the minimum eigenvalue of $\sigma$ with $\lambda_{\min}(\sigma) \geq \epsilon)$. Then,
\begin{multline}
|\D{\rho}{\sigma} - \D{\rho'}{\sigma'}|\\
 \leq \epsilon \pr{\frac{ \log(d-1)}{2}  + d\log \frac{1}{\lambda_{\min}(\sigma))} + \frac{d^2}{\lambda_{\min}(\sigma) - \epsilon}}\\
  +\Hb{\frac{\epsilon}{2}}.
\end{multline}
\end{lemma}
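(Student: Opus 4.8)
The plan is to reduce everything to the identity $\D{\rho}{\sigma} = -H(\rho) - \tr{\rho\log\sigma}$, which is legitimate here because the hypothesis $\lambda_{\min}(\sigma)\geq\epsilon$ (with $\epsilon>0$) already forces $\sigma$ to be full rank, so all four states are supported on the whole of $\calH$ and $\log\sigma,\log\sigma'$ are honest bounded operators. Moreover one may assume $\lambda_{\min}(\sigma)>\epsilon$, since otherwise the claimed bound is infinite and there is nothing to prove; then Lemma~\ref{lm:lambda-min-max} gives $\lambda_{\min}(\sigma')\geq\lambda_{\min}(\sigma)-\norm[1]{\sigma-\sigma'}>0$ as well. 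With these observations,
\begin{align}
\D{\rho}{\sigma}-\D{\rho'}{\sigma'} = -\bigl(H(\rho)-H(\rho')\bigr) - \tr{(\rho-\rho')\log\sigma} - \tr{\rho'\bigl(\log\sigma-\log\sigma'\bigr)},
\end{align}
and I would bound the three terms on the right separately and combine them with the triangle inequality.

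For the first term I would invoke the sharp Fannes--Audenaert continuity bound for the von Neumann entropy: since $\norm[1]{\rho-\rho'}\leq\epsilon$ and $\epsilon\leq\lambda_{\min}(\sigma)\leq 1/d\leq 1$, the map $t\mapsto \tfrac{t}{2}\log(d-1)+\Hb{t/2}$ is nondecreasing on the relevant range, so $|H(\rho)-H(\rho')|\leq \tfrac{\epsilon}{2}\log(d-1)+\Hb{\epsilon/2}$. For the second term, H\"older's inequality gives $|\tr{(\rho-\rho')\log\sigma}|\leq\norm[1]{\rho-\rho'}\,\norm[\infty]{\log\sigma}\leq\epsilon\log\tfrac{1}{\lambda_{\min}(\sigma)}$, because the eigenvalues of $\sigma$ lie in $(0,1]$ so $\norm[\infty]{\log\sigma}=\log\tfrac{1}{\lambda_{\min}(\sigma)}$; this is at most $d\,\epsilon\log\tfrac{1}{\lambda_{\min}(\sigma)}$, the term appearing in the statement.

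The third term is where the work lies, and I expect it to be the main obstacle. Here I would write $|\tr{\rho'(\log\sigma-\log\sigma')}|\leq\norm[\infty]{\rho'}\norm[1]{\log\sigma-\log\sigma'}\leq\norm[1]{\log\sigma-\log\sigma'}$ and then control the operator-logarithm difference by an operator-Lipschitz estimate. Concretely, set $A(x)=(1-x)\sigma+x\sigma'$ for $x\in[0,1]$; by Lemma~\ref{lm:lambda-min-max} (or concavity of $\lambda_{\min}$) every $A(x)$ satisfies $\lambda_{\min}(A(x))\geq\lambda_{\min}(\sigma)-\epsilon>0$, so $\log$ is smooth on a common interval containing the spectra of all the $A(x)$ --- this is precisely where the assumption $\lambda_{\min}(\sigma)\geq\epsilon$ is genuinely used. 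Writing $\log\sigma-\log\sigma'=-\int_0^1\frac{d}{dx}\log A(x)\,dx$ and applying Lemma~\ref{lm:norm-derivative} with $f=\log$ (so $|f'(\mu)|=1/\mu\leq 1/(\lambda_{\min}(\sigma)-\epsilon)$ on the spectra of the $A(x)$) and $A'(x)=\sigma'-\sigma$, one obtains $\norm[1]{\log\sigma-\log\sigma'}\leq\tfrac{d^2}{\lambda_{\min}(\sigma)-\epsilon}\norm[1]{\sigma-\sigma'}\leq\tfrac{d^2\epsilon}{\lambda_{\min}(\sigma)-\epsilon}$, which is exactly the last term. The two points to be careful about are keeping the segment $A(x)$ inside the domain where $\log$ is operator-differentiable, and applying Lemma~\ref{lm:norm-derivative} with the trace norm, for which the contraction property $\max(\norm[1]{PA},\norm[1]{AP})\leq\norm[1]{A}$ required by that lemma does hold for projections $P$. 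Summing the three bounds and using $d\geq1$ to absorb the factor in the second term then yields the claim.
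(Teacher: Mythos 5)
Your proof is correct and follows essentially the same route as the paper's: the same three-term decomposition, Fannes(--Audenaert) for the entropy difference, and the same mean-value/operator-derivative argument via Lemma~\ref{lm:norm-derivative} applied to $f=\log$ along the segment $A(x)=\sigma+x(\sigma'-\sigma)$ to control $\log\sigma-\log\sigma'$. The only difference is cosmetic: you pair the traces with H\"older in the $(1,\infty)$ form where the paper uses Cauchy--Schwarz in the $(2,2)$ form, and both yield the stated constants.
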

\begin{proof}
By definition, we have
\begin{multline}
|\D{\rho}{\sigma} - \D{\rho'}{\sigma'}| \\
\begin{split}
&= |-\avgH{\rho} + \avgH{\rho'} - \tr{\rho\log \sigma} + \tr{\rho'\log\sigma'}|\\
&\leq   |-\avgH{\rho} + \avgH{\rho'}| +   |\tr{(\rho-\rho')\log \sigma}|\\
&\phantom{=} + |\tr{\rho'(\log\sigma'-\log\sigma)}.|
\end{split}
\end{multline}
By Fannes inequality, we have
\begin{multline}
 |-\avgH{\rho} + \avgH{\rho'}| \\
  \leq \frac{1}{2}\|\rho - \rho'\|_1 \log(d-1) + \Hb{\frac{1}{2} \|\rho - \rho'\|_1}.
\end{multline}
Furthermore,  Cauchy-Schwartz inequality for Hilbert-Schmidt inner-products implies that
\begin{align}
|\tr{(\rho-\rho')\log \sigma}| 
& \leq \|\rho-\rho'\|_2 \|\log \sigma\|_2\\
&\leq \| \rho - \rho'\|_1 \|\log \sigma\|_2\\
&\leq \| \rho - \rho'\|_1 d \log \frac{1}{\lambda_{\min}(\sigma)}.
\end{align}
Using Cauchy-Schwartz again, we obtain
\begin{align}
|\tr{\rho'(\log\sigma'-\log\sigma)}|
&\leq \|\rho'\|_2 \|\log\sigma'-\log\sigma\|_2\\
&\leq \|\log\sigma'-\log\sigma\|_2.
\end{align}
To upper-bound $\|\log\sigma'-\log\sigma\|_2$, let us define $F(x) \eqdef \log(\sigma + x(\sigma'-\sigma))$ for $t\in[0, 1]$. Then,
\begin{align}
\|\log\sigma'-\log\sigma\|_2\ 
&= \|F(1) - F(0)\|_2\\
&\stackrel{(a)}{\leq} \sup_{x\in[0,1]}\|F'(x)\|_2.
\end{align}
where $(a)$ follows from mean value theorem of multi-variable functions. Applying Lemma~\ref{lm:norm-derivative} for  $f \eqdef \log$ and $A(x) =  \sigma + x(\sigma'-\sigma)$, we obtain 
\begin{align}
\|F'(x)\|_2 &\leq d^2 \sup_{\mu\in[a, b]} |f'(\mu)|  \|A'(x)\|_2\\
&\leq d^2 \frac{1}{\lambda_{\min}(\sigma + x(\sigma'-\sigma))} \|\sigma' - \sigma\|_2\\
&\leq d^2 \frac{1}{\lambda_{\min}(\sigma + x(\sigma'-\sigma))} \|\sigma' - \sigma\|_1.
\end{align}

Finally, for $x\in[0, 1]$, we have
\begin{align}
\lambda_{\min}(\sigma + x(\sigma'-\sigma)) 
&\leq \lambda_{\min}(\sigma) - \|x(\sigma'-\sigma)\|_2\\
&\leq  \lambda_{\min}(\sigma) - \|\sigma'-\sigma\|_1.
\end{align}
\end{proof}

\bibliographystyle{apsrev4-1}
\bibliography{covert}

\end{document}